\documentclass[10pt,prd,aps,amsfonts,eqsecnum,superscriptaddress,nofootinbib,notitlepage,longbibliography,final]{revtex4-2}

\usepackage[a4paper, left=2cm, right=2cm, top=2cm, bottom=2cm]{geometry}
\usepackage{setspace}
\setstretch{1.2}

\usepackage[english]{babel}
\addto\extrasenglish{%
	\def\appendixautorefname{Appendix}%
}
\usepackage[utf8]{inputenc}
\usepackage[T1]{fontenc}
\usepackage{lmodern}
\usepackage{amsmath,amsthm,mathdots,amssymb,bm,bbm,mathrsfs,mathtools}
\usepackage[final]{graphicx} 
\usepackage{tikz,pgfplots}\pgfplotsset{compat=1.16}
\usetikzlibrary{positioning,fit,calc}


\usepackage{thm-restate}
\newtheorem{thm}{Theorem}[section]
\newtheorem{cor}{Corollary}[section]

\newtheorem{lem}{Lemma}[section]

\newtheorem{prop}{Proposition}[section]

\newtheorem{rmk}{Remark}[section]
\newtheorem{defn}{Definition}[section]
\newtheorem{assumption}{Assumption}[section]

\theoremstyle{definition}

\newtheorem{eg}{Example}[section]

\usepackage{comment}

\usepackage{mleftright}\mleftright 

\usepackage{enumitem}

\usepackage[colorlinks=true, urlcolor=violet, linkcolor=blue, citecolor=red, hyperindex=true, linktocpage=true, pagebackref=true, draft=false]{hyperref}
\usepackage{microtype,color} 
\usepackage{xcolor}

\renewcommand{\vec}{\bm}

\newcommand{\CB}{\mathcal{B}}
\newcommand{\CC}{\mathcal{C}}
\newcommand{\BC}{\mathbb{C}}

\newcommand{\CE}{\mathcal{E}}
\newcommand{\BE}{\mathbb{E}}
\newcommand{\tBE}{\tilde{\mathbb{E}}}

\newcommand{\CH}{\mathcal{H}}

\newcommand{\CK}{\mathcal{K}}
\newcommand{\CL}{\mathcal{L}}

\newcommand{\CN}{\mathcal{N}}
\newcommand{\BN}{\mathbb{N}}

\newcommand{\CP}{\mathcal{P}}
\newcommand{\BP}{\mathbb{P}}

\newcommand{\BR}{\mathbb{R}}
\newcommand{\CS}{\mathcal{S}}

\newcommand{\CT}{\mathcal{T}}

\newcommand{\CV}{\mathcal{V}}

\newcommand{\BZ}{\mathbb{Z}}

\newcommand{\sA}{\mathsf{A}}
\newcommand{\sB}{\mathsf{B}}
\newcommand{\sC}{\mathsf{C}}
\newcommand{\sR}{\mathsf{R}}
\newcommand{\sL}{\mathsf{L}}
\newcommand{\sM}{\mathsf{M}}
\newcommand{\sI}{\mathsf{I}}
\newcommand{\sX}{\mathsf{X}}
\newcommand{\sY}{\mathsf{Y}}

\newcommand{\vA}{\bm{A}}

\newcommand{\vB}{\bm{B}}

\newcommand{\vC}{\bm{C}}

\newcommand{\vH}{\bm{H}}
\newcommand{\vh}{\bm{h}}
\newcommand{\vI}{\bm{I}}

\newcommand{\vL}{\bm{L}}

\newcommand{\vN}{\bm{N}}
\newcommand{\vO}{\bm{O}}
\newcommand{\vP}{\bm{P}}

\newcommand{\vR}{\bm{R}}
\newcommand{\vS}{\bm{S}}

\newcommand{\vV}{\bm{V}}
\newcommand{\vW}{\bm{W}}
\newcommand{\vX}{\bm{X}}
\newcommand{\vY }{\bm{Y }}
\newcommand{\vZ}{\bm{Z}}
\newcommand{\vsigma}{\bm{ \sigma}}

\newcommand{\vrho}{\bm{ \rho}}
\renewcommand{\L}{\left}
\newcommand{\R}{\right}

\newcommand{\vcH}{\vec{\CH}}
\newcommand{\vcS}{\vec{\CS}}

\newcommand{\dagg}{\dagger}

\newcommand{\vertiii}[1]{{\left\vert\kern-0.25ex\left\vert\kern-0.25ex\left\vert #1 \right\vert\kern-0.25ex\right\vert\kern-0.25ex\right\vert}}
\newcommand{\norm}[1]{\Vert {#1} \Vert}

\newcommand{\lnormp}[2]{\lnorm{#1}_{#2}}

\newcommand{\labs}[1]{\left\vert {#1} \right\vert}
\newcommand{\lnorm}[1]{\left\Vert {#1} \right\Vert}
\newcommand{\e}{\mathrm{e}}

\newcommand{\ri}{\mathrm{i}}
\newcommand{\rd}{\mathrm{d}}
\newcommand*{\tr}{\mathrm{Tr}}

\newcommand{\indicator}{\mathbbm{1}}
\newcommand{\Lword}[1]{\text{Lindbladian}}

\newcommand{\ceil}[1]{\left\lceil{#1}\right\rceil}
\newcommand{\floor}[1]{\left\lfloor{#1}\right\rfloor}

\newcommand{\undersetbrace}[2]{ \underset{#1}{\underbrace{#2}}}

\DeclarePairedDelimiterX{\braket}[1]{\langle}{\rangle}{#1}
\DeclarePairedDelimiterX\ketbra[2]{| }{|}{#1 \delimsize\rangle\!\delimsize\langle #2}	
\DeclarePairedDelimiterX\dotp[2]{\langle}{\rangle}{#1, #2}

\DeclareMathAlphabet{\dutchcal}{U}{dutchcal}{m}{n}
\SetMathAlphabet{\dutchcal}{bold}{U}{dutchcal}{b}{n}
\DeclareMathAlphabet{\dutchbcal} {U}{dutchcal}{b}{n}

\newcommand{\under}[2]{\underbrace{#1}_{\substack{#2}}}

\makeatletter
\DeclareRobustCommand*{\pmzerodot}{%
	\nfss@text{%
		\sbox0{$\vcenter{}$}
		\sbox2{0}%
		\sbox4{0\/}%
		\ooalign{%
			0\cr
			\hidewidth
			\kern\dimexpr\wd4-\wd2\relax 
			\raise\dimexpr(\ht2-\dp2)/2-\ht0\relax\hbox{%
				\if b\expandafter\@car\f@series\@nil\relax
				\mathversion{bold}%
				\fi
				$\cdot\m@th$%
			}%
			\hidewidth
			\cr
			\vphantom{0}
		}%
	}%
}



\setcounter{tocdepth}{1}
\setcounter{secnumdepth}{3}
\makeatletter
\def\l@subsubsection#1#2{}
\makeatother

\usetikzlibrary{quantikz2}

\begin{document}
\renewcommand{\appendixautorefname}{Appendix}
\renewcommand{\chapterautorefname}{Chapter}
\renewcommand{\sectionautorefname}{Section}
\renewcommand{\subsubsectionautorefname}{Section}

\title{Fast Mixing of Quantum Spin Chains at All Temperatures}
\author{Thiago Bergamaschi} 
\affiliation{University of California, Berkeley, CA, USA}
\author{Chi-Fang Chen}
\email{achifchen@gmail.com}

\affiliation{University of California, Berkeley, CA, USA}
\affiliation{Massachusetts Institute of Technology, Cambridge, USA}

\begin{abstract}
It is shown that every one-dimensional Hamiltonian with short-range interactions admits a quantum Gibbs sampler~\cite{chen2023efficient} with a system-size independent spectral gap at all finite temperatures. Consequently, their Gibbs states can be prepared in polylogarithmic depth, and satisfy exponential clustering of correlations, generalizing~\cite{araki1969gibbs}. 
\end{abstract}
\maketitle
\vspace{-1cm}

\begingroup
\small  
\tableofcontents
\endgroup

\section{Introduction}

All evidence suggests that many-body physics in one spatial dimension (1D) is thermodynamically trivial. In 1969, Araki proved that every translation-invariant infinite spin chain satisfies an exponential decay of correlations, and thus admits no finite-temperature phase transitions~\cite{araki1969gibbs}. Ever since, other such manifestations -- the convergence of complex-time evolution~\cite{araki1969gibbs, bluhm2022exponential,perez2023locality}, area laws and the existence of a tensor-network description \cite{VGC04,hastings2006solving, anshu2022area, wolf2008area}, Markov properties~\cite{Kato2019,kuwahara2024clustering}, efficient classical algorithms for computing partition functions and local observables~\cite{fawzi2023subpolynomial, harrow2020classical} -- have gradually provided a comprehensive picture of 1D thermal physics and its complexity.

Nevertheless, one-dimensional many-body \textit{dynamics} can be extremely complex. Random circuits, structured or unstructured (e.g., \cite{brandao_local_tdesign,chen2024incompressibility,schuster2025random}), can produce exceedingly complex quantum states as time progresses; under proper initial states, the Feynman Hamiltonian can perform universal quantum computation \cite{kitaev02}. 
One could argue that such rich quantum behavior partly stems from the exact absence of noise. Yet, quantum fault tolerance schemes can encode any quantum computation into a 1D time-dependent circuit with reset gates, resilient even against extensive noise \cite{aharonov1999faulttolerantquantumcomputationconstant, GB25}. The above complexity, universality, and robustness of the dynamics seemingly rule out a simple theory of one-dimensional physics.
 
To reconcile this conceptual discrepancy between statics and dynamics, we must recognize that Nature's thermalization dynamics - coupling the system to a thermal bath - is noisy, by no means fine-tuned, and must respect the principles of thermal detailed balance. Nevertheless, under this thermal noise model, we know that macroscopic quantum phenomena can still exist at low enough temperatures due to the existence of self-correcting quantum memories (e.g., the 4D toric code \cite{DKLP01}). Even though thermal self-correction in one dimension would be rather surprising, there have been surprisingly no universal theorems on the rate of thermalization, beyond the intuition suggested by static properties. 

In classical statistical physics, a long-standing paradigm is that the rate of thermalization -- in terms of the mixing time of Glauber dynamics -- should be intimately tied to the structure of static correlations in thermal equilibrium \cite{Martinelli1999}. Static correlations often provide the essential starting point for proving mixing times, and a slowdown of Glauber dynamics also hints at thermodynamic phase transitions. This interplay between statics and dynamics has served as a fruitful guiding principle in exceedingly complex many-body systems, and has remained productive even in recent advances in Markov chain mixing times (e.g., \cite{anari2021spectral}).

For quantum systems, this paradigm has been extended, but restricted to commuting Hamiltonians \cite{kastoryano2016commuting}. The corresponding continuous-time quantum Markov chain that models the system-bath interaction is often referred to as a master equation or a Lindbladian, most notably the Davies generator \cite{davies74}. The main conceptual challenge is that one fundamentally cannot \textit{condition} on entangled quantum states like one normally does in classical distributions \cite{fannes1995boundaryconditionsquantumlattice}, thus requiring a substantial reformulation. Even though the search for an ultimate, physically motivated static property remains active, several notions of correlation decay have been defined (e.g. \cite{capel2021modified,capel2024quasi, kochanowski2025rapid}) and shown to imply fast mixing in a recursive manner akin to the classical spirits.

For noncommuting Hamiltonians, frameworks for proving mixing time bounds remain in their infancy. Only recently have workable quantum analogs of Glauber dynamics, \textit{quantum Gibbs samplers}, been proposed~\cite{temme2011quantum,yung2012quantum, Shtanko2021AlgorithmsforGibbs, chen2021fast, rall2023thermal, wocjan2023szegedy,chen2023quantum,jiang2024quantum,ding2024single}, particularly Lindbladian dynamics that satisfy KMS-detailed-balance and respect the locality of the Hamiltonian~\cite{chen2023efficient, ding2024efficient, SA24}. Regardless of whether one treats these Lindbladians as toy models of thermalization or as a thermal simulation algorithm, their explicit formalism allows one to pose sharp mathematical problems that can be precisely compared with the predictions of statistical mechanics. We ask:
\begin{align}
    \textit{Do all one-dimensional quantum systems thermalize?}
\end{align}
In this work, we show the affirmative by proving a \textit{system-size} independent spectral gap for the \cite{chen2023efficient} family of Lindbladians, for any one-dimensional Hamiltonian with short-range interactions, at any finite temperature. Together with existing static properties of 1D Gibbs states, our work extends the interplay between dynamics and statics to non-commuting Hamiltonians. In fact, a crucial input to our proof is the sub-exponential clustering of correlations of 1D systems \cite{Kimura_2025}; \textit{posteriori}, the constant spectral gap unconditionally strengthens~\cite{Kimura_2025} to a genuine \textit{exponential} decay, generalizing \cite{araki1969gibbs} to finite, non-translationally invariant systems. We believe our resolution of 1D spectral gaps serves as a stepping stone towards sharp mixing times of other non-commutative Hamiltonians, in non-perturbative regimes.

\subsection{Main Results}

We consider the following family of one-dimensional nearest-neighbor Hamiltonians defined on $n$ qudits of local dimension $2^q$, with open boundary conditions (\autoref{fig:1Dfamily}):

\begin{align}
    \vH =\sum_{b=1}^{n-1} \vH_{b,b+1}\quad \text{where}\quad \norm{\vH_{b,b+1}} \le 1. \label{eq:def1D}
\end{align} 
All finite-range 1D Hamiltonians can be embedded into the above by choosing a system-sized independent constant $q$ and rescaling the strength. In particular, the terms need not commute, nor be translationally invariant.

Our main theme features the interplay between thermal equilibrium properties and the thermalization dynamics in one dimension. Central to this discussion is the quantum \emph{Gibbs state}, defined by the Hamiltonian $\vH$ and an inverse-temperature $\beta\geq 0$ as:  
\begin{equation}
    \vrho := \frac{e^{-\beta \vH}}{\tr[e^{-\beta \vH}]} ,
\end{equation}
which, according to statistical mechanics, models the quantum system in true thermal equilibrium.

\begin{figure}[t]
\includegraphics[width=0.6\textwidth]{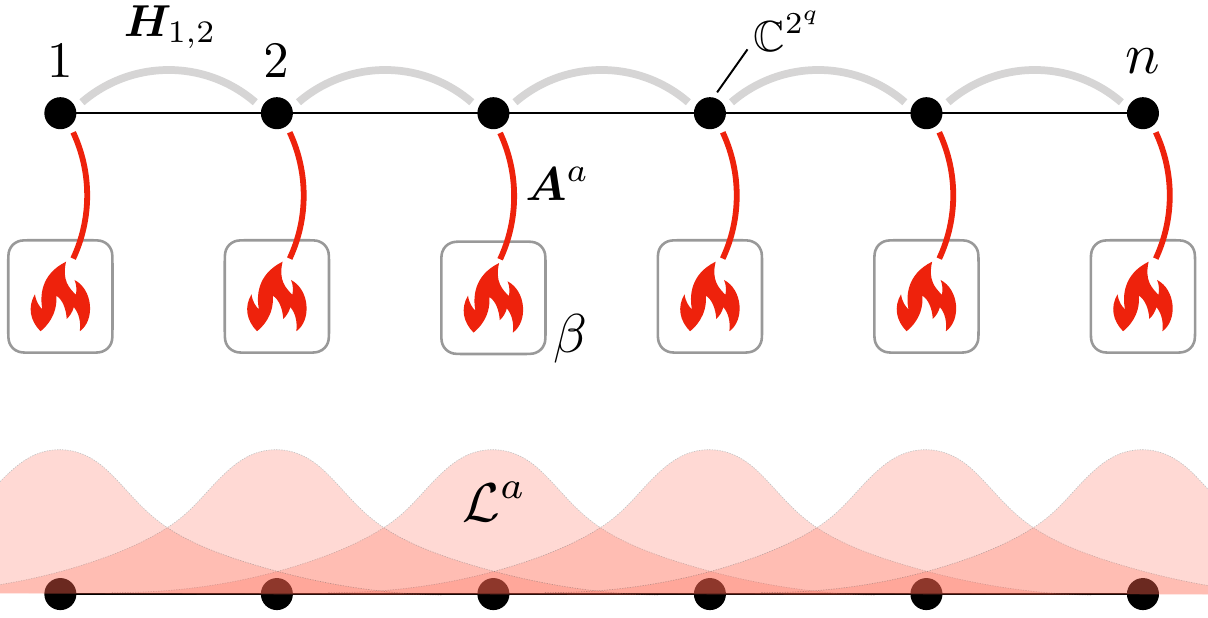}
\caption{
We consider a 1D Hamiltonian with nearest neighbour interaction; by taking large enough $q$ and rescaling, this captures all finite-range 1D models. We are interested in its rate of thermalization, when weakly coupled to a constant temperature bath on every site. Mathematically, we model the thermalization dynamics by an exactly detailed-balanced Lindbladian~\cite{chen2023efficient}, where each Lindbladian term corresponding to each system-bath coupling $\vA^a$ is quasi-local.
}\label{fig:1Dfamily}
\end{figure}

Our goal is to study the rate at which the spin chain approaches equilibrium when weakly coupled to a heat bath at every site (\autoref{fig:1Dfamily}). Formally, we model the dynamics by the family of quasi-local Lindbladians of \cite{chen2023efficient} (see Section~\ref{sec:Lindbladian}), where the ``jump'' operators are the set of all single-site Pauli-like operators $\CS^1_{[n]}$ 
\begin{align}
    \CL = \sum_{a\in \CS^1_{[n]}} \CL_a \quad \text{such that}\quad \CL_a[\vrho] = 0,\label{eq:maintext_L}
\end{align}
and each $\norm{\CL_a}_{1-1}$ is independent of the system size.
Since the \cite{chen2023efficient} family satisfies Kubo-Martin-Schwinger (KMS) $\vrho_{\beta}-$detailed-balance exactly, the rate of convergence can be captured by an estimate of its spectral gap, which is the central theme of our work. Specifically, we are interested in the thermodynamic scaling where the system size $n$ grows independently of the inverse temperature $\beta$ and local dimension $2^q$.

\begin{thm}[System-size independent spectral gap]
\label{thm:main}
The Lindbladian~\eqref{eq:maintext_L} defined by a one-dimensional spin chain Hamiltonian as in \eqref{eq:def1D} has a system-size independent spectral gap, which depends only on the inverse temperature $\beta$ and the local dimension $2^q$.
\end{thm}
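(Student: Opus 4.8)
The plan is to pass to the Dirichlet-form (variational) picture, coarse-grain the chain into constant-size blocks, and then glue the per-block relaxation estimates into a global, $n$-independent bound by a martingale-type / finite-size-criterion argument, with every quantum approximation error controlled by the static clustering of $\vrho$ in one dimension.

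\textbf{Reformulation and localization.} Since the \cite{chen2023efficient} Lindbladian is KMS $\vrho$-detailed-balanced, it is, up to conjugation by $\vrho^{1/4}$, unitarily equivalent to a positive semidefinite operator $h = \sum_{a} h_a$ on the KMS Hilbert space, with each $h_a \succeq 0$, and its spectral gap equals the Poincar\'e constant $\inf\{ \langle f, h f\rangle/\langle f,f\rangle : f \perp \vrho^{1/2}\}$; here one uses that single-site jumps make the dynamics ergodic, so $\ker h = \mathrm{span}\{\vrho^{1/2}\}$ is one-dimensional. Each $h_a$ is quasi-local around site $a$ with exponentially decaying tails --- a consequence of the smoothness of the operator-Fourier-transform weight in the construction together with Lieb--Robinson bounds and the analyticity of $\vrho$ for a short-range 1D Hamiltonian. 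I would then truncate each $h_a$ to a window of a large constant radius $\ell$ (correcting the truncated terms so they still annihilate $\vrho^{1/2}$ and remain $\succeq 0$), incurring operator-norm error exponentially small in $\ell$, and regroup the truncated terms into overlapping blocks $B_1,\dots,B_m$ of constant size $L \asymp \ell$, writing $h \approx \widetilde h = \sum_j h_{B_j}$ with $h_{B_j}\succeq 0$ supported on $B_j$. It then suffices to gap $\widetilde h$ with an $n$-independent constant and absorb the exponentially small error by choosing $\ell$ large depending only on $\beta$ and $q$.

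\textbf{Local ingredients and gluing.} Two block-level facts feed the argument. First, a \emph{local gap}: each $h_{B_j}$ has a spectral gap above its kernel bounded below by some $\gamma(\beta,q)>0$, uniformly in $n$ and $j$ --- since $B_j$ has constant size $L$ and the block operator depends on $n$ only through a norm-bounded, hence compact, family of local Hamiltonian data, while $0$ is a simple, isolated eigenvalue of the block dynamics for every choice of that data (ergodicity on the block), the gap is a continuous, everywhere-positive function on a compact set and so is bounded below. Second, \emph{approximate frustration-freeness}: $\ker \widetilde h$ is, up to the truncation error, one-dimensional and spanned by $\vrho^{1/2}$, because an operator annihilated by every $h_{B_j}$ must locally agree with the Gibbs state on each overlapping block, and the approximate quantum Markov property of 1D Gibbs states \cite{Kato2019,kuwahara2024clustering} then pins it down globally. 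With these in place, I would run a martingale-type argument (in the spirit of Nachtergaele's method and Knabe's finite-size criterion for frustration-free chains): ordering the blocks along the line and letting $Q_j$ be the (approximate) projector onto $\ker h_{B_j}$, the global gap is controlled once the mixing condition $\norm{Q_j(\indicator - Q_{B_1\cup\cdots\cup B_{j-1}})} \le \epsilon$ holds with $\epsilon$ below a threshold depending only on $L$ (and decaying only polynomially in $L$); since $B_j$ is separated from $B_1\cup\cdots\cup B_{j-2}$ by a full block, this overlap is bounded by the decay of correlations of $\vrho$ across that separation, and the (sub)exponential clustering of 1D Gibbs states \cite{Kimura_2025} makes $\epsilon$ as small as required once $L$ is a large enough constant. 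This yields $\mathrm{gap}(\widetilde h)\ge c(\beta,q)>0$ independent of $n$, and undoing the truncation gives the theorem.

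\textbf{Main obstacle.} The genuinely hard part is that essentially none of the conditional-expectation / local-projector structure that makes the classical (or commuting) version routine exists exactly here: the 1D Gibbs state is only an \emph{approximate} quantum Markov chain, the truncated block generators are detailed-balanced with respect to the global $\vrho$ rather than to exact block Gibbs states, and the kernels $\ker h_{B_j}$ are not literal tensor factors. One must therefore carry the frustration-freeness and martingale arguments through with quantitative error terms at every step and, crucially, arrange that these errors are summable and do \emph{not} accumulate with the number of blocks, so the final estimate stays independent of $n$. It is precisely here --- bounded overlap degree and fast clustering --- that the one-dimensional geometry is indispensable, and making this robust is the technical heart of the proof.
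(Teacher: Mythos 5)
Your plan is sound at the level of slogans---Knabe/Nachtergaele-style finite-size criteria, or the Kastoryano--Brand\~{a}o recursion the paper actually uses, both live in the same space of ``local gap $+$ overlap/clustering $\Rightarrow$ global gap'' arguments, and your observation that in 1D two-point decay should suffice is the right heuristic---but two steps you state as routine are precisely the points the paper identifies as fundamental obstacles, and you do not give a way through them.

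The first is the truncation. You write that one can ``truncate each $h_a$ to a window of a large constant radius $\ell$ (correcting the truncated terms so they still annihilate $\vrho^{1/2}$ and remain $\succeq 0$), incurring operator-norm error exponentially small in $\ell$.'' This is not achievable by any obvious correction. Truncating the time-evolution inside $\hat{\vA}^a(\omega)$ to a block $\sR$ changes the stationary state of the block dynamics to something other than $\vrho$ restricted to $\sR$; the correction that restores $\vrho^{1/2}$ as the kernel vector is not small in operator norm, is not local, and there is no reason the corrected operator stays $\succeq 0$. The paper makes this explicit: it states flatly that for $\CL$ one does not even know how to establish a system-size-independent local gap, precisely because truncation breaks detailed balance in an uncontrolled way. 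The entire purpose of the auxiliary non-CP generator $\CK$ \eqref{eq:CK} is to dodge this: $\CK_a[\vX] = 0 \iff [\vA^a,\vX] = 0$ independently of $\vH$, so the kernel of the truncated generator is exactly the operators trivial on the block, and the truncated generator is still detailed-balanced (w.r.t.\ the truncated Gibbs state). Without this kind of structural replacement, your block compactness argument has no well-characterized kernel to gap above, and the ``approximate frustration-freeness'' you invoke is circular.

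The second is the overlap bound $\|Q_j(\indicator - Q_{B_1\cup\cdots\cup B_{j-1}})\|\le\epsilon$. For non-commuting Hamiltonians this is \emph{strong} clustering, not weak clustering: $Q_j$ is a projector in the KMS Hilbert space whose range is not a tensor factor, and showing that it nearly commutes with projectors on distant blocks is not a restatement of decay of two-point functions $\tr[\vrho\vX\vY]-\tr[\vrho\vX]\tr[\vrho\vY]$. Deriving a usable strong clustering statement from \cite{Kimura_2025} is the bulk of Section~\ref{section:clustering} of the paper and even then only holds for tripartitions with a near-linear buffer, forcing a two-pass recursion plus a Lieb-Robinson bootstrap (Section~\ref{sec:conditionalgapK}); a direct Knabe threshold with constant-size blocks would require a much cleaner overlap estimate than either the hypothesis of \cite{Kimura_2025} or the paper's intermediate strong clustering actually provides. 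You correctly flag ``the errors must not accumulate'' as the hard part, but that acknowledgment is not a proof, and the place the paper spends its effort is exactly in manufacturing a conditional-expectation-like object ($\tBE$ generated by $\CK$) for which locality, local gap, and clustering can all be established, and only then transferring the gap to $\CL$ via a Dirichlet-form comparison (Section~\ref{sec:CKGgap}). Your proposal, as written, never produces such an object.
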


Since the Lindbladian~\eqref{eq:maintext_L} acts extensively, a system-size independent gap is the best possible.
So far, an unconditional system-size independent gap is only known in commuting or perturbative regimes. Most notably, the Davies generator of 1D \textit{commuting} Hamiltonians \cite{kastoryano2016commuting}, and the Lindbladian family \eqref{eq:maintext_L} of non-commuting, geometrically-local Hamiltonians \textit{at sufficiently high temperatures}~\cite{rouze2024efficient} -- see Section~\ref{section:prior} for related work. Even though our paper focuses on the~\cite{chen2023efficient} family, we believe the argument extends to other KMS-detailed-balance, quasi-local Lindbladians~\cite{ding2024efficient, SA24} with reasonable filter functions (see~\autoref{rmk:other_L}).

As an immediate corollary, our lower bound on the spectral gap $\lambda$ of $\CL$ entails a bound on its \textit{mixing time}: the evolution of $\CL$ converges to error $\epsilon$ in trace distance to the Gibbs state in time $t_{\mathsf{mix}}(\epsilon) = \frac{1}{\lambda} O\big(n + \log\frac{1}{\epsilon}\big)$, regardless of the initial state. Thus, ~\autoref{thm:main} can be regarded as an upper bound on the rate of thermalization, or as an efficiency guarantee for a dissipative state preparation algorithm.\\ 

\noindent \textit{Low-depth state preparation.} A caveat of the algorithmic aspect of spectral gaps is that, a priori, the size of the quantum circuit which implements the desired Lindbladian evolution can be superlinear in the system size (like $n^2$). Fortunately, a nearly optimal state-preparation cost is available by employing an adiabatic algorithm for the \textit{purified Gibbs state} (or, the thermal field double state), defined on two copies of the system
\begin{align}
    \ket{\sqrt{\vrho}} := \frac{1}{\sqrt{\tr[e^{-\beta\vH}]}} \sum_{i} e^{-\beta E_i/2} \ket{\psi_i}\otimes \ket{\psi_i^*},
\end{align}
where we denote the eigendecomposition by $\vH = \sum_i E_i\ketbra{\psi_i}{\psi_i},$ and $\ket{\psi_i^*}$ denotes the entrywise conjugation.
\begin{cor}[$\mathsf{polylog}$ depth adiabatic algorithms]\label{cor:adiabatic}
The purified Gibbs state of any 1D Hamiltonian~\eqref{eq:def1D} can be prepared up to error $\epsilon$ in trace distance, in circuit depth $d_{\beta, q}\cdot \mathsf{polylog}(n/\epsilon)$ and circuit size $s_{\beta, q}\cdot n\cdot\mathsf{polylog}(n/\epsilon)$. In particular, the circuit is comprised of 2-qubit nearest-neighbor gates with 1D connectivity (with ancillas). 
\end{cor}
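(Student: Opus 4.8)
The plan is to prepare $\ket{\sqrt{\vrho}}$ by \emph{quasi-adiabatically} following the ground state of a gapped, frustration-free parent Hamiltonian along an interpolation path, and then to compile the resulting evolution into a one-dimensional circuit of near-optimal depth and size. \emph{Step 1 (a uniformly gapped, frustration-free path).} Because the Lindbladian~\eqref{eq:maintext_L} obeys KMS-$\vrho$-detailed balance, the similarity transformation by $\vrho^{1/4}$ turns $-\CL$ into a positive-semidefinite \emph{parent Hamiltonian} $\mathbb{H}[\vH]$ acting on the doubled chain $\CH\otimes\bar{\CH}$, whose unique zero-energy state is $\ket{\sqrt{\vrho}}$ and whose spectral gap equals $\mathrm{gap}(\CL)$~\cite{chen2023efficient}. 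Since each $\CL_a$ already satisfies $\CL_a[\vrho]=0$, the decomposition $\mathbb{H}[\vH]=\sum_{a}\mathbb{H}_a[\vH]$ is frustration-free; and since each $\CL_a$ is quasi-local and conjugation by $\vrho^{\pm1/4}$ (equivalently $e^{\mp\beta\vH/4}$ up to normalization) preserves quasi-locality in one dimension, by the imaginary-time Lieb--Robinson/MPO bounds underlying~\cite{araki1969gibbs}, each $\mathbb{H}_a[\vH]$ is a bounded operator supported, up to near-exponentially decaying tails, on an interval around site $a$. Interpolating $\vH(s):=s\,\vH$ for $s\in[0,1]$, the parent Hamiltonian is strictly local at $s=0$ with ground state $\bigotimes_j\ket{\Phi}_j$ --- a product of maximally entangled pairs between the two copies of each site, preparable in depth $1$ --- and equals $\mathbb{H}[\vH]$ at $s=1$. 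Applying \autoref{thm:main} to $\vH(s)$ (effective inverse temperature $s\beta\le\beta$) and using that its gap bound is continuous and strictly positive on the compact interval yields a gap $\lambda^{\star}=\lambda^{\star}_{\beta,q}>0$ uniform along the whole path; smoothness of the filter functions defining~\eqref{eq:maintext_L} in the Hamiltonian gives $\norm{\partial_s\mathbb{H}_a[\vH(s)]}=O_{\beta,q}(1)$ per site.

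\emph{Step 2 (quasi-adiabatic continuation).} Hastings' spectral flow (quasi-adiabatic continuation), applied to the uniformly gapped, frustration-free path $\{\mathbb{H}[\vH(s)]\}_{s\in[0,1]}$, produces a quasi-local generator $D(s)=\sum_a D_a(s)$ with $\norm{D_a(s)}=O\big(\norm{\partial_s\mathbb{H}_a}/\lambda^{\star}\big)=O_{\beta,q}(1)$ per site and interaction tails decaying near-exponentially in the range on a length scale $O_{\beta,q}(1)$, such that the time-ordered unitary $V:=\CT\exp\big(-\iunit\int_0^1 D(s)\,\rd s\big)$ maps the $s=0$ ground state exactly to $\ket{\sqrt{\vrho}}$ (up to a global phase). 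Hence it remains only to implement a \emph{constant-time} evolution under a time-dependent, bounded-strength, quasi-local Hamiltonian on the interleaved doubled chain.

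\emph{Step 3 (low-depth compilation).} Truncating each $D_a(s)$ to interaction range $\ell=O_{\beta,q}(\log(n/\epsilon))$ perturbs $V\big(\bigotimes_j\ket{\Phi}_j\big)$ by at most $\epsilon/2$ in trace distance (the error is $\le n\cdot e^{-\Omega_{\beta,q}(\ell)}$), and the resulting finite-range, time-dependent one-dimensional evolution for unit time can be compiled to a further error $\epsilon/2$ by a Lieb--Robinson-based lattice-simulation scheme into a circuit of nearest-neighbor two-qubit gates: since the local strength is $O_{\beta,q}(1)$ and the total evolution time is $O(1)$, the depth is $O_{\beta,q}(\mathsf{polylog}(n/\epsilon))$, while one-dimensional locality caps the gate count at $O_{\beta,q}(n\cdot\mathsf{polylog}(n/\epsilon))$; this is the all-temperature analog of the high-temperature preparation route of~\cite{rouze2024efficient}, now supplied by \autoref{thm:main}. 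Flattening the two copies of the chain onto a single line (with ancillas) and adding the two error sources gives the claim.

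\emph{Main obstacle.} Given \autoref{thm:main}, Steps 1 and 2 are largely routine; the delicate point is Step 3 --- turning the \emph{extensive-norm} quasi-adiabatic generator $D$ into a circuit that is \emph{simultaneously} $\mathsf{polylog}$-depth and $n\cdot\mathsf{polylog}$-size. A naive time-discretization of $V$ would cost $\Omega(n)$ steps (the global Lipschitz constant of $s\mapsto D(s)$ is extensive), and a naive block-by-block compilation would cost $2^{\Omega(q\ell)}$ gates per block; avoiding both requires a genuinely locality-aware simulation (alternating Lieb--Robinson block decompositions with recursively shrinking blocks) together with a per-site error budget, and hinges on the tails of the parent Hamiltonian and of the spectral flow decaying near-exponentially rather than merely polynomially --- which is precisely where the one-dimensional input (quasi-locality of $e^{-\beta\vH/4}$ and of the filter-dressed jump operators) re-enters.
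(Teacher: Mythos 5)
Your proposal is correct and follows essentially the same route as the paper: vectorize the detailed-balanced Lindbladian into a frustration-free parent Hamiltonian (discriminant), interpolate in $\beta$ from $0$ to the target temperature, invoke Theorem~\ref{thm:main} for a uniform gap, apply the quasi-adiabatic continuation of Bachmann--Michalakis--Nachtergaele--Sims, and compile the resulting bounded-time, quasi-local time-dependent evolution via the Lieb--Robinson decomposition scheme of Haah--Hastings--Kothari--Low. The one imprecision is in Step~3: the quasi-adiabatic generator inherits only \emph{sub}-exponential tails (decay like $\exp(-\Omega(\ell/\log^2\ell))$ rather than $\exp(-\Omega(\ell))$) because the filter $w(t)$ is bandlimited to $[-\Delta,\Delta]$; this forces $\ell=O(\log(n/\epsilon)\cdot\mathsf{polyloglog}(n/\epsilon))$ rather than $O(\log(n/\epsilon))$, but the $\mathsf{polylog}$ conclusion is unaffected.
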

Up to polylogarithmic factors, the gate complexity and depth are both optimal in the system-size and error dependence, providing a conceptually clean correlation structure for the purified Gibbs states. Previously, the best provable quantum algorithms for 1D Gibbs states required superlinear depth, either using an MPO description \cite{Kliesch19, Molnar_2015, KAA21}, or a ``patching'' argument~\cite{brandao2019finite_prepare, Swingle_2016, Kato2019,kuwahara2024clustering}, see \autoref{section:discussion}. 

The proof largely imports the adiabatic simulation framework for gapped ground states \cite{has04}. Due to detailed balance, the Lindbladian \eqref{eq:maintext_L} under similarity transformation and vectorization defines a Hermitian, frustration-free ``parent Hamiltonian'' whose zero-eigenstate is the purified Gibbs state, and whose spectrum coincides with that of our Lindbladian $\CL$. Lowering the temperature defines a natural adiabatic path amenable to off-the-shelf quasi-adiabatic evolution for gapped ground state families~\cite{Bachmann_2011, chen2023efficient}. In contrast, optimal classical Gibbs sampling algorithms often go through a log-Sobolev inequality, but the quasi-adiabatic algorithm only requires a spectral gap throughout the temperature range. Of course, the preparation circuit may have nothing to say about the convergence rate of the physical thermalization dynamics, where log-Sobolev inequalities remain open. A self-contained, rigorous derivation of the gate complexity is presented\footnote{In and only in this appendix, we consider $D$-dimensional Hamiltonians with exponentially decaying interactions.} in Section~\ref{sec:adiabatic}.\\

\noindent \textit{Back to decay-of-correlations.} In spirit, our argument is an extension of the interplay between statics and dynamics to non-commutative settings. As mentioned, our proof of the spectral gap in \autoref{thm:main} starts from a black-box statement on the sub-exponential decay of correlations in arbitrary 1D Hamiltonians, at arbitrary temperatures~\cite{Kimura_2025}. Remarkably, as a corollary of the derived constant spectral gap for $\CL$, we may return to our starting point, to \textit{unconditionally} strengthen the imported clustering statement. Indeed, the constant spectral gap implies that the parent Hamiltonian has a unique, gapped ground state (the purified Gibbs state), and invoking~\cite{hastings2006spectral} then implies the exponential decay of correlations.

\begin{cor}[Exponential clustering of correlations] \label{cor:clustering} The Gibbs state of any 1D Hamiltonian  \eqref{eq:def1D} admits exponential decay of correlations, in that there exist constants $\mu, \alpha$ depending only on $\beta,q$, such that for any pair of operators $\vX$,$\vY$ supported on intervals $\sA, \sC\subseteq [n]$:
\begin{align}
         \bigg|\tr[\vX^\dagger\vY\vrho] - \tr[\vrho \vX^\dagger] \cdot \tr[\vrho \vY]\bigg| \leq \alpha\cdot\|\vX\|\cdot  \|\vY\| \cdot   e^{-\mu\cdot \mathsf{dist}(\sA, \sC)}.
\end{align}
\end{cor}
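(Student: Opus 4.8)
The plan is to route the statement through the frustration-free \emph{parent Hamiltonian} of the Lindbladian~\eqref{eq:maintext_L}: use \autoref{thm:main} to certify that this parent Hamiltonian is gapped with a unique ground state $\ket{\sqrt{\vrho}}$, apply the Lieb--Robinson ground-state clustering theorem of~\cite{hastings2006spectral}, and then transport the resulting decay back through the operator--vector isomorphism. Recall from Section~\ref{sec:Lindbladian} that, since each term of $\CL$ satisfies KMS $\vrho$-detailed balance, a similarity transform by $\vrho^{1/2}$ on the two-copy space turns its vectorized form into $-\vH_{\mathrm{par}}$ with $\vH_{\mathrm{par}}=\sum_{a\in\CS^1_{[n]}}\vh_a$ Hermitian and frustration-free: each $\vh_a\succeq 0$ is quasi-local on the $2\times n$ ``ladder'' lattice, with $\norm{\vh_a}$ and its tail-decay rate controlled by $\beta,q$ only, and each annihilates $\ket{\sqrt{\vrho}}$ because $\CL_a[\vrho]=0$. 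Since conjugation preserves the spectrum, $\vH_{\mathrm{par}}$ inherits the gap of $\CL$; thus \autoref{thm:main} says precisely that $\vH_{\mathrm{par}}$ has a gap $\lambda_{\beta,q}>0$ independent of $n$ above its ground space, and a positive gap of $\CL$ forces $\vrho$ to be its unique stationary state, so that ground space is one-dimensional, spanned by $\ket{\sqrt{\vrho}}$.

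With $\vH_{\mathrm{par}}$ so identified, I would feed it into the ground-state clustering theorem of Hastings--Koma~\cite{hastings2006spectral}, in a form valid for interactions with sufficiently fast-decaying tails, to obtain constants $\alpha,\mu>0$ depending only on $\beta,q$ such that for all bounded $\vP,\vQ$ supported on regions $R_1,R_2$ of the doubled lattice, $\bigl|\braket{\sqrt{\vrho}|\vP\vQ|\sqrt{\vrho}}-\braket{\sqrt{\vrho}|\vP|\sqrt{\vrho}}\braket{\sqrt{\vrho}|\vQ|\sqrt{\vrho}}\bigr|\le\alpha\,\norm{\vP}\,\norm{\vQ}\,e^{-\mu\,\mathsf{dist}(R_1,R_2)}$.

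It then remains to translate this to the single-copy Gibbs state. Use the operator--vector isomorphism $M\mapsto\ket{M}$ with $\braket{M|N}=\tr[M^\dagger N]$, under which $\sqrt{\vrho}\mapsto\ket{\sqrt{\vrho}}$ and $(A\otimes\indicator)\ket{M}=\ket{AM}$, and take $\vP=\vX^\dagger\otimes\indicator$ and $\vQ=\vY\otimes\indicator$ acting on the first copy, with $\vX,\vY$ supported on the intervals $\sA,\sC$. A direct computation gives $\braket{\sqrt{\vrho}|\vP\vQ|\sqrt{\vrho}}=\tr[\vX^\dagger\vY\vrho]$, $\braket{\sqrt{\vrho}|\vP|\sqrt{\vrho}}=\tr[\vrho\vX^\dagger]$, $\braket{\sqrt{\vrho}|\vQ|\sqrt{\vrho}}=\tr[\vrho\vY]$, while $\norm{\vP}=\norm{\vX}$ and $\norm{\vQ}=\norm{\vY}$; laying the two copies out so that the supports of $\vP$ and $\vQ$ sit at distance $\mathsf{dist}(\sA,\sC)$, the displayed bound becomes exactly the claimed inequality (any discrepancy of vectorization convention only relabels $\vX\mapsto\vX^{T}$, which changes neither norm nor distance, and passing from self-adjoint to general bounded operators costs only a constant absorbed into $\alpha$).

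The single genuine obstacle --- everything else being bookkeeping --- is that $\vH_{\mathrm{par}}$ is \emph{not} strictly finite-range: the similarity transform and the operator-Fourier-transform filters of~\cite{chen2023efficient} spread each $\vh_a$ over all length scales. One therefore cannot quote a nearest-neighbour clustering theorem verbatim, but must invoke a Lieb--Robinson/clustering bound robust to quasi-local interactions, \emph{and} verify that the tail decay inherited from the~\cite{chen2023efficient} construction is strong enough to produce a \emph{genuinely exponential} correlation decay rather than a merely stretched-exponential one --- this is exactly where the profile of the filter functions enters, and the point demanding the most care. The remaining point, that the gap, the interaction norms, and the decay rate --- and hence $\alpha,\mu$ --- are all independent of $n$, is immediate, each being either $n$-independent by construction or furnished by \autoref{thm:main}.
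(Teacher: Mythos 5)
Your proposal matches the paper's proof essentially step-for-step: pass to the gapped, frustration-free discriminant (parent Hamiltonian) of $\CL$, invoke the Hastings--Koma ground-state clustering theorem~\cite{hastings2006spectral}, and translate back to the single-copy Gibbs state via the vectorization identity. The ``genuine obstacle'' you rightly flag --- certifying that the discriminant's interactions decay fast enough to give a truly exponential (not merely stretched-exponential) correlation decay --- is precisely what the paper supplies in \autoref{lem:lindblad-decaying-interactions}, which establishes an exponentially decaying annulus decomposition for $\vcH_\beta$ so that~\cite[Theorem 2.8]{hastings2006spectral} applies directly.
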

This exponential decay form is optimal, already in classical spin chains~\cite{Ruelle1999StatisticalMR}. \autoref{cor:clustering} generalizes the seminal work of \cite{araki1969gibbs} for infinite, translationally-invariant systems to the finite, non-translationally invariant setting. Curiously, the proof attaining the true exponential decay is, fundamentally, dynamical.

\subsection{Main Challenges}
\label{section:prior}

A long-standing paradigm in \textit{classical} Gibbs sampling is that the \textit{static} correlations in the Gibbs measure, should be intimately tied to the \textit{dynamic} mixing time of local, detailed-balanced Markov chains. To control the spectral gap of the Markov chain, one often starts with a careful definition of decay-of-correlations and seeks a local-to-global recursion, where the spectral gap of the dynamics on a region of spins is related to that of smaller subregions in the chain (see, e.g., \cite{Martinelli1999}).

Of course, even classically, not all notions of correlation decay are equivalent. For instance, the decay of two-point correlation functions (weak-clustering) is far too loose to achieve fast-mixing in general. Intuitively, weak clustering is a property of the equilibrium measure that may not necessarily capture the non-equilibrium states encountered throughout the mixing process. Instead, the notion that implies rapid mixing (and is equivalent to) is \textit{strong spatial mixing} (SSM). SSM is meant to capture how the marginals of regions of spins are robust to changes to their boundary conditions. Remarkably, the only general exception to this distinction is in one dimension: two-point correlation decay already implies SSM, so the two notions coincide. \\

Adapting this paradigm to \textit{quantum} Gibbs samplers faces two fundamental conceptual challenges:
\begin{enumerate}
    \item One cannot decompose a quantum state by ``conditioning on'', or ``pinning'' boundary conditions.
    \item detailed-balanced Lindbladian dynamics for non-commuting Hamiltonians are generally non-local.
\end{enumerate}

So far, the existing body of static-to-dynamic arguments has been largely focused on commuting Hamiltonians (e.g.,~\cite{kastoryano2016commuting,capel2021modified, kochanowski2025rapid}), which mainly confronts the first point. Already, one must admit that liberal conditioning, paramount to the very definition of decay-of-correlations, is not available and requires a substantial change of language.\footnote{This connects closely to the breakdown of the Dobrushin-Lanford-Ruelle (DLR) theory \cite{Dobrushin1968TheDO, Lanford1969ObservablesAI} of boundary conditions in the quantum setting \cite{fannes1995boundaryconditionsquantumlattice}.} Instead, one relies on the following non-commutative version of a \textit{conditional expectation}:

\begin{defn}[Conditional Expectation]\label{defn:CE_intro}
We say a superoperator $\BE$ (a map from operators to operators) is a {conditional expectation} with respect to the Gibbs state $\vrho$ if it satisfies:
    \begin{itemize}
        \item \emph{KMS-detailed-balance:} for any pair of operators $\vX,\vY,$ $\braket{\BE[\vY],\vX }_{\vrho}=\braket{\vY,\BE[\vX]}_{\vrho}.$
        \item \emph{Unitality:} $\BE[\vI]=\vI.$ Consequently, for any operator $\vX$, $\tr[ \BE[\vX]\vrho] = \tr[\vX\vrho].$ 
        \item \emph{Monotonicity:} The spectrum of $\BE$ is real and contained in $[0,1].$ 
        \item \emph{Complete Positivity}.
    \end{itemize}
\end{defn}

The simplest and most natural concrete example is the time evolution of a (local) Lindbladian. 
Intuitively, the following \autoref{eg:Levol} is a Heisenberg-picture interpretation of ``resampling'' the region of qubits according to its current boundary, as the quantum counterpart to running Glauber dynamics restricted to a local set of spins.
\begin{eg}\label{eg:Levol}
    For any region of qubits $\sR\subseteq [n]$, consider a sum over (individually detailed-balanced) Lindbladian terms $\CL_\sR=\sum_{a\in \sR}\CL_a$. Then, the infinite-time evolution of the generator $\BE_\sR = \lim_{t\rightarrow \infty} e^{t\CL_\sR^\dagger}$ is a conditional expectation map.
\end{eg}

In this language, \cite{kastoryano2016commuting} introduced the \textit{strong clustering} condition, a quantum version of decay-of-correlations which explicitly depends on the choice of conditional expectations (such as that resulting from Davies generator \cite{davies74}). Informally, their definition is essentially the following ``gluing identity''. A family of (projective) conditional expectation maps $\{\BE_\sR\}_{\sR\subseteq [n]}$ satisfies strong-clustering if, for disjoint regions $\sA, \sB, \sC\subseteq [n]$, 
\begin{equation}
    \BE_{\sA\sB}\BE_{\sB\sC}\approx \BE_{\sA\sB\sC}, \label{eq:sc-intro}
\end{equation}
in a suitable weighted norm that respects the Gibbs measure. Colloquially, ``resampling'' the region $\sB\sC$, and then subsequently $\sA\sB$, is always approximately equivalent to directly resampling $\sA\sB\sC$ (regardless of their current boundary configurations). Strong clustering is meant to capture the ``insensitivity to boundary conditions'' present in the classical definition of SSM, in that the expressions in \eqref{eq:sc-intro} can only differ if the configuration at the boundary of the region $\mathsf{C}$ still influences $\sA$. 

\cite{kastoryano2016commuting} shows that if \eqref{eq:sc-intro} holds with sufficiently fast decay in the overlap region $\sB$, then one is able to relate the conditional variances in the two sub-regions, ultimately resulting in a recursion over the \textit{local gaps} of said regions.\footnote{The conditional variance of an operator $\vX$ on a region $\sR$ is $\mathsf{Var}_\sR(\vX)=\|\vX-\BE_\sR[\vX]\|_{\vrho}^2$. The \textit{local gap} of the restricted generator $\CL_\sR$ quantifies the rate of decay of this conditional variance. Its classical analog corresponds to the rate of convergence of, e.g., Glauber dynamics, when restricted to a patch of spins, over a configuration of the neighboring spins. } Their approach to commuting quantum systems serves as the main inspiration for our work. Still, with the exception of 1D commuting systems,  the ultimate form of correlation decay remains an active area of debate, in part due to the difficulty in relating strong clustering to more physical quantities or explicitly showing strong clustering in specific systems.\\

For \textit{non-commuting} Hamiltonians, frameworks for studying mixing times are almost nonexistent, and proofs of fast mixing have been restricted to the perturbative, high-temperature regimes, without explicit reference to the underlying static correlation \cite{rouze2024efficient, rouze2024optimal, tong2025fastmixingweaklyinteracting,vsmid2025polynomial,vsmid2025rapid}. 
To extend the commuting arguments~\cite{kastoryano2016commuting} to our non-commuting setting, one must confront the second point above head-on and address the quasi-locality of the dynamics \cite{chen2023quantum, chen2023efficient, ding2024efficient, SA24}. Here, we highlight two technical challenges:\\

\noindent \textit{Quasi-locality of the conditional expectation, and its local spectral gap}. The natural candidate for a family of conditional expectations is the time-evolution of the generator $\CL$. However, in contrast to the classical and commuting settings where a local update effectively acts only on a finite-dimensional space, the evolution of a single ``local'' generator $\CL_a$ \eqref{eq:maintext_L} can act on the whole system. On the other hand, truncating the tails of the dynamics can break the underlying detailed balance structure.
As a consequence, we do not even know how to establish the base case for the recursion -- i.e., a local spectral gap independent of the system size.\\

\noindent \textit{Proving strong clustering in 1D systems.} Even if we were to assume that the evolution of $\CL$ is (quasi-)local, a proof of strong clustering in 1D is by no means immediate. Indeed, it likely requires a detailed understanding of the specific conditional expectation, in conjunction with extensive use of the locality properties of 1D systems (see \autoref{section:1D-locality} for further background). Even though the Gibbs states of one-dimensional non-commuting Hamiltonians have been studied since Araki~\cite{araki1969gibbs}, proofs of correlation decay, even for two-point functions, are surprisingly recent~\cite{Kimura_2025}.

\subsection{Proof Ideas}
\label{sec:overview}

\begin{figure}[b]
\centering
\begin{tikzpicture}[
  box/.style={rounded corners, minimum width=1.5cm, minimum height=1.3cm, text width=4.2cm, align=center, font=\small, draw=black, fill=gray!2},
  arrow/.style={-{Latex}, thick},
  node distance=1cm and 1.6cm,
  box2/.style={rounded corners, minimum width=1.1cm, minimum height=1.3cm, text width=3.8cm, align=center, font=\small, draw=black, fill=gray!2},
  box3/.style={rounded corners, minimum width=1.3cm, minimum height=1.6cm, text width=3.5cm, align=center, font=\small, draw=black, fill=gray!2},
  box4/.style={rounded corners, minimum width=1.0cm, minimum height=1.0cm, text width=2.6cm, align=center, font=\small, draw=black, fill=gray!2},
    boundingbox/.style={draw=black, thick, rounded corners=5pt, inner sep=10pt, fit=#1}
]

\node[box] (T1) {
\cite{Kimura_2025} \textbf{Sub-Exponential} \\$\infty-$\textbf{Weak Clustering} 
};

\node[box, right=of T1] (T2) {
 \textbf{Sub-Exponential} \\\textbf{(KMS) Weak Clustering} 
};

\node[box2, above=of T1] (T3) {
  \textbf{$\CK$ is Locally Gapped} \\ [3pt]
  (Lemma \ref{lem:local_gap_K})
};

\node[box2, above=of T2] (T4) {
  \textbf{Quasi-Locality of $\tBE$} \\[3pt]
  $\tBE_\sA\tBE_\sC\approx \tBE_{\sA\sC}$
};

\node[box3](T5) at ($ (T2)!0.5!(T4) + (6.0cm,0) $) {
  \textbf{Strong Clustering}\\[3pt]
  $\tBE_{\sA\sB}\tBE_{\sB\sC}\approx \tBE_{\sA\sB\sC}$ \\ [3pt]
  (for suitable $\sA,\sB,\sC$)
};

\node[box4] (T6) at ($(T5.south) + (0,-2.2cm)$) {
  \textbf{$\CK$ is Gapped}
};

\node[box4] (T7) at ($(T6) + (-6.0cm,0)$) {
  \textbf{$\CL$ is Gapped}
};

\node[box2] (T8) at ($(T7) + (-6.0cm, 0)$) {
  \textbf{Exponential \\$\infty$-Weak Clustering} \\ [3pt]
};

\coordinate (mergepoint) at ($ (T2)!0.5!(T4) + (2.7cm,0) $);

\draw[arrow] (mergepoint) -- (T5.west);

\draw[arrow] (T2.east) -- ($(T2.east)+(0.4cm,0)$) -- (mergepoint);
\draw[arrow] (T4.east) -- ($(T4.east)+(0.6cm,0)$) -- (mergepoint) node[pos=0.65, above right=22pt and -15pt] {\footnotesize Thm \ref{thm:strong_from_weak}};

\node[draw,circle,inner sep=1.5pt,thick,fill=white] at (mergepoint) {\small\textbf{+}};

\node[boundingbox=(T1)(T2)(T3)(T4)(T5)(T6)(T7)(T8)] {};

\draw[arrow] (T3) -- (T4) node[midway, above] {\footnotesize Lem~\ref{lem:localized_E}};
\draw[arrow] (T1) -- (T2) node[midway, above] {\footnotesize Thm~\ref{thm:kms_from_weak}};
\draw[arrow] (T5.south) -- (T6.north) node[midway, right] {\footnotesize Thm~\ref{thm:K-is-gapped}};
\draw[arrow] (T6) -- (T7) node[midway, above] {\footnotesize Thm~\ref{thm:ckg_gap}};
\draw[arrow] (T7) -- (T8) node[midway, above] {\footnotesize Cor~\ref{cor:clustering}};

\end{tikzpicture}

\caption{ \label{fig:1D-argument-summary}
An outline of the proof of \autoref{thm:main}, starting from the sub-exponential decay-of-correlations in the Gibbs state of arbitrary 1D Hamiltonians, at all finite temperatures, established by \cite{Kimura_2025}. }
\end{figure}

We reconcile the challenges by constructing a relaxation of conditional expectation maps $\tBE$, that simultaneously
\begin{enumerate}
    \item enjoys sufficient locality and admits a local spectral gap, and 
    \item defines a version of strong clustering that can be derived from a more physical, off-the-shelf version of correlation decay from the literature \cite{Kimura_2025}.
\end{enumerate}

Put together, these maps fit into the recursive scheme of \cite{kastoryano2016commuting}, and will be used to derive the spectral gap. However, $\tBE$ is not exactly generated by the Lindbladian $\CL$ as in \autoref{eg:Levol}, but instead is the time-evolution of an auxiliary dynamics $\CK$. At the end of the day, we perform a spectral gap comparison to relate the two. A summary of the argument is presented in \autoref{fig:1D-argument-summary}, and we expand on the key ingredients as follows.\\

\noindent \textbf{A Quasi-Local Spectral Conditional Expectation (\autoref{sec:CE}).} For any subset $\sA \subseteq [n],$ we define the map $\tBE_\sA$ to be the infinite-time limit of an auxiliary generator $\CK$ in the Heisenberg picture:
\begin{align}
   \tBE_{\sA}[\cdot ]:= \lim_{t \rightarrow \infty}e^{t\CK_\sA}[\cdot] \quad \text{where}\quad \CK_\sA =\sum_{a\in\sA} \CK_a.  \label{eq:intro_ER}
\end{align}
For each jump operator $\vA^a$ (concretely, a single-site Pauli operator), the local generator is defined by:
    \begin{align}
    \CK_a[\vX] &:= [\vA^a,\vX] \big(\vrho^{1/2} \vA^{a\dagger}\vrho^{-1/2}\big)-  \big(\vrho^{-1/2} \vA^{a\dagger}\vrho^{1/2}\big)[\vA^a,\vX]. 
\label{eq:intro-CK}
\end{align}
 In many ways, this new generator $\CK_a$ resembles the Lindbladian $\CL^{\dagger}_a$\footnote{The Lindbladian $\CL$ is defined in the Schrodinger picture. We define $\CK$ in the Heisenberg picture because we will mostly study its action on operators.}: it is (KMS) detailed-balanced, fixes the identity, and has nonpositive spectra. However, it is \textit{not} completely positive. Consequently, the infinite-time limit $\tBE_{\sA}$ is not a physical operation, and we refer to it as a \textit{spectral conditional expectation} instead (\autoref{defn:spectral_expectation}). 
 
The main virtue of introducing $\CK$ (instead of using the original Lindbladian $\CL^{\dagger}$) is its highly explicit \textit{locality} properties, in both time and space:

\begin{enumerate}
    \item \textit{The kernel of $\CK_\sA$} admits an explicit characterization: $\CK_a[\vX] = 0\iff [\vA^a, \vX] = 0$. Intuitively, $\CK_{\sA}$ wants to trivialize operators on $\sA$, and stops sharply whenever an operator becomes trivial on $\sA$. Consequently, the limiting map $\tBE_{\sA}$ is an orthogonal projection onto the linear space spanned by operators trivial on $\sA.$\footnote{That is, the image of $\tBE_{\sA}$ is explicit $\tBE_{\sA}[\vX] = \vX\iff \vX = \vI_{\sA} \otimes \vX'_{\sA^c}$.} 
    \item[2.] \textit{A local gap}. In the $\beta=0$ limit, $\CK$ recovers the generator of the depolarizing semi-group. At any finite $\beta$, the spectral gap of $\CK_\sA$ relative to its kernel can be shown to be at least inverse-exponential in $|\sA|$ (but independent of $n$), by a comparison to the $\beta=0$ limit.
    \item[3.] \textit{Quasi-locality}. In one dimension, at every finite $\beta$, the \textit{complex-time evolution} $\vrho \vA \vrho^{-1}$ of local operators is quasi-local\footnote{In higher dimensions, $\CK$ as literally written can have diverging strength.} (see Appendix~\ref{section:1D-locality}). Therefore, the generator $\CK_\sA$ can be truncated to a radius $\ell$ around $\sA$ up to an error exponentially small in $\ell\log\ell$. 
\end{enumerate}

In contrast, it is unclear which (if any) of these properties are shared by $\CL^{\dagger}$ \cite{chen2023efficient}.  Put together, evolving $\CK$ must yield a map $\tBE$ that inherits this quasi-locality: $\tBE_\sA$ is well-approximated by that with a truncated Hamiltonian (defined through \eqref{eq:intro-CK}) to a region $\sR$ around $\sA$ (\autoref{lem:localized_E}): 
    \begin{align}
    \|\tBE_\sA-\tBE_\sA^{(\sR)}\|_{\vrho} \sim \exp\big(c_1\cdot |\sA| - c_2\cdot \ell\log \ell\big),\quad  \text{where}\quad \ell = \mathsf{dist}(\sA, [n]\setminus \sR) \label{eq:main_EAERA}.
\end{align}

\begin{figure}[t]
\includegraphics[width=0.8\textwidth]{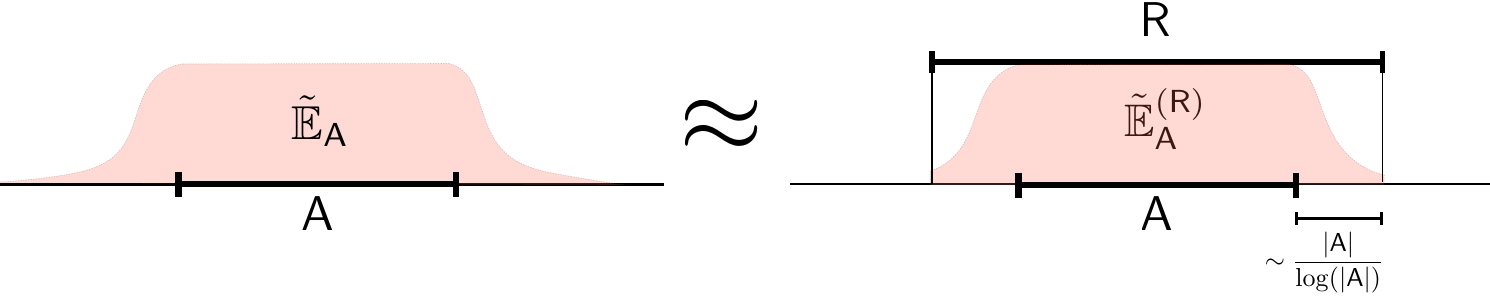}
\caption{
The quasi-locality of the spectral conditional expectation map. For any set $\sA$, the spectral conditional expectation $\tBE_\sA$ is well-approximated by that with a truncated Hamiltonian nearby $\sA$. Quantitatively, the length of the ``buffer zone'' needs to be at least $\frac{\labs{\mathsf{\sA}}}{\log\labs{\mathsf{\sA}}}.$
}\label{fig:quasilocal_E}

\end{figure}

Quantitatively, the buffer region needs to contain at least $\sim |\sA|/\log|\sA|$ neighboring sites (see \autoref{fig:quasilocal_E}) for the truncation error to be non-trivial. As a corollary, the conditional expectations over pairs of intervals factorize $\tBE_\sA\tBE_\sC\approx \tBE_{\sA\sC},$ so long as $\sA$ and $\sC$ are sufficiently far from each other, relative to $|\sA\sC|/\log |\sA\sC|$. These fundamental quasi-locality properties serve as a recurring tool throughout the entire argument.\\

\noindent \textbf{Strong Clustering from Weak Clustering (\autoref{section:clustering}).} Equipped with the family of spectral conditional expectations $\{\tilde{\BE}_{\sR}\}_{\sR\subseteq [n]}$, we can now revisit the definition of strong clustering. For consecutive, disjoint intervals $\sA, \sB, \sC\subseteq [n]$ (see~\autoref{fig:EAB_BC_ABC}), 
\begin{equation}
    \tBE_{\sA\sB}\tBE_{\sB\sC}\approx \tBE_{\sA\sB\sC} \label{eq:gluing_intro_TBE}.
\end{equation}

Since the intersection of the images $\tBE_{\sA\sB}$, $\tBE_{\sB\sC}$ is exactly $\tBE_{\sA\sB\sC}$, we can already deduce that $\lim_{m\rightarrow \infty}(\tBE_{\sA\sB}\tBE_{\sB\sC})^{m} = \tBE_{\sA\sB\sC}.$ However, strong clustering requires that the convergence to $\tBE_{\sA\sB\sC}$ take place in just one step. 
Our proof (\autoref{thm:strong_from_weak}) of strong clustering \eqref{eq:gluing_intro_TBE} for 1D non-commuting Hamiltonians is a reduction to the decay of two-point correlation functions in the KMS inner product -- often referred to as (KMS) \textit{weak clustering}. We then show that (KMS) weak clustering holds unconditionally in 1D Hamiltonians, by a reduction to the recent results of \cite{Kimura_2025}, see \autoref{thm:kms_from_weak}. Although this final step is somewhat laborious, it mostly relies on established techniques, and so we relegate the details to Appendix \ref{app:Weakclustering}.\\

\noindent \textit{The key step in the proof of \eqref{eq:gluing_intro_TBE}.} The proof of \eqref{eq:gluing_intro_TBE} is one of our main technical contributions, and we attempt to sketch a clean conceptual component, ``trivializing an island.'' Consider an operator $\vX_\sB$ supported on an interval $\sB\subseteq [n]$, and centered w.r.t. the Gibbs measure $\tr[\vrho\vX_\sB]=0$. We claim that, running the conditional expectation map $\tBE_{\sA\sB\sC}[\vX_\sB]$ on a region $\sA\sB\sC$ surrounding $\sB$, approximately trivializes the operator ``island''
\begin{align}
\tBE_{\sA\sB\sC}[\vX_\sB] \approx 0\label{eq:kills_island}, 
\end{align}
for large enough $\sA,\sC.$ In general, the map $\tBE_\sR$ always trivializes operators on $\sR$ but may produce new components supported outside $\sR$. 
However, the claim in~\eqref{eq:kills_island} is that, since $\vX_\sB$ is supported ``quite far'' from the boundary of $\sA\sB\sC$, the resulting operator must simply vanish.
In fact, this can be related to the decay of two-point correlation functions (i.e., weak clustering):
\begin{align}
    \big\|\tBE_{\sA\sB\sC}[\vX_\sB]\big\|^2_{\vrho}  &= \braket{ \under{\vX_{\sB}}{\text{on }\sB}, \quad  \under{\tBE_{\sA\sB\sC}[\vX_\sB]}{\text{on }[n]\setminus \sA, \sB, \sC}}_{\vrho}, 
\end{align}
using that $\tBE_{\sA\sB\sC}^2=\tBE_{\sA\sB\sC}.$ The RHS is exactly a (KMS) correlation function! 

To fully prove~\eqref{eq:gluing_intro_TBE}, we must also handle the case where $\vX$ is supported on $\sB$ and also outside of $\sA\sB\sC$. In order to do so, we fundamentally need to leverage the quasi-locality of $\tBE$. Unfortunately, this comes with a serious caveat: our version of strong clustering only holds over ``well-separated tripartitions'', where the size of $\sB$ is at least near-linear in the full interval length, $\sim |\sA\sB\sC|/\log |\sA\sB\sC|$.\\

\begin{figure}[t]
\includegraphics[width=0.8\textwidth]{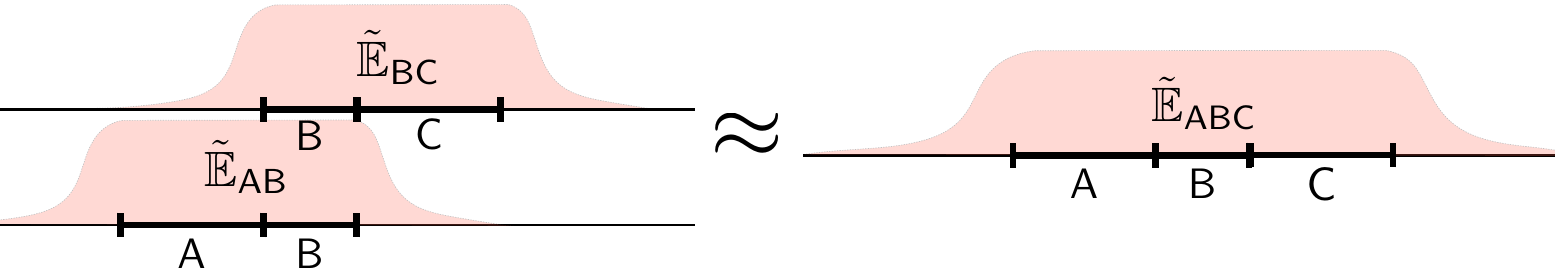}
\caption{
{The Definition of Strong Clustering. The notion of strong clustering for our spectral conditional expectations is best described by a ``gluing'' property, see equation \eqref{eq:gluing_intro_TBE}. Due to quasi-locality, we must take the size of the region $\sB$ to scale at least with $\labs{\sA\sC}/\log(\labs{\sA\sC})$.
}\label{fig:EAB_BC_ABC}
}
\end{figure}

\noindent \textbf{Spectral Gap Recursions for the Generator $\CK$ (\autoref{sec:conditionalgapK})}. Despite its limitations, our version of strong clustering still gives rise to a recursion over local gaps, following the \cite{kastoryano2016commuting} scheme. A proof overview is deferred to \autoref{sec:conditionalgapK}. However, due to the constraints on the size of the middle region $\sB$, the result is an inverse$-\mathsf{polylog}(|\sR|)$ local gap for the restricted generator $\CK_\sR$, over regions of size $|\sR| \geq \ell_0(\beta, q)$.

Fortunately, we are able to \textit{bootstrap} this improvement to the local gap for the generator $\CK$ all the way up to a constant, which is optimal. The claim is that, at larger length scales, the conditional expectation $\tBE_\sR$ becomes more and more localized, significantly improving our original a priori estimates~\eqref{eq:main_EAERA}. That is, the improvement to the local gap, at a larger length-scale $\ell\geq \ell_0$, allows us to iterate on the quasi-locality properties and, in turn, on the strong-clustering guarantees. 

There are many routes to such a bootstrapping argument, and in \autoref{sec:clustering-from-gap} we implement a Lieb-Robinson approach for the time evolution of the generator $\CK$. Alternatively, one can likely redo the analyses of \autoref{sec:CE}+\autoref{section:clustering}. Either way, we show that above the length-scale $\ell_0$, strong clustering holds over every ``well-separated'' tripartition $\sA, \sB, \sC\subseteq [n]$ where $|\sB|\geq |\sA\sB\sC|^{1-\delta}$, for an appropriate constant $\delta>0$. This feeds into another round of spectral gap recursion, which finally results in a constant gap for $\CK$. \\

\noindent \textbf{Compare to the Spectral Gap of $\CL$ (\autoref{sec:CKGgap}).}  The last step in our proof of \autoref{thm:main} is to relate the hard-earned spectral gap of $\CK$ (which is not completely positive) to that of the full Lindbladian $\CL^\dagger$ \eqref{eq:maintext_L} (which is physical). Since both the generators are ergodic -- thus sharing the same fixed points -- it suffices to lower bound the \textit{Dirichlet form} of $\CL^\dagger$ (see \autoref{lem:Dirichlet}), by that of $\CK$: 
\begin{align}
  \sum_{a\in \CS_{[n]}^1}  \|[\vA^a, \vO]\|_{\vrho}^2 =     \braket{\vO, -\CK[\vO]}_{\vrho} \le c_{\beta,q}\cdot  \braket{\vO, -\CL^\dagger[\vO]}_{\vrho} ,
\end{align}
up to a suitable $\beta, q$ dependent $c_{\beta,q}$. Here, we compare the gap for the dynamics only at the global level, and it is unclear whether a comparison at earlier stages of the proof is feasible, due to the lack of local gaps and the explicit form for the kernel of $\CL$.

This comparison strategy can be traced back to tricks from \cite{chen2025quantumMarkov}, with the additional challenge that establishing a PSD order of the Dirichlet form forbids any other norm other than the KMS-inner product. Recall the Lindblad operators of $\CL$ \eqref{eq:maintext_L} consists of the \textit{operator Fourier transforms} $\hat{\vA}^a(\omega)$ of each Pauli jump operator $\vA^a$ (see~\autoref{section:oft}). We proceed by decomposing each such $\vA^a$ as a linear combination of the operator Fourier transforms $\hat{\vA}^a(\omega)$ (\autoref{lem:sumoverenergies}), and then introducing a cutoff frequency $\Omega\in \BR^+$. The ``low-frequency components'' ($|\omega|\leq \Omega$) can be directly related to the Dirichlet form of $\CL$; while ``the high-frequency components'' ($|\omega|\geq \Omega$) decay rapidly, and are small relative to the Dirichlet form of $\CK.$

\subsection{Discussion}
\label{section:discussion}
In this paper, we showed that quantum spin chains thermalize at every finite temperature by proving an optimal, system-size-independent spectral gap for the Lindbladian~\cite{chen2023efficient}. Our dynamical result is a concrete step towards a comprehensive picture of thermal physics in one dimension, extending the interplay between static (correlation decay) and dynamics (mixing times) to non-commutative settings. Consequently, we also obtain an optimal exponential decay of correlations, which generalizes the translational invariant case~\cite{araki1969gibbs}, and a polylog-depth circuit for adiabatically preparing the purified Gibbs state. Our proof is a non-commutative extension of the spectral gap recursion scheme of~\cite{kastoryano2016commuting}, and employs a non-physical conditional expectation to reconcile challenges rooted in the non-locality of the dynamics.\\

\noindent \textit{Decay-of-correlations in higher-dimensions.} Throughout the proof, however, we heavily exploited the mature literature of 1D results, most notably the convergence of complex-time evolution \cite{araki1969gibbs,  perez2023locality} and various forms of correlation decay~\cite{bluhm2022exponential, Kimura_2025}. In higher dimensions, complex-time evolution generically diverges below a finite temperature; thus, the auxiliary generator $\CK$ may have a divergent norm. What is more, even given another candidate conditional expectation map, proving strong clustering is likely nontrivial, since even classically strong clustering need not follow from more tractible notions of (weak) correlation decay. At least, our 1D results clarify that proving an optimal spectral gap in non-perturbative regimes is possible even with non-local dynamics.\\

\noindent \textit{Mixing times beyond spectral gaps.} In commuting Gibbs samplers, several works have studied more fine-grained approaches to mixing time bounds beyond spectral gaps, e.g., log-Sobolev inequalities \cite{Bardet_2021, capel2021modified, Bardet2024, BCL24}, optimal transport metrics \cite{capel2024quasi}, coupling arguments \cite{BGL25, páezvelasco2025efficientsimplegibbsstate, Hwang24}, etc. Most relevant to our work are the results of \cite{Bardet2024,kochanowski2024rapid}, who established that if the Davies generator of a 1D commuting Hamiltonian is gapped, then it also admits a (modified) log-Sobolev inequality, and thus is \textit{rapid-mixing.} An intriguing open question is to identify if the \cite{chen2023efficient} Lindbladian family similarly admits such an implication. Unfortunately, it remains unclear to what extent the techniques of \cite{Bardet2024,kochanowski2024rapid} are applicable in our context, since our spectral conditional expectation is not completely positive. \\

\noindent \textit{Low depth adiabatic algorithms.} Even without a log-Sobolev inequality, spectral gaps can still lead to near-optimal state preparation circuits through quasi-adiabatic evolution. To establish the gate complexity guarantee in \autoref{cor:adiabatic}, we write down a detailed analysis of an approach put forth by \cite{chen2023efficient}.

The general statement goes as follows: Let $\vH$ be a $d$-dimensional lattice Hamiltonian with exponentially decaying interactions. If the associated family of Lindbladians $\{\CL_{s}\}_{s\in [0, \beta]}$, admits a uniform spectral gap lower bound from infinite temperature $\beta =0$ to a fixed $\beta^{-1}$, then, the Gibbs state $\vrho_\beta$ admits a $\mathsf{poly}\log$-depth state-preparation circuit. Therefore, the absence of a dynamical phase transition in the spectral gap of the thermalization process implies $\ket{\sqrt{\vrho_\beta}}$ has very limited entanglement. This resembles the ``sudden death of entanglement'' \cite{bakshi2024high}, i.e., that Gibbs states are completely unentangled at sufficiently high temperatures, but here we allow for quasi-local entangling circuits, which hold all the way to the dynamical phase transition where the Lindbladian spectral gap closes.\\

\noindent \textit{Other Gibbs state preparation algorithms.} More broadly, while mixing time bounds are sufficient for Gibbs sampling, there are several other routes to Gibbs state preparation algorithms. Most notably, the patching argument~\cite {brandao2019finite_prepare, Swingle_2016, Kato2019,kuwahara2024clustering}, which is a quantum algorithm to prepare Gibbs states that admit certain decay-of-correlations. These algorithms express the Gibbs state as a constant-depth circuit of quasi-local ``gates'', which act on $\log n$ or $\mathsf{poly}\log  n$ qubits at a time. When implemented using 2-qubit gates, the resulting circuit depth is polynomial in 1D, and super-polynomial in higher dimensions.

Another route lies in a tensor network (MPO) description \cite{Kliesch19, Molnar_2015, KAA21} for the quantum Gibbs state. Notably, \cite{KAA21} devised a quasi-linear time classical algorithm to represent the Gibbs state of 1D non-commuting Hamiltonians at arbitrary finite temperatures, but it remains open to give a tight conversion to quantum circuits, especially in provably achieving a depth sublinear in the system-size.\footnote{The generic conversion scheme of~\cite{lin2021real} gives a quantum circuit depth at least linear in the system size.}\\

\textbf{Acknowledgements.}
We thank Anurag Anshu, Jo\~ao Basso, Fernando Brand\~ao, Angela Capel, Soonwon Choi, Jonas Haferkamp, Michael Kastoryano, Vedika Khemani, Tomotaka Kuwahara, Lin Lin, Yunchao Liu, Tony Metger, David Perez Garcia, Nikhil Srivastava, Matteo Scandi, Daniel Stilck Fran\c{c}a, Ewin Tang, Umesh Vazirani, and Yongtao Zhan for encouraging discussions. CFC thanks Cambyse Rouzé for hosting him at Institut Polytechnique de Paris in summer 2024, where the questions about 1D spectral gaps and a related work~\cite{chen2025quantumMarkov} were initiated. CFC is supported by a Simons-CIQC postdoctoral fellowship through NSF QLCI Grant No. 2016245. 

\textbf{Contribution statement.}
CF.C. conceived and developed the framework for a spectral recursion via the nonphysical generator $\CK$ and guided the collaboration with T.B. T.B. later joined the project towards obtaining a tight spectral gap by adapting the framework~\cite{kastoryano2016commuting}. All authors contributed to writing and revising the manuscript.

\section{Preliminaries}

\subsection{Notation}

We consider a one-dimensional spin chain of finite system size $n$, where the spins have constant local dimension $2^q$. Each such spin $i\in [n]:=\{1, 2, \cdots, n\}$ will be referred to as a \textit{site}. The associated spin chain Hilbert space is denoted as $\CH = \otimes_{i=1}^{n} (\BC^{2^q})$.  The space of linear operators on $\CH$ is referred to as $\CB(\CH)$; we use boldface letters to refer to operators. A superoperator $\CN:\CB(\CH)\rightarrow \CB(\CH)$ is a map between linear operators. 

We denote the set of single-site $q$-weight Pauli strings, supported on the $i$th $2^q$ dimensional qudit as:
\begin{align}
    \CS_i &:= ( \{\vI_2,\vX,\vY,\vZ\}^{\otimes q}\} )\otimes \vI_{2^q}^{n-1},
\end{align}
and we will never care how the Pauli strings are actually embedded within a site. Sans serif is used to refer to subsets of sites $\sA\subseteq [n]$. For any subset $\sA\subseteq [n]$, we define the set of single-site Pauli strings, and multi-site Pauli strings:
\begin{align}
    \CS^1_\sA &:= \cup_{i\in \sA} \CS_i,\\
    \CS_\sA &:= ( \{\vI_2,\vX,\vY,\vZ\}^{\otimes q\labs{\sA}})\otimes \vI_{2^q}^{n-\labs{\sA}}.
\end{align}

Throughout the paper, we focus on the scaling with the system size $n$; the local dimension $2^q$ and the inverse temperature $\beta$ will be arbitrary constants that are independent of the system size. For self-contained lemmas, we either display the explicit dependence on $\beta,q$, or use notation
\begin{align}
    b_{\beta,q}, a_{\beta} 
\end{align}
to omit numerical factors that may change from line to line but depend only on the subscripts.

\subsection{Hamiltonians on a One-Dimensional Lattice}

In this paper, we consider a finite one-dimensional spin chain, where the spins interact under a short-range, 2-local, nearest-neighbor Hamiltonian.

\begin{defn}[1D Hamiltonians]\label{defn:1D}
    Consider the spin chain Hilbert space $\CH=\otimes_{i=1}^{n} (\BC^{2^q})$ on $n$ sites, where each site $b \in [n]$ corresponds to a qudit of local dimension $2^q$. We consider 2-local Hamiltonians defined on a one-dimensional lattice of the form
\begin{align}
    \vH = \sum_{\gamma \in \Gamma} \vH_{\gamma}=\sum_{b=1}^{n-1} \vH_{b,b+1}\quad \text{where}\quad \norm{\vH_{b,b+1}} \le 1,
\end{align}
which acts on nearest-neighbour sites, with nonperiodic boundary conditions. 
\end{defn}

We refer to the Hamiltonian terms $\vH_{b, b+1}$ between sites as \textit{links}. For any two subsets of sites $\sA, \sB\subseteq [n]$, we denote by $\operatorname{dist}(\sA,\sB)$ the minimal number of links on any path between any site of $\sA$ to any site of $\sB$. In particular, for two sites $i, j\in [n]$, $\operatorname{dist}(i,j) = |i-j|$; if $\sA, \sB$ are disjoint and adjacent on the lattice, their distance is 1. Further, for any link $\vH_\gamma$, we refer to $\operatorname{dist}(\gamma,\sA)$ as the distance between the support of $\vH_\gamma$ to $\sA\subseteq [n]$. 

An \textit{interval} $[a, b]:=\{a, a+1, \cdots, b\}\subseteq [n]$ of the 1D lattice is a subset where all the sites are adjacent. For an integer $\ell,$ we define the $\ell-$neighborhood of an interval $\sA = [a,b]$ to be the interval $\sA_{\ell} : = [a-\ell,b+\ell]$. Namely, it includes all sites within distance $\ell.$ A set of intervals is said to be contiguous if they are pairwise adjacent. 

Given two disjoint subsets $\sA, \sC\subseteq [n]$, we refer to their support as a $k$-\textit{sequence} if the entire chain can be partitioned into an alternating sequence of $(k+1)$ intervals, containing either $\sA$-sites or $\sC$-sites (but not both), separated by $k$ buffer regions $\sB$. If $\sA, \sC$ are already intervals, then the pair is a $1$-sequence. In general, for even $k$:
\begin{equation}\label{eq:many-intervals}
\sA_1\cup\sB_{1}\cup\sC_2\cup\sB_{2}\cup\sA_3\cdots \sC_{k}\cup\sB_{k}\cup \sA_{k+1} = [n]
\end{equation}
where $\sA\subseteq\cup_{i \text{ odd}} \sA_i, \sC\subseteq \cup_{i \text{ even}} \sC_i$, and $\sA_i\cap\sC=\emptyset$ (and vice-versa). The case of odd $k$ is analogous.

Ubiquitous in our argument is the restriction of the Hamiltonian terms to a subset of the chain. For any subset $\sR\subseteq [n]$, we define the restricted Hamiltonian and the associated induced Gibbs state:
\begin{align}
     \vH^{\sR} = \sum_{\gamma \subset \sR} \vH_{\gamma} \quad \text{and}\quad\vrho^{\sR} := \frac{\e^{-\beta \vH^\sR}}{\tr[\e^{-\beta \vH^\sR}]}. 
\end{align}
For any two disjoint intervals $\sA, \sB\subseteq[n]$, we refer to the Hamiltonian link between the two as
\begin{align}
    \vH^{\mathsf{A}:\mathsf{B}} :=  \vH^{\mathsf{A}\mathsf{B}} - \vH^{\mathsf{A}} -\vH^{\mathsf{B}}.
\end{align}
We refer the reader to \autoref{section:1D-locality} for further background on known locality properties of Gibbs states of 1D Hamiltonians.

\subsection{The KMS-Inner Product}
The Kubo-Martin-Schwinger (KMS) inner product of two operators $\vX, \vY$, weighted by a full rank density matrix $\vrho$, is defined by
\begin{align}
\langle \vX,\vY\rangle_{\vrho}:=\tr[\vX^\dagger \vrho^{\frac{1}{2}}\vY\vrho^{\frac{1}{2}}]\,.
\end{align}
We further denote by $\norm{\vX}_{\vrho} := \sqrt{\langle \vX,\vX\rangle_{\vrho}}$ the $\vrho$-weighted norm induced by the KMS inner product. Unconditionally, the operator norm upper bounds the KMS norm.

\begin{lem}\label{lem:operatornorm}
    For any state $\vrho$ and operator $\vX$, we have that $\norm{\vX}_{\vrho}\le \norm{\vX}$. 
\end{lem}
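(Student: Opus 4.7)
The inequality $\|\vX\|_\vrho\le \|\vX\|$ is an elementary property of the KMS norm, so I expect a short direct calculation rather than an argument with any real obstacle. The key observation is that $\|\vX\|_\vrho^2 = \tr[\vX^\dagger \vrho^{1/2}\vX\vrho^{1/2}]$ can be viewed as a Hilbert--Schmidt pairing $\tr[\vA \vB]$ under the choice $\vA := \vX^\dagger\vrho^{1/2}$ and $\vB := \vX\vrho^{1/2}$, which opens the door to Cauchy--Schwarz.

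First, I would apply the Hilbert--Schmidt Cauchy--Schwarz inequality $|\tr[\vA\vB]|\le \sqrt{\tr[\vA\vA^\dagger]}\,\sqrt{\tr[\vB^\dagger\vB]}$. Cyclicity of the trace then gives $\tr[\vA\vA^\dagger] = \tr[\vrho\,\vX\vX^\dagger]$ and $\tr[\vB^\dagger\vB]=\tr[\vrho\,\vX^\dagger\vX]$, so all the square roots of $\vrho$ collapse back into a single $\vrho$.

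Next, for any positive operator $\vY\succeq 0$ one has $\vY\preceq \|\vY\|\cdot \vI$, and hence $\tr[\vrho\vY]\le \|\vY\|$ for every density matrix $\vrho$. Applying this to $\vY=\vX\vX^\dagger$ and $\vY=\vX^\dagger\vX$ -- both of which have operator norm equal to $\|\vX\|^2$ -- bounds each of the two quantities above by $\|\vX\|^2$. Combining yields $\|\vX\|_\vrho^2\le \|\vX\|^2$, and the lemma follows upon taking a square root.

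There is no real obstacle here -- this is a standard fact included for self-containedness. An equally short alternative would be to diagonalize $\vrho=\sum_i p_i\,|i\rangle\langle i|$, expand $\|\vX\|_\vrho^2=\sum_{i,j}\sqrt{p_ip_j}\,|X_{ij}|^2$ with $X_{ij}:=\langle i|\vX|j\rangle$, and apply AM--GM $\sqrt{p_ip_j}\le \tfrac{1}{2}(p_i+p_j)$ to split the sum as $\tfrac{1}{2}\bigl(\tr[\vrho\vX^\dagger\vX]+\tr[\vrho\vX\vX^\dagger]\bigr)\le \|\vX\|^2$, which closes the proof in one line.
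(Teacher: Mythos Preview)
Your argument is correct. The paper actually states this lemma without proof, treating it as a standard preliminary fact, so there is nothing to compare against; your Cauchy--Schwarz derivation (and the AM--GM alternative) fills in exactly the kind of routine verification the authors omitted.
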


However, this norm conversion is almost certainly sub-optimal, especially in 1D systems. Instead, we would prefer a Holder-like inequality that is homogeneous in the KMS-norm. The following is a direct application of Holder's, once we put the right fractions of $\vrho$ in the right places.
\begin{lem}[Holder in KMS Norm, e.g., {~\cite[Lemma IX.4]{chen2025quantumMarkov}}]\label{lem:loose_AO}
For any pair of operators $\vA,\vO$, and full rank state $\vrho,$
\begin{align}
\lnormp{\vA\vO}{\vrho} &\le \norm{\vrho^{1/4}\vA\vrho^{-1/4}} \cdot\lnormp{\vO}{\vrho},\\
\lnormp{\vO\vA}{\vrho} &\le \norm{\vrho^{-1/4}\vA\vrho^{1/4}} \cdot\lnormp{\vO}{\vrho}.
\end{align}     
\end{lem}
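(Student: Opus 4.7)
The plan is to unfold the KMS norm into a Hilbert--Schmidt (Frobenius) norm, perform a similarity transformation that pulls the desired rotated operator $\vrho^{\pm 1/4}\vA\vrho^{\mp 1/4}$ into a clean position, and finish with the elementary submultiplicative bound $\|\vB\vM\|_2 \le \|\vB\|\cdot\|\vM\|_2$ that relates the operator norm to the Hilbert--Schmidt norm.

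Concretely, for the first inequality I would start by expanding
\begin{align}
\|\vA\vO\|_{\vrho}^2 = \tr\bigl[\vO^\dagger \vA^\dagger \vrho^{1/2} \vA \vO \vrho^{1/2}\bigr],
\end{align}
and then insert the identity $\vA = \vrho^{-1/4} \vB \vrho^{1/4}$ with $\vB := \vrho^{1/4}\vA\vrho^{-1/4}$. Using $\vA^\dagger = \vrho^{1/4}\vB^\dagger\vrho^{-1/4}$, the factors of $\vrho^{\pm 1/4}$ that land against $\vrho^{1/2}$ simplify, leaving
\begin{align}
\|\vA\vO\|_{\vrho}^2 = \tr\bigl[\vO^\dagger \vrho^{1/4}\vB^\dagger \vB \vrho^{1/4} \vO \vrho^{1/2}\bigr] = \bigl\|\vB \cdot \vrho^{1/4}\vO\vrho^{1/4}\bigr\|_2^2,
\end{align}
where the last step uses the cyclic property and recognizes the trace as $\tr[\vM^\dagger \vM]$ with $\vM = \vB\vrho^{1/4}\vO\vrho^{1/4}$. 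Finally, the Hilbert--Schmidt inequality $\|\vB\vM\|_2\le \|\vB\|\cdot\|\vM\|_2$ produces $\|\vB\|^2 \cdot \|\vrho^{1/4}\vO\vrho^{1/4}\|_2^2 = \|\vB\|^2 \cdot \|\vO\|_{\vrho}^2$, which is the claim.

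The second inequality is symmetric: I would write $\vA = \vrho^{1/4}\vC\vrho^{-1/4}$ with $\vC := \vrho^{-1/4}\vA\vrho^{1/4}$ and follow the same steps, now isolating $\vC$ on the right-hand side so that the remaining factor is $\vrho^{1/4}\vO\vrho^{1/4}\vC$. Again the KMS-norm expression reduces to a Hilbert--Schmidt norm squared, and submultiplicativity yields the stated bound.

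There is essentially no obstacle here --- the proof is a one-line application of Hölder once the KMS norm is rewritten via the natural similarity transform, as the lemma's credit to \cite{chen2025quantumMarkov} already suggests. The only subtlety worth spelling out is that the two inequalities are \emph{not} symmetric under $\vA\leftrightarrow\vA^\dagger$ because the KMS norm is left--right asymmetric: this is precisely why the left-multiplication version uses $\vrho^{1/4}\vA\vrho^{-1/4}$ while the right-multiplication version uses $\vrho^{-1/4}\vA\vrho^{1/4}$, and keeping track of which of the two $\vrho^{1/4}$-halves of $\vrho^{1/2}$ in the KMS inner product gets absorbed into the rotated $\vA$ is the only bookkeeping to watch.
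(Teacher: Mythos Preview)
Your proposal is correct and takes essentially the same approach as the paper: rewrite the KMS norm as a Frobenius norm $\|\vA\vO\|_{\vrho} = \|\vrho^{1/4}\vA\vO\vrho^{1/4}\|_2$, insert $\vrho^{-1/4}\vrho^{1/4}$ to factor out $\vrho^{1/4}\vA\vrho^{-1/4}$, and apply the submultiplicative bound $\|\vB\vM\|_2 \le \|\vB\|\,\|\vM\|_2$. The paper does this directly at the level of norms rather than squared traces, but the content is identical.
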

\begin{proof} Rewrite the weighted norm in the Frobenius norm 
    \begin{align}
        \norm{\vA\vO}_{\vrho} = \norm{ \vrho^{1/4}\vA\vO\vrho^{1/4}}_2 =\norm{ \vrho^{1/4}\vA\vrho^{-1/4}\cdot\vrho^{1/4}\vO\vrho^{1/4}}_2&\le  \norm{ \vrho^{1/4}\vA\vrho^{-1/4}}\cdot\norm{\vrho^{1/4}\vO\vrho^{1/4}}_2 \\
        &= \norm{ \vrho^{1/4}\vA\vrho^{-1/4}}\norm{\vO}_{\vrho}.
    \end{align}
Repeat for $\norm{\vO\vA}_{\vrho}$ to conclude the proof. 
\end{proof}
Of course, the efficacy of the above is contingent on the convergence of complex-time dynamics of $\vA$ in 1D Hamiltonians. See \autoref{section:1D-locality} for a discussion. See~\cite[Lemma IX.5]{chen2025quantumMarkov} for a regularized variant of the above in generic systems. We also consider the following induced norm on superoperators:

\begin{defn}
    [KMS-Induced Superoperator Norm]\label{defn:induced_norm} For any superoperator $\CN:\CB(\CH)\rightarrow \CB(\CH)$ and full-rank state $\vrho$,
    \begin{align}
    \norm{\CN}_{\vrho}:= \sup_{\vO}\frac{\norm{\CN[\vO]}_{\vrho}}{\norm{\vO}_{\vrho}},
\end{align}

\end{defn}

\autoref{defn:induced_norm} may not be equal to the $1-1$ superoperator norm. For detailed-balanced superoperators (see \autoref{thm:ckg-db}), this is equal to the spectral radius.

\subsection{Operator Fourier Transforms}
\label{section:oft}

The definition of the Lindbladian dynamics we study \cite{chen2023efficient} hinges on the operator Fourier transform. We recall that the operator Fourier transform of an operator $\vA$ associated with the Hamiltonian $\vH$ can be written as 
\begin{align}\label{eq:OFT}
{\hat{\vA}_\sigma}(\omega)=  \frac{1}{\sqrt{2\pi}}\int_{-\infty}^{\infty} \e^{\ri \vH t} \vA \e^{-\ri \vH t} \e^{-\ri \omega t} \cdot f_\sigma(t)\rd t =\sum_{\nu\in B(\vH)} \vA_{\nu}\cdot \hat{f}_\sigma(\omega - \nu).
    \end{align}

If $\vH$ has the spectral decomposition $\vH=\sum_{i}E_i\vP_{E_i}$, then the Bohr frequencies $B(\vH)$ are the set of energy differences, and for $\nu\in B(\vH)$, $\vA_\nu:=\sum_{E_2-E_1=\nu}\vP_{E_2}\vA \vP_{E_1}$ denotes a component of the Bohr frequency decomposition of $\vA$. Finally, $f_\sigma$ denotes a Gaussian weight or filter function with an energy uncertainty parameter $\sigma>0$:
    \begin{align}
    f_\sigma(t) = \e^{-\sigma^2t^2}\sqrt{\sigma\sqrt{2/\pi}} \quad \text{and}\quad  
        \hat{f}_\sigma(\omega)=\frac{1}{\sqrt{2\pi}}\int_{-\infty}^{\infty}\e^{-\ri\omega t} f(t)\mathrm{d}t = \frac{1}{\sqrt{{\sigma}\sqrt{2\pi}}} \exp\L(- \frac{\omega^2}{4\sigma^2}\R).
        \label{eq:fwft}
    \end{align}
    Whenever implicit, we omit the subscript $\sigma$. The parameter $\omega$ will allow us to quantitatively and individually track the different parts of the operator $\vA.$

\begin{lem}[Decomposing an Operator by the Energy Change]\label{lem:sumoverenergies}
For any (not necessarily Hermitian) operator $\vA$, we have that
\begin{align}
\vA =  \frac{1}{\sqrt{2\sigma\sqrt{2\pi}}} \int_{-\infty}^{\infty} \hat{\vA}(\omega)\rd \omega.
\end{align}
\end{lem}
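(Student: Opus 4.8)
The cleanest route starts directly from the second (sum-over-Bohr-frequencies) expression for the operator Fourier transform displayed in \eqref{eq:OFT}, namely $\hat{\vA}_\sigma(\omega)=\sum_{\nu\in B(\vH)}\vA_\nu\,\hat f_\sigma(\omega-\nu)$, together with the resolution $\vA=\sum_{\nu\in B(\vH)}\vA_\nu$ coming from the Bohr-frequency decomposition $\vA_\nu:=\sum_{E_2-E_1=\nu}\vP_{E_2}\vA\vP_{E_1}$ (summing over all pairs recovers $\vA$). Since the Hilbert space is finite-dimensional, $B(\vH)$ is a finite set, so the sum and the integral over $\omega$ commute without any analytic subtlety. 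Integrating both sides over $\omega\in\BR$ thus gives $\int_{-\infty}^{\infty}\hat{\vA}_\sigma(\omega)\,\rd\omega=\sum_{\nu}\vA_\nu\int_{-\infty}^{\infty}\hat f_\sigma(\omega-\nu)\,\rd\omega$, and by translation invariance of Lebesgue measure the inner integral is the $\nu$-independent constant $\int_{-\infty}^{\infty}\hat f_\sigma(\omega)\,\rd\omega$.

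The only computation is to evaluate this Gaussian integral. Using $\hat f_\sigma(\omega)=\tfrac{1}{\sqrt{\sigma\sqrt{2\pi}}}\exp(-\omega^2/4\sigma^2)$ and $\int_{-\infty}^{\infty}e^{-\omega^2/4\sigma^2}\rd\omega = 2\sigma\sqrt{\pi}$, one gets $\int_{-\infty}^{\infty}\hat f_\sigma(\omega)\,\rd\omega = \tfrac{2\sigma\sqrt\pi}{\sqrt{\sigma\sqrt{2\pi}}}$, which simplifies (by squaring: both sides equal $2\sqrt2\,\sigma\sqrt\pi$) to exactly $\sqrt{2\sigma\sqrt{2\pi}}$. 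Hence $\int_{-\infty}^{\infty}\hat{\vA}_\sigma(\omega)\,\rd\omega = \sqrt{2\sigma\sqrt{2\pi}}\,\sum_\nu\vA_\nu = \sqrt{2\sigma\sqrt{2\pi}}\,\vA$, and dividing through yields the claimed identity.

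Alternatively — and this is perhaps more conceptually transparent — one can work from the first (integral) expression in \eqref{eq:OFT}: writing $\int_{-\infty}^{\infty}\hat{\vA}_\sigma(\omega)\,\rd\omega = \tfrac{1}{\sqrt{2\pi}}\int_{-\infty}^{\infty} e^{\ri\vH t}\vA e^{-\ri\vH t} f_\sigma(t)\big(\int_{-\infty}^{\infty}e^{-\ri\omega t}\rd\omega\big)\rd t$, the bracket is $2\pi\,\delta(t)$, so the $t$-integral collapses at $t=0$ and returns $\sqrt{2\pi}\,f_\sigma(0)\,\vA = \sqrt{2\pi}\sqrt{\sigma\sqrt{2/\pi}}\,\vA = \sqrt{2\sigma\sqrt{2\pi}}\,\vA$. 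This requires justifying the Dirac-delta manipulation (e.g.\ via Fubini against a Gaussian regulator, legitimate because $t\mapsto e^{\ri\vH t}\vA e^{-\ri\vH t}f_\sigma(t)$ is continuous and integrable), so for a fully self-contained argument I would present the finite-sum version and perhaps remark on the delta-function shortcut.

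\textbf{Main obstacle.} There is essentially none of substance: the content is the Gaussian-integral normalization and careful bookkeeping of the constant $\sqrt{2\sigma\sqrt{2\pi}}$. The only place one could slip is verifying $\tfrac{2\sigma\sqrt\pi}{\sqrt{\sigma\sqrt{2\pi}}}=\sqrt{2\sigma\sqrt{2\pi}}$, which is immediate after squaring, and (in the second route) ensuring the interchange of integration order is licit — both routine given finite dimension.
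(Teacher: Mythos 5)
Your primary argument is correct and is the standard one. The paper states this lemma without proof (it is imported as a preliminary, following \cite{chen2023efficient,chen2025quantumMarkov}), so there is no in-paper proof to compare against; but the route you take — resolving $\vA = \sum_{\nu\in B(\vH)}\vA_\nu$ via the Bohr-frequency decomposition, interchanging the finite sum with the $\omega$-integral, and reducing to the translation-invariant Gaussian integral $\int \hat f_\sigma = \sqrt{2\sigma\sqrt{2\pi}}$ — is exactly the natural computation, and your arithmetic verifying $2\sigma\sqrt{\pi}/\sqrt{\sigma\sqrt{2\pi}} = \sqrt{2\sigma\sqrt{2\pi}}$ (both squares equal $2\sqrt{2}\,\sigma\sqrt\pi$) checks out. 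Your secondary, delta-function route from the time-domain expression is also fine with the Fubini caveat you already note, but the first version is the one to keep as the self-contained proof; the two differ only in presentation, not substance.
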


We further require the following bounds on the operator norm of operator Fourier transforms. In some sense, they quantify that the operator FT decays in norm in the large frequency $\omega$ regime. 

\begin{lem}[A priori Norm bounds on Operator Fourier Transforms, Corollary IX.2 \cite{chen2025quantumMarkov}]\label{lem:bounds_imaginary_conjugation}
For any $\beta ,\omega\in \BR$ and operator $\vA$ with norm $\norm{\vA}\le 1$, the operator Fourier transform $\hat{\vA}(\omega)$ with uncertainty $\sigma$~\eqref{eq:OFT},~\eqref{eq:fwft} satisfies 
\begin{align}
        \norm{\hat{\vA}(\omega)} \le \frac{\e^{-\beta \omega + \sigma^2 \beta^2}}{\sqrt{{\sigma}\sqrt{2\pi}}} \norm{\e^{\beta \vH}\vA\e^{-\beta \vH}}.
\end{align}
\end{lem}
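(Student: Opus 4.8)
The target is Lemma~\ref{lem:bounds_imaginary_conjugation}, the a priori norm bound
\[
  \norm{\hat{\vA}(\omega)} \le \frac{\e^{-\beta \omega + \sigma^2 \beta^2}}{\sqrt{{\sigma}\sqrt{2\pi}}}\,\norm{\e^{\beta \vH}\vA\e^{-\beta \vH}}.
\]

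The plan is to work directly from the integral representation in \eqref{eq:OFT},
\(\hat{\vA}(\omega)=\tfrac{1}{\sqrt{2\pi}}\int \e^{\ri\vH t}\vA\e^{-\ri\vH t}\e^{-\ri\omega t} f_\sigma(t)\,\rd t\),
and to deform the contour of integration from the real line to the horizontal line \(\mathrm{Im}(t)=-\beta\). The key observation is that the integrand, as a function of complex \(t\), is entire: \(\e^{\ri\vH t}\vA\e^{-\ri\vH t}\) is an entire operator-valued function (finite-dimensional Hilbert space, so it's literally a matrix of entire functions), \(\e^{-\ri\omega t}\) is entire, and the Gaussian \(f_\sigma(t)=\e^{-\sigma^2 t^2}\sqrt{\sigma\sqrt{2/\pi}}\) is entire with super-exponential decay along any horizontal strip. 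First I would justify the contour shift: on the two vertical segments closing the rectangle at \(\mathrm{Re}(t)=\pm R\), the Gaussian factor \(\e^{-\sigma^2(R\pm \ri s)^2}=\e^{-\sigma^2 R^2}\e^{\sigma^2 s^2}e^{\mp 2\ri\sigma^2 Rs}\) forces the contribution to vanish as \(R\to\infty\) (the \(\e^{-\sigma^2R^2}\) beats everything, uniformly for \(s\in[-\beta,0]\), since \(\e^{\ri\vH t}\vA\e^{-\ri\vH t}\) is bounded by \(\norm{\vA}e^{2\|\vH\| |s|}\) which is \(R\)-independent). Hence by Cauchy's theorem
\[
  \hat{\vA}(\omega)=\frac{1}{\sqrt{2\pi}}\int_{-\infty}^{\infty} \e^{\ri\vH (t-\ri\beta)}\,\vA\,\e^{-\ri\vH(t-\ri\beta)}\,\e^{-\ri\omega(t-\ri\beta)}\, f_\sigma(t-\ri\beta)\,\rd t.
\]

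Next I would simplify the shifted integrand. We have \(\e^{\ri\vH(t-\ri\beta)}=\e^{\ri\vH t}\e^{\beta\vH}\) and likewise \(\e^{-\ri\vH(t-\ri\beta)}=\e^{-\beta\vH}\e^{-\ri\vH t}\), so the operator part becomes \(\e^{\ri\vH t}\,(\e^{\beta\vH}\vA\e^{-\beta\vH})\,\e^{-\ri\vH t}\), which is a unitary conjugation of \(\e^{\beta\vH}\vA\e^{-\beta\vH}\) and therefore has operator norm exactly \(\norm{\e^{\beta\vH}\vA\e^{-\beta\vH}}\) for every real \(t\). The scalar \(\e^{-\ri\omega(t-\ri\beta)}=\e^{-\ri\omega t}\e^{-\beta\omega}\), contributing a constant factor \(\e^{-\beta\omega}\) and a unimodular \(\e^{-\ri\omega t}\). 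Finally \(f_\sigma(t-\ri\beta)=\sqrt{\sigma\sqrt{2/\pi}}\,\e^{-\sigma^2(t-\ri\beta)^2}=\sqrt{\sigma\sqrt{2/\pi}}\,\e^{-\sigma^2 t^2}\e^{\sigma^2\beta^2}\e^{2\ri\sigma^2\beta t}\), whose modulus is \(\sqrt{\sigma\sqrt{2/\pi}}\,\e^{-\sigma^2 t^2}\e^{\sigma^2\beta^2}\). Then I take operator norms inside the integral (triangle inequality / \(\norm{\int(\cdot)}\le\int\norm{\cdot}\)) and collect the constants:
\[
  \norm{\hat{\vA}(\omega)} \le \frac{1}{\sqrt{2\pi}}\,\e^{-\beta\omega}\,\e^{\sigma^2\beta^2}\,\sqrt{\sigma\sqrt{2/\pi}}\;\norm{\e^{\beta\vH}\vA\e^{-\beta\vH}}\int_{-\infty}^{\infty}\e^{-\sigma^2 t^2}\,\rd t.
\]
Using \(\int_{-\infty}^\infty \e^{-\sigma^2 t^2}\,\rd t=\sqrt{\pi}/\sigma\), the prefactor is \(\tfrac{1}{\sqrt{2\pi}}\sqrt{\sigma\sqrt{2/\pi}}\cdot\sqrt{\pi}/\sigma=\tfrac{1}{\sqrt{\sigma}}\,(2/\pi)^{1/4}/\sqrt{2}\), and a short bookkeeping check confirms this equals \(1/\sqrt{\sigma\sqrt{2\pi}}\), matching the claimed constant. (One can alternatively cite the definition \eqref{eq:fwft} of \(\hat f_\sigma\): the scalar part of the shifted integral is exactly \(\e^{-\beta\omega}\e^{\sigma^2\beta^2}\hat f_\sigma(\omega-2\ri\sigma^2\beta)\) evaluated appropriately — but writing out the Gaussian directly is cleanest.)

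I expect the only real subtlety to be the rigorous justification of the contour shift, i.e.\ controlling the vertical pieces of the rectangle and interchanging the limit \(R\to\infty\) with the operator-valued integral; once analyticity and the uniform Gaussian decay on the strip \(\mathrm{Im}(t)\in[-\beta,0]\) are in hand, this is standard but should be stated. Everything after the shift is just identifying unitary conjugations and evaluating a Gaussian integral, so no genuine obstacle arises there. An even slicker route, if one wants to avoid contour integrals altogether, is to write \(\e^{\ri\vH t}\vA\e^{-\ri\vH t}=\e^{\ri\vH t}\e^{-\beta\vH}\,(\e^{\beta\vH}\vA\e^{-\beta\vH})\,\e^{\beta\vH}\e^{-\ri\vH t}=\e^{\ri\vH(t+\ri\beta)}\,\vB\,\e^{-\ri\vH(t+\ri\beta)}\) with \(\vB:=\e^{\beta\vH}\vA\e^{-\beta\vH}\), substitute into \eqref{eq:OFT}, and recognize the result as \(\hat{\vB}(\omega)\) with the Gaussian evaluated at a complex argument — but this still implicitly requires the same analyticity input, so I would present the contour-shift version as the honest proof.
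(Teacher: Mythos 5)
Your contour-shift proof is correct, and the constant bookkeeping checks out: $\frac{1}{\sqrt{2\pi}}\sqrt{\sigma\sqrt{2/\pi}}\cdot\frac{\sqrt{\pi}}{\sigma}=1/\sqrt{\sigma\sqrt{2\pi}}$. The justification of the shift is the right one to flag: entire-ness of the integrand (finite-dimensional, so entrywise), the $R$-independent bound $\|e^{\ri\vH(R-\ri s)}\vA e^{-\ri\vH(R-\ri s)}\|\le e^{2|s|\|\vH\|}$ on the strip $|s|\le|\beta|$, and the dominating $e^{-\sigma^2 R^2}$ from the Gaussian.

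Since the lemma is cited from \cite{chen2025quantumMarkov} rather than proved here, there is no in-text argument to compare against. The proof most naturally suggested by the way the paper sets up the operator Fourier transform is the frequency-domain one: start from the Bohr decomposition $\hat{\vA}(\omega)=\sum_\nu\vA_\nu\hat{f}_\sigma(\omega-\nu)$, insert $e^{-\beta\nu}e^{\beta\nu}$ so that $\sum_\nu e^{\beta\nu}\vA_\nu = e^{\beta\vH}\vA e^{-\beta\vH}=:\vB$ appears, and complete the square to get the Gaussian shift identity $e^{-\beta\nu}\hat{f}_\sigma(\omega-\nu)=e^{-\beta\omega+\sigma^2\beta^2}\,\hat{f}_\sigma(\omega-2\sigma^2\beta-\nu)$; this gives $\hat{\vA}(\omega)=e^{-\beta\omega+\sigma^2\beta^2}\hat{\vB}(\omega-2\sigma^2\beta)$, and the remaining step is the trivial $L^1$ bound $\|\hat{\vB}(\omega')\|\le\|\vB\|/\sqrt{\sigma\sqrt{2\pi}}$. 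Your contour shift is exactly the time-domain realization of that same identity; it is arguably cleaner, since it replaces completing the square over every Bohr frequency by a single analytic continuation. One small slip in your closing parenthetical: after the shift, the scalar part produces $e^{-\beta\omega}e^{\sigma^2\beta^2}$ times the OFT filter evaluated at the \emph{real} frequency $\omega-2\sigma^2\beta$, not at the imaginary-shifted argument $\omega-2\ri\sigma^2\beta$; this does not affect your main line of proof since you compute the Gaussian modulus directly.
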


We note the strength of $\hat{\vA}(\omega)$ is closely tied to the convergence of complex-time evolution; if for large $\beta$ the complex-time evolution of the local operator $\vA$ remains small, then $\hat{\vA}(\omega)$ decays faster.

\subsection{The Lindbladian with Exact Detailed Balance}
\label{sec:Lindbladian}

Consider a Hamiltonian $\vH$ on $n$ qudits, an inverse temperature $\beta>0$, an energy uncertainty $0<\sigma\leq \beta^{-1}$, and a single self-adjoint jump $\vA^a = \vA^{a\dagger}$. The (quasi-local) Lindbladian~\cite{chen2023efficient} is written as
\begin{align}
		\CL_a[\cdot] = \underset{\text{``coherent''}}{\underbrace{ -\ri [\vC^a, \cdot]}} + 
		\int_{-\infty}^{\infty} \gamma(\omega) \bigg(\underset{\text{``transition''}}{\underbrace{\hat{\vA}_\sigma^a(\omega)(\cdot)\hat{\vA}_\sigma^{a}(\omega)^\dagg}} - \underset{\text{``decay''}}{\underbrace{\frac{1}{2}\{\hat{\vA}_\sigma^{a}(\omega)^\dagg\hat{\vA}_\sigma^a(\omega),\cdot\}}}\bigg)\rd\omega,\label{eq:exact_DB_L}
\end{align}
\noindent where we either use the shifted-Metropolis or Gaussian weight function:
\begin{align}
    \gamma_\mathsf{M}(\omega) := \exp\L(-\beta\max\left(\omega +\frac{\beta \sigma^2}{2},0\right)\R), \quad \text{or}\quad \gamma_\mathsf{G}(\omega):= \exp\bigg(\frac{(\beta\omega+1)^2}{2(2-\beta^2\sigma^2)}\bigg).\label{eq:Metropolis}
\end{align} 
 \noindent The ``coherent part'' $\vC^{a}$ is a Hermitian operator:
\begin{align}
    &\vC^a = \iint_{-\infty}^{\infty} \gamma(\omega) \cdot c(t) \cdot \hat{\vA}_\sigma^a(\omega,t)^{\dagger}\hat{\vA}_\sigma^a(\omega,t)  \rd t\rd \omega, \quad  \\\text{with}\quad  &c(t) := \frac{1}{\beta\sinh(2\pi t/\beta)} \quad \text{and}\quad  \hat{\vA}_\sigma^a(\omega,t):=\e^{i\vH t}\hat{\vA}_\sigma^a(\omega)\e^{-i\vH t}.
\end{align}

When $\norm{\vA^a}=1$, we have system-size independent norm bounds $\norm{\CL_a}_{1-1}\le O(1)$ for the Gaussian weight, and $\norm{\CL_a}_{1-1}\le O(\beta)$ for the Metropolis weight~\cite[Corollary A.1]{chen2025quantumMarkov}. In our actual algorithm, we will make the explicit choice of energy width $\sigma=\frac{1}{\beta}$, and the collection of jump operators $\{\vA^a\}_a = \CS_{[n]}^1$ contains all single-site Pauli operators on the chain. For transparency, however, we will often keep $\sigma$ and $\{\vA^a\}_a$ as tunable parameters in our lemmas.

The defining feature of the above Lindbladian is achieving detailed balance while preserving locality. Indeed, the time-evolved operators respect the spatial locality of the Hamiltonian (up to some exponentially decaying tail  with distance) 
due to Lieb-Robinson bounds~\cite{Lieb1972,hastings2006spectral,chen2023speed}. Meanwhile, the Lindbladian satisfies exact KMS-detailed-balance, which plays a subtle yet crucial role in enabling exact mathematical identities. We recall that, given a full-rank state $\vrho$, a Lindbladian $\CL$ is said to satisfy the Kubo-Martin-Schwinger (KMS)-$\vrho$-detailed-balance if it is symmetric with respect to the KMS inner product associated with $\vrho.$

\begin{thm}[Lindbladian with Exact Detailed Balance{~\cite{chen2023efficient}}]\label{thm:ckg-db}
The Lindbladian $\CL=\sum_a \CL_a$ defined in Equation \eqref{eq:exact_DB_L} satisfies KMS-$\vrho$-detailed-balance 
\begin{align}
    \braket{\vX,\CL^{\dagger}[\vY]}_{\vrho} = \braket{\CL^{\dagger}[\vX],\vY}_{\vrho},
\end{align}
and hence fixes the Gibbs state exactly $\CL[\vrho] =0$ for $\vrho\propto \e^{-\beta \vH}.$
\end{thm}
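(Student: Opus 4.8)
Since the KMS inner product is bilinear and $\CL=\sum_a\CL_a$, it suffices to prove that each single-jump generator $\CL_a^\dagger$ (recall $\vA^a=\vA^{a\dagger}$) is self-adjoint with respect to $\braket{\cdot,\cdot}_\vrho$; summing over $a$ then gives KMS-$\vrho$-detailed-balance, and $\CL[\vrho]=0$ follows at the end. I would work in the eigenbasis $\vH=\sum_i E_i\ketbra{\psi_i}{\psi_i}$, where the operator Fourier transform splits along Bohr-frequency components, $\hat{\vA}^a_\sigma(\omega)=\sum_{\nu\in B(\vH)}\vA^a_\nu\,\hat f_\sigma(\omega-\nu)$ as in \eqref{eq:OFT}, and the Gibbs weights obey the modular relation $\vrho^{s}\vA^a_\nu=e^{-\beta s\nu}\vA^a_\nu\vrho^{s}$. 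Substituting these into both sides of $\braket{\vX,\CL_a^\dagger[\vY]}_\vrho=\braket{\CL_a^\dagger[\vX],\vY}_\vrho$ absorbs every $\vrho^{1/2}$ flanking a Bohr component into an explicit exponential of the energies, reducing the identity — level by level — to a purely scalar statement about the integral kernels appearing in \eqref{eq:exact_DB_L}. (One may equivalently conjugate by fractional powers of $\vrho$ and phrase detailed balance as Hilbert--Schmidt self-adjointness of the resulting superoperator; the content is the same computation.)

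Collecting terms, the ``transition'' and ``decay'' parts of \eqref{eq:exact_DB_L} contribute, for each quadruple of levels, a kernel built from $\gamma(\omega)\,\hat f_\sigma(\omega-\nu)\hat f_\sigma(\omega-\mu)$ dressed by exponential factors of the energies. I would then use that the Gaussian product $\hat f_\sigma(\omega-\nu)\hat f_\sigma(\omega-\mu)$ is itself a Gaussian in $\omega$ centered at $\tfrac{\nu+\mu}{2}$, together with the explicit shifted-Metropolis / Gaussian weights in \eqref{eq:Metropolis} and the hypothesis $\sigma\le\beta^{-1}$ (which makes the relevant Gaussian integrals converge and aligns the weight ratios), to show that the \emph{Hermitian} part of this kernel is already invariant under the exchange $(\nu,\omega)\leftrightarrow(\mu,-\omega)$ that implements the transpose $\vX\leftrightarrow\vY$. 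What is left over is a purely anti-Hermitian remainder, expressible as a principal-value integral of the same Gaussian against $\gamma$.

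The coherent term $-\ri[\vC^a,\cdot]$ is designed to cancel exactly this remainder. Writing $\vC^a$ through its time-domain definition with $c(t)=(\beta\sinh(2\pi t/\beta))^{-1}$ and $\hat{\vA}^a_\sigma(\omega,t)=\e^{\ri\vH t}\hat{\vA}^a_\sigma(\omega)\e^{-\ri\vH t}$, I would perform the $t$-integral by contour deformation: the $\sinh$ kernel is, up to normalization, the Fourier transform of a $\tanh$/$\coth$, i.e.\ a Hilbert-transform-type object, and this produces, level by level, precisely the principal-value integral left over from the dissipative part, with the matching sign. I expect this to be the crux of the argument — showing that the $\sinh$-kernel reproduces the \emph{exact} anti-Hermitian defect rather than merely something of comparable size is the delicate function-theoretic identity underlying the \cite{chen2023efficient} construction, and it is where the precise normalizations of $f_\sigma$, $\gamma_\mathsf{M}/\gamma_\mathsf{G}$, $c(t)$, and the constraint $\sigma\le\beta^{-1}$ all enter in concert. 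Once the cancellation holds for each $a$, adding up yields $\braket{\vX,\CL^\dagger[\vY]}_\vrho=\braket{\CL^\dagger[\vX],\vY}_\vrho$.

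It remains to deduce $\CL[\vrho]=0$. The generator \eqref{eq:exact_DB_L} is in Lindblad (GKLS) form, hence trace-preserving, so $\CL^\dagger[\vI]=0$. Therefore, for every operator $\vY$,
\begin{align}
\tr[\CL[\vrho]\,\vY]=\tr[\vrho\,\CL^\dagger[\vY]]=\braket{\vI,\CL^\dagger[\vY]}_\vrho=\braket{\CL^\dagger[\vI],\vY}_\vrho=0,
\end{align}
using the detailed-balance identity just established; since $\vY$ is arbitrary, $\CL[\vrho]=0$.
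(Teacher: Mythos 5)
The paper does not prove this theorem; it is imported from \cite{chen2023efficient}, as the citation in the theorem header indicates, so there is no proof in the paper to compare against. Your sketch is a faithful high-level account of the argument in that reference: work in the Bohr-frequency decomposition, use the modular relation $\vrho^{s}\vA^a_\nu=e^{-\beta s\nu}\vA^a_\nu\vrho^{s}$ to absorb the KMS weights, reduce detailed balance to a scalar identity on the Gaussian-filter kernels (using that a product of two Gaussians $\hat f_\sigma(\omega-\nu)\hat f_\sigma(\omega-\mu)$ is itself Gaussian), observe that the dissipative ``transition plus decay'' part alone is detailed-balanced only up to an anti-Hermitian defect, and confirm that the coherent term with the $c(t)=(\beta\sinh(2\pi t/\beta))^{-1}$ kernel is precisely engineered to cancel that defect, the Fourier transform of $1/\sinh$ being of $\tanh$ type and producing the required Hilbert-transform structure. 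Your concluding deduction of $\CL[\vrho]=0$ from detailed balance together with the unitality $\CL^\dagger[\vI]=0$ of the Lindblad form is correct and matches the standard reasoning: $\tr[\CL[\vrho]\vY]=\braket{\vI,\CL^\dagger[\vY]}_{\vrho}=\braket{\CL^\dagger[\vI],\vY}_{\vrho}=0$ for all $\vY$. The caveat is that you explicitly stop short of verifying the coherent-term cancellation — you say you ``expect'' the $\sinh$ kernel to reproduce the principal-value integral rather than carrying out the contour computation — and that verification is essentially the entire technical content of the theorem. As a blind reconstruction of a cited result your outline is accurate; to promote it to a proof, one would need to flesh out that one identity, which is precisely what \cite{chen2023efficient} does.
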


\begin{rmk}
Other $\vrho$-weighted norms are possible (e.g., GNS~\cite{kastoryano2016commuting}), but in this paper, we will only consider the KMS inner product since the Lindbladian we consider is only guaranteed to be KMS-detailed-balanced.    
\end{rmk}

Our goal in this paper is to establish a lower bound on the spectral gap of $\CL$ when defined on a 1D Hamiltonian, at any fixed finite inverse-temperature $\beta>0.$ We operate under the following definition of the spectral gap of a primitive Lindbladian, see \cite{temme2010chi, Kastoryano_2013} for an extended discussion.

\begin{defn}[The Spectral Gap]
For any KMS-$\vrho$-detailed-balanced Lindbladian $\CL$, the \emph{spectral gap} of $\CL$ is defined by the minimal ratio of the Dirichlet form to the variance:
\begin{align}
        \lambda_{\mathsf{gap}}(\CL) = \inf_{\vO} \frac{\braket{\vO, -\CL^\dagger[\vO]}_{\vrho}}{\norm{\vO - \vI \cdot \tr[\vO\vrho]}_{\vrho}^2}.   
\end{align}
\end{defn}

\section{Spectral Conditional Expectations}
\label{sec:CE}

The study of mixing times often hinges on the interplay between locality and detailed balance. 
In the classical setting, a key tool is the systematic use of boundary conditioning, which naturally gives rise to the conditional Gibbs distribution and, in turn, to the notion of a conditional expectation over a local patch or region of spins. 
In quantum systems, non-commutative versions of conditional expectations can, in principle, be defined. 

\begin{defn}[Conditional Expectation, restatement of \autoref{defn:CE_intro}]\label{defn:expectation}
We say a superoperator $\BE:\CB(\CH)\rightarrow \CB(\CH)$ in the Heisenberg picture is a \emph{conditional expectation} with respect to a full rank mixed state $\vrho$ if it satisfies:
    \begin{itemize}
        \item \emph{KMS-detailed-balance:} for any $\vX,\vY,$ $\braket{\BE[\vY],\vX }_{\vrho}=\braket{\vY,\BE[\vX]}_{\vrho}.$
        \item \emph{Complete Positivity}.
        \item \emph{Unitality:} $\BE[\vI]=\vI.$ 
        \item \emph{Monotonicity:} The spectrum of $\BE$ is real and contained in $[0,1].$ 
    \end{itemize}
\end{defn}
In other words, a conditional expectation is simply the adjoint of a $\vrho-$detailed-balanced quantum channel, with additional constraints on the spectrum. Of course, implicitly, any workable conditional expectation must also respect the locality of the system. For commuting Hamiltonians, explicit conditional expectations have been defined --such as the time evolution of the Davies' or heat-bath generators-- which has enabled a systematic approach to commuting Gibbs sampling, see \cite[Section 3]{kastoryano2016commuting}. 

A central challenge in the non-commuting case is the lack of an exact Markov property \cite{chen2025quantumMarkov}, which rules out the possibility of a genuinely local conditional expectation. Unfortunately, the naive means of defining a quasi-local conditional expectation --such as truncating a quasi-local Lindbladian evolution in space-- oftentimes breaks the underlying detailed balance condition, which is equally indispensable. The central ingredient behind our results is an explicit, workable version of a ``pseudo'' conditional expectation map in one dimension that respects both locality and detailed balance. Intriguingly, our proposal has all relevant features of standard conditional expectations, except for the \textit{complete positivity} condition. Accordingly, we refer to these maps as \textit{spectral conditional expectations}, as they serve all the purposes for proving a spectral gap. 

This section defines and explores such spectral conditional expectations. We begin in \autoref{sec:explicit-E} by presenting the relevant definitions and main results of this section. Next, in \autoref{sec:generator-K} we introduce an auxiliary generator $\CK$, whose limiting time-evolution gives rise to the spectral conditional expectation $\tBE$. In \autoref{sec:apriori-gap-K}, we prove the generator admits a system-size-independent \textit{local gap}, and subsequently in \autoref{sec:quasi-local-E} we prove this local gap entails $\tBE$ is quasi-local.

\subsection{An Explicit Spectral Conditional Expectation}
\label{sec:explicit-E}
In this paper, the following variant of the condition expectation maps will be used extensively. 
\begin{defn}[Spectral Conditional Expectation]\label{defn:spectral_expectation}
We say a superoperator $\tBE:\CB(\CH)\rightarrow \CB(\CH)$ in the Heisenberg picture is a \emph{spectral conditional expectation} with respect to a full rank mixed state $\vrho$ if it satisfies:
    \begin{itemize}
        \item \emph{KMS-Detailed-Balance:} for any $\vX,\vY,$ $\braket{\tBE[\vY],\vX }_{\vrho}=\braket{\vY,\tBE[\vX]}_{\vrho}.$
        \item \emph{Unitality:} $\tBE[\vI]=\vI.$ 
        \item \emph{Monotonicity:} The spectrum of $\tBE$ is real and contained in $[0,1].$
    \end{itemize}
\end{defn}
Even though a spectral conditional expectation does not constitute a physical, implementable quantum map, it remains a valid self-adjoint operator with the correct top eigenvector. Sacrificing complete positivity allows us to construct an explicit map that will be the backbone of the proof of the spectral gap.

\begin{lem}[An Explicit Spectral Conditional Expectation]\label{lem:explicit_CE}
    For any operator $\vY$ and subset $\sR\subset [n]$, the following map defines a spectral conditional expectation (\autoref{defn:spectral_expectation}):
\begin{align}
        \tilde{\BE}_\sR[\vY] := \sum_i \braket{\vB^i_{\bar{\sR}},\vY}_{\vrho} \cdot \vB^i_{\bar{\sR}}\label{eq:ER},
    \end{align}
    where the set of operators $\{\vB^i_{\bar{\sR}}\}_i$ can be any KMS-orthonormal basis for $\CB(\CH_{\bar{\sR}})$ such that $\braket{\vB^i_{\bar{\sR}},\vB^j_{\bar{\sR}}}_{\vrho} = \delta_{ij}.$ In particular, we pick identity to be the first basis vector $\vI=\vB^1_{\bar{\sR}}.$ 
\end{lem}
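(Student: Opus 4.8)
The plan is to recognize $\tBE_\sR$ as nothing but the orthogonal projection, in the KMS inner product $\braket{\cdot,\cdot}_{\vrho}$, onto the subspace $V_{\bar{\sR}}:=\CB(\CH_{\bar{\sR}})$ of operators acting trivially on $\sR$, and then to check the three bullet points of \autoref{defn:spectral_expectation} directly from the explicit formula \eqref{eq:ER}. Before that, one preliminary observation does the bookkeeping: since $\vrho$ is full rank, $\braket{\cdot,\cdot}_{\vrho}$ is a genuine positive-definite inner product on $\CB(\CH)$, hence so is its restriction to the finite-dimensional subspace $V_{\bar{\sR}}$; therefore a KMS-orthonormal basis $\{\vB^i_{\bar{\sR}}\}_i$ of $V_{\bar{\sR}}$ exists. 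Moreover $\vI\in V_{\bar{\sR}}$ with $\braket{\vI,\vI}_{\vrho}=\tr[\vrho]=1$, so we may take $\vB^1_{\bar{\sR}}=\vI$ and extend to a full orthonormal basis by Gram--Schmidt, as claimed. It is also worth a one-line remark that $\tBE_\sR$ does not depend on the particular orthonormal basis, being the orthogonal projection onto $V_{\bar{\sR}}$; this is the standard fact that orthogonal projections onto a subspace are basis-independent.

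Next I would verify unitality: from \eqref{eq:ER}, $\tBE_\sR[\vI]=\sum_i\braket{\vB^i_{\bar{\sR}},\vI}_{\vrho}\,\vB^i_{\bar{\sR}}=\sum_i\braket{\vB^i_{\bar{\sR}},\vB^1_{\bar{\sR}}}_{\vrho}\,\vB^i_{\bar{\sR}}=\vB^1_{\bar{\sR}}=\vI$, using orthonormality and $\vI=\vB^1_{\bar{\sR}}$. For KMS-detailed-balance, I would expand both sides using that $\braket{\cdot,\cdot}_{\vrho}$ is conjugate-linear in the first argument and linear in the second: both $\braket{\tBE_\sR[\vY],\vX}_{\vrho}$ and $\braket{\vY,\tBE_\sR[\vX]}_{\vrho}$ reduce to $\sum_i\overline{\braket{\vB^i_{\bar{\sR}},\vY}_{\vrho}}\,\braket{\vB^i_{\bar{\sR}},\vX}_{\vrho}$, hence are equal. (This also yields the stated consequence $\tr[\tBE_\sR[\vX]\vrho]=\braket{\vI,\tBE_\sR[\vX]}_{\vrho}=\braket{\tBE_\sR[\vI],\vX}_{\vrho}=\braket{\vI,\vX}_{\vrho}=\tr[\vX\vrho]$.)

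For monotonicity, the detailed-balance property just shown says $\tBE_\sR$ is self-adjoint for $\braket{\cdot,\cdot}_{\vrho}$, so its spectrum is real; and a one-line computation using orthonormality of the $\vB^i_{\bar{\sR}}$ gives $\tBE_\sR^2=\tBE_\sR$. A self-adjoint idempotent on a (finite-dimensional) inner product space is an orthogonal projection, so its eigenvalues lie in $\{0,1\}\subseteq[0,1]$, giving monotonicity.

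Honestly, there is no substantive obstacle here: the computation is routine linear algebra. The only points demanding care are the preliminary nondegeneracy remark (so that the orthonormal basis exists and $\vI$ can be enforced as its first vector) and keeping the conjugate-linearity convention of the KMS inner product straight in the self-adjointness step; both are easy to get wrong-sign and easy to fix. The conceptual content — that sacrificing complete positivity is exactly what lets one write down such a clean, explicit projection whose image is precisely the trivial-on-$\sR$ operators — is already in the surrounding discussion and needs nothing further for this lemma.
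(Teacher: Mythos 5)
Your proof is correct and takes essentially the same route as the paper: identify $\tBE_\sR$ as the KMS-orthogonal projection onto $\CB(\CH_{\bar{\sR}})$, deduce detailed balance from the symmetric form $\sum_i \overline{\braket{\vB^i_{\bar{\sR}},\vY}_{\vrho}}\,\braket{\vB^i_{\bar{\sR}},\vX}_{\vrho}$, and get unitality from $\vI=\vB^1_{\bar{\sR}}$ with $\braket{\vI,\vI}_{\vrho}=1$. The added remarks on existence/basis-independence of the orthonormal basis and on idempotence are harmless elaborations of what the paper leaves implicit.
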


\begin{proof}
    $\tBE_\sR$ is the orthogonal projection onto the operators supported on the complement region $\CB(\CH_{\bar{\sR}})$, w.r.t. the KMS inner product. Therefore, all eigenvalues are either 0 or 1. Detailed balance follows from the identity:
    \begin{align}
        \braket{\tBE[\vY],\vX }_{\vrho} = \sum_i \braket{\vB^i_{\bar{\sR}},\vY}_{\vrho}^* \cdot \braket{\vB^i_{\bar{\sR}},\vX}_{\vrho} = \sum_i \braket{\vY, \vB^i_{\bar{\sR}}}_{\vrho} \cdot \braket{\vB^i_{\bar{\sR}},\vX}_{\vrho} =  \braket{\vY,\tBE[\vX]}_{\vrho}.
    \end{align}
    Unitality follows from $\braket{\vI,\vI}_{\vrho} = 1$. 
\end{proof}

Henceforth, $\tBE$ will always refer to the family of spectral conditional expectations of \autoref{lem:explicit_CE}. In this linear algebraic view, it follows that the above conditional expectation is consistent with the inclusion of subsets, since $\sR_1 \subseteq \sR_2$ implies that $\CB(\CH_{\bar{\sR}_2}) \subseteq \CB(\CH_{\bar{\sR}_1}).$

\begin{lem}[Consistency]
    For any two subsets such that $\sR_1\subseteq \sR_2 \subseteq [n],$ the family of spectral conditional expectations $\{\tBE_\sR\}_{\sR\subseteq [n]}$ defined in \autoref{lem:explicit_CE} satisfies:
\begin{equation}
   \tBE_{\sR_1} \tBE_{\sR_2} =  \tBE_{\sR_2}. 
\end{equation}
In particular, this spectral conditional expectation is \emph{projective} $\tBE_{\sR}^2=\tBE_{\sR}.$
\end{lem}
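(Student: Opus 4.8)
The plan is to work entirely in the linear-algebraic picture provided by \autoref{lem:explicit_CE}: each $\tBE_\sR$ is the orthogonal projection (with respect to the KMS inner product $\langle\cdot,\cdot\rangle_\vrho$) onto the subspace $\CB(\CH_{\bar\sR})\subseteq\CB(\CH)$ of operators supported on the complement of $\sR$. Once this is established, the identity $\tBE_{\sR_1}\tBE_{\sR_2}=\tBE_{\sR_2}$ is just the standard fact that if $V_2\subseteq V_1$ are subspaces of an inner-product space, then $P_{V_1}P_{V_2}=P_{V_2}$, and the projectivity $\tBE_\sR^2=\tBE_\sR$ is the special case $\sR_1=\sR_2$.

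First I would record the subspace inclusion: since $\sR_1\subseteq\sR_2$, we have $\bar\sR_2\subseteq\bar\sR_1$, hence $\CH_{\bar\sR_2}$ embeds into $\CH_{\bar\sR_1}$ and therefore $\CB(\CH_{\bar\sR_2})\subseteq\CB(\CH_{\bar\sR_1})$ as subspaces of $\CB(\CH)$ (an operator of the form $\vI_{\sR_2}\otimes\vX'$ is in particular of the form $\vI_{\sR_1}\otimes\vX''$). Next I would identify $\tBE_\sR$ with $P_{\CB(\CH_{\bar\sR})}$: the formula $\tBE_\sR[\vY]=\sum_i\langle\vB^i_{\bar\sR},\vY\rangle_\vrho\,\vB^i_{\bar\sR}$ is exactly the expansion of the orthogonal projection onto the span of a KMS-orthonormal basis $\{\vB^i_{\bar\sR}\}_i$ of $\CB(\CH_{\bar\sR})$, so this is immediate from the construction in \autoref{lem:explicit_CE}. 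Then, to prove $\tBE_{\sR_1}\tBE_{\sR_2}=\tBE_{\sR_2}$, I would fix $\vY$ and set $\vZ:=\tBE_{\sR_2}[\vY]\in\CB(\CH_{\bar\sR_2})\subseteq\CB(\CH_{\bar\sR_1})$; since $\tBE_{\sR_1}$ acts as the identity on its own image $\CB(\CH_{\bar\sR_1})$ (each eigenvalue is $0$ or $1$, and vectors in the range have eigenvalue $1$), we get $\tBE_{\sR_1}[\vZ]=\vZ$, i.e.\ $\tBE_{\sR_1}\tBE_{\sR_2}[\vY]=\tBE_{\sR_2}[\vY]$. Specializing $\sR_1=\sR_2=\sR$ gives $\tBE_\sR^2=\tBE_\sR$.

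There is essentially no obstacle here — the statement is a formal consequence of the projection characterization already proven in \autoref{lem:explicit_CE}. The only point requiring a line of care is the claim that $\tBE_{\sR_1}$ fixes every element of $\CB(\CH_{\bar\sR_1})$: this follows because $\tBE_{\sR_1}[\vB^j_{\bar\sR_1}]=\sum_i\langle\vB^i_{\bar\sR_1},\vB^j_{\bar\sR_1}\rangle_\vrho\,\vB^i_{\bar\sR_1}=\vB^j_{\bar\sR_1}$ by KMS-orthonormality, and then extends by linearity. Alternatively, one can argue purely algebraically from $\langle\vY-\tBE_{\sR_2}[\vY],\vB\rangle_\vrho=0$ for all $\vB\in\CB(\CH_{\bar\sR_2})$, together with the orthogonality defining $\tBE_{\sR_1}$, but the image-fixing argument is cleanest. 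I would present the three paragraphs above compressed into a four-line proof.
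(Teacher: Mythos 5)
Your proof is correct and is exactly the argument the paper has in mind: the paper states this lemma without a formal proof, remarking only that it "follows" since $\sR_1\subseteq\sR_2$ implies $\CB(\CH_{\bar\sR_2})\subseteq\CB(\CH_{\bar\sR_1})$, and your proposal just spells out the standard fact that orthogonal projections onto nested subspaces compose this way. No gap; your four-line compressed version would be a perfectly adequate substitute for the paper's omitted proof.
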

 For the proof of spectral gap, what makes the above map useful is its quasi-locality (recall \autoref{fig:quasilocal_E}), which may not be a priori obvious from the definition. The proof is deferred to \autoref{sec:quasi-local-E}.
\begin{lem}
    [The Quasi-Locality of the Spectral Conditional Expectation]\label{lem:localized_E} Fix two subsets $\sA\subseteq \sR\subseteq [n]$ of the sites of a 1D Hamiltonian (\autoref{defn:1D}). The spectral conditional expectation $\tBE_\sA$ defined w.r.t $\vH$ in \autoref{lem:explicit_CE}, can be well-approximated by that w.r.t. the restriction $\vH^{\sR}$ to $\sR$: 
    \begin{equation}
        \|\tBE_\sA - \tBE_{\sA}^{(\sR)}\|_{\vrho} \leq c_1^{|\partial \sR|}\cdot \exp\bigg(c_2\cdot  |\sA| - c_3\cdot \ell \log \ell\bigg),
    \end{equation}
    where $\ell = \mathsf{dist}(\sA, [n]\setminus \sR)$,  the number of end-points of the subset $\sR$ is written as $|\partial\sR|$, and $c_1, c_2, c_3$ are explicit constants which depend only on $\beta, q$.
\end{lem}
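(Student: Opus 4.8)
The plan is to express both $\tBE_\sA$ and $\tBE_\sA^{(\sR)}$ through their generating semigroups $\CK_\sA$ and $\CK_\sA^{(\sR)}$ (defined via \eqref{eq:intro-CK} with $\vrho$ replaced by $\vrho^\sR$), and to control the difference by combining (i) the local spectral gap of $\CK_\sA$ from Section~\ref{sec:apriori-gap-K} with (ii) the quasi-locality of complex-time evolution in 1D. Write $\tBE_\sA = \lim_{t\to\infty} e^{t\CK_\sA}$ and $\tBE_\sA^{(\sR)} = \lim_{t\to\infty} e^{t\CK_\sA^{(\sR)}}$. I would first pick a time horizon $T$ comparable to $\mathrm{poly}(|\sA|)$ times the inverse local gap, so that $e^{T\CK_\sA}$ and $e^{T\CK_\sA^{(\sR)}}$ are each within $\sim e^{-\mathrm{gap}\cdot T} = e^{-\mathrm{poly}(|\sA|)}$ of their respective limiting projections, in the $\|\cdot\|_\vrho$-induced superoperator norm. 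This reduces the problem to bounding $\|e^{T\CK_\sA} - e^{T\CK_\sA^{(\sR)}}\|_\vrho$ for this finite $T$.

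For the finite-time comparison I would use the standard Duhamel/telescoping identity
\begin{align}
e^{T\CK_\sA} - e^{T\CK_\sA^{(\sR)}} = \int_0^T e^{(T-s)\CK_\sA}\,\big(\CK_\sA - \CK_\sA^{(\sR)}\big)\, e^{s\CK_\sA^{(\sR)}}\,\rd s,
\end{align}
so it suffices to bound the generator difference $\CK_\sA - \CK_\sA^{(\sR)}$ in $\|\cdot\|_\vrho$. Each term $\CK_a$ is built from the complex-time-evolved operators $\vrho^{\pm 1/2}\vA^{a\dagger}\vrho^{\mp 1/2}$; the difference between using $\vrho$ and $\vrho^\sR$ in these conjugations is, by the quasi-locality of imaginary-time evolution in 1D Gibbs states (Appendix~\ref{section:1D-locality}), exponentially small in $\mathrm{dist}(a, [n]\setminus\sR) \geq \ell$, of the form $\exp(-c\,\ell\log\ell)$ after optimizing a Lieb--Robinson-type tail. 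Here I would invoke Lemma~\ref{lem:loose_AO} to pass from operator-norm estimates on $\vrho^{1/4}(\cdot)\vrho^{-1/4}$ to $\|\cdot\|_\vrho$-bounds on the multiplication superoperators, since the KMS norm is the only norm under control. Summing over $a\in\sA$ gives a factor of $|\sA|$, and the semigroups $e^{(T-s)\CK_\sA}$, $e^{s\CK_\sA^{(\sR)}}$ are contractions in $\|\cdot\|_\vrho$ (nonpositive self-adjoint generators), so the Duhamel integral contributes a factor $T \sim \mathrm{poly}(|\sA|)\cdot e^{c_2'|\sA|}$ from the inverse-exponential local gap. Collecting the exponents yields $\exp(c_2|\sA| - c_3\ell\log\ell)$; the $c_1^{|\partial\sR|}$ prefactor arises from the number of boundary links of $\sR$ at which the truncation of the Hamiltonian happens (each contributes an $O(1)$ multiplicative Lieb--Robinson constant).

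The main obstacle is the interplay between the two exponential scales: the local gap of $\CK_\sA$ is only \emph{inverse-exponential} in $|\sA|$, which forces $T$ — and hence the Duhamel prefactor — to be exponentially large in $|\sA|$, so the quasi-locality tail $e^{-c_3\ell\log\ell}$ must beat it, explaining why the buffer must be at least $\sim|\sA|/\log|\sA|$. Making this quantitatively clean requires care: one must track that the $\|\cdot\|_\vrho$-norms of the generators $\CK_a$ and $\CK_a^{(\sR)}$ themselves are only polynomially (not exponentially) large in $|\sA|$, so that the only exponential cost comes from the gap-induced time horizon, and one must ensure the complex-time-evolution tail bound genuinely gives the $\ell\log\ell$ (super-linear) rate rather than merely linear decay — this is exactly where the 1D structure and the specific Gaussian filter enter. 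A secondary subtlety is that $\tBE_\sA$ and $\tBE_\sA^{(\sR)}$ are projections onto \emph{different} subspaces (KMS-orthogonal complements of $\CB(\CH_{\bar\sA})$ with respect to $\vrho$ versus $\vrho^\sR$), so the limiting-projection step must be done carefully — I would handle this by noting both subspaces are the image of $\CB(\CH_{\bar\sA})$ under the respective conjugations and absorbing their discrepancy into the same complex-time-evolution estimate.
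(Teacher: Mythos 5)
Your proposal follows essentially the same route as the paper's proof: introduce a finite time horizon $T$, control the two finite-to-infinite-time truncation errors via the local gap of Lemma~\ref{lem:local_gap_K}, compare $e^{T\CK_\sA}$ and $e^{T\CK_\sA^{(\sR)}}$ via Duhamel, and bound the generator difference $\CK_\sA-\CK_\sA^{(\sR)}$ using quasi-locality of complex-time conjugation (this is exactly Lemma~\ref{lem:strictly_local}). You also correctly identify the essential tension — the inverse-exponential local gap forces $T\sim e^{c|\sA|}$, which is why the tail must decay at the super-linear rate $\ell\log\ell$ and the buffer must be $\gtrsim |\sA|/\log|\sA|$.

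There is one technical error you should fix, because it is load-bearing. You assert that the truncated semigroup $e^{s\CK_\sA^{(\sR)}}$ is a contraction in $\|\cdot\|_\vrho$ and that its gap-driven convergence to $\tBE_\sA^{(\sR)}$ can be measured in $\|\cdot\|_\vrho$. Neither is true as stated: $\CK_\sA^{(\sR)}$ is self-adjoint (hence has nonpositive spectrum) only with respect to the \emph{truncated} KMS inner product $\langle\cdot,\cdot\rangle_{\vrho^{\sR}}$, not $\langle\cdot,\cdot\rangle_\vrho$. Consequently $e^{s\CK_\sA^{(\sR)}}$ is a $\|\cdot\|_{\vrho^{\sR}}$-contraction and the finite-time error $\|e^{t\CK_\sA^{(\sR)}}-\tBE_\sA^{(\sR)}\|_{\vrho^{\sR}}\leq e^{-\lambda_\sA t}$ also lives there. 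To carry these to $\|\cdot\|_\vrho$ one must invoke the comparison of measures (Lemma~\ref{lem:compare_measure}) across all $|\partial\sR|$ boundary links, which is precisely the source of the $c_1^{|\partial\sR|}$ prefactor — not, as you suggest, Lieb--Robinson constants at the truncation boundary. Your intuition that the boundary controls this factor is right, but the mechanism is a Golden--Thompson-type change of measure between $\vrho$ and $\vrho^{\sR}$, not a light-cone estimate. Finally, a smaller remark: $\tBE_\sA$ and $\tBE_\sA^{(\sR)}$ project onto the \emph{same} subspace $\CB(\CH_{\bar\sA})$; what differs is the orthogonality structure (hence the kernel), which is again handled by the change-of-measure step rather than a separate conjugation argument.
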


The truncated spectral conditional expectation above is no longer detailed-balanced w.r.t. $\vrho,$ however, it serves as a handy intermediate object that makes the locality of $\tBE_{\sA}$ explicit. In particular, for any distant two intervals $\sA, \sC$ such that $\mathsf{dist}(\sA, \sC) \geq \alpha\cdot |\sA\sC|/\log |\sA\sC|$, the conditional expectations on $\sA, \sC$ approximately factorize (\autoref{fig:EAEC}):
\begin{equation}
    \tBE_\sA \tBE_\sC \approx \tBE_{\sA\sC}.
\end{equation}

Curiously, the best way to understand the above is by introducing a \textit{dynamical} map that converges to the above spectral conditional expectations. In our later use, we will actually go back and forth between the spectral conditional expectations and the dynamical generator.

\begin{figure}[t]
\includegraphics[width=0.8\textwidth]{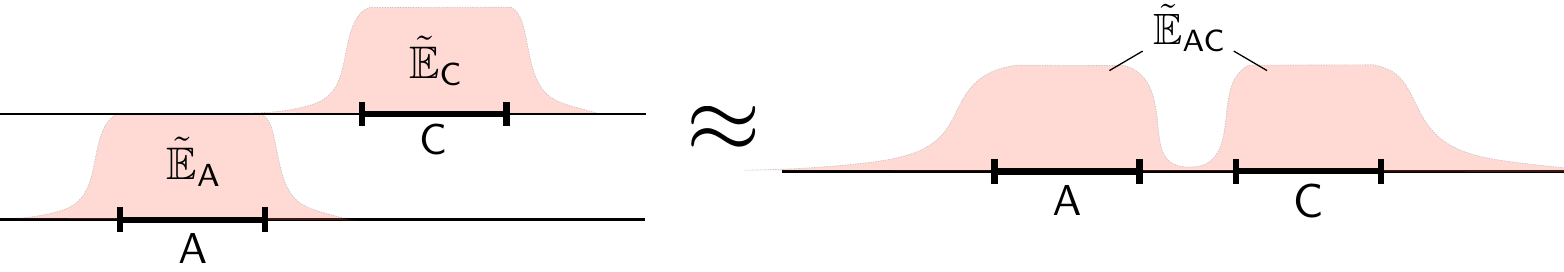}
\caption{
For any set $\sA$, $\sC$, the spectral conditional expectation is well-approximated by the composition of individual ones.}\label{fig:EAEC}

\end{figure}

\subsection{A Generator for our Spectral Conditional Expectations}
\label{sec:generator-K}
Our spectral conditional expectation originates from exponentiating a specific auxiliary generator $\CK.$ 

\begin{defn}[The Auxiliary Generator]\label{defn:CK} 
For any jump operator $\vA^a\in \CB(\CH)$ and full-rank mixed state $\vrho$, define the superoperator $\CK_a:\CB(\CH)\rightarrow \CB(\CH)$ in the Heisenberg picture:
\begin{align}
    \CK_a[\vX] &:= [\vA^a,\vX] \bigg(\vrho^{1/2} \vA^{a\dagger}\vrho^{-1/2}\bigg)-  \bigg(\vrho^{-1/2} \vA^{a\dagger}\vrho^{1/2}\bigg)[\vA^a,\vX] . \label{eq:CK}
\end{align}
\end{defn}

As $\beta = 0$, we recover the generator for the depolarizing semi-group. To shorthand notation, we often write the complex-time evolution as $\vA(i\beta/2) = e^{-\beta\vH/2}\vA e^{\beta\vH/2}$. As we discuss shortly, the role of the factors of $\vrho$ is to ensure KMS-detailed-balance. The motivation and concrete meaning behind this generator $\CK$ is best captured by the following formula for its Dirichlet form.

\begin{lem}
    [The Dirichlet Form of $\CK$]\label{lem:dirichlet-ck} For any (not-necessary Hermitian) jump operator $\vA^a\in \CB(\CH)$, \emph{Dirichlet form} of the local generator $\CK_a$ from \autoref{defn:CK} can be written as the KMS-norm of a commutator. For any operator $\vX\in \CB(\CH)$:
    \begin{align}
        \braket{\vX,-\CK_a[\vX]}_{\vrho} =\|[\vA^a, \vX]\|_{\vrho}^2. \label{eq:K_dirichlet}
    \end{align}
\end{lem}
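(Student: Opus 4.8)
The statement to prove is \autoref{lem:dirichlet-ck}: for the local generator $\CK_a$ of \autoref{defn:CK}, the Dirichlet form equals $\braket{\vX,-\CK_a[\vX]}_{\vrho} = \|[\vA^a,\vX]\|_{\vrho}^2$. The plan is a direct computation: expand $\braket{\vX, -\CK_a[\vX]}_{\vrho}$ using the definition of $\CK_a$ and the KMS inner product $\braket{\vP,\vQ}_{\vrho} = \tr[\vP^\dagger \vrho^{1/2}\vQ\vrho^{1/2}]$, and then reorganize the factors of $\vrho^{\pm 1/2}$ so that both terms collapse into $\tr[ [\vA^a,\vX]^\dagger \vrho^{1/2} [\vA^a,\vX] \vrho^{1/2}]$.

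Concretely, write $\vC := [\vA^a,\vX]$ for brevity. From \eqref{eq:CK}, $-\CK_a[\vX] = (\vrho^{-1/2}\vA^{a\dagger}\vrho^{1/2})\vC - \vC(\vrho^{1/2}\vA^{a\dagger}\vrho^{-1/2})$. Then
\[
\braket{\vX,-\CK_a[\vX]}_{\vrho} = \tr\!\big[\vX^\dagger \vrho^{1/2}\big((\vrho^{-1/2}\vA^{a\dagger}\vrho^{1/2})\vC - \vC(\vrho^{1/2}\vA^{a\dagger}\vrho^{-1/2})\big)\vrho^{1/2}\big].
\]
In the first term the factors $\vrho^{1/2}\vrho^{-1/2}$ cancel, leaving $\tr[\vX^\dagger \vA^{a\dagger}\vrho^{1/2}\vC\vrho^{1/2}]$; in the second term $\vrho^{-1/2}\vrho^{1/2}=\vI$, leaving $\tr[\vX^\dagger\vrho^{1/2}\vC\vrho^{1/2}\vA^{a\dagger}]$, and cyclicity of the trace turns this into $\tr[\vA^{a\dagger}\vX^\dagger\vrho^{1/2}\vC\vrho^{1/2}]$. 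Subtracting, the prefactor of $\vrho^{1/2}\vC\vrho^{1/2}$ becomes $\vX^\dagger\vA^{a\dagger} - \vA^{a\dagger}\vX^\dagger = (\vA^a\vX - \vX\vA^a)^\dagger = -\vC^\dagger$... wait — the sign: $\vX^\dagger \vA^{a\dagger} - \vA^{a\dagger}\vX^\dagger = (\vA^a\vX)^\dagger - (\vX\vA^a)^\dagger$, hmm that is $( \vA^a \vX - \vX\vA^a)^\dagger$ only if one is careful: $(\vA^a\vX)^\dagger = \vX^\dagger\vA^{a\dagger}$ and $(\vX\vA^a)^\dagger = \vA^{a\dagger}\vX^\dagger$, so $\vX^\dagger\vA^{a\dagger}-\vA^{a\dagger}\vX^\dagger = (\vA^a\vX - \vX\vA^a)^\dagger = \vC^\dagger$. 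Hence the whole expression equals $\tr[\vC^\dagger \vrho^{1/2}\vC\vrho^{1/2}] = \|\vC\|_{\vrho}^2 = \|[\vA^a,\vX]\|_{\vrho}^2$, as claimed.

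\textbf{Main obstacle.} There is no genuine obstacle here — the result is an algebraic identity and the only care needed is bookkeeping: correctly tracking the Hermitian conjugate in the KMS inner product (so that $\vX$ appears as $\vX^\dagger$), placing the $\vrho^{\pm 1/2}$ factors on the correct side of each operator, and verifying the two sign/conjugation cancellations line up to produce $\vC^\dagger(\cdots)\vC$ rather than, say, $\vX^\dagger(\cdots)\vC$ with leftover cross terms. I would also remark, as the paper does implicitly, that this identity is exactly why $\CK_a$ is the natural ``detailed-balanced depolarizing-like'' generator: its Dirichlet form is manifestly nonnegative and vanishes precisely when $[\vA^a,\vX]=0$, which is the kernel characterization used later. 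If one wants a cleaner derivation, an alternative is to first establish that $\CK_a$ is KMS-self-adjoint and nonpositive (by exhibiting it in the form $\sum$ of terms $\vL(\cdot)\vL^\dagger - \tfrac12\{\vL^\dagger\vL,\cdot\}$-like structures under the KMS similarity transform $\vrho^{1/4}(\cdot)\vrho^{1/4}$), but the direct trace computation above is shortest.
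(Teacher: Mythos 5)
Your computation is correct and takes essentially the same route as the paper: expand the KMS inner product, let the adjacent $\vrho^{\pm 1/2}$ factors cancel, use cyclicity, and observe that the two terms combine into $[\vA,\vX]^\dagger \vrho^{1/2}[\vA,\vX]\vrho^{1/2}$. The only difference is that the paper carries out the computation for a general pair $\vX,\vY$ (obtaining $\braket{\vY,-\CK_a[\vX]}_{\vrho}=\braket{[\vA,\vY],[\vA,\vX]}_{\vrho}$) and then sets $\vX=\vY$, which lets it reuse the same line to read off KMS detailed balance in \autoref{lem:finite_time_ce}; your version establishes only the diagonal case, which is what the lemma states.
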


In words, the time-derivative of an operator $\vX$ w.r.t. $\CK_a$ is measured in terms of a norm of commutators of $\vX$ with $\vA^a$; i.e., how non-trivial $\vX$ is on the support of $\CK_a$.

\begin{proof}
    Omit the superscript $\vA^a=\vA$. We have that for any pair of operators $\vX, \vY$,
    \begin{align}
        \braket{\vY,-\CK_a[\vX]}_{\vrho} &=\braket{\vY,\vA(i\beta/2)^{\dagger}[\vA,\vX]}_{\vrho}  - \braket{\vY,[\vA,\vX] \vA(-i\beta/2)^{\dagger}}_{\vrho}\\&= \tr[[\vY^\dagger,\vA^\dagger]\vrho^{1/2}[\vA, \vX]\vrho^{1/2}] \\&= \braket{[\vA, \vY],[\vA, \vX]}_{\vrho} \label{eq:db-of-K}
    \end{align}
    with $\vX = \vY$ we attain the desired expression.
\end{proof}

 The commutator-square form also implies that $\CK$ must generate a spectral conditional expectation.

\begin{lem}
    [Generators of Spectral Conditional Expectations]\label{lem:finite_time_ce} For any operator $\vA^a\in \CB(\CH),$ the generator $\CK_a$ from \autoref{defn:CK} satisfies:
    \begin{enumerate}
        \item  \emph{KMS-detailed-balance:} $\braket{\CK_a[\vY],\vX }_{\vrho}=\braket{\vY,\CK_a[\vX]}_{\vrho}$
    \item \emph{Trace-preservation:} $\CK_a[\vI] =0$.
    \item \emph{Monotonicity:} The eigenvalue spectrum of $\CK_a$ is real and non-positive.
    \end{enumerate}
    Therefore, for any $t\geq 0$, and any set of jump operators $\{\vA^a\}$, the exponential map $e^{t\CK}[\cdot]$ for $\CK = \sum_a \CK_a$ defines a spectral conditional expectation as in \autoref{defn:spectral_expectation}. 
\end{lem}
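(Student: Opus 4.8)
The plan is to verify the three bulleted properties for each local generator $\CK_a$ directly, the single workhorse being the commutator-square identity \eqref{eq:db-of-K} established in the proof of \autoref{lem:dirichlet-ck}, namely $\braket{\vY,-\CK_a[\vX]}_{\vrho} = \braket{[\vA^a,\vY],[\vA^a,\vX]}_{\vrho}$ for all $\vX,\vY$; and then to lift everything to $\CK=\sum_a \CK_a$ and to $e^{t\CK}$ by elementary spectral calculus, using crucially that the KMS inner product $\braket{\cdot,\cdot}_{\vrho}$ is a genuine (conjugate-symmetric, positive-definite) inner product because $\vrho$ is full rank.

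\emph{The three properties of $\CK_a$.} For \emph{KMS-detailed-balance}, note that the right-hand side of \eqref{eq:db-of-K} is a sesquilinear form of the shape $\braket{\Phi[\vY],\Phi[\vX]}_{\vrho}$ with $\Phi[\cdot]=[\vA^a,\cdot]$; combining this with conjugate-symmetry of the inner product gives, by a short conjugation bookkeeping, $\braket{\CK_a[\vY],\vX}_{\vrho} = \overline{\braket{\vX,\CK_a[\vY]}_{\vrho}} = -\braket{[\vA^a,\vY],[\vA^a,\vX]}_{\vrho} = \braket{\vY,\CK_a[\vX]}_{\vrho}$, i.e. $\CK_a$ is self-adjoint in the KMS inner product. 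For \emph{trace-preservation}, since $[\vA^a,\vI]=0$ the defining formula \eqref{eq:CK} gives $\CK_a[\vI]=0$ immediately. For \emph{monotonicity}, self-adjointness with respect to the positive-definite form $\braket{\cdot,\cdot}_{\vrho}$ already forces $\CK_a$ to be diagonalizable with real spectrum; setting $\vX=\vY$ in \eqref{eq:db-of-K} yields $\braket{\vX,-\CK_a[\vX]}_{\vrho}=\|[\vA^a,\vX]\|_{\vrho}^2\ge 0$, so $-\CK_a\succeq 0$ and hence every eigenvalue of $\CK_a$ is $\le 0$.

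\emph{Summing and exponentiating.} Taking the sum $\CK=\sum_a \CK_a$ preserves KMS-self-adjointness, preserves $\CK[\vI]=0$, and preserves negative semidefiniteness, so $\CK$ is diagonalizable with real, non-positive spectrum. Writing its KMS-orthogonal spectral decomposition $\CK=\sum_\mu \mu\,\Pi_\mu$ with $\mu\le 0$, for $t\ge 0$ the map $e^{t\CK}=\sum_\mu e^{t\mu}\,\Pi_\mu$ is again KMS-self-adjoint (hence KMS-detailed-balanced), satisfies $e^{t\CK}[\vI]=\vI$ since $\CK[\vI]=0$, and has spectrum $\{e^{t\mu}:\mu\le 0\}\subseteq (0,1]\subseteq[0,1]$. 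These are exactly the three defining properties of a spectral conditional expectation in \autoref{defn:spectral_expectation}.

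\emph{Main obstacle.} There is no deep difficulty here; the only point demanding care is the conjugation bookkeeping in the detailed-balance step — one must check that the commutator-square form of the bilinear form is symmetric in the inner-product sense and not merely up to a complex conjugate — together with the observation that the \emph{failure} of complete positivity of $\CK_a$ is irrelevant to all three properties, and that (for the functional-calculus step) self-adjointness with respect to a genuine positive-definite inner product already guarantees diagonalizability with real spectrum.
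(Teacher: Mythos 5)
Your proof is correct and follows essentially the same route as the paper's: derive all three properties from the commutator-square identity $\braket{\vY,-\CK_a[\vX]}_{\vrho}=\braket{[\vA^a,\vY],[\vA^a,\vX]}_{\vrho}$, then lift to $\CK=\sum_a\CK_a$ and $e^{t\CK}$ by spectral calculus. Your version is marginally more explicit about why KMS-self-adjointness yields diagonalizability with real spectrum (a step the paper leaves implicit), but the underlying argument and all key steps are identical.
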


\begin{proof}
Let us first argue properties (1-3). Trace preservation follows from the definition in \eqref{eq:CK} and $\vX = \vI$. Since for every $\vX$, the Dirichlet form \eqref{eq:K_dirichlet} is (real and) non-positive, so is the spectrum of $\CK$. Finally, detailed balance follows from the symmetry derived in the computation in \eqref{eq:db-of-K}:
\begin{align}
    \braket{\vY,\CK_a[\vX]}_{\vrho} = -\braket{[\vA, \vY],[\vA, \vX]}_{\vrho}  = -\braket{[\vA, \vX],[\vA, \vY]}_{\vrho}^* =  \braket{\vX,\CK_a[\vY]}_{\vrho}^*=\braket{\CK_a[\vY], \vX}_{\vrho}.
\end{align}
Since $x\leq 0\Rightarrow e^x\in [0,1]$ and the spectrum of $\CK$ is non-positive, the spectrum of the exponential map is in $[0, 1]$. 
\end{proof}

For our purposes, we will henceforth take the jumps $\vA^a$ to be all the single-site Pauli operators $\CS^1_{\sR}$ acting on a region $\sR\subseteq [n]$:
\begin{align}
    \CK_{\sR} := \sum_{a\in \CS_\sR^1} \CK_a. \label{eq:defKR}
\end{align}

Why do we even bother introducing another super-operator, if the detailed balance Lindbladian already satisfies all the above and is also more physical?  The key merit of introducing $\CK$ is that its kernel is explicitly characterizable, and most importantly, respects locality, which $\CL$ may not.

\begin{lem}[The Kernel of $\CK_{\sR}$ \eqref{eq:defKR}] Fix any subset $\sR\subseteq [n]$ and operator $\vX\in \CB(\CH)$. Then, the generator $\CK_\sR$ \eqref{eq:defKR} with all single-site Pauli jumps on $\sR$ satisfies:
\begin{align}
    \CK_{\sR}[\vX] = 0 \quad \text{if and only if}\quad  \vX \in \CB(\CH_{\bar{\sR}}).
\end{align}    
\end{lem}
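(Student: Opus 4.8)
The plan is short: both directions follow from the Dirichlet-form identity of \autoref{lem:dirichlet-ck} together with the elementary description of the commutant of a full local matrix algebra.

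For the ``only if'' direction, suppose $\CK_{\sR}[\vX]=0$. Pairing with $\vX$ in the KMS inner product and summing \eqref{eq:K_dirichlet} over all single-site Pauli jumps on $\sR$ gives
\begin{align}
    0 \;=\; \braket{\vX,-\CK_{\sR}[\vX]}_{\vrho} \;=\; \sum_{a\in\CS^1_{\sR}}\norm{[\vA^a,\vX]}_{\vrho}^2 .
\end{align}
Since $\vrho$ is full rank, $\norm{\cdot}_{\vrho}$ is a genuine norm (indeed $\norm{\vO}_{\vrho}^2=\norm{\vrho^{1/4}\vO\vrho^{1/4}}_2^2$ and $\vrho^{1/4}$ is invertible), so every commutator vanishes: $[\vA^a,\vX]=0$ for all $a\in\CS^1_{\sR}$, i.e. $\vX$ commutes with every single-site Pauli supported on $\sR$. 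Now fix a site $i\in\sR$: the operators in $\CS_i$ are the $q$-fold tensor products of single-qubit Paulis, which form a basis of the full matrix algebra on the $i$-th qudit; hence an operator commuting with all of $\CS_i$ must act as the identity on site $i$ (the commutant of a full matrix algebra on a tensor factor is trivial on that factor). Intersecting over all $i\in\sR$ yields $\vX\in\CB(\CH_{\bar{\sR}})$.

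For the ``if'' direction, if $\vX\in\CB(\CH_{\bar{\sR}})$ then $[\vA^a,\vX]=0$ for every single-site Pauli $\vA^a$ acting on $\sR$, and every term of $\CK_a[\vX]$ in \eqref{eq:CK} carries a factor $[\vA^a,\vX]$, so $\CK_a[\vX]=0$ and hence $\CK_{\sR}[\vX]=0$. I do not expect any genuine obstacle here: the only substantive input is the commutant fact, and the ``only if'' direction uses only the trivial implication $\CK_{\sR}[\vX]=0\Rightarrow\braket{\vX,-\CK_{\sR}[\vX]}_{\vrho}=0$, so one need not even invoke the self-adjointness or the non-positive spectrum of $\CK_{\sR}$ established in \autoref{lem:finite_time_ce}; the mildest point worth noting is the positive-definiteness of the KMS norm, which is immediate from full rank of $\vrho$.
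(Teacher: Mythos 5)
Your proof is correct and uses the same essential ingredients as the paper's (the Dirichlet-form identity from \autoref{lem:dirichlet-ck} and the fact that the commutant of the full Pauli algebra on a site is trivial there). The one small but genuine refinement is in the ``if'' direction: the paper runs a single chain of biconditionals $\CK_\sR[\vX]=0 \iff \braket{\vX,-\CK_\sR[\vX]}_{\vrho}=0 \iff \cdots$, and the first ``$\Leftarrow$'' implicitly relies on $-\CK_\sR$ being self-adjoint with non-negative spectrum (so that vanishing of the quadratic form forces $\CK_\sR[\vX]=0$). You instead prove the ``if'' direction by inspecting the formula \eqref{eq:CK}, noting every term of $\CK_a[\vX]$ carries a factor $[\vA^a,\vX]$, which removes the need for those spectral facts entirely; your ``only if'' direction then uses only the easy implication $\CK_\sR[\vX]=0\Rightarrow\braket{\vX,-\CK_\sR[\vX]}_{\vrho}=0$ plus positive-definiteness of $\norm{\cdot}_{\vrho}$. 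This is a mildly more elementary argument than the paper's, though the difference is cosmetic.
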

\begin{proof} Consider the Dirichlet form (\autoref{lem:dirichlet-ck}):
    \begin{align}
        \CK_{\sR}[\vX] = 0 &\iff -\braket{\vX,\CK_\sR[\vX]}_{\vrho} = 0 \\
        &\iff [\vA^a,\vX] = 0 \quad \text{for each}\quad \vA^a \in \CS_{\sR}^1\\
        &\iff \vX \in \CB(\CH_{\bar{\sR}}).
    \end{align}
    The first line leverages that $-\CK_{\sR}$ is detailed-balanced and has non-negative eigenvalues.
\end{proof}
Due to detailed balance, the limiting self-adjoint map is completely determined by the kernel, which is exactly the advertised spectral conditional expectation.

\begin{lem}[Recovering Spectral Conditional Expectations] The infinite time limit of~\eqref{eq:defKR} is exactly the spectral conditional expectation~\eqref{eq:ER},
    \begin{align}
        \tBE_\sR[\cdot] = \lim_{t\rightarrow \infty} e^{t\CK_\sR}[\cdot]. \label{eq:inf_time}
    \end{align}
\end{lem}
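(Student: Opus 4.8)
The plan is to identify two standard facts about a KMS-detailed-balanced generator with non-positive spectrum and then combine them. First, since $\CK_\sR$ is self-adjoint with respect to the KMS inner product (\autoref{lem:finite_time_ce}), it admits a spectral decomposition $\CK_\sR = \sum_j \mu_j \vPi_j$ with real eigenvalues $\mu_j \le 0$ and $\vPi_j$ the KMS-orthogonal projectors onto the corresponding eigenspaces. The kernel corresponds to $\mu_j = 0$, and by the preceding lemma this eigenspace is exactly $\CB(\CH_{\bar\sR})$; let $\vPi_0$ denote the projector onto it. Then $e^{t\CK_\sR} = \sum_j e^{t\mu_j}\vPi_j$, and as $t \to \infty$ every term with $\mu_j < 0$ decays to zero while the $\mu_j = 0$ term is left fixed, so $\lim_{t\to\infty} e^{t\CK_\sR} = \vPi_0$.

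Second, I would observe that $\vPi_0$ \emph{is} precisely the map $\tBE_\sR$ of \autoref{lem:explicit_CE}: both are the KMS-orthogonal projection onto the subspace $\CB(\CH_{\bar\sR})$. Indeed, \autoref{lem:explicit_CE} constructs $\tBE_\sR[\vY] = \sum_i \braket{\vB^i_{\bar\sR},\vY}_\vrho\,\vB^i_{\bar\sR}$ from any KMS-orthonormal basis $\{\vB^i_{\bar\sR}\}_i$ of $\CB(\CH_{\bar\sR})$, which is by inspection the formula for the orthogonal projection onto that subspace; and the orthogonal projection onto a fixed subspace is basis-independent, hence well-defined. So the identity $\tBE_\sR = \lim_{t\to\infty} e^{t\CK_\sR}$ follows immediately once both sides are recognized as $\vPi_0$.

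Concretely, the write-up would go: (i) invoke self-adjointness of $\CK_\sR$ w.r.t.\ $\braket{\cdot,\cdot}_\vrho$ to get a real spectral decomposition; (ii) invoke non-positivity of the spectrum; (iii) apply the functional calculus to $e^{t\CK_\sR}$ and take $t\to\infty$, noting that the limit is the spectral projector onto $\ker \CK_\sR$; (iv) quote the kernel lemma to identify $\ker\CK_\sR = \CB(\CH_{\bar\sR})$; (v) match this with the definition of $\tBE_\sR$ as the KMS-orthogonal projection onto $\CB(\CH_{\bar\sR})$ from \autoref{lem:explicit_CE}. There is essentially no obstacle here — the only mild subtlety is making sure the spectral theorem is being applied in the right inner-product space (the KMS inner product, not the Hilbert–Schmidt one), but this is exactly what detailed balance guarantees, and one can make it fully rigorous by passing to the similarity-transformed operator $\vrho^{1/4}(\cdot)\vrho^{1/4}$ under which $\CK_\sR$ becomes genuinely Hermitian. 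I would state it in one short paragraph and not belabor it.
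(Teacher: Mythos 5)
Your proof is correct and takes essentially the same route as the paper's: invoke KMS-self-adjointness and non-positivity to obtain a spectral decomposition, identify the $t\to\infty$ limit as the projector onto the kernel, quote the preceding kernel lemma, and match it with the explicit projection formula of \autoref{lem:explicit_CE}. You merely spell out the steps that the paper compresses into one sentence.
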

\begin{proof}
    Since the $\CK_\sR$ gives a finite-dimensional, self-adjoint (w.r.t. KMS-inner product) linear operator whose spectrum is real and non-positive, the effect of the infinite-time conditional expectation is exactly the orthogonal projection onto the zero-eigenspace, namely the kernel of $\CK_\sR$.
\end{proof}

This dynamical view of $\tBE_\sR[\cdot]$ through $\CK$ is what enables us to fully expose the locality of $\tBE_\sR[\cdot].$ In particular, the fact that the kernel respects a strict locality allows us to quantify a rate of local convergence to the kernel, in terms of a conditional gap that is independent of the global system size. This gap then implies the infinite-time limit, must also inherit the locality of $\CK.$ On the other hand, we do not know whether the infinite-time limit for the Lindbladian $\CL$ \eqref{eq:exact_DB_L} \cite{chen2023efficient} should generally enjoy any locality at all.

\subsubsection{Physicality of $\CK$}

To conclude the introduction to $\CK$, we briefly remark on why it does not generally give a physical Lindbladian. In the framework of~\cite{chen2023efficient}, it is most natural to rewrite the Dirichlet form in terms of Bohr-frequency decompositions $\vA_{\nu}.$ Since $\vA = \sum_{\nu} \vA_{\nu},$ we have that
    \begin{align}
        \tr[ \sqrt{\vrho}[\vA^a,\vX]^{\dagger} \sqrt{\vrho}[\vA^a,\vX]]= \sum_{\nu_1,\nu_2\in B(\vH)} \bar{\alpha}_{\nu_1\nu_2} \tr[ \sqrt{\vrho}[\vA_{\nu_2}^a,\vX]^{\dagger} \sqrt{\vrho}[\vA^a_{\nu_1},\vX]],
    \end{align}
    where
    \begin{align}
        \bar{\alpha}_{\nu_1,\nu_2} = 1\quad \text{for each}\quad \nu_1,\nu_2 \in B(\vH). 
    \end{align}
    Indeed, the all-ones matrix $\bar{\alpha}$ is PSD. However, if we try to reverse-engineer the transition part $\CT = \sum_{\nu_1,\nu_2} \alpha_{\nu_1,\nu_2} \vA^a_{\nu_1}\cdot (\vA_{\nu_2}^a)^{\dagger}$ for a hypothetical KMS detailed-balanced Lindbladian (e.g.,~\cite[Lemma C.2]{rouze2024efficient}) that reproduces the exact Dirichlet form, the matrix
    \begin{align}
        \alpha_{\nu_1,\nu_2} = \frac{2\cosh(\beta(\nu_1-\nu_2)/4)}{\e^{\beta(\nu_1+\nu_2)/4}},
    \end{align}
    is not generally PSD, therefore the $\CT$ is not always completely positive.

    For thermal state preparation, even if $\CK$ does not give a physical Lindbladian dynamics, its \textit{parent Hamiltonian} or discriminant on a duplicated Hilbert space can feed into an adiabatic algorithm to prepare the purified Gibbs state. See Section \ref{sec:results} for a discussion. 

\begin{lem}\label{lem:spectral-radius}
    For any single-site jump operator $\vA^a,$ the spectral radius of the generator $\CK_a$ is finite for 1D Hamiltonians (\autoref{defn:1D}). Indeed, 
\begin{align}
    \labs{\braket{\vX,\CK_a[\vX]}_{\vrho}} \le \big(\norm{\vrho^{\frac{1}{4}}\vA\vrho^{-\frac{1}{4}}}+\norm{\vrho^{-\frac{1}{4}}\vA\vrho^{\frac{1}{4}}}\big)^2 \cdot \norm{\vX}^2_{\vrho} \leq c_{\beta} \cdot \norm{\vX}^2_{\vrho},
\end{align}
for an appropriate constant dependent on $\beta$. 
\end{lem}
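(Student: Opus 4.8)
The plan is to reduce the bound on the Dirichlet form $\braket{\vX,\CK_a[\vX]}_{\vrho}$ to the quasi-local convergence of complex-time evolution in 1D, which is exactly what \autoref{lem:loose_AO} is tailored for. From \autoref{lem:dirichlet-ck} we know $\braket{\vX,-\CK_a[\vX]}_{\vrho} = \|[\vA^a,\vX]\|_{\vrho}^2$, so the content of the lemma is really an upper bound $\|[\vA^a,\vX]\|_{\vrho}\le (\norm{\vrho^{1/4}\vA\vrho^{-1/4}}+\norm{\vrho^{-1/4}\vA\vrho^{1/4}})\cdot\|\vX\|_{\vrho}$. Expanding the commutator, $[\vA^a,\vX]=\vA^a\vX - \vX\vA^a$, and applying the triangle inequality in the KMS norm, it suffices to bound $\|\vA^a\vX\|_{\vrho}$ and $\|\vX\vA^a\|_{\vrho}$ separately. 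For these I invoke \autoref{lem:loose_AO} directly: $\|\vA^a\vX\|_{\vrho}\le \norm{\vrho^{1/4}\vA^a\vrho^{-1/4}}\cdot\|\vX\|_{\vrho}$ and $\|\vX\vA^a\|_{\vrho}\le \norm{\vrho^{-1/4}\vA^a\vrho^{1/4}}\cdot\|\vX\|_{\vrho}$. Summing gives precisely the first claimed inequality.

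The second inequality — that this prefactor is bounded by a constant $c_\beta$ independent of the system size $n$ — is the only place where the one-dimensionality enters. Here I would appeal to the quasi-locality of complex-time evolution for 1D Hamiltonians (the results recalled in \autoref{section:1D-locality}): since $\vA^a$ is a single-site Pauli and $\vrho \propto e^{-\beta\vH}$ with $\vH$ a nearest-neighbor 1D Hamiltonian of bounded strength, the operators $\vrho^{\pm 1/4}\vA^a\vrho^{\mp 1/4}=\vA^a(\mp i\beta/4)$ have operator norm bounded by a constant depending only on $\beta$ (and the local dimension $2^q$), uniformly in $n$. Concretely one can cite the analytic bound of the form $\norm{e^{s\vH}\vA e^{-s\vH}}\le a_\beta$ for $|s|\le \beta/4$ in 1D, which follows from the cluster expansion / Araki-type estimates for complex-time evolution in one dimension.

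I expect the routine part (expand the commutator, apply Hölder in the KMS norm) to be immediate, and the only substantive point is quoting the correct 1D complex-time bound on $\norm{\vrho^{1/4}\vA\vrho^{-1/4}}$. That estimate is genuinely special to 1D — in higher dimensions this norm can diverge below a critical temperature, which is exactly the caveat flagged in the introduction. Since the paper defers all such locality facts to \autoref{section:1D-locality}, the cleanest presentation is: derive the commutator bound in two lines, then say ``by the convergence of complex-time evolution in 1D (\autoref{section:1D-locality}), $\norm{\vrho^{\pm1/4}\vA\vrho^{\mp1/4}}\le a_\beta$,'' and conclude with $c_\beta = 4a_\beta^2$.
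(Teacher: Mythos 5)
Your proposal is correct and takes exactly the approach the paper intends: the paper states Lemma~\ref{lem:spectral-radius} without a displayed proof, because the first inequality is immediate from Lemma~\ref{lem:dirichlet-ck} (Dirichlet form as $\|[\vA^a,\vX]\|_\vrho^2$), triangle inequality, and the KMS-H\"older bound of Lemma~\ref{lem:loose_AO}, while the $n$-independence of the prefactor is precisely the 1D complex-time bound \eqref{eq:imaginary-time-conjugation} of Lemma~\ref{lem:locality_complextime}, which for a single-site $\vA$ gives $\|\vrho^{\pm 1/4}\vA\vrho^{\mp 1/4}\|=\|\vA(\mp i\beta/4)\|\le e^{2\beta e^{2\beta}}$.
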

Again, for $\CK$ to be well-behaved at large system sizes, we rely on the convergence of imaginary time evolution in 1D (\autoref{lem:locality_complextime}).

\subsection{An a priori Spectral Gap for $\CK$.}
\label{sec:apriori-gap-K}

Key to the locality of the spectral conditional expectation map $\tBE$ \eqref{eq:inf_time} are the spectral properties of $\CK$. Due to its particularly structured form, one can show that $\CK_{\sR}$ admits a system-size independent \textit{conditional spectral gap}, building on a chain of definitions. 
\begin{defn}
    [The Conditional Variance]\label{defn:variance} Let $\{\tBE_\sR\}_{\sR\subseteq [n]}$ be the family of spectral conditional expectations as in \autoref{lem:explicit_CE}. For any subset $\sR\subseteq [n]$, the conditional variance of an operator $\vX\in\CB(\CH)$ is defined by:
    \begin{equation}
        \mathsf{Var}_\sR(\vX):= \|\vX - \tBE_\sR[\vX]\|^2_{\vrho}.
    \end{equation}
\end{defn}
For the full lattice, $\mathsf{Var}_{[n]}(\vX) = \norm{\vX-\tr[\vX\vrho]}_{\vrho}^2.$ Naturally, the conditional variance defines a \textit{conditional spectral gap}, associated to $\CK_{\sR}.$ 
\begin{defn}[The Conditional Spectral Gap] For any subset $\sR \subseteq [n]$, the conditional spectral gap of $\CK_\sR$~\eqref{eq:defKR} is defined by
    \begin{align}
        \lambda_{\sR}:=\inf_{\vX\in\CB(\CH)} \frac{\braket{\vX, -\CK_\sR[\vX]}}{\mathsf{Var}_\sR(\vX)}.
    \end{align}
\end{defn}
We always have that $ 0< \lambda_{\sR} \le \labs{\sR}\cdot c_{\beta,q}$, due to the bound on the spectral radius of the individual $\CK_a$ in \autoref{lem:spectral-radius}. The main lemma of this subsection, is an \textit{a priori lower bound} on the conditional spectral gap of the super-operator $\CK_{\sR}$. Even though our eventual bounds will be much improved for large subsets, these sub-optimal lower bounds on the gap are necessary for the base case of the recursion. While the following can be defined for any subset $\sR$ that may consist of an arbitrary union of intervals, our spectral gap recursion will focus on particular geometries.  
\begin{lem}[$\CK_{\sR}$ is Locally Gapped]\label{lem:local_gap_K} For any subset $\sR \subseteq [n]$, the conditional spectral gap of $\CK_\sR$ is at least:
\begin{align}
    \lambda_\sR \ge \frac{1}{(c_\beta\cdot 2^{q})^{|\sR|}},
\end{align}  
for an explicit constant $c_\beta\geq 1$ which depends only on $\beta$.
\end{lem}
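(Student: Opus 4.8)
By the Dirichlet-form identity of \autoref{lem:dirichlet-ck}, $\braket{\vX,-\CK_\sR[\vX]}_\vrho=\sum_{a\in\CS^1_{\sR}}\|[\vA^a,\vX]\|_\vrho^2$, and by the kernel characterization of $\CK_\sR$ the map $\tBE_\sR$ is the $\braket{\cdot,\cdot}_\vrho$-orthogonal projection onto $\CB(\CH_{\bar{\sR}})$; so the claim is equivalent to $\sum_{a\in\CS^1_{\sR}}\|[\vA^a,\vX]\|_\vrho^2\ge(c_\beta 2^q)^{-|\sR|}\,\mathsf{Var}_\sR(\vX)$ for every $\vX$. I would establish this in two stages: the single-site version first, and then a recursion over the sites of $\sR$ that loses only a $\beta,q$-dependent factor in the gap per site, so that $|\sR|$ iterations produce the stated exponential.

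\textbf{Single site (the base case).} For a fixed site $i$ I would show $\sum_{a\in\CS_i}\|[\vA^a,\vX]\|_\vrho^2\ge(c_{\beta,q})^{-1}\,\mathsf{Var}_{\{i\}}(\vX)$; since $[\vA^a,\vX]=0$ whenever $\vX$ is trivial on site $i$, one may assume $\tBE_{\{i\}}[\vX]=0$, so the right side is $\|\vX\|_\vrho^2$. The anchor is the infinite-temperature limit: the Pauli-twirl identity $\sum_{a\in\CS_i}\vA^a\vY\vA^a=2^q\,\vI_i\otimes\tr_i[\vY]$ shows that at $\beta=0$ the generator $-\CK_{\{i\}}$ is a positive multiple of the single-site depolarizing generator, whose conditional gap is an explicit constant independent of the system size $n$. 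For $\beta>0$, conjugating by $\vX\mapsto\vrho^{1/4}\vX\vrho^{1/4}$ turns the nonnegative operator $-\CK_a$ into one built from the tilted commutator $\vY\mapsto\vL^a\vY-\vY\vR^a$, where $\vL^a=\vrho^{1/4}\vA^a\vrho^{-1/4}$ and $\vR^a=(\vL^a)^\dagger$ are involutions ($(\vL^a)^2=\vI$) whose norms $\|\vL^a\|=\|(\vL^a)^{-1}\|$ are bounded by a constant $c_\beta$ \emph{independent of $n$}, by the convergence of imaginary-time evolution of single-site operators in one dimension (\autoref{lem:spectral-radius} and \autoref{section:1D-locality}). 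A quantitative stability estimate for the bottom of the (nonnegative, self-adjoint) spectrum of $-\CK_{\{i\}}$ under such bounded-invertible dressings — carried out after truncating each $\vL^a$ to a window of $n$-independent (though $\beta$-dependent) size around site $i$ up to exponentially small tails, so that the comparison with the $\beta=0$ generator takes place on a space of dimension polynomial in $2^q$ — then yields the single-site bound.

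\textbf{Recursion over $\sR$.} Write $\sR=\{i_1,\dots,i_k\}$. The engine is the \emph{exact} orthogonal decomposition $\mathsf{Var}_\sR(\vX)=\mathsf{Var}_{\{i\}}(\vX)+\mathsf{Var}_\sR\big(\tBE_{\{i\}}[\vX]\big)$, valid for any $i\in\sR$: it follows from consistency $\tBE_\sR=\tBE_\sR\tBE_{\{i\}}$ (as $\{i\}\subseteq\sR$), which makes $\vX-\tBE_\sR[\vX]$ split into the $\vrho$-orthogonal pieces $(\id-\tBE_{\{i\}})[\vX]$ and $\tBE_{\{i\}}[\vX]-\tBE_\sR[\vX]\in\CB(\CH_{\overline{\{i\}}})$. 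Iterating over $i_1,\dots,i_k$ expresses $\mathsf{Var}_\sR(\vX)$ as $\sum_j\mathsf{Var}_{\{i_j\}}(\vW_{j-1})$ up to a residual, with $\vW_j:=\tBE_{\{i_j\}}\cdots\tBE_{\{i_1\}}[\vX]$, which is absorbed by a geometric series once one knows that a full sweep of the projections $\tBE_{\{i\}}$ strictly contracts the distance to $\CB(\CH_{\bar{\sR}})$ at an $n$-independent rate (by the decay of correlations of $\vrho$ in 1D). Bounding each $\mathsf{Var}_{\{i_j\}}(\vW_{j-1})$ by the single-site Dirichlet form of $\vW_{j-1}$ via the base case, and then replacing $\vW_{j-1}$ by $\vX$ there — writing $\vW_{j-1}-\vX$ as a telescope of the $(\id-\tBE_{\{i_l\}})[\vW_{l-1}]$, whose $\vrho$-norms equal $\mathsf{Var}_{\{i_l\}}(\vW_{l-1})^{1/2}$, and applying Hölder in the KMS norm (\autoref{lem:loose_AO}) with $\|\vrho^{\pm1/4}\vA^a\vrho^{\mp1/4}\|\le c_\beta$ (\autoref{lem:spectral-radius}) — turns everything into a discrete Grönwall-type inequality whose solution is geometric in $k$. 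Reassembling via $\CK_\sR=\sum_{i\in\sR}\CK_{\{i\}}$, so that $\sum_j(\text{single-site Dirichlet form of }\vX\text{ at }i_j)=\braket{\vX,-\CK_\sR[\vX]}_\vrho$, and enlarging $c_\beta$ to absorb the $\mathrm{poly}(k)$ and constant-to-the-$k$ losses, yields a system-size-independent bound of the asserted form.

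\textbf{Main obstacle.} The base case is the real content. Classically, or for commuting Hamiltonians, the single-region gap is essentially a finite-dimensional fact, but here the KMS weight $\vrho^{1/4}$ correlates site $i$ with the entire chain: the naive equivalence constant between $\|\cdot\|_\vrho$ and the Frobenius norm is exponentially small in $n$, and the deformation from $\beta=0$ to $\beta>0$ is merely bounded, not small. Both difficulties are defeated only by the one-dimensional input that imaginary-time-evolved single-site operators have $n$-independent norm — exactly the property that fails in two or more dimensions, where $\CK$ can have divergent strength to begin with.
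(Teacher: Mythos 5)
Your plan differs in structure from the paper's, which proves the bound in one shot: expand $\vX$ in Pauli strings over $\sR$, pass from $\vrho$ to the punctured product state $\vrho^{\bar\sR}\otimes\vec\tau_\sR$ via \autoref{lem:compare_measure} at a cost of $c_\beta^{O(|\sR|)}$, invoke the depolarizing gap (\autoref{lem:gap_product_L}) on the product measure where the Pauli components are orthogonal, compare back, and finish with a Cauchy--Schwarz that pays $2^{q|\sR|}$. No clustering and no recursion over sites is used — the whole exponential loss comes from the measure comparison and the number of Pauli strings. Your site-by-site recursion, if it closed, would prove the same statement but is not sound as sketched, for two reasons. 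First, the recursion is nearly circular: the residual $\mathsf{Var}_\sR(\vW_k)$ is absorbed only ``once one knows that a full sweep of the projections $\tBE_{\{i\}}$ strictly contracts the distance to $\CB(\CH_{\bar\sR})$ at an $n$-independent rate'' — but a uniform contraction rate for the alternating product of the projections $\tBE_{\{i\}}$ onto $\CB(\CH_{\overline{\{i\}}})$ \emph{is} the gap statement (it is controlled by the Friedrichs angles between those subspaces, which is exactly the quantity the lemma estimates), so you are assuming what you need to prove. Appealing to ``decay of correlations in 1D'' does not rescue this: \autoref{lem:local_gap_K} is unconditional and serves as the base case for the subsequent clustering-driven recursion, so it cannot lean on clustering; and in any event weak clustering does not directly imply an $n$-independent contraction rate for products of conditional expectations.

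Second, your base case for a single site is not established by the argument you give. You observe (correctly) that both the Dirichlet form $\sum_{a\in\CS_i}\|[\vA^a,\vX]\|_\vrho^2$ and the variance $\mathsf{Var}_{\{i\}}(\vX)$ use the same KMS norm, and that imaginary-time evolution of single-site operators has $n$-independent norm. But a ``quantitative stability estimate for the bottom of the spectrum under bounded-invertible dressings'' is not a theorem — a bounded similarity with bounded inverse preserves invertibility, not a quantitative gap, unless the dressing is close to the identity or some further structure is exploited — and truncating the numerator's $\vL^a$ to a window leaves the denominator $\mathsf{Var}_{\{i\}}(\vX)$ still weighted by the global $\vrho$. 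The missing tool is precisely the measure comparison \autoref{lem:compare_measure}: one must replace $\|\cdot\|_\vrho$ by $\|\cdot\|_{\vrho^{\bar{\{i\}}}}$ in \emph{both} numerator and denominator (paying $c_\beta^{O(1)}$ because only $O(1)$ links are cut), after which the problem factors across $\bar{\{i\}}$ and reduces to the genuine finite-dimensional depolarizing gap. Once you see this, you will notice that the same replacement for the full set $\sR$ (paying $c_\beta^{O(|\sR|)}$) already resolves the lemma without any recursion — which is the paper's proof.
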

Although for any finite interval $\sA\subseteq [n]$, the superoperator $\CK_\sA$ may generically be non-local (i.e., act on the entire chain), \autoref{lem:local_gap_K} implies it at least admits some locality in time. Indeed, the finite lower bound on the conditional gap above implies that for a suitably chosen finite time $t$ scaling exponentially in $|\sA|$, the finite-time spectral conditional expectation approximates its infinite-time version: $\BE_{\sA, t}[\vX]\approx \BE_\sA[\vX]$.

\subsubsection{Proof of~\autoref{lem:local_gap_K}}
The proof of \autoref{lem:local_gap_K} (below) is based on a comparison to the depolarizing semi-group (\autoref{lem:gap_product_L}), and the fact that one can change the underlying state to the maximally mixed at a controllable loss (\autoref{lem:compare_measure}).

\begin{lem}[Spectral Gap of Depolarizing Semi-group]\label{lem:gap_product_L} With $\vec{\tau}_\sR$ the maximally mixed state over $\otimes_\sR \mathbb{C}^{2^q}$, $\vsigma_{\bar{\sR}}$ any mixed state over $\otimes_{\bar{\sR}} \mathbb{C}^{2^q}$, and any operator $\vX$:
    \begin{align}
        \sum_{a\in \CS_{\sR}^1} \norm{[\vA^a,\vX]}^2_{\vec{\tau}_\sR\otimes \vsigma_{\bar{\sR}}} \ge 4^{q} \cdot \norm{\vX- \vec{\tau}_\sR\otimes\tr_\sR[\vX]}^2_{\vec{\tau}_\sR\otimes \vsigma_{\bar{\sR}}}.
    \end{align}
\end{lem}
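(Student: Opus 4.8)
The statement concerns the depolarizing semigroup on the region $\sR$, with the underlying state taken to be a product $\vec{\tau}_\sR\otimes\vsigma_{\bar{\sR}}$ where the factor on $\sR$ is maximally mixed. Because the weighting on $\sR$ is the maximally mixed state $\vec{\tau}_\sR$, the KMS inner product on operators supported (relatively) on $\sR$ reduces to the (normalized) Hilbert--Schmidt inner product, and all the $\vrho^{\pm 1/4}$ conjugations that appear in the general $\CK_a$ become trivial on $\sR$. So the plan is to reduce the claim to the standard spectral gap of the completely depolarizing channel on a single qudit of dimension $2^q$, tensored up across $\sR$.

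The key steps, in order. First, fix the convention: the single-site Pauli-like jumps $\CS^1_\sR = \cup_{i\in\sR}\CS_i$ consist, for each site, of the $4^q$ Pauli strings $\{\vI,\vX,\vY,\vZ\}^{\otimes q}$ on that $2^q$-dimensional qudit. Second, recall the elementary single-qudit identity: for a qudit of Hilbert space dimension $d=2^q$, summing $\|[\vP,\vX]\|_2^2$ over the $d^2$ Pauli operators $\vP$ gives $2d\,(\|\vX\|_2^2 - \frac1d|\tr \vX|^2)$, equivalently $\frac{1}{d}\sum_\vP \vP\vX\vP = \frac{1}{d^2}\tr[\vX]\cdot\vI$ is the projector onto the trace part, and $\frac{1}{2d^2}\sum_\vP[\vP,\vX][\vP,\vX]^\dagger$-type manipulations yield the stated constant. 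Writing everything in the maximally-mixed-weighted norm $\|\cdot\|_{\vec\tau_\sR\otimes\vsigma_{\bar\sR}}$, the normalization works out so that for a single site $i\in\sR$,
\begin{align}
\sum_{a\in\CS_i}\|[\vA^a,\vX]\|^2_{\vec\tau_\sR\otimes\vsigma_{\bar\sR}} \;\geq\; 4^q\cdot \big\|\vX - (\vI_i/2^q)\otimes\tr_i[\vX]\big\|^2_{\vec\tau_\sR\otimes\vsigma_{\bar\sR}},
\end{align}
where $\tr_i$ traces out site $i$. Third, promote the single-site bound to all of $\sR$ by a telescoping/tensorization argument: order the sites of $\sR$ as $i_1,\dots,i_m$, apply the single-site inequality at $i_1$ to $\vX$, then at $i_2$ to $\tr_{i_1}[\vX]$-type remainders, and sum; since each local depolarizing step has gap $\geq 4^q$ and the partial-trace projectors compose to the full projector $\vX\mapsto \vec\tau_\sR\otimes\tr_\sR[\vX]$, summing the $m$ inequalities gives $\sum_{a\in\CS^1_\sR}\|[\vA^a,\vX]\|^2 \geq 4^q\cdot\|\vX-\vec\tau_\sR\otimes\tr_\sR[\vX]\|^2$ in the product-state norm. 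Here one uses that the conditional-expectation-onto-trace maps for distinct sites commute and that the norm is a genuine inner-product norm, so the ``variance'' decomposes additively over the telescoping; this is exactly the classical fact that the depolarizing semigroup on a product space has gap equal to the single-site gap.

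The main obstacle I expect is purely bookkeeping rather than conceptual: getting the normalization constants exactly right so that $4^q$ (and not, say, $2^q$ or $4^q/2$) emerges. One has to track (i) the factor of $2^q$ from $\vec\tau_\sR = \vI_\sR/2^{q|\sR|}$ inside the weighted norm, (ii) the count $4^q$ of Pauli strings per site versus the $2^q$-dimensionality, and (iii) the fact that the jumps are not normalized to $\|\vA^a\|=1/\sqrt{\dim}$ but are Pauli strings. A clean way to avoid sign/constant errors is to first prove the $q=1$ single-qubit case by hand (there the sum over $\{\vI,\vX,\vY,\vZ\}$ of $\|[\vP,\vX]\|_2^2$ is a one-line computation giving exactly $4(\|\vX\|_2^2-\frac12|\tr\vX|^2)$, i.e. gap $4 = 4^1$), then tensor over the $q$ Paulis within a site to get $4^q$ for one site, then tensor over sites of $\sR$. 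The $\vsigma_{\bar\sR}$ factor plays no role beyond providing a fixed weight on the complement that commutes through everything, since none of the jumps touch $\bar\sR$; this should be remarked but requires no work.
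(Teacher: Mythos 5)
Your approach is essentially the paper's: both telescope the conditional variance $\norm{\vX - \vec{\tau}_\sR\otimes\tr_\sR[\vX]}^2_{\vec{\tau}_\sR\otimes\vsigma_{\bar\sR}}$ into an orthogonal sum of single-site deviations (via the erasure channels $\BP_i[\cdot] = \vec{\tau}_i\otimes\tr_i[\cdot]$), drop the contractive prefixes $\prod_{j<i}\BP_j$, and then bound each single-site variance $\norm{(1-\BP_i)[\vX]}^2$ by the Pauli-commutator sum on that site. The paper executes the single-site step via the exact identity $(1-\BP_i)[\vX] = \tfrac{1}{2\cdot 4^q}\sum_{a\in\CS_i^1}[\vA^a,[\vA^a,\vX]]$ together with H\"older's inequality and Cauchy--Schwarz over the $4^q$ jumps, landing exactly on the constant $4^q$; you propose a direct computation of the single-qudit depolarizing gap, which (done correctly) yields the slightly sharper constant $2\cdot 4^q$. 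Your observation that the $\vrho^{\pm 1/4}$ conjugations trivialize on $\sR$ because the weight there is maximally mixed is correct and is what makes the whole reduction go through.

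One correction to your sanity check: the single-qubit gap is $8$, not $4$. For $q=1$ and $\vX$ a non-identity Pauli, exactly two of the four jumps $\{\vI,\vX,\vY,\vZ\}$ anticommute with $\vX$, each contributing $\norm{[\vP,\vX]}_2^2 = \norm{2\vP\vX}_2^2 = 4\norm{\vX}_2^2$, so $\sum_\vP\norm{[\vP,\vX]}_2^2 = 8\norm{\vX}_2^2 = 8\big(\norm{\vX}_2^2-\tfrac12|\tr\vX|^2\big)$. The single-qudit identity should therefore read $\sum_\vP\norm{[\vP,\vX]}_2^2 = 2d^2\big(\norm{\vX}_2^2-\tfrac1d|\tr\vX|^2\big)$ with $d=2^q$, not $2d(\cdots)$ as you wrote --- indeed your own $\tfrac{1}{2d^2}$ normalization of the twirl already implies the coefficient $2d^2$. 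The error is in a benign direction, since the exact gap $2\cdot 4^q$ exceeds the claimed lower bound $4^q$, but the constant you assert to be exact is off by a factor of two.
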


The proof can be found in e.g. \cite{kastoryano2013quantum} and is included for completeness. 

\begin{proof}
    With $\BP_i[\cdot] = \vec{\tau}_i \otimes \tr_i[\cdot ]$ the erasure channel on site $i$, we have
    \begin{equation}
        \| \vX-\vec{\tau}_\sR\otimes\tr_\sR[\vX]\|^2_{\vec{\tau}_\sR\otimes \vsigma_{\bar{\sR}}} = \sum_{i\in \sR} \bigg\| \prod_j^{i-1} \BP_j(1-\BP_i)  [\vX]\bigg\|^2_{\vec{\tau}_\sR\otimes \vsigma_{\bar{\sR}}} \leq \sum_{i\in \sR}\bigg\| (1-\BP_i)  [\vX]\bigg\|^2_{\vec{\tau}_\sR\otimes \vsigma_{\bar{\sR}}}.
    \end{equation}
    We can further expand the variance of the single-site depolarizing channel in terms of double commutators with $q$-qubit Pauli operators:
    \begin{equation}
        (1-\BP_i)  [\vX] = \frac{1}{2}\cdot \frac{1}{4^q}\cdot \sum_{a\in \CS_i^1} [\vA^a, [\vA^a, \vX]] \Rightarrow \bigg\| (1-\BP_i)  [\vX]\bigg\|_{\vec{\tau}_\sR\otimes \vsigma_{\bar{\sR}}} \leq \frac{1}{4^q}\cdot  \sum_{a\in \CS_i^1}\norm{[\vA^a,\vX]}_{\vec{\tau}_\sR\otimes \vsigma_{\bar{\sR}}},
    \end{equation}
    where Holder's inequality was applied with $\|\vA^a\|\leq 1$. The advertised result then follows from the Cauchy-Schwarz inequality. 
\end{proof}

\begin{lem}[Comparison of Measures]\label{lem:compare_measure}
Consider a 1D Hamiltonian $\vH$ (\autoref{defn:1D}) and a bipartition of the chain into two disjoint subsets  $\sA\cup\sB= [n].$ Consider the Hamiltonian $\vH^\sA+\vH^\sB$ with the link $\vH^{\sA:\sB}$ between the intervals removed, and let $\vrho^{\sA\sB}$ and $\vrho^{\sA}\otimes \vrho^{\sB}$ be the associated Gibbs states. Then, there exists an explicit constant $\infty>c_\beta>1$ such that for any operator $\vX$,
\begin{align}
    \frac{1}{c_\beta}\cdot \norm{\vX}_{\vrho^{\sA\sB}} \le  \norm{\vX}_{\vrho^{\sA}\otimes \vrho^{\sB}} \leq c_\beta\cdot \norm{\vX}_{\vrho^{\sA\sB}}. \label{eq:measure_comparison}
\end{align}    
\end{lem}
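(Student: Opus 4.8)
The plan is to control the Radon–Nikodym–type factor between the two states $\vrho^{\sA\sB}$ and $\vrho^{\sA}\otimes\vrho^{\sB}$ by noting that they differ only through the single boundary link $\vH^{\sA:\sB}$, whose norm is at most $1$. First I would write the norm $\norm{\vX}_{\vrho^{\sA\sB}}^2 = \tr[\vX^\dagger (\vrho^{\sA\sB})^{1/2} \vX (\vrho^{\sA\sB})^{1/2}]$ and compare it term-by-term with the analogous expression for $\vrho^{\sA}\otimes\vrho^{\sB}$. The key identity is that, using $\vH^{\sA\sB} = \vH^{\sA} + \vH^{\sB} + \vH^{\sA:\sB}$ and the fact that $\vH^{\sA}$ and $\vH^{\sB}$ commute (they act on disjoint subsets), one can write
\begin{align}
(\vrho^{\sA\sB})^{1/2} = \frac{\tr[\e^{-\beta(\vH^{\sA}+\vH^{\sB})}]}{\tr[\e^{-\beta\vH^{\sA\sB}}]}\cdot \e^{-\beta(\vH^{\sA:\sB})/2}_{\mathrm{(conj.)}}\cdot (\vrho^{\sA}\otimes\vrho^{\sB})^{1/2},
\end{align}
where the middle factor is the operator $\e^{\beta(\vH^{\sA}+\vH^{\sB})/2}\e^{-\beta\vH^{\sA\sB}/2}$ — a bounded, invertible operator. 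The cleanest way to bound it is to invoke a Duhamel / complex-interpolation argument (or, directly, the operator inequality for relative entropy-type bounds), giving $\norm{\e^{\beta(\vH^{\sA}+\vH^{\sB})/2}\e^{-\beta\vH^{\sA\sB}/2}} \le \e^{\beta\norm{\vH^{\sA:\sB}}/2}\le \e^{\beta/2}$ and similarly for the inverse, and the ratio of partition functions is also controlled by $\e^{\pm\beta}$ via Golden–Thompson or a simple variational bound.

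The core estimate is thus: define $\vM := (\vrho^{\sA}\otimes\vrho^{\sB})^{-1/4}(\vrho^{\sA\sB})^{1/2}(\vrho^{\sA}\otimes\vrho^{\sB})^{-1/4}$ — a positive operator — and observe $\norm{\vX}_{\vrho^{\sA\sB}}^2 = \tr[(\vrho^{\sA}\otimes\vrho^{\sB})^{1/4}\vX^\dagger(\vrho^{\sA}\otimes\vrho^{\sB})^{1/4}\,\vM\,(\vrho^{\sA}\otimes\vrho^{\sB})^{1/4}\vX(\vrho^{\sA}\otimes\vrho^{\sB})^{1/4}\,\vM]$ after a cyclic rearrangement. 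Since $\vM$ and $\vM^{-1}$ both have operator norm bounded by an explicit $c_\beta = \e^{O(\beta)}$ (from the partition-function ratio and the complex-time bound on $\e^{\pm\beta\vH^{\sA:\sB}/2}$-type conjugations, noting $\norm{\vH^{\sA:\sB}}\le 1$), we get $\frac{1}{c_\beta^2}\,\vI \preceq \vM \preceq c_\beta^2\,\vI$, and substituting the lower bound on $\vM$ in one slot and $c_\beta^2\vI$ in the other sandwiches $\norm{\vX}_{\vrho^{\sA\sB}}^2$ between $c_\beta^{-4}$ and $c_\beta^{4}$ times $\norm{\vX}_{\vrho^{\sA}\otimes\vrho^{\sB}}^2$. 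Renaming $c_\beta^2 \to c_\beta$ yields the claimed two-sided bound.

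The main obstacle is making the bound on $\norm{\e^{\beta(\vH^{\sA}+\vH^{\sB})/2}\e^{-\beta\vH^{\sA\sB}/2}}$ clean and honestly system-size independent: one must be careful that the bound depends only on $\norm{\vH^{\sA:\sB}}\le 1$ and $\beta$, not on the sizes of $\sA,\sB$. This follows because $\vH^{\sA}+\vH^{\sB}$ and $\vH^{\sA\sB}$ differ by the single bounded term $\vH^{\sA:\sB}$, so the derivative of $t\mapsto \e^{t(\vH^{\sA}+\vH^{\sB})}\e^{-t\vH^{\sA\sB}}$ in operator norm is bounded by $\norm{\vH^{\sA:\sB}}$ times the norm of the current value, and Grönwall gives the $\e^{\beta\norm{\vH^{\sA:\sB}}}$ bound uniformly in $n$. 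I expect this Grönwall / Duhamel step, together with the matching partition-function ratio bound $\e^{-\beta\norm{\vH^{\sA:\sB}}}\le \tr[\e^{-\beta(\vH^{\sA}+\vH^{\sB})}]/\tr[\e^{-\beta\vH^{\sA\sB}}]\le\e^{\beta\norm{\vH^{\sA:\sB}}}$ (immediate from the operator inequality $\vH^{\sA\sB}\pm\vH^{\sA:\sB}\succeq$/$\preceq \vH^{\sA}+\vH^{\sB}$ and monotonicity of the trace), to be the only nontrivial input; everything else is bookkeeping with the KMS inner product.
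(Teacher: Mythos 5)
Your overall route matches the paper's: rewrite $\norm{\vX}_{\vrho^{\sA\sB}}^2$ by inserting fractional powers of $\vsigma := \vrho^{\sA}\otimes\vrho^{\sB}$, reduce the comparison to the operator norm of (essentially) the Araki expansional $\e^{\beta(\vH^\sA+\vH^\sB)/4}\e^{-\beta\vH^{\sA\sB}/4}$ together with the partition-function ratio, and control the latter by Golden--Thompson. However, your Gr\"onwall step contains a genuine error. Differentiating $f(t) = \e^{t\vX}\e^{-t(\vX+\vY)}$ with $\vX = \vH^\sA+\vH^\sB$ and $\vY = \vH^{\sA:\sB}$ gives $f'(t) = -\bigl(\e^{t\vX}\vY\e^{-t\vX}\bigr)f(t)$, so the bound one obtains is $\norm{f(\beta/2)}\le \exp\bigl(\int_0^{\beta/2}\norm{\e^{t\vX}\vY\e^{-t\vX}}\,\rd t\bigr)$, \emph{not} $\e^{\beta\norm{\vY}/2}$. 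Since $\vX$ is Hermitian, $\e^{t\vX}$ is positive but not unitary, and conjugation by $\e^{t\vX}$ can blow up the norm of $\vY$ dramatically (already for $2\times2$ matrices, $\e^{t\vX}\vY\e^{-t\vX}$ can have norm $\e^{ct}$ with $c$ proportional to the spectral spread of $\vX$, which here grows with system size). Your claimed bound $\e^{\beta\norm{\vH^{\sA:\sB}}/2}$ would be dimension-independent, but this lemma is genuinely one-dimensional.

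The paper closes exactly this gap by appealing to Lemma~\ref{lem:expansionals}, whose proof is your Gr\"onwall/Duhamel argument but keeps the correct integrand $\norm{\e^{-t\vX}\vY\e^{t\vX}}$, and then invokes the 1D locality of complex-time evolution (Lemma~\ref{lem:locality_complextime}, going back to Araki~\cite{araki1969gibbs}) to show that this imaginary-time conjugation of the single local link $\vY=\vH^{\sA:\sB}$ has norm bounded by a system-size independent but doubly-exponential-in-$\beta$ constant, yielding an overall constant triply exponential in $\beta$. Your proposal needs to replace ``$\norm{\e^{t\vX}\vY\e^{-t\vX}}\le\norm{\vY}$'' by this nontrivial 1D bound; everything else (the sandwich with $\vM$, the reduction, the partition-function ratio via Golden--Thompson) is sound and matches the paper's argument.
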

This convenient lemma is also applied throughout the paper, e.g., often when we truncate the spectral conditional expectation. 
\begin{proof}
Let us denote $\vsigma = \vrho^{\sA}\otimes \vrho^{\sB}.$ Expanding the KMS inner product,
\begin{align}
    \tr[\sqrt{\vrho} \vX \sqrt{\vrho}\vX^{\dagger}] &= \tr\bigg[\vsigma^{1/4} \vX \vsigma^{1/4}\cdot  (\vsigma^{-1/4} \vrho^{1/2}\vsigma^{-1/4} )\cdot \vsigma^{1/4}\vX^{\dagger}\vsigma^{1/4} \cdot (\vsigma^{-1/4} \vrho^{1/2}\vsigma^{-1/4}) \bigg] \\
    &\le \norm{\vsigma^{-1/4} \vrho^{1/2}\vsigma^{-1/4}}^2 \cdot \norm{\vX}_{\vsigma}^2. 
\end{align}
To analyze the operator norm of $\vsigma^{-1/4} \vrho^{1/2}\vsigma^{-1/4} = \sqrt{\frac{Z_\sA Z_\sB}{Z_{\sA\sB}}} e^{\beta \vH_{\sA}/4} e^{\beta \vH_{\sB}/4}e^{-\beta \vH_{\sA\sB} /2} e^{\beta \vH_{\sA}/4} e^{\beta \vH_{\sB}/4} $, we apply \autoref{lem:expansionals} to the matrix exponentials, and the Golden-Thompson inequality to the ratio of partition functions:
\begin{equation}
    Z_\sA Z_\sB = \tr[e^{-\beta(\vH_\sA+\vH_\sB)}]\leq \tr[e^{-\beta\vH_{\sA\sB}}e^{\beta \vh}] = Z_{\sA\sB}\tr[\vrho e^{\beta\vh}] \leq Z_{\sA\sB} e^{\beta},
\end{equation}
where $\vh$ is the Hamiltonian term on the link between $\sA, \sB$, and $\|\vh\|\leq 1$. The opposite direction of \eqref{eq:measure_comparison} follows analogously
\begin{equation}
    Z_{\sA\sB} = \tr[e^{-\beta(\vH_\sA+\vH_\sB+\vh)}] \leq \tr[e^{-\beta\vH_{\sA}-\beta\vH_{\sA}}e^{-\beta \vh}] = Z_{\sA}Z_{\sB}\tr[\vsigma e^{-\beta\vh}] \leq Z_{\sA}Z_{\sB} e^{\beta}.
\end{equation}
\end{proof}

We are now in a position to prove \autoref{lem:local_gap_K}, on the conditional spectral gap of $\CK$.

\begin{proof}

[of \autoref{lem:local_gap_K}] 
For any operator $\vX,$ consider its Pauli string decomposition over the subsystem $\sR$:
\begin{align}
     \vX = \sum_{\vS^i \in \CS^{\labs{\sR}}_\sR} \vS^i\otimes\vX^i_{\bar{\sR}},
\end{align}
where we start with the identity $\vS^0 = \vI.$ Consider also the restricted Gibbs state $\vrho^{\bar{\sR}} \propto e^{-\beta \vH_{\bar{\sR}}}$, which is a product state between $\bar{\sR}$ and $\sR$ such that the terms $\vS^i\otimes\vX^i_{\bar{R}}$ are pairwise orthogonal under $\braket{\cdot,\cdot}_{\vrho^{\bar{\sR}}}.$ Then, 
    \begin{align}
    \sum_{a\in \CS_{\sR}^1} \braket{\vX,-\CK_a[\vX]}_{\vrho} =\sum_{a\in \CS_{\sR}^1} \norm{[\vA^a,\vX]}^2_{\vrho} &\ge c_\beta^{-2|\sR|} \sum_{a\in \CS_{\sR}^1} \norm{[\vA^a,\vX]}^2_{\vrho^{\bar{\sR}}}\tag*{(By~\autoref{lem:compare_measure})}\\
    &\ge c_\beta^{-2|\sR|} \cdot \lnorm{\vX -\vec{\tau}_\sR\otimes\tr_\sR[\vX]}^2_{\vrho^{\bar{\sR}}} \tag*{(By~\autoref{lem:gap_product_L} and $4^q \ge 1$)}\\
    &= c_\beta^{-2|\sR|} \cdot\sum_{\vS^i \in \CS_{\sR}^{\labs{\sR}} \setminus \{\vI\}} \lnorm{\vS^i\otimes \vX^{i}_{\bar{\sR}}}^2_{\vrho^{\bar{\sR}}} \\ & \ge c_\beta^{-4|\sR|} \cdot\sum_{\vS^i \in \CS_{\sR}^{\labs{\sR}} \setminus \{\vI\}} \lnorm{\vS^i\otimes \vX^{i}_{\bar{\sR}}}^2_{\vrho} \tag*{(By~\autoref{lem:compare_measure})}.
    \end{align}
    \noindent 
    Next, we express the variance in terms of the individual Pauli decomposition
    \begin{align}
        \norm{(1-\BE_{R})[\vX]}_{\vrho} &\le \sum_{\vS^i \in \CS_{\sR}^{\labs{\sR}} \setminus \{\vI\}} \lnorm{(1-\BE_{R})[\vS^i\otimes\vX^{i}_{\bar{\sR}}]}_{\vrho}\tag*{(Triangle inequality)}\\
        &\le \sum_{\vS^i \in \CS_{\sR}^{\labs{\sR}} \setminus \{\vI\}} \norm{\vS^i\otimes\vX^{i}_{\bar{\sR}}}_{\vrho} \tag*{(Since $\norm{1-\BE_{\sR}}_{\vrho}\le 1$)}\\
        &\le  2^{q|\sR|}\cdot  \sqrt{\sum_{\vS^i \in \CS_{\sR}^{\labs{\sR}} \setminus \{\vI\}} \norm{\vS^i\otimes\vX^{i}_{\bar{\sR}}}^2_{\vrho}} \tag*{(Cauchy-Schwarz)}.
    \end{align}
    Chain the inequalities to conclude the proof.   
\end{proof}

\subsection{Quasi-Locality of the Spectral Conditional Expectation}
\label{sec:quasi-local-E}
Now that we know the infinite-time conditional expectation can be well-approximated at finite times, the quasi-locality of $\tBE$ can be related to that of $\CK.$
Our first goal is to show that the Hamiltonian generating $\CK_\sA$ \eqref{eq:CK} can be truncated to a distance $\ell$ around $\sA$:
\begin{align}
    \tBE_\sA \approx \tBE_{\sA}^{(\sA_\ell)}.
\end{align}

Since we will be working with errors measured under the KMS inner product, it is natural to consider the corresponding KMS-induced superoperator norm; recall \autoref{defn:induced_norm}. As a trivial example, 
\begin{lem}\label{lem:norm_E}
    The KMS-induced superoperator norm of a spectral conditional expectation map (\autoref{defn:spectral_expectation}) which fixes $\vrho$ is $\norm{\BE}_{\vrho} = 1.$
\end{lem}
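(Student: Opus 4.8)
The plan is to reduce the claim to the spectral theorem for self-adjoint operators on a finite-dimensional inner product space. First I would observe that, since $\vrho$ is full rank, the KMS form $\braket{\vX,\vY}_{\vrho}=\tr[\vX^\dagger \vrho^{1/2}\vY\vrho^{1/2}]$ is a genuine positive-definite inner product on the finite-dimensional complex vector space $\CB(\CH)$; indeed $\norm{\vX}_{\vrho}^2=\fnorm{\vrho^{1/4}\vX\vrho^{1/4}}^2$ vanishes only when $\vrho^{1/4}\vX\vrho^{1/4}=0$, i.e. $\vX=0$, because $\vrho^{1/4}$ is invertible. The KMS-detailed-balance property in \autoref{defn:spectral_expectation} says precisely that $\BE$ is self-adjoint with respect to this inner product, so by the spectral theorem it is diagonalizable with real eigenvalues $\{\lambda_i\}$ and a KMS-orthonormal eigenbasis $\{\vV_i\}$, $\BE[\vV_i]=\lambda_i\vV_i$.

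Next I would use the monotonicity hypothesis $\lambda_i\in[0,1]$ to bound the norm from above. Expanding an arbitrary operator $\vO=\sum_i c_i\vV_i$ in the eigenbasis and using Parseval, $\norm{\BE[\vO]}_{\vrho}^2=\sum_i \lambda_i^2|c_i|^2\le \big(\max_i\lambda_i^2\big)\sum_i|c_i|^2\le \norm{\vO}_{\vrho}^2$, whence $\norm{\BE}_{\vrho}\le 1$. For the matching lower bound, the unitality hypothesis $\BE[\vI]=\vI$, together with $\norm{\vI}_{\vrho}=\sqrt{\tr[\vrho]}=1$, gives $\norm{\BE}_{\vrho}\ge \norm{\BE[\vI]}_{\vrho}/\norm{\vI}_{\vrho}=1$. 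Combining the two inequalities yields $\norm{\BE}_{\vrho}=1$.

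There is essentially no obstacle here: the only points worth stating explicitly are that the KMS form is a bona fide inner product on $\CB(\CH)$ (so that the spectral theorem applies and the induced norm of a self-adjoint map equals its spectral radius) and that the top eigenvalue $1$ is actually attained, which is exactly what unitality provides. The same one-line argument also records the general principle used elsewhere in the paper, namely that for any KMS-detailed-balanced superoperator the KMS-induced superoperator norm of \autoref{defn:induced_norm} coincides with its spectral radius.
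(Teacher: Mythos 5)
Your proof is correct, and since the paper leaves this lemma unproved (presenting it as a trivial consequence of Definition~\ref{defn:spectral_expectation}), the spectral-theorem argument you wrote down is exactly the intended one: self-adjointness with respect to the KMS inner product plus spectrum in $[0,1]$ gives $\norm{\BE}_{\vrho}\le 1$, and unitality with $\norm{\vI}_{\vrho}=1$ saturates it. One minor remark: the hypothesis ``which fixes $\vrho$'' in the lemma statement is just a restatement of unitality in the Heisenberg picture (the dual Schr\"odinger map fixes $\vrho$ exactly when $\BE[\vI]=\vI$), so your proof correctly uses the only ingredients needed.
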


Next, we define the truncated Superoperator $\CK$ by truncating the Hamiltonian.
\begin{defn}
    [The Truncated Superoperator $\CK^{(\sR)}$]\label{defn:truncated-k} Consider a subset $\sR\subseteq [n]$ and let $\vH^\sR$ denote all the Hamiltonian terms within $\sR$. Then, for any single-site operator $a\in \CS_{\sR}^1$, the truncation of $\CK_a$ to the subset $\sR$ is the superoperator
    \begin{align}
        \CK_{a}^{(\sR)}[\vX] :=  [\vA^a,\vX] \vA_{\sR}^a(-i\beta/2)^{\dagger} - \vA_{\sR}^a(i\beta/2)^{\dagger}[\vA^a,\vX], \label{eq:CK_trunc}
    \end{align}
    \noindent with $\vA_{\sR}^a(z) = e^{i \vH^{\sR}z}\vA^a e^{-i \vH^{\sR}z}.$ For any subset $\sA\subseteq \sR $ and finite time $t\geq 0$, the truncated finite time conditional expectation is given by
    \begin{equation}
        \tBE_{\sA,t}^{(\sR)}[\cdot] := e^{t\CK_{\sA}^{(\sR)}}, \quad \text{where} \quad\CK_{\sA}^{(\sR)} :=\sum_{a\in \CS^1_\sA}\CK^{(\sR)}_{a}.  \label{eq:RAt}
    \end{equation}
    When the subscript $t$ is omitted, we refer to the infinite time limit. 
\end{defn}

The quasi-locality of $\tBE_{\sA}$ will be inherited from the quasi-locality of $\CK_\sA.$ 
\begin{lem}
    [A Priori Error Bounds on the Truncated Superoperator $\CK^{(\sR)}$]
    \label{lem:strictly_local} In the context of \autoref{defn:truncated-k}, there exists explicit constants $c_1, c_2>0$ as a function of $\beta, q$ such that for any subset $\sR\subseteq [n]$, and single-site operator $\vA^a\in \CS_{\sR}^1$ with support on a site $i\in [n]$:
    \begin{align}
 \norm{\CK_{a}^{(\sR)}-\CK_a}_{\vrho} \leq c_1\cdot e^{-c_2\cdot \ell \log \ell},
    \end{align}
    where the length-scale $\ell := \mathsf{dist}(i, [n]\setminus \sR)$.
\end{lem}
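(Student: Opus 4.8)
The plan is to reduce this superoperator-norm estimate to an operator-norm estimate for a difference of complex-time-evolved single-site operators, and then feed it into the 1D complex-time locality bounds of Appendix~\ref{section:1D-locality}. Recognizing the $\vrho$-factors in \autoref{defn:CK} as complex-time evolutions, $\vrho^{1/2}\vA^{a\dagger}\vrho^{-1/2}=\vA^a(-i\beta/2)^\dagger$ and $\vrho^{-1/2}\vA^{a\dagger}\vrho^{1/2}=\vA^a(i\beta/2)^\dagger$, and comparing with \eqref{eq:CK_trunc}, one has for every operator $\vX$
\[
  (\CK_a-\CK_a^{(\sR)})[\vX]=[\vA^a,\vX]\,\vB_- - \vB_+\,[\vA^a,\vX],\qquad \vB_\pm:=\vA^a(\pm i\beta/2)^\dagger-\vA_\sR^a(\pm i\beta/2)^\dagger.
\]
The triangle inequality and Hölder in the KMS norm (\autoref{lem:loose_AO}, used to strip off the commutator factor) give $\big\|(\CK_a-\CK_a^{(\sR)})[\vX]\big\|_{\vrho}\le\big(\|\vrho^{-1/4}\vB_-\vrho^{1/4}\|+\|\vrho^{1/4}\vB_+\vrho^{-1/4}\|\big)\|[\vA^a,\vX]\|_{\vrho}$, and a further use of \autoref{lem:loose_AO} with the 1D bound $\|\vrho^{\pm1/4}\vA^a\vrho^{\mp1/4}\|\le c_\beta$ underlying \autoref{lem:spectral-radius} yields $\|[\vA^a,\vX]\|_{\vrho}\le c_\beta\|\vX\|_{\vrho}$. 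So it suffices to prove $\|\vrho^{\mp1/4}\vB_\pm\vrho^{\pm1/4}\|\le c_1 e^{-c_2\ell\log\ell}$ with $\ell=\mathsf{dist}(i,[n]\setminus\sR)$.

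Next I would absorb the outer $\vrho^{\pm1/4}$ into the evolution. For $\vB_-$ (the case $\vB_+$ is symmetric), using $\vrho^{1/4}\vA^{a\dagger}\vrho^{-1/4}=\vA^a(-i\beta/4)^\dagger$ and the analogous identity for $\vH^\sR$,
\[
  \vrho^{-1/4}\vB_-\vrho^{1/4}=\vA^a(-i\beta/4)^\dagger-\big[\vrho^{-1/4}(\vrho^\sR)^{1/4}\big]\,\vA_\sR^a(-i\beta/4)^\dagger\,\big[(\vrho^\sR)^{-1/4}\vrho^{1/4}\big].
\]
Adding and subtracting $\vA_\sR^a(-i\beta/4)^\dagger$ splits this into: (i) the difference $\vA^a(-i\beta/4)^\dagger-\vA_\sR^a(-i\beta/4)^\dagger$ of the complex-time-$(\beta/4)$ dressings of the single-site $\vA^a$ under the two 1D Hamiltonians $\vH$ and $\vH^\sR$, which agree on every link within distance $\ell$ of site $i$; and (ii) the conjugation of $\vA_\sR^a(-i\beta/4)^\dagger$ by $\vrho^{-1/4}(\vrho^\sR)^{1/4}\propto e^{\beta\vH/4}e^{-\beta\vH^\sR/4}$ and its inverse, minus the trivial conjugation. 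Term (i) is exactly the quantity controlled by the 1D complex-time locality estimate of Appendix~\ref{section:1D-locality} (\autoref{lem:locality_complextime}), with bound $c_1 e^{-c_2\ell\log\ell}$; the super-exponential $\ell\log\ell$ rate is the signature of the logarithmic imaginary-time light cone in 1D. For term (ii), $e^{\beta\vH/4}e^{-\beta\vH^\sR/4}$ differs from $\vI$ only through the Hamiltonian terms $\vH-\vH^\sR$, supported at distance $\ge\ell-1$ from $i$, while $\vA_\sR^a(-i\beta/4)^\dagger$ is (by the same locality lemma) supported within distance $\ell/2$ of $i$ up to an $e^{-c_2\ell\log\ell}$ tail; so the conjugating operators commute with it to within $e^{-c_2\ell\log\ell}$ and cancel.

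The main obstacle sits in term (ii): the dressing operators $e^{\pm\beta\vH/4}$ are \emph{not} norm-bounded uniformly in the system size (the restricted Hamiltonian $\vH^\sR$ may carry $\Theta(n)$ links), so a naive ``commutator times inverse'' bound is useless. The fix, and the technical heart of the argument, is to \emph{truncate first}: replace $\vA_\sR^a(-i\beta/4)^\dagger$ by its restriction to the interval $[i-\ell/2,i+\ell/2]\subseteq\sR$, with error $\le c_1 e^{-c_2\ell\log\ell}$ by 1D complex-time locality; conjugating this \emph{local} operator by $e^{\pm\beta\vH/4}$ inflates its norm by at most $2^{O(q\ell)}$ (the Araki-type bound on an interval of length $O(\ell)$), which is still beaten by $e^{-c_2\ell\log\ell}$ once $\ell\ge\ell_0(\beta,q)$, while the truncated operator is supported at distance $\ge\ell/2$ from $[n]\setminus\sR$, where the (similarly truncated) dressing mismatch acts trivially. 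For $\ell<\ell_0(\beta,q)$ the claimed bound is vacuous after enlarging $c_1$, since $\|\CK_a^{(\sR)}-\CK_a\|_{\vrho}=O_\beta(1)$ by the same complex-time estimates. Chaining all the pieces yields the stated bound.
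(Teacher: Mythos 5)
Your initial reduction — similarity transform, Hölder in the KMS norm (\autoref{lem:loose_AO}), and the single-site bound $\|\vrho^{\pm1/4}\vA^a\vrho^{\mp1/4}\|\le c_\beta$ — matches the paper exactly, and lands on the same object: $\|\vrho^{\mp1/4}\vB_\pm\vrho^{\pm1/4}\|$ with $\vB_\pm=\vA^a(\pm i\beta/2)^\dagger-\vA_\sR^a(\pm i\beta/2)^\dagger$. From there the two arguments diverge. The paper keeps $\vB_\pm$ as is, annulus-decomposes it via \autoref{lem:locality_complextime}, and uses the imaginary-time conjugation bound \eqref{eq:imaginary-time-conjugation} \emph{term by term}: each annulus piece $\delta\vA_r$ is supported on $[i-r,i+r]$, so $\|\vrho^{1/4}\delta\vA_r\vrho^{-1/4}\|\lesssim e^{2\beta r}\|\delta\vA_r\|$, and the super-exponential $1/r!$ then beats the $e^{2\beta r}$ inflation directly under the sum. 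You instead absorb the outer $\vrho^{\pm1/4}$ into the complex-time evolutions, which replaces the $\beta/2$-evolution difference by a $\beta/4$-evolution difference (your term (i), controlled cleanly by \autoref{lem:locality_complextime}) plus a conjugation by the Araki expansional $\vE\propto e^{\beta\vH/4}e^{-\beta\vH^\sR/4}$ (your term (ii)). Term (ii) then requires the expansional's own locality and boundedness, which the paper never has to invoke here (it uses \autoref{lem:expansionals} only in \autoref{lem:compare_measure} and later). Both routes are sound and give the same $e^{-c\ell\log\ell}$; yours is a bit longer but cleanly separates the ``evolution truncation'' and ``dressing mismatch'' contributions, which is conceptually appealing.

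One spot in your write-up is confused and worth fixing: the claim that ``conjugating this local operator by $e^{\pm\beta\vH/4}$ inflates its norm by at most $2^{O(q\ell)}$''. You never need to conjugate by a bare $e^{\pm\beta\vH/4}$, which is indeed uncontrolled; you conjugate by the expansional $\vE$ and its inverse, and both have operator norm $O_\beta(1)$ uniformly in $n$ by \autoref{lem:expansionals} (the prefactors $(Z/Z_\sR)^{\pm1/4}$ cancel across the two sides). The right bound for (ii) is simply $\|\vE\vO\vE^{-1}-\vO\|=\|[\vE,\vO]\vE^{-1}\|\le\|[\vE,\vO]\|\cdot\|\vE^{-1}\|$, with $[\vE,\vO]$ handled by truncating $\vO$ near $i$ (error $e^{-c\ell\log\ell}$, \autoref{lem:locality_complextime}) and $\vE$ near $\partial\sR$ (error super-exponentially small, by the annulus decomposition for Araki's expansionals in \autoref{section:1D-locality}), after which the leading truncated pieces have disjoint supports and commute exactly. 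No exponential-in-$\ell$ inflation arises anywhere, and therefore no need to ``beat'' it; your last sentence about the truncated dressing mismatch acting trivially is the correct argument, and the intermediate $2^{O(q\ell)}$ digression should simply be dropped.
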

In words, the truncation error decays with the distance of the site $a$ to the boundary of the region $\sR$. The proof is a direct computation, with liberal use of locality of complex evolution (see \autoref{section:1D-locality}).

\begin{proof}
    The norm above is equivalent to the $2$-$2$ superoperator norm under a similarity transformation $\CK\rightarrow \vrho^{\frac{1}{4}}\CK[\vrho^{-\frac{1}{4}} \cdot \vrho^{-\frac{1}{4}}]\vrho^{\frac{1}{4}}$. 
    \begin{equation}
          \norm{\CK_{a}^{(\sR)}-\CK_a}_{\vrho}= \bigg\|\vrho^{\frac{1}{4}}\CK_{a}^{(\sR)}[\vrho^{-\frac{1}{4}} \cdot \vrho^{-\frac{1}{4}}]\vrho^{\frac{1}{4}}- \vrho^{\frac{1}{4}}\CK_{a}[\vrho^{-\frac{1}{4}} \cdot \vrho^{-\frac{1}{4}}]\vrho^{\frac{1}{4}}\bigg\|_{2-2}.\label{eq:2-2-norm}
    \end{equation}
    One can directly expand the commutators present in \eqref{eq:CK_trunc}:
    \begin{align}
        -\vrho^{\frac{1}{4}}\CK_{a}^{(\sR)}[\vrho^{-\frac{1}{4}} \vX \vrho^{-\frac{1}{4}}]\vrho^{\frac{1}{4}} &= \vrho^{\frac{1}{4}} \vA^a_{\sR}(i\beta/2)^{\dagger} \vA^a \vrho^{-\frac{1}{4}} \vX - \vrho^{\frac{1}{4}}  \vA^a_{\sR}(i\beta/2)^{\dagger} \vrho^{-\frac{1}{4}} \vX \vrho^{-\frac{1}{4}}\vA^a \vrho^{\frac{1}{4}}\\
        & - \vrho^{\frac{1}{4}} \vA^a \vrho^{-\frac{1}{4}}\vX \vrho^{-\frac{1}{4}} \vA^a_{\sR}(-i\beta/2)^{\dagger} \vrho^{\frac{1}{4}} + \vX \vrho^{-\frac{1}{4}} \vA^a \vA^a_{\sR}(-i\beta/2)^{\dagger} \vrho^{\frac{1}{4}}.
    \end{align}
    For conciseness, henceforth we drop the superscript $a$. We proceed by appling the triangle inequality to each term; which can be further simplified using Holder's inequality $\|\vL \vX\vR\|_2\leq \|\vL\|\cdot \|\vR\|\cdot \|\vX\|_2$:
    \begin{align}
        \eqref{eq:2-2-norm}&\le  \lnorm{ \vrho^{\frac{1}{4}} (\vA_{\sR}(i\beta/2)^{\dagger}- \vA(i\beta/2)^{\dagger}) \vrho^{-\frac{1}{4}} \cdot \vrho^{\frac{1}{4}}\vA \vrho^{-\frac{1}{4}}}  \\
        &+ \lnorm{\vrho^{\frac{1}{4}} ( \vA_{\sR}(i\beta/2)^{\dagger} - \vA(i\beta/2)^{\dagger})\vrho^{-\frac{1}{4}}}\cdot \norm{\vrho^{-\frac{1}{4}}\vA \vrho^{\frac{1}{4}}}\\
        &+\lnorm{\vrho^{\frac{1}{4}}\vA \vrho^{-\frac{1}{4}}} \cdot \norm{\vrho^{-\frac{1}{4}} (\vA_{\sR}(-i\beta/2)^{\dagger}- \vA(-i\beta/2)^{\dagger})\vrho^{\frac{1}{4}}}  \\ 
        &+ \lnorm{\vrho^{-\frac{1}{4}} \vA \vrho^{\frac{1}{4}}\cdot \vrho^{-\frac{1}{4}}(\vA_{\sR}(-i\beta/2)^{\dagger}-\vA(-i\beta/2)^{\dagger}) \vrho^{\frac{1}{4}}},\label{eq:2-2-norm-2}
    \end{align}
    and once again via Holder's inequality,
\begin{align}
   \eqref{eq:2-2-norm-2} &\le \bigg(\lnorm{\vrho^{\frac{1}{4}} ( \vA_{\sR}(i\beta/2)^{\dagger} - \vA(i\beta/2)^{\dagger})\vrho^{-\frac{1}{4}}} + \lnorm{\vrho^{-\frac{1}{4}} (\vA_{\sR}(-i\beta/2)^{\dagger}- \vA(-i\beta/2)^{\dagger})\vrho^{\frac{1}{4}}}\bigg) \\
&\quad \cdot   \bigg(\norm{\vrho^{\frac{1}{4}}\vA \vrho^{-\frac{1}{4}}} + \norm{\vrho^{-\frac{1}{4}}\vA \vrho^{\frac{1}{4}}}\bigg).
\end{align}
Next, we need to apply the 1D complex-dynamics bounds of~\autoref{lem:locality_complextime}. First, we simply remark that each $\vA$ is a single-site Pauli-operator, and thereby there exists an explicit positive constant $c_\beta>0$ (doubly exponential in $\beta$, see \eqref{eq:imaginary-time-conjugation}) s.t.
\begin{align}
        \norm{\vrho^{\frac{1}{4}}\vA \vrho^{-\frac{1}{4}}} + \norm{\vrho^{-\frac{1}{4}}\vA \vrho^{\frac{1}{4}}} \le c_\beta.
\end{align}
The first term is more subtle; we first truncate the differences to increasing ranges via an annulus decomposition, and then apply the norm bounds on imaginary time evolution \autoref{lem:locality_complextime} \eqref{eq:imaginary-time-conjugation} to dispense the conjugation by $\vrho^{1/4}$. Only then can one compute the error when comparing the $r, r+1$ ranges. To proceed, we may assume that $\sR$ is an interval of the form $[-\ell_\sL, \ell_\sR]$, with the site $i\in \sR$ placed at the origin at distance $\ell_\sL$ to the left-most boundary, and $\ell_\sR$ to the right-most. WLOG let $\ell_\sR\geq \ell_\sL$. Then, 
\begin{align}
    \lnorm{\vrho^{\frac{1}{4}} ( \vA_{\sR}(i\beta/2)^{\dagger} - \vA(i\beta/2)^{\dagger})\vrho^{-\frac{1}{4}}}& \leq \lnorm{\vrho^{\frac{1}{4}} ( \vA_{[-\ell_\sL, \ell_\sR]}(i\beta/2)^{\dagger} - \vA_{[-\ell_\sR, \ell_\sR]}(i\beta/2)^{\dagger})\vrho^{-\frac{1}{4}}} \\ &+\lnorm{\vrho^{\frac{1}{4}} ( \vA_{[-\ell_\sR, \ell_\sR]}(i\beta/2)^{\dagger} - \vA(i\beta/2)^{\dagger})\vrho^{-\frac{1}{4}}}.\label{eq:left-right-trunc}
\end{align}
By the annulus expansion, 
\begin{align}
       \lnorm{\vrho^{\frac{1}{4}} ( \vA_{[-\ell_\sR, \ell_\sR]}(i\beta/2)^{\dagger} - \vA(i\beta/2)^{\dagger})\vrho^{-\frac{1}{4}}} &\le  \sum_{r=\ell_\sR}^{\infty}\lnorm{\vrho^{\frac{1}{4}} ( \vA_{r}(i\beta/2)^{\dagger} - \vA_{r+1}(i\beta/2)^{\dagger})\vrho^{-\frac{1}{4}}}\\
        &\le c_\beta\cdot\sum_{r=\ell_\sR}^{\infty}\lnorm{ \vA_{r}(i\beta/2)^{\dagger} - \vA_{r+1}(i\beta/2)^{\dagger}} \cdot e^{2\beta (r+1)} \\
        &\leq c_\beta \cdot e^{4\beta} \cdot \sum_{r=\ell_\sR}^\infty \frac{\ln^r c_\beta }{r!} \cdot    e^{2\beta r} \\ 
        &\leq c_\beta e^{4\beta}\cdot \frac{(e^{2\beta}\ln c_\beta )^{\ell_\sR}}{\ell_\sR!}\cdot c_\beta^{ e^{2\beta}} \\
        &\leq a_\beta \cdot e^{-b_\beta \ell_\sR \log \ell_\sR},
    \end{align}
for a suitable choice of constants $a_\beta, b_\beta$. The first term in \eqref{eq:left-right-trunc} is analogous, up to super-exponential decay in $\ell_\sL$. With $\ell = \min(\ell_\sL, \ell_\sR) =  \mathsf{dist}(i, [n]\setminus \sR)-1$ then gives the desired bound. 

\end{proof}
\subsubsection{Proof of~\autoref{lem:localized_E}}
We are now in a position to prove \autoref{lem:localized_E}.
\begin{proof}

    [of \autoref{lem:localized_E}] 
The possibility of truncating the Hamiltonian hinges on the a priori spectral gap for $\CK_\sA$ and $\CK_{\sA}^{(\sR)}$ (\autoref{lem:local_gap_K}). Let us introduce a tunable cutoff time $t$. By the triangle inequality,
\begin{align}
     \|\tBE_\sA - \tBE_{\sA}^{(\sR)}\|_{\vrho} \leq  \|\tBE_\sA - \tBE_{\sA, t}\|_{\vrho} + \|\tBE_{\sA, t} - \tBE_{\sA, t}^{(\sR)}\|_{\vrho} +  \|\tBE_{\sA, t}^{(\sR)}- \tBE_{\sA}^{(\sR)}\|_{\vrho}.
\end{align}
The unconditional lower bound (say, $\lambda_{\sA}$) on the conditional spectral gap controls (both) finite-time approximation errors:
\begin{align}
    \lnorm{e^{t\CK_\sA } - \tBE_{\sA}}_{\vrho}  ,\quad  \lnorm{e^{t\CK_{\sA}^{(\sR)} } - \tBE_{\sA}^{(\sR)}}_{\vrho^{\sR}}  \leq  e^{-\lambda_{\sA}t}.
\end{align}
However, we note the error of the truncated map is weighted w.r.t. the Gibbs state of $\vH^{\sR}$. Nevertheless, by the two-sided comparison of measures \autoref{lem:compare_measure}, one can puncture $\vH$ at all the $\partial\sR$ endpoints of the subset $\sR$ to arrive at $\vH^{\sR}$:
\begin{align}
  \lnorm{e^{t\CK_{\sA}^{(\sR)} } - \tBE_{\sA}^{(\sR)}}_{\vrho}  \le c_\beta^{|\partial\sR|}\cdot \lnorm{e^{t\CK_{\sA}^{(\sR)} } - \tBE_{\sA}^{(\sR)}}_{\vrho^{\sR}}. \label{eq:comparison_super_errors}
\end{align}
for an explicit constant $c_\beta$. It remains only to compare the finite-time dynamics. For this purpose, we first require the following simple bound on the superoperator norm of $\BE_{\sA,t}^{(\sR)}$, analogous to \eqref{eq:comparison_super_errors}:
\begin{equation}
    \|\tBE_{\sA,t}^{(\sR)}\|_{\vrho} \leq c_\beta\cdot  \|\tBE_{\sA,t}^{(\sR)}\|_{\vrho^{\sR}} = c_\beta^{|\partial\sR|}.\label{eq:super-norm-trunc}
\end{equation}
Then, 
\begin{align}
    \lnorm{e^{ t\CK_\sA} - e^{ t\CK_{\sA}^{(\sR)}}}_{\vrho}&=\lnorm{ \int_{0}^{t} e^{\CK_\sA(t-s) }(\CK_{\sA}^{(\sR)}-\CK_\sA)e^{\CK_{\sA}^{(\sR)} s} \rd s}_{\vrho} \tag*{(Duhamel's principle)}\\
    &\le \int_{0}^{t} \lnorm{(\CK_{\sA}^{(\sR)}-\CK_\sA)e^{\CK_{\sA}^{(\sR)}s}}_{\vrho} \rd s \tag*{(\autoref{lem:norm_E})}\\
    &\le  \norm{\CK_{\sA}^{(\sR)}-\CK_{\sA}}_{\vrho}\cdot  \int_{0}^{t}\norm{e^{\CK_{\sA}^{(\sR)}s}}_{\vrho} \rd s \\
    &\le t\cdot  |\sA|\cdot  c_\beta^{|\partial\sR|} \cdot c_1 \cdot e^{-c_2 \ell \log \ell}, \tag*{(\autoref{lem:strictly_local}) + \eqref{eq:super-norm-trunc}}
\end{align}
for suitable constants $c_1, c_2$, and where $\ell = \mathsf{dist}(\sA, [n]\setminus \sR)$. Leveraging the exponential lower bound for $\lambda_\sA$ as guaranteed by \autoref{lem:local_gap_K}, one can make the explicit choice of  $t = \exp[\alpha_{\beta, q} |\sA|]\cdot (1+ \gamma_{\beta, q}\ell \log \ell)$ (for suitably chosen constants) to arrive at the advertised bound. 
\end{proof}

\section{Decay of Correlations}
\label{section:clustering}
Proofs of Markov chain spectral gaps often hinge on mindful definitions of correlation decay. Perhaps due to the challenges involving conditioning, even for commuting Hamiltonians, the search for suitable definitions of correlation decay has been an active area in the study of quantum Gibbs sampling. Broadly, there are two main classes of quantum correlation decay that somewhat parallel the classical definitions. The first class is \textit{weak} clustering of correlations, which intends to capture correlations involving two observables that have distinct, far-apart supports. These include and generalize the physically motivated two-point thermal correlation functions, which are common targets in thermal calculations, can be directly measured experimentally, and entail transport and thermodynamic properties of the system. 

At a technical level, there is a zoo of plausible definitions of weak clustering that one may write down, partly due to the non-commutativity of the quantum Gibbs state. For our purposes, the most natural version is \textsf{KMS Weak Clustering}, which is defined over the same KMS inner product as the Lindbladian.

\begin{defn}
    [\textsf{KMS Weak Clustering}]\label{defn:kms_clustering} We say a Gibbs state $\vrho$ of a 1D Hamiltonian satisfies \emph{\textsf{KMS Weak Clustering}} with error $\epsilon_\mathsf{KMS}(\cdot)$, if for every pair of operators $\vX, \vY$ with support on intervals $\sA, \sC\subseteq [n]:$
 \begin{equation}\label{eq:kms-weak-clustering}
    \mathsf{Cov}(\vX, \vY):= \bigg|\braket{\vX, \vY}_{\vrho} - \tr[\vrho \vX] \cdot \tr[\vrho \vY]\bigg| \leq \|\vX\|_{\vrho}\cdot  \|\vY\|_{\vrho} \cdot \epsilon_\mathsf{KMS}(\mathsf{dist}(\sA, \sC)).
\end{equation}
\end{defn}

Since we often will require modifying the topology of the observables, we extend the notion of \textsf{KMS Weak Clustering} to cases when the supports of the observables $\vX, \vY$ lie on sequences of intervals. Whenever required, we make the choice of topology explicit.  
The distinction between the above and other standard definitions of weak clustering of correlations in the literature lies in 
\begin{enumerate}
    \item how to quantify the inner product $\braket{\cdot, \cdot}$ between a pair of observables, and 
    \item the normalization of the observables in the error term.
\end{enumerate}
Recently, \cite{Kimura_2025} proved that a related version of the clustering condition above holds unconditionally in 1D systems -- but there the error was measured in operator norm, and the inner product was instead GNS ($\tr[\vX\vY\vrho]$).  In \autoref{app:Weakclustering}, we present a self-contained exposition on other such notions of clustering of correlations studied in the literature, as well as a proof that \cite{Kimura_2025}'s result implies \autoref{defn:kms_clustering} holds at all temperatures in 1D systems with sub-exponential decay. This step is rather laborious, but is primarily based on existing tools.

\begin{thm}
    [\textsf{KMS Weak Clustering} in 1D at all Finite Temperatures]\label{thm:kms-clustering-uncon}
    For any inverse-temperature $\beta\in \BR^+$ and  local dimension $2^q$, there exists constants $\alpha(\beta, q), \eta(\beta, q)>0$ such that the Gibbs state of every 1D Hamiltonian (\autoref{defn:1D}) admits \emph{\textsf{KMS Weak Clustering}} (\autoref{defn:kms_clustering}) with error function:
    \begin{equation}\label{eq:kms-subexp}
        \epsilon_{\mathsf{KMS}}(\ell) \leq \alpha\cdot \exp\big(-\ell^{\eta}\big).
    \end{equation}
\end{thm}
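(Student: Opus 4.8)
\noindent\textit{Proof plan.}
The statement is purely a property of the Gibbs state $\vrho$, and the plan is to obtain it as a change-of-correlation-notion from the recent clustering theorem of \cite{Kimura_2025}. That result gives the analogue of \autoref{defn:kms_clustering}, but with two differences: the two-point function appears in the GNS form $\tr[\vX\vY\vrho]$ rather than the KMS form $\braket{\vX,\vY}_{\vrho}$, and the observables are normalized in operator norm rather than in the $\vrho$-weighted norm. So there are two gaps to close, and the sub-exponential rate \eqref{eq:kms-subexp} will be inherited from \cite{Kimura_2025}, possibly with a smaller exponent $\eta$.

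Closing the inner-product gap is the easier half, and it uses the locality of imaginary-time evolution in 1D (\autoref{section:1D-locality}, the same input behind \autoref{lem:strictly_local} and \autoref{lem:locality_complextime}). The identity $\braket{\vX,\vY}_{\vrho} = \tr\big[(\vX(i\beta/2))^{\dagger}\,\vY\,\vrho\big]$, with $\vX(i\beta/2) = e^{-\beta\vH/2}\vX e^{\beta\vH/2}$, turns the KMS two-point function into a GNS two-point function of the imaginary-time-evolved observable $\vX(i\beta/2)$ against $\vY$. In one dimension $\vX(i\beta/2)$ is quasi-local around $\sA$: for a radius $\ell$ it agrees with an operator supported on the $\ell$-neighbourhood of $\sA$ up to an error that is super-exponentially small in $\ell\log\ell$, with $\beta,q$-dependent constants. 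Taking $\ell$ to be, say, a third of $\mathsf{dist}(\sA,\sC)$ keeps the enlarged support and $\sC$ separated by a constant fraction of the original distance, so the clustering of \cite{Kimura_2025} applied to the truncated $\vX(i\beta/2)$ and to $\vY$ bounds the corresponding GNS correlator; the centre term reduces to $\tr[\vrho\vX^{\dagger}]\cdot\tr[\vrho\vY]$ up to the (negligible) truncation error. This already yields KMS weak clustering with the error measured in operator norm, at the sub-exponential rate.

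The laborious half is upgrading the normalization from $\norm{\vX}\cdot\norm{\vY}$ to $\norm{\vX}_{\vrho}\cdot\norm{\vY}_{\vrho}$, a strictly stronger statement, since the $\vrho$-weighted norm of an observable on a large region can be smaller than its operator norm by a factor exponential in the region size. The plan is to extract one weighted norm essentially for free, by writing the correction terms above as KMS inner products $\braket{\vX,\,\cdot\,}_{\vrho}$ and applying Cauchy--Schwarz, and to obtain the second factor by invoking the structure theory of 1D Gibbs states: the clustering of \cite{Kimura_2025}, via the by-now-standard analysis of, e.g., \cite{Kato2019,kuwahara2024clustering}, implies that $\vrho$ is an approximate quantum Markov chain across any interval separating $\sA$ from $\sC$, with a recovery error decaying sub-exponentially in the interval length. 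Re-expressing $\mathsf{Cov}(\vX,\vY)$ through the conditional marginals of $\vrho$ on $\sA$ and on $\sC$, this recovery estimate controls the fluctuations of those marginals in a $\vrho$-weighted sense, and a final Cauchy--Schwarz against the marginal KMS inner products produces exactly the factors $\norm{\vX}_{\vrho}$ and $\norm{\vY}_{\vrho}$. Throughout, the passages between $\vrho$, its restrictions $\vrho^{\sR}$, and its marginals are handled by the measure-comparison estimate \autoref{lem:compare_measure} and the H\"older-type inequality \autoref{lem:loose_AO}, each costing only a $\beta,q$-dependent constant, independent of the system size and of $|\sA|,|\sC|$.

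I expect the main obstacle to be precisely this: keeping the whole argument homogeneous in the KMS norm. Every tool we rely on---the source clustering, the imaginary-time locality bounds, the approximate-Markov property---is naturally phrased in operator norm, and the passage to the weighted norm cannot be done bluntly; concretely, one must never invoke the operator norm of an object supported on a large region, and instead always route through the marginal or restricted weights. This bookkeeping is delicate but uses no new ideas, which is why the details are relegated to \autoref{app:Weakclustering}. A secondary, purely quantitative point is to choose the truncation radius $\ell$ so that the super-exponential truncation errors are dominated by the sub-exponential tail of \cite{Kimura_2025}, which fixes the exponent $\eta(\beta,q)$ in \eqref{eq:kms-subexp}.
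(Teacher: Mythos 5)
Your proposal correctly identifies the two gaps to close (GNS $\to$ KMS inner product, operator-norm $\to$ KMS-norm normalization), and the right toolbox (imaginary-time locality in 1D, comparison of measures, KMS H\"older). But the execution of the ``easier half'' contains a genuine flaw, and the ``harder half'' is routed through a different and under-specified intermediate.

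The problem with conjugating the observable: you propose writing $\braket{\vX,\vY}_{\vrho} = \tr[\vX(i\beta/2)^{\dagger}\vY\vrho]$, truncating $\vX(i\beta/2)$ to a neighbourhood of $\sA$, and applying \cite{Kimura_2025} to the truncated operator. But by \autoref{lem:locality_complextime} the operator norm $\|\vX(i\beta/2)\|$ grows like $e^{2\beta|\sA|}\|\vX\|$, and the truncated version is no smaller. Since Kimura's bound is proportional to $\|\vX(i\beta/2)\|\cdot\|\vY\|$, the resulting estimate carries a prefactor exponential in $|\sA|$ and is therefore strictly weaker than the $\infty$-weak clustering you started with. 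It does \emph{not} ``already yield KMS weak clustering with the error measured in operator norm'' --- the claimed bound has a region-size-dependent constant, which the theorem forbids. The paper sidesteps this by never conjugating the observable: \autoref{lem:step-1-kms-li} and \autoref{lem:step-2-kms-li} decompose the Gibbs-state exponential via Araki's expansionals around the cuts $\sL\!:\!\sA$ and $\sC\!:\!\sR$ (not around the observable supports), and then a Schmidt decomposition combined with the KMS H\"older inequality converts the annulus operator norms directly into the KMS norms $\|\vO_\sL\|_{\vrho}\|\vO_\sR\|_{\vrho}$, with the decay carried by the super-exponential tails of the annuli. This reordering is not a cosmetic difference; it is the mechanism that keeps the prefactor independent of $|\sA|,|\sC|$.

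The ``harder half'' in the paper does not go through approximate quantum Markov chains / Petz recovery, but through \textsf{Local Indistinguishability} of reduced density matrices in trace norm (\autoref{defn:LI}, imported from \cite{bluhm2022exponential} via \autoref{lem:li-from-weak}), followed by \autoref{prop:kms-from-LI}. Your recovery-map route via \cite{Kato2019,kuwahara2024clustering} is not implausible, but it is under-specified exactly at the critical point: recovery errors are naturally quantified in trace or diamond norm, and it is not clear how ``controls the fluctuations of those marginals in a $\vrho$-weighted sense'' converts into the KMS normalization without re-encountering the same operator-norm-vs-KMS-norm tension you are trying to resolve. Before this plan can count as a proof, you would need a concrete lemma replacing \autoref{lem:step-2-kms-li} --- something that passes from a recovery estimate to a bound in which both $\|\vX\|_{\vrho}$ and $\|\vY\|_{\vrho}$ appear with system-size-independent constants.
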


We note that although the theorem above is stated for observables over intervals, if every 1D Hamiltonian satisfies \textsf{KMS Weak Clustering} over intervals, then clustering also holds over finite $k-$sequences of intervals \eqref{eq:many-intervals} by ``folding'' the chain over itself.

\begin{rmk}[Observable Support Topology]\label{rmk:folding}
      Given a topology of the form $\sA_1\sB_1\sC\sB_2\sA_2$ with observables supported on $\sA = \sA_1\cup \sA_2$ and $\sC$, one can define another 1D Hamiltonian where the number of qubits and interaction strength per-site has doubled, but the observables are now correctly supported on intervals.\footnote{We thank Tomotaka Kuwahara for this observation and further correspondence related to \cite{Kimura_2025}.}
    \end{rmk}

Unfortunately, for the purpose of proving spectral gaps, weak clustering in any form appears insufficient. Instead, one requires the related but more stringent notion of \textit{strong clustering}, which involves the appropriate application of a conditional expectation map \cite{kastoryano2016commuting}. To study the spectral gap of KMS-detailed-balanced Lindbladians, the most relevant formulation is again defined under the KMS inner product.

\begin{defn}
    [Conditional Covariance]\label{defn:conditional_covariance} Let $\{\tBE_\sR\}_{\sR\subseteq [n]}$ be the family of spectral conditional expectations as in \autoref{lem:explicit_CE}, with respect to a Gibbs state $\vrho$. For any subset $\sR\subseteq [n]$, the conditional covariance of a pair of operators $\vX, \vY$ with respect to $\tBE_\sR$ is defined as:
    \begin{equation}\label{eq:cond_cov}
        \mathsf{Cov}_\sR(\vX, \vY) :=\braket{(1-\tBE_\sR)[\vX],(1-\tBE_\sR)[\vY] }_{\vrho}.
    \end{equation}
\end{defn}

We remark that when $\sR = [n]$ we recover the covariance in \eqref{eq:kms-weak-clustering}. Equipped with the above definition of conditional covariance, we can now introduce strong clustering.

\begin{defn}
    [\textsf{KMS Strong Clustering}, based on \cite{kastoryano2016commuting}]\label{defn:strong_clustering} Let $\{\tBE_\sR\}_{\sR\subseteq [n]}$ be the family of spectral conditional expectations as in \autoref{lem:explicit_CE}, with respect to a Gibbs state $\vrho$. Then, $(\vrho, \{\tBE_\sR\})$ is said to satisfy \emph{\textsf{KMS Strong Clustering}} with error $\epsilon_\mathsf{S}(\cdot)$ if, for every observable $\vO$ and every consecutive disjoint intervals $\sA, \sB, \sC\subseteq [n]:$
    \begin{equation}\label{eq:strong_clustering}
         \mathsf{Cov}_{\sA\sB\sC}(\tBE_{\sA\sB}[\vO], \tBE_{\sB\sC}[\vO]) \leq \|\vO\|^2_{\vrho}\cdot \epsilon_\mathsf{S}(\mathsf{dist}(\sA, \sC)).
    \end{equation}
\end{defn}

Unfortunately, the use of conditional expectations makes strong clustering difficult to interpret, and challenging to derive from the more physical notion of weak clustering. The only previous success story in bridging the gap from weak to strong clustering is in 1D commuting Hamiltonians, where \cite{kastoryano2016commuting} proved that (KMS) weak clustering implies strong clustering, and where the choice of conditional expectation was defined based on commuting Gibbs samplers such as Davies' generator.\\

\noindent \textbf{Strong from Weak Clustering in 1D.} The main result of this section is to build upon our explicit spectral conditional expectation to define a workable strong clustering condition, and prove a direct reduction to \textsf{KMS Weak Clustering} in 1D systems. 
\begin{thm}
    [\textsf{KMS Strong Clustering} from \textsf{KMS Weak Clustering}]\label{thm:strong_from_weak} 
    Consider the family of spectral conditional expectations $\{\tBE_\sR\}_{\sR\subseteq [n]}$ w.r.t. the Gibbs state $\vrho$ of a 1D Hamiltonian (\autoref{defn:1D}). Assume $\vrho$ satisfies \emph{\textsf{KMS Weak Clustering}} with error function $\epsilon_{\mathsf{KMS}}$, over any sequence of intervals with topology $\sA_1\sB_1\sC\sB_2\sA_2$.\footnote{The observables lie on $\sA_1\cup\sA_2$ and $\sC$, and $\mathsf{dist}(\sA, \sC)=1+\min(|\sB_1|, |\sB_2|)$. That is, a 2-sequence \eqref{eq:many-intervals}.}

    Then, there exist explicit constants $c_1, c_2, c_3, c_4>0$ as a function of $\beta, q$ such that  $(\vrho, \{\BE_\sR\}_\sR)$ admits \emph{\textsf{KMS Strong Clustering}} with error 
    \begin{equation}
        \epsilon_\mathsf{S}(\ell) \leq c_1\cdot \sqrt{\epsilon_{\mathsf{KMS}}(c_2\cdot \ell)} + e^{-c_3\cdot \ell\log \ell}
    \end{equation}
over every tripartition of consecutive disjoint intervals $\sA, \sB,\sC\subseteq [n]$ such that $\frac{1}{2}|\sA|, \frac{1}{2}|\sC| \geq |\sB|\geq c_4\cdot |\sA\sB\sC| / \log |\sA\sB\sC|$. 
\end{thm}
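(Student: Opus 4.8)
The plan is to split the argument into a short algebraic reduction and a longer technical core. For the reduction, recall from \autoref{lem:explicit_CE} (and the consistency of the $\{\tBE_\sR\}$) that these maps are KMS-orthogonal projections which, for nested regions, commute and compose multiplicatively. Hence $P:=\tBE_{\sA\sB}-\tBE_{\sA\sB\sC}$ and $Q:=\tBE_{\sB\sC}-\tBE_{\sA\sB\sC}$ are again KMS-orthogonal projections, and since $(1-\tBE_{\sA\sB\sC})\tBE_{\sA\sB}=P$ and $(1-\tBE_{\sA\sB\sC})\tBE_{\sB\sC}=Q$,
\[
\mathsf{Cov}_{\sA\sB\sC}\big(\tBE_{\sA\sB}[\vO],\,\tBE_{\sB\sC}[\vO]\big)=\braket{P[\vO],Q[\vO]}_{\vrho}=\braket{\vO,(PQ)[\vO]}_{\vrho}\le \norm{PQ}_{\vrho}\,\norm{\vO}_{\vrho}^{2}.
\]
Now $Q\tBE_{\sA\sB\sC}=\tBE_{\sB\sC}\tBE_{\sA\sB\sC}-\tBE_{\sA\sB\sC}=0$, so $QPQ=Q\tBE_{\sA\sB}Q$, and therefore
\[
\norm{PQ}_{\vrho}^{2}=\norm{QPQ}_{\vrho}=\norm{Q\tBE_{\sA\sB}Q}_{\vrho}=\sup_{\vW\in\mathrm{im}(Q)\setminus\{0\}}\ \frac{\norm{\tBE_{\sA\sB}[\vW]}_{\vrho}^{2}}{\norm{\vW}_{\vrho}^{2}}.
\]
So the whole theorem reduces to the following estimate, with $\ell=\mathsf{dist}(\sA,\sC)$ and $\mathsf{D}:=[n]\setminus\sA\sB\sC$: for every $\vW$ supported on $[n]\setminus\sB\sC$ with $\tBE_{\sA\sB\sC}[\vW]=0$ (equivalently, KMS-orthogonal to all operators on $\mathsf{D}$), one has $\norm{\tBE_{\sA\sB}[\vW]}_{\vrho}^{2}\lesssim\big(\epsilon_{\mathsf{KMS}}(c_2\ell)+e^{-c_3\ell\log\ell}\big)\,\norm{\vW}_{\vrho}^{2}$; taking square roots then produces the $\sqrt{\epsilon_{\mathsf{KMS}}}$ in the statement.

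The conceptual essence of this estimate is the intro's ``trivialize an island''. Since $\tBE_{\sA\sB}$ is an orthogonal projection, $\norm{\tBE_{\sA\sB}[\vW]}_{\vrho}^{2}=\braket{\vW,\tBE_{\sA\sB}[\vW]}_{\vrho}$; here $\vW$ is supported on $\sA\cup\mathsf{D}$, $\tBE_{\sA\sB}[\vW]$ on $\sC\cup\mathsf{D}$ (and is again KMS-orthogonal to operators on $\mathsf{D}$, since $\tBE_{\sA\sB\sC}\tBE_{\sA\sB}[\vW]=\tBE_{\sA\sB\sC}[\vW]=0$), and both are centered, $\tr[\vrho\vW]=\tr[\vrho\tBE_{\sA\sB}[\vW]]=0$. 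If $\vW$ were supported \emph{inside} $\sA\sB\sC$ --- hence on $\sA$ alone --- then $\vW$ and $\tBE_{\sA\sB}[\vW]$ would have disjoint supports separated by $\sB$, and, folding the chain as in \autoref{rmk:folding} to legitimize the two-sided topology, \textsf{KMS Weak Clustering} would bound $\braket{\vW,\tBE_{\sA\sB}[\vW]}_{\vrho}$ by $\epsilon_{\mathsf{KMS}}(\ell)\,\norm{\vW}_{\vrho}\norm{\tBE_{\sA\sB}[\vW]}_{\vrho}$: applying $\tBE$ on a large region to an operator that lives far from the region's boundary nearly annihilates it.

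The real work is handling a general $\vW$, which genuinely extends over $\mathsf{D}$; then $\vW$ and $\tBE_{\sA\sB}[\vW]$ share the support $\mathsf{D}$, and since $\mathsf{D}$ is adjacent to $\sA$ one cannot invoke weak clustering directly. My plan would be: (i) use the quasi-locality of $\tBE$ (\autoref{lem:localized_E}) to replace $\tBE_{\sA\sB}$ by its truncation to the $\ell$-neighborhood of $\sA\sB$ with $\ell=\Theta(|\sB|)$, confining $\tBE_{\sA\sB}[\vW]$ to a window of size $O(|\sA\sB\sC|)$ up to an error $e^{-c_3\ell\log\ell}$ --- this is affordable (the gain $c_3\ell\log\ell$ beating the a priori loss $c_2|\sA\sB\sC|$ of \autoref{lem:localized_E}) precisely when $|\sB|\gtrsim|\sA\sB\sC|/\log|\sA\sB\sC|$, while the hypotheses $|\sA|,|\sC|\ge 2|\sB|$ keep the neighborhood inside $\sA\sB\sC$ and off the ``payload'' of $\vW$; (ii) inside this window, at a bounded multiplicative cost via \autoref{lem:compare_measure}, pass to a state that is a product across the two cuts delimiting $\mathsf{D}$, expand $\vW$ and $\tBE_{\sA\sB}[\vW]$ in factorized KMS-orthonormal bases over the two pieces of $\mathsf{D}$, and use that KMS-orthogonality to operators on $\mathsf{D}$ kills the identity-on-the-remainder components --- so the surviving terms pair an operator localized near $\sA$ (or on a thin $\mathsf{D}$-shell) against one localized near $\sC$ (or a thin shell); (iii) apply \textsf{KMS Weak Clustering} over the folded topology $\sA_1\sB_1\sC\sB_2\sA_2$ to each surviving pair and resum with Cauchy--Schwarz over the bases, yielding $\norm{\tBE_{\sA\sB}[\vW]}_{\vrho}^{2}\lesssim\big(\epsilon_{\mathsf{KMS}}(\Theta(|\sB|))+e^{-c_3|\sB|\log|\sB|}\big)\norm{\vW}_{\vrho}^{2}$, which feeds back into the reduction above.

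The main obstacle is steps (i)--(ii): carrying out the $\mathsf{D}$-overlap surgery while keeping the single length scale $\ell=\Theta(|\sB|)$ simultaneously large enough that $\epsilon_{\mathsf{KMS}}(\Theta(\ell))$ is in its decaying regime and small enough that the truncated regions fit strictly inside $\sA\sB\sC$. This tension is exactly what forces the restriction to ``well-separated'' tripartitions with $|\sB|\gtrsim|\sA\sB\sC|/\log|\sA\sB\sC|$, and is where the square-root loss is born. A secondary subtlety: \autoref{lem:compare_measure} compares KMS norms only up to constant (not small) factors, so one cannot globally swap $\vrho$ for a product state --- the product-state step must wait until the quasi-locality truncation has already localized everything to a bounded window, where it affects only the constants $c_1,\dots,c_4$ and not the decay.
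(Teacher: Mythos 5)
Your algebraic reduction is both correct and genuinely different from the paper's. You identify $P=\tBE_{\sA\sB}-\tBE_{\sA\sB\sC}$ and $Q=\tBE_{\sB\sC}-\tBE_{\sA\sB\sC}$ as KMS-orthogonal projections, compute $PQ=\tBE_{\sA\sB}\tBE_{\sB\sC}-\tBE_{\sA\sB\sC}$ (which matches the paper's \eqref{eq:gluing-to-clustering}), and then deploy the identity $\norm{PQ}_\rho^2=\norm{QPQ}_\rho=\norm{Q\tBE_{\sA\sB}Q}_\rho$ to reduce the theorem to a single estimate over $\vW\in\mathrm{im}(Q)=\mathrm{im}(\tBE_{\sB\sC})\cap\ker(\tBE_{\sA\sB\sC})$. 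This is arguably cleaner than the paper's route, which never isolates $\norm{PQ}$ and instead builds a 3-stack gluing $\tBE_{\sA\sB\sC}\approx\tBE_{\sA_\sR\sB\sC_\sL}\tBE_\sA\tBE_\sC$ (Lemma~\ref{lem:gluing_2}) and applies it twice over a careful re-partition of $\sB\cup\sC$ (equation~\eqref{eq:bc-partition-gluing}, Figure~\ref{fig:2stack-3stack}). Both routes produce a square-root loss, yours via $\norm{PQ}=\norm{QPQ}^{1/2}$, the paper's via the island lemma's Cauchy--Schwarz. The paper's route is more steps, but its island lemma (Lemma~\ref{lem:island}) never needs the orthogonality constraint $\tBE_{\sA\sB\sC}[\vW]=0$ that your core estimate requires; it proves only that the conditional expectation produces an exterior operator $\vX'_{\sL\sR}$ plus error, and lets the multi-step gluing match the exterior pieces exactly.

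There is, however, a concrete error in your step~(i). The set $\sA\sB$ is at the boundary of the tripartition: $\sA$ is adjacent to the exterior $\mathsf{D}=[n]\setminus\sA\sB\sC$. So the $\ell$-neighborhood of $\sA\sB$ needed by Lemma~\ref{lem:localized_E} necessarily extends $\ell$ sites into $\mathsf{D}$; the hypotheses $|\sA|,|\sC|\ge 2|\sB|$ only keep the neighborhood inside $\sC$ on the other side. Your claim that the truncation "keeps the neighborhood inside $\sA\sB\sC$ and off the payload of $\vW$" is false on the $\sA$-side: if you insist on a window $\sR\subseteq\sA\sB\sC$, then $\mathsf{dist}(\sA\sB,[n]\setminus\sR)=1$ and the truncation error in Lemma~\ref{lem:localized_E} is $O(1)$, not small. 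This is precisely why the paper does not apply the island mechanism to $\tBE_{\sA\sB}$ directly: it sets up the decomposition $\sL\cup\sA_\sL\cup\sA_\sR\cup\sB\cup\sC_\sL\cup\sC_\sR\cup\sR=[n]$ and applies the conditional expectation to the \emph{interior} region $\sA_\sR\sB\sC_\sL$, so that the truncation buffer $\min(|\sA_\sL|,|\sC_\sR|)$ lies inside $\sA\sB\sC$, away from the exterior where the Schmidt decomposition cuts. Your step~(i) needs either to allow $\sR$ to extend into $\mathsf{D}$ (which moves the factorization cut deeper into $\mathsf{D}$ and requires reworking step~(ii) accordingly, including the bookkeeping for where the $\vrho$-orthogonality is actually applied), or to be replaced with the paper's interior-buffer geometry. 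As it stands the step is incorrect, and the remaining steps (ii)--(iii), while morally in the right direction (in particular your use of $\langle\vW,\vP\vW\rangle_\rho=0$ to drop the $\mathsf{D}$-localized piece in the inner product is sound), inherit the unresolved boundary geometry.
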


As an immediate corollary of the above and \autoref{thm:kms-clustering-uncon}, we conclude that some form of strong clustering holds in 1D quantum systems at all temperatures. Technically, our strong clustering statement only applies if the sub-interval $\sB$ is sufficiently large relative to the size of the other sub-intervals, due to the quasi-locality of our spectral conditional expectation. Nevertheless, as we shall see in the next section, this limited form of strong clustering still enables the proof of a conditional spectral gap for $\CK$ -- improving over the a priori bound in \autoref{lem:local_gap_K}.

While not obvious at the moment, \textit{after} we have obtained a constant spectral gap, we will be able to return to this discussion to obtain a truly exponential weak clustering of correlations in all non-translational invariant systems (see Section \ref{sec:results}).

\subsection{Strong Clustering from KMS Weak Clustering}

The remainder of the section presents the proof of \autoref{thm:strong_from_weak}. 
    Since our spectral conditional expectation family $\tBE_{\sR}$ is projective $\tBE_{\sR}=\tBE_{\sR}^2$, the covariance in \eqref{eq:cond_cov} simplifies as
    \begin{equation}\label{eq:gluing-to-clustering}
        \mathsf{Cov}_{\sA\sB\sC}(\tBE_{\sA\sB}[\vO], \tBE_{\sB\sC}[\vO]) = \braket{\vO, (\tBE_{\sA\sB}\tBE_{\sB\sC}-\tBE_{\sA\sB\sC})[\vO])}_{\vrho}.
    \end{equation}
    The proof of strong clustering, then, boils down to proving a ``gluing'' identity for spectral conditional expectations
    \begin{align}
        \tBE_{\sA\sB}\tBE_{\sB\sC} \approx \tBE_{\sA\sB\sC},\label{eq:gluing-strong-clustering}
    \end{align}
    in the KMS superoperator norm (recall \autoref{fig:EAB_BC_ABC}). The proof of \autoref{thm:strong_from_weak} consists of two lemmas. 
    \begin{enumerate}
        \item \textit{Conditional Expectations Trivialize an Island} (\autoref{lem:island}). We already know that the effect of the conditional expectation map $\tBE_{\sX}$ is to trivialize operators supported on $\sX\subseteq [n]$; but it may produce some junk on the boundary of $\sX$ (recall \autoref{lem:explicit_CE}). \autoref{lem:island} shows that if the operator is supported deep in the interior of $\sX$, then weak clustering actually implies that $\BE_{\sX}$ fully trivializes the operator ``island'', without affecting/leaving junk on the boundary (see \autoref{fig:trivialize}). 

        \item \textit{A 3-stack Gluing Lemma} (\autoref{lem:gluing_2}). As it turns out,  \autoref{lem:island} naturally implies a ``three-stack'' version of the desired gluing expression in \eqref{eq:gluing-strong-clustering}. Roughly speaking, the spectral conditional expectation on $\sA\sB\sC$ can be approximated by that on $\sA, \sC$ and subsequently a slightly enlarged region around $\sB$: 
        \begin{equation}
            \tBE_{\sA\sB\sC}  \approx \tBE_{\sA_\sR\sB\sC_\sL} \tBE_{\sA}  \tBE_{\sC}. \quad \text{(\autoref{fig:3stack})} \label{eq:tripartite-gluing}
        \end{equation}
        The derived 3-stack Gluing Lemma then implies the bipartite statement \eqref{eq:gluing-strong-clustering}, namely strong clustering \eqref{eq:gluing-to-clustering}. 
    \end{enumerate}

    Over the course of the above two steps, which we prove as follows, we consider a decomposition of the full 1D chain, partitioned into consecutive disjoint intervals as:
    \begin{equation}\label{eq:chain_decomp_island}
        \sL\cup \sA_\sL \cup \sA_\sR \cup \sB\cup \sC_\sL \cup \sC_\sR \cup \sR = [n],
    \end{equation}
    where we shorthand $\sA = \sA_\sR\cup\sA_\sL$ and analogously for $\sC$. 

\begin{figure}[ht]
\includegraphics[width=0.8\textwidth]{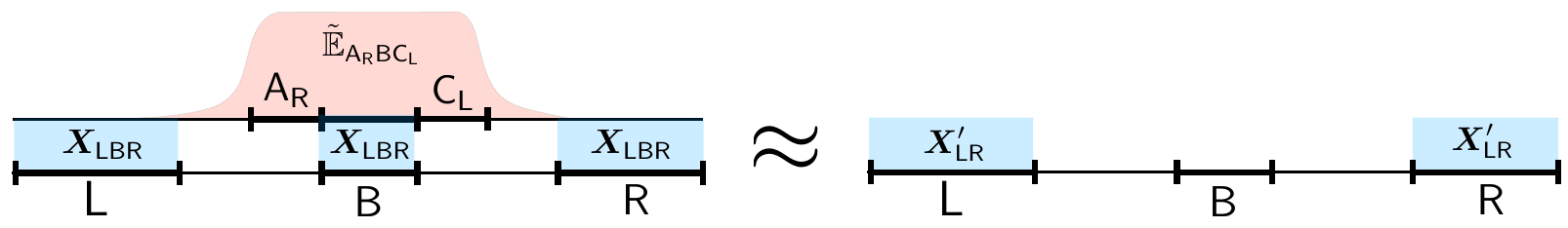}
\caption{\textit{Conditional Expectations Trivialize Islands} (\autoref{lem:island}).
 The coveted gluing identities reduce cleanly to the following property of spectral conditional expectations. If an operator has a nontrivial ``island'' sandwiched by identities (but can be arbitrary afar), then acting by a (slightly fattened) spectral conditional expectation fully trivializes the island.}
\label{fig:trivialize}
\end{figure}

\subsubsection{Proof of \autoref{lem:island}: trivializing an island}
    \begin{lem}[The Conditional Expectation Trivializes an Island]\label{lem:island} 
    Under the assumptions of \autoref{thm:strong_from_weak}, for any operator $\vX_{\sL\sB\sR}$ trivial on $\sA,\sC,$ there is an  operator $\vX'_{\sL\sR}$ supported only on $\sL, \sR$ such that
    \begin{align}
        \lnorm{\tBE_{\sA_\sR\sB\sC_\sL}\vX_{\sL\sB\sR} -   \vX'_{\sL\sR}}_{\vrho} \le \bigg(c_1\cdot \epsilon_{\mathsf{KMS}}(\min(|\sA_\sR|, |\sC_\sL|))^{1/2} + \exp\big( c_2\cdot |\sA\sB\sC| - c_3\cdot \ell\log \ell \big)\bigg) \|\vX_{\sL\sB\sR}\|_{\vrho},
    \end{align}
    for constants $c_1,c_2,c_3$ that depend only on $\beta, q$, and $\ell = \min(|\sA_{\sL}|, |\sC_{\sR}|)$.
\end{lem}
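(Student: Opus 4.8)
The plan is to reduce the claim to the decay of two–point functions in the \emph{restricted} Gibbs state $\vrho^{\sA\sB\sC}$, via the ``island trick'' of the overview.

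\emph{Step 1 (localize).} First I would use the quasi-locality of the spectral conditional expectation (\autoref{lem:localized_E}) to replace $\tBE_{\sA_\sR\sB\sC_\sL}$ by its truncation $\tBE^{(\sA\sB\sC)}_{\sA_\sR\sB\sC_\sL}$ to the interval $\sA\sB\sC$, at a cost of $c_1^{2}\exp\!\big(c_2|\sA_\sR\sB\sC_\sL|-c_3(\ell+1)\log(\ell+1)\big)\norm{\vX_{\sL\sB\sR}}_\vrho$ with $\ell+1=\mathsf{dist}(\sA_\sR\sB\sC_\sL,[n]\setminus\sA\sB\sC)=\min(|\sA_\sL|,|\sC_\sR|)+1$; since $|\sA_\sR\sB\sC_\sL|\le|\sA\sB\sC|$ this is the second term of the bound. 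Truncating buys two things: $\tBE^{(\sA\sB\sC)}_{\sA_\sR\sB\sC_\sL}$ is \emph{literally} the identity on $\sL$ and on $\sR$, and — exactly as for $\CK_a$ itself (\autoref{lem:dirichlet-ck}, \autoref{lem:finite_time_ce}), now with $\vrho^{\sA\sB\sC}$ in place of $\vrho$ — it is the KMS$(\vrho^{\sA\sB\sC})$–orthogonal projection onto the operators trivial on $\sA_\sR\sB\sC_\sL$. Finally, $\vrho^{\sA\sB\sC}$ is again the Gibbs state of a 1D Hamiltonian, so \textsf{KMS Weak Clustering} (\autoref{thm:kms-clustering-uncon}) applies to it.

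\emph{Step 2 (peel off $\vX'_{\sL\sR}$).} Expand $\vX_{\sL\sB\sR}=\sum_k\vQ^k_{\sL\sR}\otimes\vF^k_\sB$ in a KMS$(\vrho^\sB)$–orthonormal basis $\{\vF^k_\sB\}$ of $\CB(\CH_\sB)$ with $\vF^0_\sB=\vI$, and set $\vX'_{\sL\sR}:=\vQ^0_{\sL\sR}+\sum_{k\neq0}\tr[\vrho^{\sA\sB\sC}\vF^k_\sB]\,\vQ^k_{\sL\sR}$, so that $\vV:=\vX_{\sL\sB\sR}-\vX'_{\sL\sR}=\sum_{k\neq0}\vQ^k_{\sL\sR}\otimes\widetilde{\vF}^k_\sB$ with each $\widetilde{\vF}^k_\sB$ a $\vrho^{\sA\sB\sC}$–centered operator on the interval $\sB$. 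As $\vX'_{\sL\sR}$ is trivial on $\sA_\sR\sB\sC_\sL$ it is fixed by the projection, so the object to control is $\tBE^{(\sA\sB\sC)}_{\sA_\sR\sB\sC_\sL}[\vV]=\sum_{k\neq0}\vQ^k_{\sL\sR}\otimes\vW^k_{\sA_\sL\sC_\sR}$, where $\vW^k_{\sA_\sL\sC_\sR}:=\tBE^{(\sA\sB\sC)}_{\sA_\sR\sB\sC_\sL}[\widetilde{\vF}^k_\sB]$ is supported on $\sA_\sL\cup\sC_\sR$. A bounded number of applications of \autoref{lem:compare_measure} (puncturing $\vH$ at the $\le 4$ links bounding $\sA_\sL$ and $\sC_\sR$) replaces $\vrho$ by a product $\vsigma=\vrho^\sL\otimes\vrho^{\sA_\sL}\otimes\vrho^{\sA_\sR\sB\sC_\sL}\otimes\vrho^{\sC_\sR}\otimes\vrho^\sR$ up to a $\beta$–dependent constant; the same comparisons, together with the estimate $\sum_{k\neq0}|\tr[\vrho^{\sA\sB\sC}\vF^k_\sB]|^2\le c_{\beta,q}$ (quasi-locality of the $\sB$–marginal of $\vrho^{\sA\sB\sC}$), also give $\norm{\vV}_\vrho,\norm{\vX'_{\sL\sR}}_\vrho\le c_{\beta,q}\norm{\vX_{\sL\sB\sR}}_\vrho$.

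\emph{Step 3 (island trick + reassemble).} Under $\vsigma$ one has $\norm{\sum_{k\neq0}\vQ^k_{\sL\sR}\otimes\vW^k_{\sA_\sL\sC_\sR}}_\vsigma^2=\tr[AB]$ for the positive semidefinite Gram matrices $A_{kj}=\braket{\vQ^k_{\sL\sR},\vQ^j_{\sL\sR}}_{\vrho^\sL\otimes\vrho^\sR}$ and $B_{kj}=\braket{\vW^k_{\sA_\sL\sC_\sR},\vW^j_{\sA_\sL\sC_\sR}}_{\vrho^{\sA_\sL}\otimes\vrho^{\sC_\sR}}$; I bound $\tr[AB]\le\norm{B}_{\mathrm{op}}\,\tr[A]$, where $\tr[A]\le c_{\beta,q}\norm{\vV}_\vrho^2\le c_{\beta,q}\norm{\vX_{\sL\sB\sR}}_\vrho^2$ — it is here that orthonormality in KMS$(\vrho^\sB)$, rather than in a Pauli basis on $\sB$, keeps this factor a constant rather than $e^{\Theta(|\sB|)}$. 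For $\norm{B}_{\mathrm{op}}$, test against a unit vector $v$: $\vG_v:=\sum_{k\neq0}v_k\widetilde{\vF}^k_\sB$ is $\vrho^{\sA\sB\sC}$–centered, supported on $\sB$, with $\norm{\vG_v}_{\vrho^{\sA\sB\sC}}\le c_{\beta,q}$; since $\tBE^{(\sA\sB\sC)}_{\sA_\sR\sB\sC_\sL}$ is a KMS$(\vrho^{\sA\sB\sC})$–projection,
\[
\norm{\tBE^{(\sA\sB\sC)}_{\sA_\sR\sB\sC_\sL}[\vG_v]}_{\vrho^{\sA\sB\sC}}^2=\braket{\vG_v,\ \tBE^{(\sA\sB\sC)}_{\sA_\sR\sB\sC_\sL}[\vG_v]}_{\vrho^{\sA\sB\sC}},
\]
and the right–hand side is a \textsf{KMS} covariance in $\vrho^{\sA\sB\sC}$ between an observable on $\sB$ and one on $\sA_\sL\cup\sC_\sR$ — a $2$–sequence $\sA_\sL|\sA_\sR|\sB|\sC_\sL|\sC_\sR$ of separation $\min(|\sA_\sR|,|\sC_\sL|)$ in the sense of \autoref{rmk:folding}. \textsf{KMS Weak Clustering} bounds it by $c_{\beta,q}\,\epsilon_{\mathsf{KMS}}(\min(|\sA_\sR|,|\sC_\sL|))$, so $\norm{B}_{\mathrm{op}}\le c_{\beta,q}\,\epsilon_{\mathsf{KMS}}(\min(|\sA_\sR|,|\sC_\sL|))$; undoing \autoref{lem:compare_measure} to return to $\norm{\cdot}_\vrho$ yields the first term (the exponent $\tfrac12$ in the statement leaves slack for whichever intermediate Cauchy–Schwarz one prefers to the self–improving bound).

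\emph{Main obstacle.} The delicate part is the bookkeeping of Steps 2–3: preventing the ``spectator'' content of $\vX_{\sL\sB\sR}$ on $\sL$ and $\sR$ from leaking into the correlation estimate — which is exactly why one first truncates, so that $\tBE^{(\sA\sB\sC)}_{\sA_\sR\sB\sC_\sL}$ is the literal identity off $\sA\sB\sC$ — and performing every norm conversion between $\norm{\cdot}_\vrho$, $\norm{\cdot}_{\vrho^{\sA\sB\sC}}$, $\norm{\cdot}_{\vrho^\sB}$ and the various product states \emph{without} accumulating factors that grow with $|\sL|$, $|\sR|$ or $|\sB|$. This last requirement is what forces the KMS$(\vrho^\sB)$–orthonormal basis on $\sB$, and is ultimately responsible for the restriction, in \autoref{thm:strong_from_weak}, to tripartitions in which $\sB$ is comparatively large.
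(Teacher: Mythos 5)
Your proof is correct, and in Steps 2--3 it takes a genuinely different route from the paper's. The paper works with a Schmidt decomposition of $\vX_{\sL\sB\sR}$ over the tensored KMS space $\braket{\cdot,\cdot}_{\vrho_{\sL\sR}}\times\braket{\cdot,\cdot}_{\vrho_{\sA\sB\sC}}$, so the $\sL\sR$ components are orthonormal and the cross terms vanish outright; and, rather than applying weak clustering in the truncated state, it converts $\tBE^{(\sA\sB\sC)}$ \emph{back} to the untruncated $\tBE$ via quasi-locality, then uses the KMS-detailed-balance of $\tBE$ w.r.t.\ $\vrho$ itself so that weak clustering for $\vrho$ suffices. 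You instead expand only in a KMS$(\vrho^\sB)$-orthonormal basis on $\sB$, handle the non-orthonormal $\vQ^k_{\sL\sR}$ by a pair of Gram matrices (so $\tr[A\bar B]\le\|B\|_{\mathrm{op}}\tr[A]$ replaces the vanishing of cross terms), and stay in the truncated picture throughout, invoking weak clustering for $\vrho^{\sA\sB\sC}$. Both routes land on the same $\sqrt{\epsilon_{\mathsf{KMS}}}$ term for the same reason (the projection identity $\|\tBE[\cdot]\|^2=\braket{\cdot,\tBE[\cdot]}$). The Gram-matrix version is slightly more elementary in that it avoids diagonalizing anything, at the cost of some extra applications of \autoref{lem:compare_measure}.

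One logical nuance worth flagging. The lemma is stated ``under the assumptions of \autoref{thm:strong_from_weak},'' which grants \textsf{KMS Weak Clustering} for the \emph{fixed} state $\vrho$ (over $2$-sequence topologies). Your Step 3 applies weak clustering to the \emph{truncated} state $\vrho^{\sA\sB\sC}$. That is harmless in the paper's actual flow, since \autoref{thm:kms-clustering-uncon} supplies the needed estimate uniformly over all 1D Hamiltonians (and hence over all restrictions); but strictly speaking it is more than the stated hypothesis grants. The paper's extra quasi-locality conversion in Step 3 is precisely what lets it use weak clustering for $\vrho$ alone, respecting the conditional structure literally. If you want your version to be drop-in, either enlarge the hypothesis to ``weak clustering holds for all restrictions $\vrho^{\sI}$'' or add the paper's one-line conversion $\tBE^{(\sA\sB\sC)}\to\tBE$ before invoking clustering. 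Also, a small bookkeeping slip: $\tr[A]$ is bounded most cleanly by $\|\vX_{\sL\sB\sR}\|_{\vrho^\sL\otimes\vrho^\sB\otimes\vrho^\sR}^2$ (using $\vrho^\sB$-orthonormality of the $\vF^k$), not by $\|\vV\|_\vrho^2$ as you wrote --- the $\widetilde\vF^k$ are no longer orthonormal once recentered, so the intermediate inequality you state doesn't hold as written, though the final conclusion $\tr[A]\le c_{\beta,q}\|\vX_{\sL\sB\sR}\|_\vrho^2$ does.
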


The argument is partially captured in the explanation~\eqref{eq:kills_island} when the outer regions $\mathsf{L}$ and $\mathsf{R}$ are empty; here, the main technical step is to handle possible correlation with them. Consider the factorized (or \textit{punctured}) Hamiltonian, arising from dropping the Hamiltonian links connecting $\sL,\sA$ and the term connecting $\sC,\sR$:
\begin{align}
    \vH_\mathsf{punct} = \vH_{\sL}+ \vH_{\sA\sB\sC} + \vH_{\sR},
\end{align}
and the resulting factorized Gibbs state
\begin{align}
    \vrho_\mathsf{punct} = \vrho_\sL \otimes \vrho_{\sA\sB\sC}\otimes \vrho_\sR\quad \text{and let}\quad \vrho_{\sL\sR} := \vrho_\sL\otimes \vrho_\sR.
\end{align}
This allows us to consider a factorized orthonormal basis:
\begin{align}
\text{ for } \sI\subseteq \sA\sB\sC, \quad \braket{\vB^i_{\sI},\vB^j_{\sI}}_{\vrho_{\sA\sB\sC}} = \delta_{ij},\quad \braket{\vB^i_{\mathsf{out}},\vB^j_{\mathsf{out}}}_{\vrho_{\sL\sR}} = \delta_{ij}.
\end{align}
In our use, it will be convenient to extend the inner product $\braket{\vB^i_{\sI},\vY}_{\vrho_{\sA\sB\sC}}$ for operators $\vY$ defined on the full system via a partial trace by linearity.

\begin{proof}

[of \autoref{lem:island}] Let us assume that $\norm{\vX_{\sL\sB\sR}}_{\vrho}=1$. \\

\noindent \textbf{Step 1: Truncate $\tBE$.} To begin, by \autoref{lem:localized_E} we may keep only the Hamiltonian terms close to $\sA_\sR\sB\sC_\sL$, $\vH\rightarrow \vH_{\mathsf{ABC}}$, for the spectral conditional expectation. For any operator $\vY,$ 
\begin{align}
    \lnorm{ \tBE_{\sA_\sR\sB\sC_\sL}[\vY] -  \tBE^{(\mathsf{ABC})}_{\sA_\sR\sB\sC_\sL}[\vY]}_{\vrho} &\leq \exp\L(c_2\cdot  |\sA\sB\sC| - c_3\ell\log \ell\R)\cdot \|\vY\|_{\vrho}.\label{eq:quasi-local-gluing}
\end{align}
where $\ell = \min(|\sA_{\sL}|, |\sC_{\sR}|)$ since $\sA_{\sL}\sA_\sR\sB\sC_\sL\sC_{\sR} =\sA\sB\sC$; $c_2, c_3$ are suitable constants dependent only on $\beta, q$.\\

\noindent \textbf{Step 2: Decouple $\sL, \sR$.} Next, we consider a decomposition for $\vX_{\sL\sB\sR}$ with respect to the tensored KMS-inner product spaces
\begin{align}
    \braket{,}_{\vrho_{\sL\sR}}\times \braket{,}_{\vrho_{\sA\sB\sC}}.
\end{align}
We isolate the identity component in $\sB$, which is the $\vX'_{\sL\sR}$ term, and perform a Schmidt decomposition for the remainder, with respect to $\braket{,}_{\vrho_{\sL\sR}}$ and the zero-mean subspace of $\braket{,}_{\vrho_{\sA\sB\sC}}$.
\begin{align}
    &\vX_{\sL\sB\sR} = \under{\tr_{\sA\sB\sC}[\vrho_{\sA\sB\sC}\vX_{\sL\sB\sR}] \vI_{\sA\sB\sC}}{ =:\vX'_{\sL\sR}}  +\sum_{j=2} \alpha_{j} \vW_{\sL\sR}^{j}\otimes \undersetbrace{\text{zero mean}}{\vW_{\sB,\sA\sB\sC}^{j}}, 
\end{align}
where for each $i, j\geq 2$, the components satisfy the following normalization conditions:
\begin{equation}
    \braket{\vW_{\sB,\sA\sB\sC}^{i}, \vW_{\sB,\sA\sB\sC}^{j}}_{\vrho_{\sA\sB\sC}}= \braket{\vW_{\sL\sR}^{i}, \vW_{\sL\sR}^{j}}_{\vrho_{\sL\sR}}=\delta_{i, j} \quad \text{ and } \quad \tr[\vW_{\sB,\sA\sB\sC}^{j}\vrho_{\sA\sB\sC}]=0\label{eq:orthonormalW}
\end{equation}
and the coefficients are normalized s.t. $\sum_j |\alpha_j|^2 \leq \norm{\vX_{\sL\sB\sR}}_{\vrho_\mathsf{punct}}^2 \leq c_{\beta, q}\norm{\vX_{\sL\sB\sR}}_{\vrho}^2$, with $c_{\beta, q}$ the constant from the comparison of measures (\autoref{lem:compare_measure}). Applying conditional expectation truncated to $\sA\sB\sC$ leaves the $\sL\sR$ parts invariant:
    \begin{align}
       \tBE^{(\mathsf{ABC})}_{\sA_\sR\sB\sC_\sL}[\vX_{\sL\sB\sR}] &=  \vX'_{\sL\sR} +\sum_{j=2} \alpha_{j} \vW_{\sL\sR}^{j}\otimes  \tBE^{(\mathsf{ABC})}_{\sA_\sR\sB\sC_\sL}[\vW_{\sB,\sA\sB\sC}^{j}].
    \end{align}

\noindent \textbf{Step 3: Leveraging Weak Clustering.} One readily identifies that the second term above implicitly contains KMS-correlation functions. Indeed, its norm can be written as
\begin{align}
    \bigg\|\sum_{j=2} \alpha_{j} \vW_{\sL\sR}^{j}\otimes  \tBE^{(\mathsf{ABC})}_{\sA_\sR\sB\sC_\sL}[\vW_{\sB,\sA\sB\sC}^{j}]\bigg\|_{\vrho_\mathsf{out}\otimes \vrho_{\sA\sB\sC}}^2  &= \sum_j |\alpha_j|^2 \cdot \big\|\tBE^{(\mathsf{ABC})}_{\sA_\sR\sB\sC_\sL}[\vW_{\sB,\sA\sB\sC}^{j}]\big\|_{\vrho_{\sA\sB\sC}}^2 \tag*{(By~\eqref{eq:orthonormalW})}\\
    & \leq c_1 \sum_j |\alpha_j|^2 \cdot \big\|\tBE^{(\mathsf{ABC})}_{\sA_\sR\sB\sC_\sL}[\vW_{\sB,\sA\sB\sC}^{j}]\big\|_{\vrho}^2 \tag*{(\autoref{lem:compare_measure})}\\
    &\approx c_1 \sum_j |\alpha_j|^2 \cdot \big\|\tBE_{\sA_\sR\sB\sC_\sL}[\vW_{\sB,\sA\sB\sC}^{j}]\big\|_{\vrho}^2 \tag*{(from \eqref{eq:quasi-local-gluing})}\\
    &= c_1 \sum_j |\alpha_j|^2 \cdot \braket{\vW_{\sB,\sA\sB\sC}^{j}, \tBE_{\sA_\sR\sB\sC_\sL}[\vW_{\sB,\sA\sB\sC}^{j}]}_{\vrho} \tag*{(KMS-DB of $\tBE$)} \\
    &\leq c_2\cdot \norm{\vX_{\sL\sB\sR}}_{\vrho}^2 \cdot \epsilon_{\mathsf{KMS}, 2q, 2\beta}(\min(|\sA_\sR|, |\sC_\sL|)),
\end{align} 
where, in sequence, we used the orthonormality of the Schmidt decomposition, the comparison of measures \autoref{lem:compare_measure}, the quasi-locality of $\tBE$, and detailed balance. Finally, in the last line above, we invoked the assumption of \textsf{KMS Weak Clustering}:
\begin{enumerate}
    \item For each $j$, the operator $\vW_{\sB,\sA\sB\sC}^{j}$ is supported on $\sB$, has $\vrho_{\sA\sB\sC}-$KMS norm 1, and thus $\vrho-$norm $c_{\beta, q}$ (\autoref{lem:compare_measure})
    \item $\tBE^{(\mathsf{ABC})}_{\sA_\sR\sB\sC_\sL}[\vW_{\sB,\sA\sB\sC}^{j}]$ is supported on $\sA_\sL, \sC_\sR$ and similarly has $\vrho-$norm $c_{\beta, q}$ (\autoref{lem:compare_measure}).
\end{enumerate}
Thereby, the distance between the operator supports is $\min(|\sA_\sR|, |\sC_\sL|)$. We note that the support of the second operator above does not lie on an interval of the 1D chain, but instead on a separated pair of intervals with pattern $\mathsf{A}_1\mathsf{B}_1\mathsf{C}_2\mathsf{B}_2\mathsf{A}_3.$. \\

\noindent \textbf{Altogether.} We apply the comparison of measures (\autoref{lem:compare_measure}), to relate the error above under the $\vrho_{\sA\sB\sC}-$norm back to the $\vrho-$ norm up to a $\beta, q$ dependent constant. 
\begin{align}
    \bigg\|\tBE_{\sA_\sR\sB\sC_\sL}\vX_{\sL\sB\sR} - \tr_{\sA\sB\sC}[\vrho_{\sA\sB\sC}\vX_{\sL\sB\sR}]\bigg\|_{\vrho}  \leq  c_2'\cdot \epsilon_{\mathsf{KMS}, 2q, 2\beta}(\min(|\sA_\sR|, |\sC_\sL|))^{1/2} \quad +  \text{(truncation error \eqref{eq:quasi-local-gluing})},
\end{align}
which gives the advertised bound. 
\end{proof}

\subsubsection{Obtaining the 3-stack gluing lemma}

Now, we show that \autoref{lem:island} implies the following 3-stack Gluing Lemma:

\begin{lem}
    [A 3-stack Gluing Lemma]\label{lem:gluing_2} Under the assumptions of \autoref{thm:strong_from_weak}, the conditional expectation $\tBE_{\sA\sB\sC}$ admits the following decomposition:
    \begin{align}
        \| \tBE_{\sA\sB\sC}  - \tBE_{\sA_\sR\sB\sC_\sL} \tBE_{\sA} \tBE_{\sC}\|_{\vrho} \lesssim c_1\cdot \epsilon_{\mathsf{KMS}}(\min(|\sA_\sR|, |\sC_\sL|))^{1/2} + \exp\big( c_2\cdot |\sA\sB\sC| - c_3\cdot \ell\log \ell \big),
    \end{align}
    for constants $c_1, c_2, c_3$ that depend only on $\beta, q$, and $\ell = \min(|\sA_{\sL}|, |\sC_{\sR}|, |\sB|)$.
\end{lem}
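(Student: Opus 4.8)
The plan is to read off the $3$-stack gluing identity directly from \autoref{lem:island} together with the projection calculus of the maps $\{\tBE_\sR\}$. Fix an arbitrary operator $\vO$; the first task is to manufacture an input to which \autoref{lem:island} applies. I set $\vY:=\tBE_{\sA}^{(\sL\sA\sB)}\tBE_{\sC}[\vO]$, where $\tBE_{\sA}^{(\sL\sA\sB)}$ is the spectral conditional expectation on $\sA$ with the Hamiltonian truncated to the interval $\sL\cup\sA\cup\sB$ (this interval reaches the left boundary of the chain and ends at the $\sB$–$\sC_\sL$ cut, so it has a single interior boundary point). Since the truncated generator is just $\CK_\sA$ built from $\vrho^{\sL\sA\sB}$, the map $\tBE_{\sA}^{(\sL\sA\sB)}$ is again an orthogonal projection whose image is the operators trivial on $\sA$; hence $\vY$ is trivial on $\sA$. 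Moreover $\tBE_{\sA}^{(\sL\sA\sB)}$ acts only inside $\sL\sA\sB$, which is disjoint from $\sC$, while $\tBE_{\sC}[\vO]$ lies in the image of $\tBE_\sC$ and is therefore trivial on $\sC$ (\autoref{lem:explicit_CE}); so $\vY$ is also trivial on $\sC$. Thus $\vY$ is exactly an operator of the type $\vX_{\sL\sB\sR}$ appearing in \autoref{lem:island}, and $\|\vY\|_{\vrho}\le c_\beta\|\vO\|_{\vrho}$ using $\|\tBE_\sC\|_{\vrho}=1$ (\autoref{lem:norm_E}) and the bound on the KMS-norm of a truncated conditional expectation with one boundary point (as in \eqref{eq:super-norm-trunc}, via \autoref{lem:compare_measure}).

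Next I would apply \autoref{lem:island} to $\vY$: it yields an operator $\vX'_{\sL\sR}\in\CB(\CH_{\sL\sR})$ with $\|\tBE_{\sA_\sR\sB\sC_\sL}[\vY]-\vX'_{\sL\sR}\|_{\vrho}\le E_{\mathsf{isl}}\cdot\|\vY\|_{\vrho}$, where $E_{\mathsf{isl}}=c_1\,\epsilon_{\mathsf{KMS}}(\min(|\sA_\sR|,|\sC_\sL|))^{1/2}+\exp(c_2|\sA\sB\sC|-c_3\,\ell_0\log\ell_0)$ and $\ell_0=\min(|\sA_\sL|,|\sC_\sR|)$. Replacing the truncated $\tBE_\sA^{(\sL\sA\sB)}$ by the genuine $\tBE_\sA$ costs $\|\tBE_\sA-\tBE_\sA^{(\sL\sA\sB)}\|_{\vrho}\le E_{\mathsf{tr}}:=c_1\exp(c_2|\sA|-c_3(|\sB|+1)\log(|\sB|+1))$ by \autoref{lem:localized_E} (single boundary point, $\mathsf{dist}(\sA,[n]\setminus(\sL\sA\sB))=|\sB|+1$), and since $\|\tBE_{\sA_\sR\sB\sC_\sL}\|_{\vrho}=\|\tBE_\sC\|_{\vrho}=1$ this gives $\|\tBE_{\sA_\sR\sB\sC_\sL}\tBE_\sA\tBE_\sC[\vO]-\vX'_{\sL\sR}\|_{\vrho}\le(c_\beta E_{\mathsf{isl}}+E_{\mathsf{tr}})\|\vO\|_{\vrho}$. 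Now the key move: apply $\tBE_{\sA\sB\sC}$ inside this norm. On one side $\tBE_{\sA\sB\sC}[\vX'_{\sL\sR}]=\vX'_{\sL\sR}$, because $\vX'_{\sL\sR}\in\CB(\CH_{\sL\sR})$ sits exactly in the image of the orthogonal projection $\tBE_{\sA\sB\sC}$. On the other side, since $\sA_\sR\sB\sC_\sL,\ \sA,\ \sC\subseteq\sA\sB\sC$, taking KMS-adjoints of the consistency identity $\tBE_{\sR_1}\tBE_{\sR_2}=\tBE_{\sR_2}$ (each $\tBE$ is KMS-self-adjoint) gives $\tBE_{\sA\sB\sC}\tBE_{\sA_\sR\sB\sC_\sL}=\tBE_{\sA\sB\sC}\tBE_\sA=\tBE_{\sA\sB\sC}\tBE_\sC=\tBE_{\sA\sB\sC}$, hence $\tBE_{\sA\sB\sC}\tBE_{\sA_\sR\sB\sC_\sL}\tBE_\sA\tBE_\sC[\vO]=\tBE_{\sA\sB\sC}[\vO]$. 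As $\|\tBE_{\sA\sB\sC}\|_{\vrho}=1$, applying it cannot increase the norm, so $\|\tBE_{\sA\sB\sC}[\vO]-\vX'_{\sL\sR}\|_{\vrho}\le(c_\beta E_{\mathsf{isl}}+E_{\mathsf{tr}})\|\vO\|_{\vrho}$ as well.

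Combining the last two estimates by the triangle inequality gives $\|\tBE_{\sA\sB\sC}[\vO]-\tBE_{\sA_\sR\sB\sC_\sL}\tBE_\sA\tBE_\sC[\vO]\|_{\vrho}\le 2(c_\beta E_{\mathsf{isl}}+E_{\mathsf{tr}})\|\vO\|_{\vrho}$, and the supremum over $\vO$ bounds the KMS superoperator norm. It only remains to collect errors: $|\sA|\le|\sA\sB\sC|$, $\ell_0\ge\ell$ and $|\sB|+1\ge\ell$ with $x\log x$ increasing, so both exponential contributions are dominated by $\exp(c_2'|\sA\sB\sC|-c_3'\ell\log\ell)$ with $\ell=\min(|\sA_\sL|,|\sC_\sR|,|\sB|)$, while the weak-clustering term becomes $c_1'\epsilon_{\mathsf{KMS}}(\min(|\sA_\sR|,|\sC_\sL|))^{1/2}$ — exactly the claimed bound. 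The argument is essentially mechanical once \autoref{lem:island} is available; the two points that need care are (i) choosing the truncation window of $\tBE_\sA$ to be the interval $\sL\sA\sB$ so that its quasi-support cannot reach $\sC$, thereby keeping $\vY$ genuinely (not just approximately) trivial on $\sC$, and (ii) realizing that the right way to convert the island approximation into a statement about $\tBE_{\sA\sB\sC}$ is to hit both operators with $\tBE_{\sA\sB\sC}$ and exploit that $\vX'_{\sL\sR}$ already lies in its range — this is the step I expect to be the conceptual crux, with everything else being bookkeeping with projections and the comparison of measures.
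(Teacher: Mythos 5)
Your proof is correct and follows the same strategy as the paper's: manufacture an operator exactly trivial on $\sA,\sC$, apply \autoref{lem:island}, and then hit both sides with $\tBE_{\sA\sB\sC}$, exploiting that $\vX'_{\sL\sR}$ lies in its range while the left-hand side collapses to $\tBE_{\sA\sB\sC}[\vO]$ by consistency and KMS-self-adjointness. The one (minor but tasteful) deviation is how the trivial-on-$\sA\sC$ input is built: the paper uses $\tBE_{\sA\sC}[\vX]$ and then separately invokes the factorization $\tBE_{\sA\sC}\approx\tBE_\sA\tBE_\sC$, whereas you use $\tBE_\sA^{(\sL\sA\sB)}\tBE_\sC[\vO]$, which is exactly trivial on $\sC$ because the truncated generator literally never touches $\sC$, and then replace $\tBE_\sA^{(\sL\sA\sB)}\to\tBE_\sA$ with a single direct application of \autoref{lem:localized_E}; this avoids needing the (implicit) factorization corollary for unions of intervals, at the cost of the extra $c_\beta^{|\partial\sR|}$ norm factors from \autoref{lem:compare_measure}, which you handle correctly.
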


\begin{figure}[ht]
\includegraphics[width=0.7\textwidth]{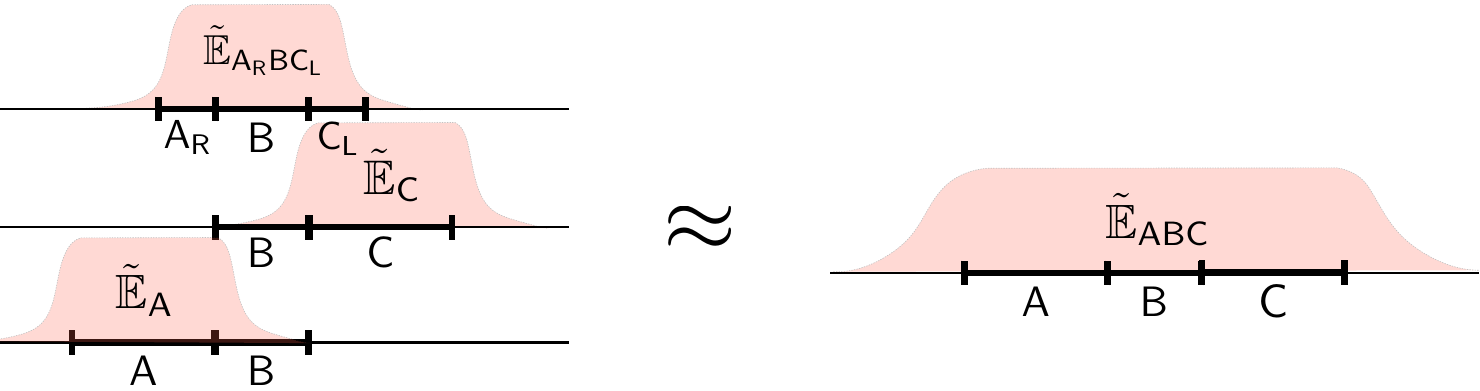}
\caption{\textit{The 3-stack Gluing Lemma} (\autoref{lem:gluing_2}). The spectral conditional expectation on $\sA\sB\sC$ can be approximated by that on $\sA, \sC$ and subsequently a slightly fattened region $\sB$: $\tBE_{\sA\sB\sC}  \approx \tBE_{\sA_\sR\sB\sC_\sL} \tBE_{\sA} \tBE_{\sC}$.}
\label{fig:3stack}
\end{figure}

\begin{proof}

    [of \autoref{lem:gluing_2}] As a consequence of \autoref{lem:island}, we have
    \begin{align}
        \tBE_{\sA_\sR\sB\sC_\sL}\tBE_{\sA\sC} [\vX] = \tBE_{\sA_\sR\sB\sC_\sL}[\vX_{\sL\sB\sR}] \approx  \vX_{\sL\sR} = \tBE_{\sA\sB\sC}[\vX_{\sL\sR}] \approx \tBE_{\sA\sB\sC} \tBE_{\sA_\sR\sB\sC_\sL}\tBE_{\sA\sC}[\vX] =  \tBE_{\sA\sB\sC}[\vX].
    \end{align}
    By combining the above with the quasi-locality of $\tBE_{\sA\sC}$ in \autoref{lem:localized_E},
    \begin{equation}
        \|\tBE_{\sA\sC} - \tBE_\sA\tBE_\sC\|_{\vrho} \leq \exp\bigg( c_2\cdot |\sA\sC| - c_3\cdot |\sB|\log |\sB|\bigg),
    \end{equation}
    where again the constants depend only on $\beta, q$. 
\end{proof}

Now, we can obtain the 2-stack gluing as required for~\autoref{thm:strong_from_weak}, by applying the derived three-stack lemma twice, under a very careful choice of interval partition (\autoref{fig:2stack-3stack}).

\begin{figure}[ht]
\includegraphics[width=0.9\textwidth]{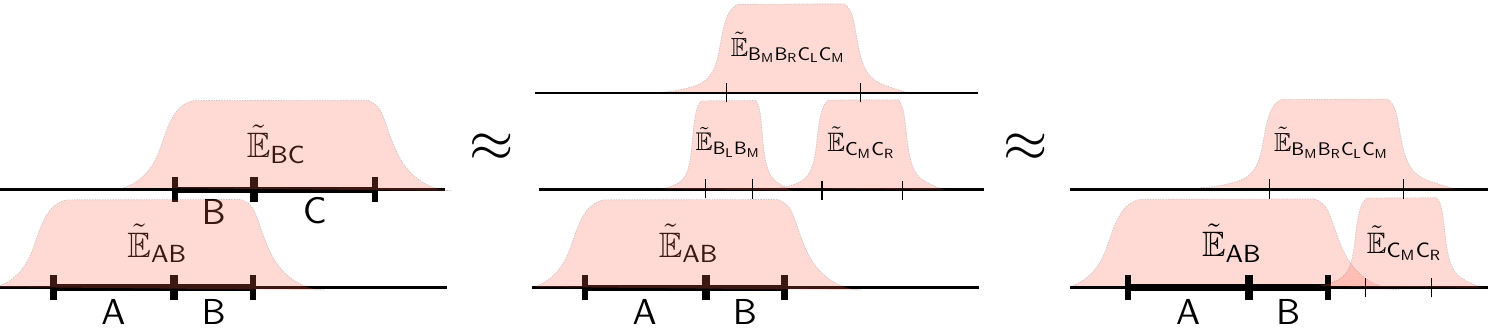}
\caption{\textit{Bipartite Gluing from 3-stack gluing}. The desired strong clustering statement in \eqref{eq:gluing-strong-clustering} follows by applying the derived 3-stack Gluing Lemma \eqref{eq:tripartite-gluing} twice, on a partition of the regions $\sB, \sC$, see \eqref{eq:bc-partition-gluing}. To compress the figure, we displayed the pairs of maps $\tBE_{\mathsf{B_LB_M}}$, $\tBE_{\mathsf{C_MC_R}}$, and $\tBE_{\mathsf{AB}}$, $\tBE_{\mathsf{C_MC_R}}$ in the same row. }
\label{fig:2stack-3stack}
\end{figure}

\begin{proof}

    [of the bipartite gluing lemma, \autoref{thm:strong_from_weak}] We begin by applying \autoref{lem:gluing_2} to the conditional expectation $\tBE_{\sB\sC}$. Consider the following partition of the interval $\sB\cup \sC$:
    \begin{equation}
       \under{ \sB_\sL\cup \sB_\mathsf{M}}{\sA'}\cup \under{\sB_\sR \cup \sC_\sL}{\sB'}\cup \under{\sC_\mathsf{M}\cup \sC_\sR}{\sC'}.\label{eq:bc-partition-gluing}
    \end{equation}
    The 3-stack Gluing Lemma in \autoref{lem:gluing_2} then gives the following approximate decomposition:
\begin{align}
      \big\|\tBE_{\sB\sC} -  
    \tBE_{\sB_\sL\sB_\mathsf{M}}
    \tBE_{ \sC_{\mathsf{M}}\sC_{\mathsf{R}}}\tBE_{\sB_\mathsf{M}\sB_\sR\sC_{\sL}\sC_\sM} \big\|_{\vrho} \lesssim &c_1\cdot \epsilon_{\mathsf{KMS}}(\min(|\sB_\sM|, |\sC_\sM|))^{1/2} + \exp\big( c_2\cdot |\sB\sC| - c_3\cdot |\ell|\log |\ell|\big), \label{eq:patchBC}
\end{align}
with $\ell = \min(|\sB_\sL|, |\sC_\sR|, |\sB_\sR\sC_\sL|)$.
We then have 
\begin{align}
    \|\tBE_{\sA\sB} \tBE_{\sB\sC} - \tBE_{\sA\sB\sC}\|_{\vrho} &\leq \lnorm{\tBE_{\sA\sB}\big(\tBE_{\sB_\sL\sB_\mathsf{M}}
    \tBE_{ \sC_{\mathsf{M}}\sC_{\mathsf{R}}}\tBE_{\sB_\mathsf{M}\sB_\sR\sC_{\sL}\sC_\sM}  \big) - \tBE_{\sA\sB\sC}}_{\vrho} + \big\|\tBE_{\sA\sB}\big(\tBE_{\sB\sC} -  
    \tBE_{\sB_\sL\sB_\mathsf{M}}
    \tBE_{ \sC_{\mathsf{M}}\sC_{\mathsf{R}}}\tBE_{\sB_\mathsf{M}\sB_\sR\sC_{\sL}\sC_\sM}\big) \big\|_{\vrho} 
    \\ &\leq\lnorm{\tBE_{\sA\sB}
    \tBE_{ \sC_{\mathsf{M}}\sC_{\mathsf{R}}}\tBE_{\sB_\mathsf{M}\sB_\sR\sC_{\sL}\sC_\sM}  - \tBE_{\sA\sB\sC}}_{\vrho}  + \big\|\tBE_{\sB\sC} -  
    \tBE_{\sB_\sL\sB_\mathsf{M}}
    \tBE_{ \sC_{\mathsf{M}}\sC_{\mathsf{R}}}\tBE_{\sB_\mathsf{M}\sB_\sR\sC_{\sL}\sC_\sM} \big\|_{\vrho} \\
    &\lesssim c_1\epsilon_\mathsf{KMS}(\min(|\sB_\sM|, |\sC_\sM|))^{1/2}  + \exp\big( c_2\cdot |\sA\sB\sC| - c_3\cdot \ell\log \ell\big) \label{eq:patchBC_triangle}
\end{align}
where, on the second term we used the fact that $\tBE_{\sA\sB}$ has spectra in $[0, 1]$ and then \eqref{eq:patchBC}; and to the first term we use $\tBE_{\sA\sB}\tBE_{\sB_\sL\sB_\mathsf{M}}  = \tBE_{\sA\sB}$ and then \autoref{lem:gluing_2} again onto regions $\sA\sB:\sB_\mathsf{M}\sB_\sR\sC_{\sL}\sC_\sM :\sC_{\mathsf{M}}\sC_{\mathsf{R}}$. The resulting error is dominated by that of the second term. 

To conclude the proof, we make the explicit choice that the regions $\sB, \sC$ are divided into thirds: $|\sB_\sL| , |\sB_\sM|, |\sB_R|\geq \floor{|\sB|/3}$, $|\sC_\sL| , |\sC_\sM| , |\sC_R|\geq \floor{|\sC|/3}$. The assumptions on the size of $\sB$ relative to $\sA, \sC$ and $\sA\sB\sC$ then give the claimed bound.

\end{proof}

\newpage
\section{Strong Clustering Implies a Constant Gap for Generator \texorpdfstring{$\CK$}{K}}
\label{sec:conditionalgapK}

A long-standing paradigm in the theory of Gibbs samplers is that \textit{static} correlations in the Gibbs measure should be intimately related to the \textit{dynamic} mixing time of the Markov chain. As discussed, our main strategy to prove the spectral gap is to exploit a suitable notion of strong clustering and recursive arguments~\cite{kastoryano2016commuting}. While our ultimate goal is the spectral gap of $\CL$, however, this section focuses solely on the spectral gap of the auxiliary generator $\CK$, as it appears to be a natural, mathematically self-contained intermediate step. Here, we show that the version of strong clustering (\autoref{defn:strong_clustering}) derived for 1D Hamiltonians at finite temperatures in~\autoref{thm:strong_from_weak}, implies that the generator $\CK$ \eqref{eq:CK} admits a constant spectral gap.
\begin{thm}
    [The Spectral Gap of the Generator $\CK$]\label{thm:K-is-gapped}    
    Consider a 1D Hamiltonian (\autoref{defn:1D}) on $n$ qudits with local dimension $2^q$ and inverse-temperature $\beta>0.$ Let $\CK$ be the generator of the spectral conditional expectation~\eqref{eq:CK} defined by the set $\CS^1_{[n]}$ of all single-site Pauli jumps, which is detailed-balanced for the Gibbs state $\vrho.$ Then, the spectral gap of $\CK$ is at least
    \begin{equation}
        \lambda(\CK):=\inf_{\vO} \frac{\braket{\vO, -\CK[\vO]}_{\vrho}}{\|\vO - \vI \cdot \tr[\vrho\vO]\|_{\vrho}^2} \geq \gamma_{\beta,q} >0.
    \end{equation}
    for a constant $\gamma_{\beta,q}$ depending only on $\beta,q.$
\end{thm}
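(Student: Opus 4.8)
The plan is to run the local-to-global spectral gap recursion of~\cite{kastoryano2016commuting}, specialized to our spectral conditional expectations $\{\tBE_\sR\}_\sR$ and the auxiliary generator $\CK$. Since $\mathsf{Var}_{[n]}(\vO) = \|\vO - \vI\cdot\tr[\vrho\vO]\|_\vrho^2$, we have $\lambda(\CK) = \lambda_{[n]}$, so it suffices to lower bound the conditional gap $\lambda_{[n]}$ of $\CK=\CK_{[n]}$ by a constant, which we do by inducting on the family $\{\lambda_\sR\}_\sR$ over intervals $\sR$. The base case, intervals of size $|\sR|\le \ell_0(\beta,q)$ for a suitable constant $\ell_0$, is exactly \autoref{lem:local_gap_K}, which gives $\lambda_\sR \ge (c_\beta 2^q)^{-\ell_0}=:c_0(\beta,q)>0$. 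The inductive step is driven by a quasi-tensorization of the conditional variance that follows from the \textsf{KMS Strong Clustering} of \autoref{thm:strong_from_weak}, itself fed by the unconditional \textsf{KMS Weak Clustering} of \autoref{thm:kms-clustering-uncon}. Because strong clustering is, in the first instance, only available over ``well-separated'' tripartitions, a first pass of the recursion yields only $\lambda_\sR \gtrsim 1/\mathsf{polylog}(|\sR|)$; a bootstrap on the locality of $\tBE$ then upgrades this to a genuine constant.

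For the inductive step, fix an interval $\sR=\sA\sB\sC$ split into consecutive intervals with $\tfrac12|\sA|,\tfrac12|\sC|\ge |\sB|\ge c_4|\sR|/\log|\sR|$, so that \autoref{thm:strong_from_weak} applies. Using the projectivity $\tBE_\sR^2=\tBE_\sR$ and the consistency identities $\tBE_\sR\tBE_{\sA\sB}=\tBE_{\sA\sB}\tBE_\sR=\tBE_\sR$ (and likewise for $\sB\sC$), together with the strong-clustering bound on $\mathsf{Cov}_{\sA\sB\sC}(\tBE_{\sA\sB}[\vO],\tBE_{\sB\sC}[\vO])$, one obtains a quasi-tensorization of the form $\mathsf{Var}_{\sA\sB\sC}(\vO)\le \tfrac{1}{1-2\sqrt{\epsilon_\mathsf{S}}}\big(\mathsf{Var}_{\sA\sB}(\vO)+\mathsf{Var}_{\sB\sC}(\vO)\big)$; the key maneuver is to first reduce to the case $\tBE_\sR[\vO]=0$, so that $\|\vO\|_\vrho^2=\mathsf{Var}_\sR(\vO)$ and the strong-clustering error term $\epsilon_\mathsf{S}\|\vO\|_\vrho^2$ can be absorbed on the left-hand side. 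This uses only that $\tBE$ is KMS-self-adjoint with spectrum in $[0,1]$, never complete positivity. Passing to Dirichlet forms via $\CK_{\sR'}\preceq\CK_\sR$ for $\sR'\subseteq\sR$ (each $\CK_a\preceq 0$) and invoking the inductive hypothesis on $\lambda_{\sA\sB},\lambda_{\sB\sC}$ then produces a recursion for $\lambda_\sR$. Iterating it along a chain of tripartitions that shrinks the interval by a fixed factor down to scale $\ell_0$, with the accumulated product of correction factors controlled using the stretched-exponential decay of $\epsilon_\mathsf{KMS}$ from \autoref{thm:kms-clustering-uncon}, yields the first bound $\lambda_\sR\ge 1/\mathsf{polylog}(|\sR|)$ for all $|\sR|\ge \ell_0$.

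It remains to bootstrap. Plug the polylogarithmic local gap back into the quasi-locality analysis of \autoref{sec:CE}: a larger local gap lets one take the cutoff time $t$ in the proof of \autoref{lem:localized_E} much smaller, and combining this with a Lieb--Robinson bound for the evolution $e^{t\CK_\sR}$ (carried out in \autoref{sec:clustering-from-gap}) shrinks the required ``buffer zone'' in \autoref{lem:localized_E} from $\sim|\sA|/\log|\sA|$ down to $\sim|\sA|^{1-\delta}$ for a constant $\delta>0$. Re-running the derivation of \autoref{thm:strong_from_weak} with this sharpened quasi-locality extends \textsf{KMS Strong Clustering} to all tripartitions with $|\sB|\ge |\sA\sB\sC|^{1-\delta}$. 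A second round of the variance recursion above, now over this much richer family --- so that the overlap $\sB$ may be sublinear while the interval still shrinks by a constant factor per level, making the per-level error negligible --- closes the loop and gives $\lambda_{[n]}(\CK)\ge\gamma_{\beta,q}>0$, as claimed.

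The main obstacle is the bootstrap: establishing a serviceable Lieb--Robinson bound for the non-completely-positive generator $\CK$ and correctly tracking how a $1/\mathsf{polylog}$ gap converts into the improved buffer exponent $\delta$ in \autoref{lem:localized_E}, and then verifying that strong clustering over $|\sB|\ge|\sA\sB\sC|^{1-\delta}$-tripartitions is exactly what the recursion needs to escape the polylogarithmic loss. A secondary subtlety is getting the homogeneity of the quasi-tensorization right so that the strong-clustering error is absorbable --- which, as noted, hinges only on $\tBE$ being a bona fide KMS-self-adjoint contraction, despite not being a physical channel.
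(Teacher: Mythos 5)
Your proposal matches the paper's proof almost step for step: the same two-phase recursion appears as \autoref{prop:log-gap} (inverse-polylog gap from the limited strong clustering of \autoref{thm:strong_from_weak}, using \autoref{lem:factorization1} and the balanced-partition amortization of \autoref{lem:balanced_partitions} to kill the extra $\CK_\sB$ Dirichlet term) followed by \autoref{prop:bootstrapping} (constant gap from the improved strong clustering over $|\sB|\gtrsim|\sA\sB\sC|^{c_{\mathsf{size}}}$-tripartitions), and your observation that the normalization $\|\vO\|_\vrho^2$ can be taken equal to $\mathsf{Var}_\sR(\vO)$ after subtracting $\tBE_\sR[\vO]$ is exactly the reduction made at the start of the Prop.~\ref{prop:log-gap} proof. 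The one place you deviate is in how the bootstrap is run: you propose re-deriving \autoref{lem:localized_E} with a smaller cutoff time and then re-running the $\infty$-weak$\to$strong reduction of \autoref{thm:strong_from_weak}, whereas the paper short-circuits weak clustering entirely in \autoref{lem:strong-from-gap}, deriving strong clustering directly from the gap via a finite-time truncation (\autoref{lem:truncation_expectations}) and a Lieb--Robinson bound for the dynamics $e^{t\CK}$ (\autoref{lem:LRforCE}); the paper explicitly notes your route as a plausible alternative, so this is a matter of taste rather than a gap. Two small quibbles: the absorbable error in the factorization is $\epsilon_\mathsf{S}$ itself, not $\sqrt{\epsilon_\mathsf{S}}$ (the square root is already inside $\epsilon_\mathsf{S}$ from Theorem~\ref{thm:strong_from_weak}), and you should be explicit that the amortization over $\sim r/b(r)$ disjoint middle regions $\sB_i$ (not just one tripartition) is what makes the multiplicative correction $(1+O(1/s))$ per level summable.
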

While we will soon obtain spectral gaps for $\CL$ (\autoref{sec:CKGgap}), one may ask, what does the spectral gap of $\CK$ even physically mean? Even in the absence of complete positivity, $\CK$ is, for spectral purposes, a valid generator of conditional expectations. Therefore, by vectorizing $\CK$ into two overlapping copies of the quantum spin chain, one can regard $-\CK$ as a quasi-local Hamiltonian acting on $2n$ many qudits, whose gapped ground state is the purified Gibbs state $\ket{\sqrt{\vrho}}.$ By appealing to standard adiabatic arguments for gapped ground states, $\CK$ gives a totally valid adiabatic path for preparing $\ket{\sqrt{\vrho}}$ -- see \autoref{sec:results} and \autoref{sec:adiabatic} for details.

Our approach in the proof of \autoref{thm:K-is-gapped} follows closely that of \cite{kastoryano2016commuting}, who showed that strong clustering implies a gap in commuting systems. However, we need to make significant edits to address the caveats in our version of strong clustering, which only holds over ``well-separated'' tripartitions. Our proof proceeds in two stages, where we first establish that above a sufficiently large length-scale $\ell_0$ the generator $\CK$ admits an inverse polylogarithmic (in the subregion size) conditional spectral gap. This already improves on the a priori inverse-exponential bound computed back in \autoref{lem:local_gap_K}; and it enables us to derive improved locality properties for the associated spectral conditional expectation above the length-scale $\ell_0$. In turn, these improved locality properties give rise to a tighter version of strong-clustering, which allows us to bootstrap the spectral gap up to a fixed constant at a larger length-scale $\ell_1$.

\subsection{An Outline of the Proof}

We refer the reader back to \autoref{defn:variance} and \autoref{defn:conditional_covariance} for the definitions of conditional variance. The first result of this section is the following inverse-polylog lower bound on the conditional spectral gap of $\CK$; assuming strong clustering for suitably well-separated tripartitions. 

\begin{prop}
    [An Inverse $\mathsf{polylog}$ Conditional Spectral Gap for $\CK$]\label{prop:log-gap} Consider the family of spectral conditional expectations $\{\tBE_\sR\}_\sR$ w.r.t. the Gibbs state $\vrho$ of a 1D Hamiltonian (\autoref{lem:explicit_CE}). Assume $(\vrho, \{\tBE_\sR\}_\sR)$ admits strong clustering with sub-exponential error for all operators $\vO$
    \begin{equation}
        \mathsf{Cov}_{\sA\sB\sC}(\tBE_{\sA\sB}[\vO], \tBE_{\sB\sC}[\vO]) \leq  \|\vO\|^2_{\vrho}\cdot c_1 \cdot \exp\big(-|\sB|^{c_2}\big)
    \end{equation}
    over disjoint consecutive intervals $\sA,\sB,\sC$ such that $\frac{1}{2} |\sA|, \frac{1}{2}|\sC| \geq |\sB|\geq c_3\cdot |\sA\sB\sC| / \log |\sA\sB\sC|$ for some constants $c_1, c_2, c_3$ which depend only on $\beta, q$. 

    Then, there exists constants $\ell_0, \lambda_0, c>0$ dependent only on $\beta, q$ s.t. for all intervals $\sR\subseteq [n]$ s.t. $|\sR|\geq \ell_0$, the conditional spectral gap of $\CK_\sR$ satisfies:
    \begin{equation}
       \lambda_{\sR} =  \inf_{\vO}\frac{\braket{\vO, -\CK_\sR[\vO]}_{\vrho}}{\mathsf{Var}_\sR(\vO)} \geq \lambda_0\cdot \frac{1}{\log^c |\sR|}.
    \end{equation}
\end{prop}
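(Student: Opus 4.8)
The strategy follows the Kastoryano–Brandão recursion scheme adapted to the restricted setting: set up a recursion that bounds the conditional gap $\lambda_\sR$ of $\CK_\sR$ on an interval $\sR$ of length $L$ in terms of the gaps on two overlapping subintervals of length roughly $L(1-\text{const})$, paying a multiplicative loss that comes from the strong clustering error. The key identity, valid because $\tBE_\sA, \tBE_\sB$ are projections whose images intersect in the image of $\tBE_{\sA\sB}$, is the operator inequality (in the KMS inner product) relating the conditional variance on $\sR = \sA\sB\sC$ to those on $\sA\sB$ and $\sB\sC$: schematically,
\begin{align}
\mathsf{Var}_\sR(\vO) \;\leq\; \frac{1}{1-2\sqrt{\epsilon_\mathsf{S}}}\Big(\mathsf{Var}_{\sA\sB}(\vO) + \mathsf{Var}_{\sB\sC}(\vO)\Big),
\end{align}
where $\epsilon_\mathsf{S}$ is the strong clustering error for the tripartition $(\sA,\sB,\sC)$ — this is exactly the step where $\mathsf{Cov}_{\sA\sB\sC}(\tBE_{\sA\sB}[\vO],\tBE_{\sB\sC}[\vO])$ enters, via Cauchy–Schwarz and the hypothesis. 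Then, since $\CK$ is a \emph{sum} of local terms and $\CK_{\sA\sB}, \CK_{\sB\sC}$ together "cover" $\CK_\sR$ with each single-site term $\CK_a$ counted at most twice, the Dirichlet forms satisfy $\braket{\vO,-\CK_{\sA\sB}[\vO]}_\vrho + \braket{\vO,-\CK_{\sB\sC}[\vO]}_\vrho \leq 2\braket{\vO,-\CK_\sR[\vO]}_\vrho$. Combining the two gives
\begin{align}
\lambda_\sR \;\geq\; \frac{1-2\sqrt{\epsilon_\mathsf{S}}}{2}\cdot \min(\lambda_{\sA\sB}, \lambda_{\sB\sC}).
\end{align}

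The subtlety — and the reason we only get inverse-polylog rather than a constant here — is the constraint $|\sB|\geq c_3 |\sA\sB\sC|/\log|\sA\sB\sC|$: the overlap region must be a near-linear fraction of $L$, so each recursion step only shrinks $L$ by a factor $(1 - \Theta(1/\log L))$, and it takes $\sim \log L \cdot \log\log L$ (more carefully, $\Theta(\log^2 L)$) steps to descend from scale $L$ to the base scale $\ell_0$. At each step, with $|\sB| \sim L/\log L$, the strong clustering error is $\epsilon_\mathsf{S} \sim \exp(-(L/\log L)^{c_2})$, which is super-polynomially small in $L$ throughout the recursion (as long as $L\geq \ell_0$ for a suitable $\ell_0$), so the product $\prod (1-2\sqrt{\epsilon_\mathsf{S}})^{-1}$ over all steps stays bounded by a constant. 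The dominant loss is therefore the factor $2^{-(\text{number of steps})} = 2^{-\Theta(\log^2 L)}$ from the $\min$ recursion — but wait, $2^{-\log^2 L} = L^{-\log L}$ is too lossy. The fix, standard in these arguments, is to \emph{not} split $\sB$ off symmetrically but to use a more efficient recursion: split $\sR$ into a bounded number (say three) of overlapping pieces each of length $\tfrac{2}{3}L$ or so is still too slow; instead one iterates the two-interval split but chooses the overlap as large as \emph{allowed}, $|\sB| = \Theta(L/\log L)$, giving new length $L' = L - \Theta(L/\log L)$, hence $\log L' = \log L - \Theta(1/\log L)$, so reaching $\ell_0$ takes $T(L) = \Theta(\log^2 L)$ steps, and the gap picks up $2^{-T(L)}$ — still $L^{-\Theta(\log L)}$, which is \emph{worse} than polylog.

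So the actual argument must be sharper: one should run the recursion only $O(\log\log L)$ times using a \emph{doubling-style} split where at each level the interval is cut into two halves with overlap $\Theta(L/\log L)$ \emph{and} one uses that $\lambda_{\sA\sB} \approx \lambda_{\sB\sC}$ for roughly equal halves, so after $k$ levels one reaches scale $L/2^k$ having lost only a factor $\prod_{j}(1-2\sqrt{\epsilon_j})^{-1} \cdot 2^{-\text{(number of levels)}}$ — no, the $2^{-1}$ per level is unavoidable from $\min(\lambda_{\sA\sB},\lambda_{\sB\sC}) \geq \tfrac12(\cdots)$ only if one is careless; the correct move, following \cite{kastoryano2016commuting}, is to set up the recursion so that the constant improves additively across the covering rather than halving. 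Concretely: tile $\sR$ by $O(\log L)$ overlapping intervals $\{\sR_i\}$ of length $\Theta(L/\log L)$ with pairwise overlaps of length $\Theta(L/\log^2 L)$, arranged so consecutive pairs $(\sR_i,\sR_{i+1})$ satisfy the well-separated tripartition hypothesis; a telescoping/martingale decomposition of $\mathsf{Var}_\sR$ along this tiling then yields $\mathsf{Var}_\sR(\vO) \leq C\sum_i \mathsf{Var}_{\sR_i}(\vO)$ with $C = O(1)$ (the clustering errors only contribute a convergent geometric correction), while the Dirichlet forms satisfy $\sum_i \braket{\vO,-\CK_{\sR_i}[\vO]}_\vrho \leq O(\log L)\braket{\vO,-\CK_\sR[\vO]}_\vrho$ since each site lies in $O(1)$ tiles... giving $\lambda_\sR \geq \Omega(1/\log L)\cdot \min_i \lambda_{\sR_i}$, and iterating this $O(\log\log L)$ times down to scale $\ell_0$ gives $\lambda_\sR \geq \Omega(\ell_0^{-O(1)}) \cdot \prod_{j=1}^{O(\log\log L)} \Omega(1/\log(L/2^j)) = \Omega(1/\log^{O(\log\log L)} L)$ — still not quite polylog.

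I would therefore expect the \textbf{main obstacle} to be getting the bookkeeping of the recursion to close at inverse-\emph{polylog} (not merely quasi-polynomial) loss: this requires the recursion to terminate after only $O(1)$ or $O(\log\log L)$ effective levels with a per-level loss that is itself $O(\log L)$, which forces a careful tiling argument in the spirit of \cite[Section 4–5]{kastoryano2016commuting} where the "area term" and the "volume term" are balanced. The cleanest route, which I would pursue, is: (i) prove a \emph{one-step} inequality $\lambda_\sR \geq \tfrac14 \lambda_{\sR'}$ for $\sR'$ an interval of length $|\sR| - \lceil c_3 |\sR|/\log|\sR| \rceil \cdot 2$ (obtained by cutting off a buffer $\sB$ of the minimal allowed size on \emph{one} side only, so $\sR = \sA\sB\sC$ with $\sC$ a single site region absorbed appropriately — or rather a boundary strip), where the $\tfrac14$ comes from the two-term cover plus the $(1-2\sqrt{\epsilon_\mathsf{S}})$ factor which is $\geq \tfrac12$ once $|\sR|\geq \ell_0$; (ii) observe this one-step recursion, iterated, reaches the base scale $\ell_0$ in $N(L) = \Theta(\log^2 L)$ steps — giving $\lambda_L \geq 4^{-N(L)}\lambda_{\ell_0}$ which is $\exp(-\Theta(\log^2 L))$, i.e. quasi-polynomially small, \emph{not} polylog. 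Hence a naive one-sided recursion fails and one genuinely needs the balanced/tiled recursion of \cite{kastoryano2016commuting} adapted to well-separated tripartitions, where the multiplicative loss per halving is $O(\mathrm{polylog})$ and the number of halvings is $O(\log L)$, but one proves the accumulated loss is $\prod_{k=0}^{\log L} O(\log(L/2^k)) = \exp(\sum_k O(\log\log(L/2^k))) = \exp(O(\log L \cdot \log\log L))$ — at which point I would restructure to only recurse $O(\log\log L)$ times by each time splitting into $\mathrm{polylog}(L)$ many pieces of size $L/\mathrm{polylog}(L)$ rather than two pieces, so that after $k$ iterations the scale is $L/\mathrm{polylog}^k$ and $k = O(\log\log L / \log\log\log L)$ suffices, with per-iteration loss $\mathrm{polylog}(L)$, total loss $\mathrm{polylog}(L)^{O(\log\log L/\log\log\log L)} = \mathrm{polylog}(L)$. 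Making this last counting rigorous — choosing the splitting arity as a function of $L$ so the product of per-level losses is genuinely $\mathrm{polylog}$ — is the crux; everything else (the variance-covariance inequality, the Dirichlet form additivity, the base case from \autoref{lem:local_gap_K} together with the hypothesis) is routine given the results already in the excerpt. I would present the recursion with the arity chosen as $r(L) = \lceil \log L\rceil$, so that $|\sB_i| = \Theta(|\sR|/(r\log|\sR|))$ satisfies the well-separation constraint $|\sB_i|\geq c_3|\sR_i'|/\log|\sR_i'|$ for the relevant sub-tripartitions after one checks $|\sR_i'| = \Theta(|\sR|/r)$, and verify the total loss telescopes to $\exp(O(\sum_j \log r_j)) = \mathrm{polylog}(L)$ with $O(\log\log L)$ levels.
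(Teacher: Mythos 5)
Your starting inequality (the factorization of variance via strong clustering and Cauchy--Schwarz) is correct, and you correctly identify that the bookkeeping is the crux of the argument. However, you miss the key step from~\cite{kastoryano2016commuting} that the paper actually uses, and as a result you wander through several recursion schemes — one-sided trimming, balanced halving, multi-way tiling into pieces of size $L/\log L$, and finally adaptive arity — without landing on a clean one.

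The paper's decisive move is an \emph{averaging over non-overlapping middle regions}. Expanding $\braket{\vO,-\CK_{\sA\sB}[\vO]}_\vrho + \braket{\vO,-\CK_{\sB\sC}[\vO]}_\vrho = \braket{\vO,-\CK_{\sA\sB\sC}[\vO]}_\vrho + \braket{\vO,-\CK_{\sB}[\vO]}_\vrho$ yields an additive overhead $\braket{\vO,-\CK_\sB[\vO]}_\vrho$. Rather than absorbing this at the cost of a factor of $2$ (which, as you observe, compounds to a quasi-polynomial loss), the paper constructs $s\gtrsim \log r$ tripartitions $\{(\sA_i,\sB_i,\sC_i)\}_{i\in[s]}$ of the \emph{same} interval $\sR$, all with $\sA_i\sB_i\sC_i = \sR$ and with pairwise disjoint buffers $\sB_i$ (Lemma~\ref{lem:balanced_partitions}). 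Averaging the factorization inequality over these tripartitions and using that disjointness gives $\frac{1}{s}\sum_i\braket{\vO,-\CK_{\sB_i}[\vO]}_\vrho \leq \frac{1}{s}\braket{\vO,-\CK_\sR[\vO]}_\vrho$, so the per-step multiplicative loss is only $\bigl(1 + O(1/\log r)\bigr)$ rather than $2$.

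Equally important, and contrary to the assumption driving your estimate that one must take $\Theta(\log^2 L)$ steps: the sub-intervals $\sA_i\sB_i$ and $\sB_i\sC_i$ shrink by a \emph{constant} factor per step, not by $1-\Theta(1/\log L)$. The balanced tripartition keeps $|\sA_i|,|\sC_i|\geq r/3$, so $|\sA_i\sB_i| = r - |\sC_i| \leq 2r/3$ (and similarly for $\sB_i\sC_i$), regardless of the buffer $|\sB_i| \sim r/\log r$ being small. The recursion therefore takes only $O(\log r)$ steps, and since the per-step loss is $\bigl(1+O(1/\log r_i)\bigr)^{-1} \approx i/(K+i)$ on a geometric scale $r_i = (4/3)^i z$, the telescoping product $\prod_{i\leq O(\log r)}\frac{i}{K+i} = \frac{i!\,K!}{(K+i)!} \gtrsim i^{-K}$ gives exactly the inverse-polylog bound. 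Without the averaging trick and this constant-factor shrinking, none of the schemes you sketch actually closes the polylog bound, as you yourself observe at several points.
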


This bound falls only slightly short of our target goal of a constant, system-size independent gap. The proof of the theorem above follows closely the recursive proof of \cite{kastoryano2016commuting} in the commuting setting, however, with a central modification to address the caveat that our version of strong clustering only holds for suitable tripartitions. As a result, under this proof technique, we are only able to establish an inverse-polylog gap.

\begin{rmk}
    The assumption in \autoref{prop:log-gap} is the version of strong clustering derived in \autoref{section:clustering}.
\end{rmk}

 The second result of this section shows how to ``bootstrap'' the improvement to the gap all the way to a constant. In particular, we show that this spectral gap lower bound allows us to further iterate on the locality properties of $\CK$ -- which gives us a stronger form of strong clustering. In turn, this strengthened strong clustering property will imply that the spectral gap of $\CK$, at a slightly larger length-scale, is in fact \textit{constant}. 

\begin{prop}
    [Spectral Gap Bootstrapping]\label{prop:bootstrapping} Consider the family of spectral conditional expectations $\{\tBE_\sR\}$ w.r.t. the Gibbs state $\vrho$ of a 1D Hamiltonian (\autoref{lem:explicit_CE}). Assume there exists constants $\ell_0\geq 1$ and $1>c_{\mathsf{gap}}\geq 0$, such that that for every interval $\sR\subseteq [n]$ of length $|\sR|\geq \ell_0$, the conditional spectral gap is at least
    \begin{equation}
        \lambda_\sR 
        \geq |\sR|^{-c_{\mathsf{gap}}}.
    \end{equation}
    Then, there exists constants $\ell_1(c_{\mathsf{gap}}, \ell_0, \beta, q)$ and $\gamma(c_{\mathsf{gap}}, \ell_0, \beta, q)$ such that $\forall \sR\subseteq [n]$ with $|\sR|\geq \ell_1$,
    \begin{equation}
        \lambda_\sR 
        \geq \gamma(c_{\mathsf{gap}}, \ell_0, \beta, q).
    \end{equation}
\end{prop}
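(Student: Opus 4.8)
The plan is a three-stage bootstrap: first upgrade the quasi-locality of the spectral conditional expectation $\tBE$ using the hypothesized gap; then re-derive \textsf{KMS Strong Clustering} over a much more generous family of tripartitions; and finally feed this back into the recursion of~\cite{kastoryano2016commuting}, where the generous separation removes the $\mathsf{polylog}$ loss of \autoref{prop:log-gap}. For Stage~1, note that in the proof of \autoref{lem:localized_E} the prefactor $\exp(c_2|\sA|)$ enters \emph{only} through the finite-time cutoff $t\sim e^{\alpha|\sA|}$, which is forced by the a priori \emph{exponential} gap of \autoref{lem:local_gap_K}. Replacing that input by the hypothesis $\lambda_\sR\ge|\sR|^{-c_{\mathsf{gap}}}$ (transported to the truncated Gibbs state via \autoref{lem:compare_measure} exactly as in that proof), one may instead take $t=|\sA|^{c_{\mathsf{gap}}}(1+\gamma_{\beta,q}\,\ell\log\ell)$ with $\ell=\mathsf{dist}(\sA,[n]\setminus\sR)$, obtaining for every interval $\sA$ with $|\sA|\ge\ell_0$
\begin{equation}
\|\tBE_\sA-\tBE_\sA^{(\sR)}\|_{\vrho}\ \le\ c_1^{|\partial\sR|}\cdot\mathrm{poly}_{\beta,q}(|\sA|)\cdot e^{-c_2\,\ell\log\ell},
\end{equation}
and in particular $\|\tBE_{\sA\sC}-\tBE_\sA\tBE_\sC\|_{\vrho}\le\mathrm{poly}_{\beta,q}(|\sA\sC|)\,e^{-c_2\,\mathsf{dist}(\sA,\sC)\log\mathsf{dist}(\sA,\sC)}$ for separated intervals $\sA,\sC$, so a buffer of merely polylogarithmic width now suffices to factorize. (The same effect can be obtained, as in \autoref{sec:clustering-from-gap}, from a Lieb--Robinson bound for $e^{t\CK_\sR}$, whose local terms have bounded strength by \autoref{lem:spectral-radius}, so that at the polynomially-short convergence time its light cone has polynomial radius.) One checks this stage invokes the hypothesized gap only for \emph{intervals} -- the truncations are around intervals, and two-interval factorization reduces to single-interval truncations -- so the hypothesis of the proposition is enough.

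For Stage~2, re-run the proof of \autoref{thm:strong_from_weak}, i.e.\ \autoref{lem:island} and the 3-stack \autoref{lem:gluing_2}, with the Stage-1 bound in place of \autoref{lem:localized_E}. Every truncation error $\exp(c_2|\sA\sB\sC|-c_3\ell\log\ell)$ is thereby replaced by $\mathrm{poly}_{\beta,q}(|\sA\sB\sC|)\,e^{-c_3\ell\log\ell}$, so the requirement that the middle block be near-linear, $|\sB|\ge c_4|\sA\sB\sC|/\log|\sA\sB\sC|$, relaxes to $|\sB|\ge|\sA\sB\sC|^{1-\delta}$ for any fixed $\delta>0$. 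Combined with the unconditional sub-exponential \textsf{KMS Weak Clustering} of \autoref{thm:kms-clustering-uncon}, this yields constants $c_1,c_2,\delta>0$ (depending on $\beta,q,c_{\mathsf{gap}},\ell_0$) with
\begin{equation}
\mathsf{Cov}_{\sA\sB\sC}\big(\tBE_{\sA\sB}[\vO],\tBE_{\sB\sC}[\vO]\big)\ \le\ \|\vO\|_{\vrho}^2\cdot c_1\exp\big(-|\sB|^{c_2}\big)
\end{equation}
over every tripartition of consecutive disjoint intervals with $\tfrac12|\sA|,\tfrac12|\sC|\ge|\sB|\ge|\sA\sB\sC|^{1-\delta}$.

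For Stage~3, put $g(L):=\inf\{\lambda_\sR:\ \sR\text{ an interval of length }L\}$; the hypothesis gives $g(L)\ge\ell_1^{-c_{\mathsf{gap}}}$ for $L<\ell_1$. For $L\ge\ell_1$ split an interval $\sR$ of length $L$ as $\sA\sB\sC$ with $|\sB|=\lceil L^{1-\delta}\rceil$ and $|\sA|=|\sC|=\lfloor(L-|\sB|)/2\rfloor$, so that (for $L$ above a threshold absorbed into the base case) $m:=|\sA\sB|=|\sB\sC|$ lies in $[\ell_1,L)$ and the separation hypothesis of Stage~2 is met. The martingale-type variance inequality underlying the recursion of~\cite{kastoryano2016commuting}, together with the exact identity $\CK_{\sA\sB}+\CK_{\sB\sC}-\CK_\sB=\CK_\sR$ -- precisely what prevents paying a multiplicative constant $>1$ per level -- gives
\begin{equation}
\frac{1}{\lambda_\sR}\ \le\ \frac{1+\rho(|\sB|)}{g(m)},\qquad \rho(s)=O_{\beta,q}\big(e^{-s^{c_2}/2}\big),
\end{equation}
hence $g(L)\ge(1+\rho(\lceil L^{1-\delta}\rceil))^{-1}g(m)$. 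Unfolding down to the base scale, $g(L)\ge\ell_1^{-c_{\mathsf{gap}}}\prod_m(1+\rho(\lceil m^{1-\delta}\rceil))^{-1}$ with $m$ ranging over the $O(\log L)$ dyadic scales in $[\ell_1,L)$; since $\sum_{k\ge0}\rho((\ell_1 2^k)^{1-\delta})$ converges to a value depending only on $\ell_1,\delta,\beta,q$, choosing $\ell_1=\ell_1(c_{\mathsf{gap}},\ell_0,\beta,q)$ large makes the product at least $\tfrac12$, so $\lambda_\sR\ge\tfrac12\ell_1^{-c_{\mathsf{gap}}}=:\gamma(c_{\mathsf{gap}},\ell_0,\beta,q)>0$ for every interval of length $\ge\ell_1$.

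The main obstacle is Stage~3: faithfully adapting the martingale/inclusion--exclusion bookkeeping of~\cite{kastoryano2016commuting} so the recursion comes out as $1/\lambda_\sR\le(1+\rho)/g(m)$ with the ``$+1$'', rather than a constant $>1$ per level, while having strong clustering only on well-separated tripartitions and only a \emph{spectral} (not completely positive) conditional expectation. A naive two-block split double-counts the Dirichlet form on $\sB$ and would accumulate to a polynomial loss over the $\Theta(\log L)$ levels, which is exactly the weakness that produced the inverse-polylog bound in \autoref{prop:log-gap}; the whole point of Stages~1--2 is to make the overlap block $\sB$ small enough (and the clustering error at its scale summable over the geometric sequence of scales) that this constant-factor bookkeeping can be carried out cleanly.
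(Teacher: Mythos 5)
Your three-stage plan (improve quasi-locality from the gap, re-derive strong clustering over polynomially-separated tripartitions, then re-run the recursion) matches the paper's structure. Stages 1--2 are essentially \autoref{lem:LRforCE} and \autoref{lem:strong-from-gap}, though the paper's route is both cleaner and safer: it proves an unconditional Lieb--Robinson bound for $e^{t\CK}$ (no gap input there) and then obtains strong clustering directly from the gap via finite-time truncation, bypassing both the re-run of the weak-to-strong machinery and the delicate point you paper over when transporting the gap hypothesis to the truncated generator $\CK_\sA^{(\sR)}$, which is detailed-balanced for the \emph{truncated} Gibbs state $\vrho^\sR$, a different object from what the hypothesis controls.

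The genuine gap is in Stage~3, and it is not cosmetic. You use a \emph{single} tripartition $\sR=\sA\sB\sC$ per level and claim a recursion $1/\lambda_\sR\le(1+\rho(|\sB|))/g(m)$. But the factorization-of-variance step produces Dirichlet forms on the two overlapping halves, and
\begin{equation}
\braket{\vO,-\CK_{\sA\sB}[\vO]}_{\vrho}+\braket{\vO,-\CK_{\sB\sC}[\vO]}_{\vrho}\ =\ \braket{\vO,-\CK_{\sR}[\vO]}_{\vrho}+\braket{\vO,-\CK_{\sB}[\vO]}_{\vrho}\ \le\ 2\,\braket{\vO,-\CK_{\sR}[\vO]}_{\vrho}.
\end{equation}
The identity $\CK_{\sA\sB}+\CK_{\sB\sC}-\CK_{\sB}=\CK_\sR$ that you invoke does \emph{not} let you subtract the $\CK_\sB$ term, because the recursion only provides upper bounds on $\mathsf{Var}_{\sA\sB}$ and $\mathsf{Var}_{\sB\sC}$ separately; a single tripartition inexorably costs a multiplicative factor of $2$ per level, which over $\Theta(\log L)$ levels accumulates to a \emph{polynomial} loss $L^{\Theta(1)}$, not the constant you assert. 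The mechanism that actually tames the $\CK_\sB$ term, in both~\cite{kastoryano2016commuting} and this paper (\autoref{lem:balanced_partitions} together with the averaging step in the proofs of \autoref{prop:log-gap} and \autoref{prop:bootstrapping}), is to \emph{amortize} over a collection of $s$ tripartitions with pairwise-disjoint buffers $\sB_i$, so that $\frac{1}{s}\sum_i\braket{\vO,-\CK_{\sB_i}[\vO]}_\vrho\le\frac{1}{s}\braket{\vO,-\CK_{\sR}[\vO]}_\vrho$ and the per-level factor becomes $(1+1/s)$. With $|\sB|\sim r/\log r$ one only fits $s=\Theta(\log r)$ disjoint buffers, whence the $\mathsf{polylog}$ loss of \autoref{prop:log-gap}; with $|\sB|\sim r^{c_{\mathsf{size}}}$ one gets $s\gtrsim r^{c_{\mathsf{count}}}$ and $\prod_\ell(1+r_\ell^{-c_{\mathsf{count}}})$ converges. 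Your diagnosis that the $\mathsf{polylog}$ loss in \autoref{prop:log-gap} comes from ``double-counting on $\sB$'' and your claim that the ``exact identity'' ``prevents paying a multiplicative constant $>1$ per level'' both misidentify the mechanism. Without the amortization over many disjoint $\sB_i$'s, Stage~3 gives at best $\lambda_\sR\ge L^{-\Theta(1)}$, which fails to establish the proposition.
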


As an immediate corollary of the propositions above \autoref{prop:log-gap} + \autoref{prop:bootstrapping}, put in conjunction with our proof of strong clustering for 1D systems over well-separated tripartitions \autoref{thm:strong_from_weak} + \autoref{thm:kms-clustering-uncon}, we conclude the generator $\CK = \sum_{a\in \CS_{[n]}^1}\CK_a$ admits a constant conditional spectral gap, for all sufficiently large intervals $\sR$. When applied to the full chain, we establish the main result of this section, \autoref{thm:K-is-gapped}.

\subsection{Spectral Gap Recursion I (Proof of \autoref{prop:log-gap})}

As mentioned, we follow the recursive proof strategy of \cite{kastoryano2016commuting}. The proof of \autoref{prop:log-gap} follows from the discussion in \cite[Theorem 23]{kastoryano2016commuting}, with modifications which we highlight as follows. To set up the recursion, we first require two lemmas. The first of which concerns a factorization of variance statement, which enables one (via strong clustering) to recursively relate the conditional variance with respect to intervals of the chain, to that of subintervals. 

\begin{lem}
    [Factorization of Variance, \cite{kastoryano2016commuting}]\label{lem:factorization1}
    In the context of \autoref{prop:log-gap}, consider any interval $\sI \subseteq [n]$ of the 1D spin chain, and any contiguous partition $\sI = \sA\cup\sB\cup \sC$ of the interval into disjoint intervals. Suppose there exists a positive constant $\epsilon>0$, s.t. for all observables $\vO:$
    \begin{equation}
        \mathsf{Cov}_{\sA\sB\sC}(\tBE_{\sA\sB}[\vO], \tBE_{\sB\sC}[\vO]) \leq \epsilon\cdot \mathsf{Var}_{\sA\sB\sC}(\vO).
    \end{equation}
    Then, the conditional variances approximately factorize:
    \begin{equation}
        \mathsf{Var}_{\sA\sB\sC}(\vO) \leq (1-2\epsilon)^{-1} \big( \mathsf{Var}_{\sA\sB}(\vO) + \mathsf{Var}_{\sB\sC}(\vO)\big).
    \end{equation}
\end{lem}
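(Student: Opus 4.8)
The plan is to run the classical factorization-of-variance computation, which transfers essentially verbatim here because the spectral conditional expectations are genuine orthogonal projections in the KMS inner product (\autoref{lem:explicit_CE}) and are consistent under inclusion. Abbreviate $P := \tBE_{\sA\sB}$, $Q := \tBE_{\sB\sC}$, $R := \tBE_{\sA\sB\sC}$. By the consistency lemma (using $\sA\sB,\sB\sC\subseteq\sA\sB\sC$) together with self-adjointness, $PR = RP = QR = RQ = R$. First I would reduce to the ``mean-zero on $\sA\sB\sC$'' case: set $\vO' := (1-R)[\vO]$, so that $\mathsf{Var}_{\sA\sB\sC}(\vO) = \|\vO'\|_{\vrho}^2$; since $(1-P)R = R - PR = 0$, one gets $(1-P)[\vO] = (1-P)[\vO']$ and hence $\mathsf{Var}_{\sA\sB}(\vO) = \|(1-P)[\vO']\|_{\vrho}^2$, and likewise $\mathsf{Var}_{\sB\sC}(\vO) = \|(1-Q)[\vO']\|_{\vrho}^2$. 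Using $PR = R$ and $R[\vO'] = 0$ one also checks $(1-R)P[\vO] = P[\vO']$ and $(1-R)Q[\vO] = Q[\vO']$, so the strong-clustering hypothesis collapses to $\Re\langle P[\vO'],Q[\vO']\rangle_{\vrho}\le \epsilon\,\|\vO'\|_{\vrho}^2$.

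The heart of the argument is then the elementary fact that the difference of two orthogonal projections has norm at most one: since $0\preceq P,Q\preceq 1$ we have $-1\preceq -Q \preceq P-Q \preceq P \preceq 1$, so $\|(P-Q)[\vO']\|_{\vrho}^2 \le \|\vO'\|_{\vrho}^2$. Expanding the left-hand side and inserting the covariance bound gives $\|P[\vO']\|_{\vrho}^2 + \|Q[\vO']\|_{\vrho}^2 \le (1+2\epsilon)\,\|\vO'\|_{\vrho}^2$. Finally, since $P,Q$ are projections, $\mathsf{Var}_{\sA\sB}(\vO) + \mathsf{Var}_{\sB\sC}(\vO) = 2\|\vO'\|_{\vrho}^2 - \|P[\vO']\|_{\vrho}^2 - \|Q[\vO']\|_{\vrho}^2 \ge (1-2\epsilon)\,\mathsf{Var}_{\sA\sB\sC}(\vO)$, and dividing by $1-2\epsilon>0$ yields the claim.

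I do not anticipate a genuine obstacle: the only structural inputs are that the $\tBE_\sR$ are KMS-orthogonal projections consistent under inclusion, which matches the commuting-case computation of \cite{kastoryano2016commuting}. The one point needing mild care is that the KMS inner product is complex, hence the conditional covariance is a priori complex-valued; I would read the hypothesis as a bound on its real part (equivalently, apply it after replacing $\vO$ by a suitable phase-adjusted or Hermitian combination), which is all the expansion of $\|(P-Q)[\vO']\|_{\vrho}^2$ actually uses. A secondary bookkeeping point is verifying that $(1-R)P = P(1-R) = P$ and $(1-R)Q=Q(1-R)=Q$ on the range of $1-R$, so that passing from $\vO$ to $\vO'$ leaves all three variance/covariance quantities in the inequality unchanged; this is immediate from $PR=RP=R$ and $QR=RQ=R$.
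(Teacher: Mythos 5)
Your proof is correct and is essentially the same argument as the paper's (both are the standard computation from \cite{kastoryano2016commuting}, Proposition~20): after reducing to $\tBE_{\sA\sB\sC}[\vO]=0$, the key step is the contraction bound $\|(\tBE_{\sA\sB}-\tBE_{\sB\sC})[\vO]\|_{\vrho}^2\le\|\vO\|_{\vrho}^2$ for a difference of consistent KMS-orthogonal projections, which the paper packages equivalently as $\|(1-\tBE_{\sA\sB\sC})(1-\tBE_{\sA\sB}-\tBE_{\sB\sC})[\vO]\|_{\vrho}^2\ge 0$. Your remark that only the real part of the (a priori complex) conditional covariance is used is a correct reading of the hypothesis and matches what the expansion actually requires.
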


 The following proof is essentially verbatim from \cite[Proposition 20]{kastoryano2016commuting}, and presented for completeness.

\begin{proof}

    WLOG we may consider $\tBE_{\sA\sB\sC}[\vO] = \vO$. Consider
    \begin{align}
       0 &\leq \|(1-\tBE_{\sA\sB\sC})(1 - \tBE_{\sA\sB} - \tBE_{\sB\sC})[\vO]\|_{\vrho}^2   \\ & =  \|(1-\tBE_{\sA\sB\sC})(1 - \tBE_{\sA\sB})[\vO]\|_{\vrho}^2 + \|(1-\tBE_{\sA\sB\sC})(1  - \tBE_{\sB\sC})[\vO]\|_{\vrho}^2 + \|(1-\tBE_{\sA\sB\sC})[\vO]\|_{\vrho}^2 \\
       & \quad + \quad  \text{All 6 Cross Terms} = \\
       &= \|(1-\tBE_{\sA\sB\sC})(1 - \tBE_{\sA\sB})[\vO]\|_{\vrho}^2 + \|(1-\tBE_{\sA\sB\sC})(1  - \tBE_{\sB\sC})[\vO]\|_{\vrho}^2 + \|(1-\tBE_{\sA\sB\sC})[\vO]\|_{\vrho}^2  \\
       & + 2 \braket{(1-\tBE_{\sA\sB\sC})\tBE_{\sA\sB}[\vO], (1-\tBE_{\sA\sB\sC})\tBE_{\sB\sC}[\vO]}_{\vrho} -  \|(1-\tBE_{\sA\sB\sC})[\vO]\|_{\vrho}^2.
    \end{align}
    Then, by rearranging the terms above, leveraging the fact that $\tBE$ is contractive:
    \begin{equation}
         \|(1-\tBE_{\sA\sB\sC})[\vO]\|_{\vrho}^2\leq  \|(1-\tBE_{\sA\sB})[\vO]\|_{\vrho}^2 +  \|(1-\tBE_{\sB\sC})[\vO]\|_{\vrho}^2 +  2 \mathsf{Cov}_{\sA\sB\sC}(\tBE_{\sA\sB}[\vO], \tBE_{\sB\sC}[\vO]),
    \end{equation}
    as desired.
\end{proof}

Next, we require a lemma on the partition of a generic interval of the 1D chain into a collection of ``balanced'' sub-intervals. This will later serve to inform the sizes of the sub-intervals during the recursive step. \autoref{lem:balanced_partitions} below is analogous to those used in \cite{martinelli1994approach, kastoryano2016commuting}, however, we require the middle interval to be large relative to the other two; due to the constraint on our version of strong clustering.

\begin{lem}
    [$b$ - Balanced Partitions of the Chain]\label{lem:balanced_partitions} Let $b:\BZ^+\rightarrow \BZ^+$ be an arbitrary function over the integers such that $\forall x\in \BZ^+: b(x)\leq x/100$. Then, there exists constants $\ell_b, c_b>0$ such that for every interval $\sR\subseteq [n]$ of the chain of length $|\sR|=r\geq \ell_b$, there exists a set $\{(\sA_i, \sB_i, \sC)_i\}_{i\in [s]}$ of $s$ tripartitions of $\sR$ satisfying
    \begin{enumerate}
        \item For each $i$, $\sA_i, \sB_i, \sC_i$ are disjoint and $\sA_i\cup \sB_i\cup \sC_i = \sR$.
        
        \item For each $i$, $\ceil{2r/3}\geq |\sA_i|, |\sC_i|\geq \floor{r/3}$, and  $2\cdot b(r)\geq |\sB_i|\geq b(r)$.

        \item The middle regions don't overlap, $|\sB_i\cap \sB_j| = 0$ for each $i\neq j$.
        
        \item The number of partitions $s\geq c_b\cdot r/b(r)$.
    \end{enumerate}
\end{lem}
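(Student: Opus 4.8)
The plan is to reduce the lemma to an elementary block-packing count. After relabelling sites I identify $\sR$ with the integer interval $\{1,\dots,r\}$ and write $m := b(r)$; taking the threshold $\ell_b$ large enough (in fact any $\ell_b \geq 100$ works uniformly in $b$, since $b$ being $\BZ^+$-valued already forces $r \geq 100$ for the hypothesis $b(r)\leq r/100$ to be non-vacuous) guarantees $1 \leq m \leq r/100$.

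Next I would single out the ``central window'' consisting of the sites $\floor{r/3}+1,\dots,r-\floor{r/3}$, which has length $w = r - 2\floor{r/3} \geq r/3$, and simply tile a prefix of it by $s := \floor{w/m}$ consecutive, pairwise disjoint blocks of length exactly $m$ — there is no need to exploit the slack up to $2b(r)$, uniform size $b(r)$ suffices. Concretely, the $i$-th block is $\sB_i := \{\floor{r/3}+(i-1)m+1,\dots,\floor{r/3}+im\}$, and I set $\sA_i := \{1,\dots,\floor{r/3}+(i-1)m\}$ and $\sC_i := \sR\setminus(\sA_i\cup\sB_i)$. Items 1 and 3, and the constraint $|\sB_i| = b(r) \in [b(r),2b(r)]$ in item 2, are then immediate from the construction.

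The remaining verification is pure bookkeeping. Since $i\geq 1$, the left block satisfies $|\sA_i| = \floor{r/3}+(i-1)m \geq \floor{r/3}$; since the blocks fit inside the window, i.e. $im \leq sm \leq w$, the right block satisfies $|\sC_i| = r - \floor{r/3} - im \geq r - \floor{r/3} - w = \floor{r/3}$; and the two upper bounds $|\sA_i|,|\sC_i| \leq \ceil{2r/3}$ follow from $|\sA_i|+|\sB_i|+|\sC_i| = r$ together with the elementary identity $r - \floor{r/3} = \ceil{2r/3}$. For item 4 I estimate $s = \floor{w/m} \geq w/m - 1 \geq r/(3m) - 1 \geq \frac{r}{4m}$, where $m \leq r/100$ is used to absorb the $-1$; hence the statement holds with $c_b = \tfrac14$ (any constant below $\tfrac13$ works).

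I do not expect a genuine obstacle here: the entire content is the single compatibility check that the ``balanced'' requirement — $|\sA_i|$ and $|\sC_i|$ both lying in $[\floor{r/3},\ceil{2r/3}]$ — can coexist with packing $\Theta(r/b(r))$ pairwise disjoint middle blocks each of size $b(r)$. Confining all the $\sB_i$ to the central window of length $\geq r/3$ achieves both simultaneously, and the $1/100$ slack in the hypothesis $b(x)\leq x/100$ is precisely what guarantees that window comfortably holds the required number of blocks.
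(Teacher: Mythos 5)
Your construction is essentially the same as the paper's: sweep a block of size $b(r)$ across the middle third of the interval and take $\sA_i$, $\sC_i$ to be the prefix and suffix left over. The only cosmetic differences are that you anchor the sweep to the window $[\floor{r/3}+1,\, r-\floor{r/3}]$ and use $s=\floor{w/m}$ (giving the slightly cleaner $c_b=1/4$), whereas the paper takes $s=\floor{r/(10\,b(r))}$; the bookkeeping identity $r-\floor{r/3}=\ceil{2r/3}$ you invoke is exactly what the paper's size bounds implicitly rely on. Both are correct.
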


\begin{proof}

    Let the interval $\sR$ be of size $r$, WLOG on sites $[1, r]$. We define a sequence of tripartitions, where the $i$th tripartition places the intervals $\sA_i, \sB_i, \sC_i$ at locations:
    \begin{align}
        &\sA_i = [1, \floor{r/3} + i \cdot b(r)], \\ &\sB_i = [\floor{r/3} + i \cdot b(r) + 1, \floor{r/3} + (i+1) \cdot b(r)], \\ &\sC_i = [\floor{r/3} + (i+1) \cdot b(r) + 1, r].
    \end{align}

    By design, for each partition we have (1) the covering constraint $\sA_i\cup\sB_i\cup\sC_i=\sR$ and (3) the overlap constraint $\sB_i\cap \sB_j = \emptyset$ for $i\neq j$. We also have the bound on the size of $|\sB_i|=b(r)$. So long as the total number of partitions is $s = \floor{r/(10\cdot b(r))}$, we also have that the sizes satisfy
    \begin{equation}
        |\sA_i|\geq \floor{r/3} \quad \text{and}\quad  |\sC_i| \geq \floor{\frac{2r}{3}} - \frac{r}{10} - b(r) - 1 \geq r/3
    \end{equation}
    where we recall $b(x)\leq x/10$, and we assumed $r\geq 30$. This finishes (2) on the sizes of the intervals, and guarantees the bound (4) on the number of tripartitions $s$. 
\end{proof}

We are now equipped with the relevant tools to prove \autoref{prop:log-gap}, on the inverse polylogarithmic conditional spectral gap. 

\begin{proof}

    [of \autoref{prop:log-gap}] Fix $|\sR|=r$ and a global operator $\vO$, where WLOG $\tBE_{\sA\sB\sC}[\vO]=0$. We begin by applying the strong clustering assumption in the proposition statement. For sufficiently large $r$, and any decomposition of the interval $\sR = \sA\sB\sC$ with $|\sB| = \ceil{c_1\cdot r / \log r}$ and $|\sA|, |\sC|\geq r/4$, we have the following factorization of variance statement (\autoref{lem:factorization1})
    \begin{align}
        \mathsf{var}_\sR[\vO] & \leq \bigg(1-2\cdot \epsilon(|\sB|)\bigg)^{-1}\cdot \bigg(\mathsf{var}_{\sA\sB}[\vO] +\mathsf{var}_{\sB\sC}[\vO]\bigg) \\
    `&\leq \bigg(1-2\cdot \epsilon(|\sB|)\bigg)^{-1}\cdot \bigg(\frac{\braket{\vO, -\CK_{\sA\sB}[\vO]}_{\vrho}}{\lambda_{\sA\sB}} + \frac{\braket{\vO, -\CK_{\sB\sC}[\vO]}_{\vrho}}{\lambda_{\sB\sC}}\bigg)\\
        & \leq  \bigg(1-2\cdot c_2\cdot \exp\bigg[-(c_1\cdot r/\log r)^{c_3}\bigg]\bigg)^{-1} \cdot \bigg(\frac{\braket{\vO, -\CK_{\sA\sB\sC}[\vO]}_{\vrho} + \braket{\vO, -\CK_{\sB}[\vO]}_{\vrho}}{\min (\lambda_{\sA\sB}, \lambda_{\sB\sC})} \bigg).\label{eq:factorization-recursion}
    \end{align}
    Where in the second line we leveraged the definition of the conditional spectral gap on the subintervals $\sA\sB$ and $\sB\sC$. The issue in the above, akin to \cite{martinelli1994approach, kastoryano2016commuting}, is the additive term dependent on $\CK_\sB$. To proceed, we consider the collection of balanced tripartitions $\{(\sA_i, \sB_i, \sC)_i\}_{i\in [s]}$  with parameter $b(x) = \ceil{c_1\cdot x / \log x}$ as constructed in \autoref{lem:balanced_partitions}; where furthermore $s\geq d\cdot \log r$ for an appropriate constant $d$ and sufficiently large $r$. We amortize (or average) the bound in \eqref{eq:factorization-recursion} over the said collection:
    \begin{align}
         \mathsf{var}_\sR[\vO] &\leq  \frac{\bigg(1-2\cdot c_2\cdot e^{-(c_1\cdot r/\log r)^{c_3}}\bigg)^{-1}}{\min_i(\lambda_{\sA_i\sB_i}, \lambda_{\sB_i\sC_i})} \cdot \bigg(\braket{\vO, -\CK_{\sA\sB\sC}[\vO]}_{\vrho} + \frac{1}{s}\sum_i^s \braket{\vO, -\CK_{\sB_i}[\vO]}_{\vrho}\bigg) \\
         &\leq \bigg(1+\frac{2}{d\cdot \log r}\bigg) \cdot \frac{1}{\min_i(\lambda_{\sA_i\sB_i}, \lambda_{\sB_i\sC_i})} \cdot \braket{\vO, -\CK_{\sR}[\vO]}_{\vrho} ,
    \end{align}
    (again) for $r$ larger than an explicit integer constant length-scale $z$, and where we denote $\lambda_\ell:=\min_{|\sR|\leq \ell}\lambda_\sR$. With $r_i = (4/3)^i\cdot z$, we then have the following recursion for $i\geq 1$, for a sufficiently large constant (and WLOG integer) $K>1$:
    \begin{equation}
        \lambda_{r_i} \geq \lambda_{r_{i-1}}\cdot  \bigg(\frac{i}{K+i} \bigg) \geq \frac{i!\cdot K!}{(K+i)!}\cdot \lambda_{z} \geq \lambda_{z} \cdot \frac{1}{i^K}.
    \end{equation}
    To conclude the proof, for any $r\geq z$ one can pick the smallest $j\in \BZ^+$ such that $r_j \geq r\Rightarrow j\leq a\log (r/z)$ for an appropriate constant $a$. For $r\geq \ell_0=z^2$, we then have the bound:
    \begin{equation}
        \lambda_{r} \geq \lambda_{z}\cdot a^{-K} \cdot \frac{1}{\log^K (r/\ell_1)} \geq \lambda_{z}\cdot (2a)^{-K} \cdot \frac{1}{\log^K r} ,
    \end{equation}
    appropriately labeling the prefactor $\lambda_0$ then gives the advertised bound. 
\end{proof}

\subsection{Strong Clustering from a Better-Than-Linear Spectral Gap}
\label{sec:clustering-from-gap}
To establish \autoref{prop:bootstrapping}, we first need to derive an improved form of strong clustering for the conditional expectations $\{\tBE_\sR\}_\sR$, defined by the infinite-time limit of the evolution of $\CK$. This improvement is presented in \autoref{lem:strong-from-gap}, proved at the bottom of this subsection. To do so, we first prove two lemmas on the finite-time truncations of the conditional expectations. The first lemma below is a simple statement on the error of the finite time truncation as a function of the a priori spectral gap. 

\begin{lem}
    [Finite-Time Truncation of Conditional Expectations]\label{lem:truncation_expectations} In the context of \autoref{prop:bootstrapping}, for any subset $\sR\subseteq[n]$, if the conditional spectral gap of $\CK_\sR$ is at least $\lambda_\sR$, then the conditional expectation on $\sR$ can be truncated to time $t>0$ with error:
    \begin{equation}
        \bigg\|\big(\tBE_{\sR}-\tBE_{\sR, t}\big)[\vO]\bigg\|_{\vrho} \leq e^{-\lambda_\sR\cdot t}\cdot \|\vO\|_{\vrho}
    \end{equation}
\end{lem}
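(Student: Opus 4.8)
The plan is to read the bound off from the spectral decomposition of $\CK_\sR$. By \autoref{lem:finite_time_ce}, the generator $\CK_\sR$ is self-adjoint with respect to the KMS inner product $\braket{\cdot,\cdot}_{\vrho}$ and has real, non-positive spectrum; since $\CB(\CH)$ is finite-dimensional it therefore admits a KMS-orthonormal eigenbasis, with eigenvalues $\{-\mu_k\}_k$, $\mu_k\ge 0$. Write $\tBE_\sR$ for the orthogonal projection onto $\ker\CK_\sR=\mathrm{span}\{\text{eigenvectors with }\mu_k=0\}$; this is exactly the infinite-time limit \eqref{eq:inf_time}, and $1-\tBE_\sR$ is the orthogonal projection onto its complement, spanned by the eigenvectors with $\mu_k>0$.

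The first step is to split $\vO=\tBE_\sR[\vO]+(1-\tBE_\sR)[\vO]$. Since $\CK_\sR\tBE_\sR=0$ we have $e^{t\CK_\sR}\tBE_\sR=\tBE_\sR$, and since $\tBE_\sR(1-\tBE_\sR)=0$,
\[
\big(\tBE_\sR-\tBE_{\sR,t}\big)[\vO] \;=\; \big(\tBE_\sR-e^{t\CK_\sR}\big)(1-\tBE_\sR)[\vO] \;=\; -\,e^{t\CK_\sR}(1-\tBE_\sR)[\vO].
\]
The second step is to observe that on the range of $1-\tBE_\sR$ the conditional variance $\mathsf{Var}_\sR$ coincides with $\|\cdot\|_{\vrho}^2$, so the definition of the conditional spectral gap forces $\braket{\vO,-\CK_\sR[\vO]}_{\vrho}\ge \lambda_\sR\|\vO\|_{\vrho}^2$ there, i.e. every nonzero eigenvalue $\mu_k$ of $-\CK_\sR$ satisfies $\mu_k\ge\lambda_\sR$. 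Functional calculus for the KMS-self-adjoint operator $\CK_\sR$ then gives the KMS-induced superoperator-norm bound $\big\|e^{t\CK_\sR}(1-\tBE_\sR)\big\|_{\vrho}=\max_{\mu_k\ge\lambda_\sR}e^{-\mu_k t}\le e^{-\lambda_\sR t}$.

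Combining the two steps, and using $\|1-\tBE_\sR\|_{\vrho}\le 1$ since $\tBE_\sR$ is an orthogonal projection,
\[
\big\|\big(\tBE_\sR-\tBE_{\sR,t}\big)[\vO]\big\|_{\vrho}\;\le\;e^{-\lambda_\sR t}\,\big\|(1-\tBE_\sR)[\vO]\big\|_{\vrho}\;\le\;e^{-\lambda_\sR t}\,\|\vO\|_{\vrho},
\]
which is the claim. There is no genuine obstacle here: the only input beyond elementary linear algebra is that $\CK_\sR$ is KMS-self-adjoint with non-positive spectrum — so that it is unitarily diagonalizable in the KMS geometry and the scalar estimate $e^{-\mu t}\le e^{-\lambda_\sR t}$ lifts to an operator-norm estimate — and that is precisely \autoref{lem:finite_time_ce} together with the definition of the conditional spectral gap.
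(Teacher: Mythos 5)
Your proof is correct. You take a slightly different route from the paper: you diagonalize $\CK_\sR$ in a KMS-orthonormal eigenbasis, show every nonzero eigenvalue of $-\CK_\sR$ is at least $\lambda_\sR$ via the gap definition restricted to the range of $1-\tBE_\sR$, and then apply functional calculus to read off $\|e^{t\CK_\sR}(1-\tBE_\sR)\|_{\vrho}\le e^{-\lambda_\sR t}$. The paper instead differentiates the squared error $\|(\tBE_\sR-\tBE_{\sR,t})[\vO]\|_{\vrho}^2$ in $t$, identifies the derivative with twice the Dirichlet form of $\tBE_{\sR,t}[\vO]$, invokes the gap definition to obtain the differential inequality $\frac{\rd}{\rd t}\|\cdot\|^2\le -2\lambda_\sR\|\cdot\|^2$, and integrates (a Gr\"onwall argument). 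In the present finite-dimensional, KMS-self-adjoint setting the two arguments are equivalent; the spectral route you take makes the geometry more explicit (orthogonal projection onto the kernel, eigenvalue gap), while the paper's differential-inequality route is the one that generalizes more readily if self-adjointness or diagonalizability were weakened. Both use the same two inputs: KMS-self-adjointness with non-positive spectrum (\autoref{lem:finite_time_ce}) and the definition of the conditional spectral gap.
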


\begin{proof}
    Follows from a straightforward calculation, following the definition of the conditional spectral gap:
    \begin{align}
        \frac{\rd }{\rd t}\bigg\|\big(\tBE_{\sR}-\tBE_{\sR, t}\big)[\vO]\bigg\|_{\vrho}^2 = 2\braket{\tBE_{\sR, t}[\vO], \CK_\sR\circ \tBE_{\sR, t}[\vO]} \leq -2\lambda_{\mathsf{R}}\cdot \bigg\|\big(\tBE_{\sR}-\tBE_{\sR, t}\big)[\vO]\bigg\|_{\vrho}^2.
    \end{align}
\end{proof}

The second lemma below quantifies the speed in which information travels under the evolution of $\CK$. 

\begin{lem}
    [A Lieb-Robinson Bound for $\tBE$]\label{lem:LRforCE} In the context of \autoref{prop:bootstrapping}, let $\sA\cup\sB\cup\sC$ be three consecutive disjoint intervals in the 1D chain. Then, for any operator $\vO$ which is trivial on $\sB\sC$, and any time $t\geq 0$, the finite-time conditional expectations on $\sA\sB$ and $\sA\sB\sC$ differ by 
    \begin{equation}
        \bigg\|\bigg(\tBE_{\sA\sB,t}-\tBE_{\sA\sB\sC,t}\bigg)\tBE_{\sB\sC}[\vO]\bigg\|_{\vrho} \leq \|\vO\|_{\vrho}\cdot a_{\mathsf{LB}}\cdot \exp\bigg[ - c_\mathsf{LB} \bigg(|\sB| - v_{\mathsf{LB}}\cdot t\bigg)\bigg].
    \end{equation}
    where $a_{\mathsf{LB}}, c_{\mathsf{LB}}, v_{\mathsf{LB}}>0$ are constants that depend only on $\beta, q$.
\end{lem}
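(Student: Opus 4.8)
The goal is to bound $\|(\tBE_{\sA\sB,t}-\tBE_{\sA\sB\sC,t})\tBE_{\sB\sC}[\vO]\|_\vrho$ for $\vO$ trivial on $\sB\sC$. The idea: $\tBE_{\sB\sC}[\vO]$ is an operator trivial on $\sB\sC$ (by the kernel characterization of $\CK_{\sB\sC}$), hence trivial on $\sB$. The generators $\CK_{\sA\sB,t}$ and $\CK_{\sA\sB\sC,t}$ differ only in the jump operators supported on $\sC$ (those in $\CS^1_\sC$), so the difference of the two evolutions is driven purely by the $\CK_a$ for $a \in \CS^1_\sC$. Starting from an operator trivial on $\sB$, these far-away jumps can only affect the operator after information propagates through the buffer region $\sB$, which takes a time proportional to $|\sB|$ — this is the Lieb–Robinson mechanism.

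I would proceed as follows. First, write $\tBE_{\sB\sC}[\vO]=:\vP$, trivial on $\sB\sC$. Apply Duhamel between the two finite-time semigroups generated by $\CK_{\sA\sB}$ and $\CK_{\sA\sB\sC}$:
\begin{align}
    \big(\tBE_{\sA\sB,t}-\tBE_{\sA\sB\sC,t}\big)[\vP] = -\int_0^t e^{(t-s)\CK_{\sA\sB}}\Big(\sum_{a\in\CS^1_\sC}\CK_a\Big)\,e^{s\CK_{\sA\sB\sC}}[\vP]\,\rd s.
\end{align}
Since $e^{(t-s)\CK_{\sA\sB}}$ is a spectral conditional expectation with KMS-induced norm $1$ (\autoref{lem:norm_E}), and $\|\CK_a\|_\vrho \le c_\beta$ (\autoref{lem:spectral-radius}), the integrand is controlled by $\sum_{a\in\CS^1_\sC}\|\CK_a\, e^{s\CK_{\sA\sB\sC}}[\vP]\|_\vrho \le c_\beta \sum_{a\in \CS^1_\sC}\|[\vA^a, e^{s\CK_{\sA\sB\sC}}[\vP]]\|_\vrho$ (using the commutator form of the Dirichlet form, \autoref{lem:dirichlet-ck}, to bound $\|\CK_a[\cdot]\|_\vrho$ in terms of commutators via Cauchy–Schwarz on $\braket{\CK_a[\vY],\CK_a[\vY]}_\vrho$). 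So it suffices to show that the evolved operator $e^{s\CK_{\sA\sB\sC}}[\vP]$ remains (exponentially in $|\sB|-v\cdot s$) close to an operator trivial on a shrinking interior buffer — i.e., a Lieb–Robinson bound for the $\CK$-semigroup restricted to $\sA\sB\sC$. This is where I would invoke the quasi-locality machinery: truncate each $\CK_a$ appearing in $\CK_{\sA\sB\sC}$ to radius $\sim r$ using \autoref{lem:strictly_local}, so each term is strictly $r$-local up to error $c_1 e^{-c_2 r\log r}$; then run a standard light-cone / interaction-picture argument (à la Lieb–Robinson for general quasi-local Liouvillians) on the truncated, strictly-local generator, whose interaction strength is bounded by the $c_\beta$ from complex-time-evolution convergence in 1D (\autoref{lem:locality_complextime}). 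The operator $\vP$ is trivial on the width-$|\sB|$ middle region, and the jumps $\CS^1_\sC$ live at distance $\ge |\sB|$ away, so the commutator $\|[\vA^a, e^{s\CK_{\sA\sB\sC}}[\vP]]\|_\vrho$ is bounded by $\|\vP\|_\vrho$ times $a_{\mathsf{LB}}\exp[-c_{\mathsf{LB}}(|\sB|-v_{\mathsf{LB}}s)]$. Integrating over $s\in[0,t]$ and absorbing the $\int_0^t e^{c_{\mathsf{LB}}v_{\mathsf{LB}}s}\rd s$ factor (plus the $|\CS^1_\sC|\le 4^q|\sC|\le 4^q n$ prefactor, harmlessly into $a_{\mathsf{LB}}$ since the exponential wins for $|\sB|\gtrsim v t$; one can also keep only the relevant boundary sites of $\sC$) yields the claimed bound, possibly after enlarging $v_{\mathsf{LB}}$ to dominate the $e^{c_{\mathsf{LB}}v_{\mathsf{LB}}t}$ from integration.

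The main obstacle is establishing the Lieb–Robinson bound for the $\CK$-semigroup itself. Unlike a physical Lindbladian, $\CK$ is not completely positive, so one cannot appeal to standard CP-map Lieb–Robinson results off the shelf; however, the argument only needs the algebraic structure — that $\CK_a$ acts on a bounded-radius neighborhood (after truncation), has bounded KMS-induced norm, and annihilates operators trivial on $\Supp(\vA^a)$ — which suffices for the Hastings–Koma / Nachtergaele–Sims-style commutator expansion. The subtlety is that the natural norm here is the KMS-weighted norm rather than the operator norm, so I would run the whole light-cone estimate in $\|\cdot\|_\vrho$, using \autoref{lem:loose_AO} (Hölder in KMS norm) in place of submultiplicativity of the operator norm wherever a product of a local operator and an evolved operator appears; this is exactly the trade-off that forces the 1D complex-time-convergence input. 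A secondary nuisance is bookkeeping the $|\partial\sR|$-type factors from switching between $\vrho$ and restricted Gibbs states (\autoref{lem:compare_measure}) when invoking truncation, but these contribute only fixed $\beta,q$-dependent constants and fold into $a_{\mathsf{LB}}$.
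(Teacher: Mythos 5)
Your overall plan is aligned with the paper's: Duhamel between $\CK_{\sA\sB}$ and $\CK_{\sA\sB\sC}$, use $\|e^{(t-s)\CK_{\sA\sB}}\|_{\vrho}\le 1$, bound $\CK_a[\cdot]$ by KMS-commutators, and appeal to an annulus decomposition so the far-away jumps in $\sC$ take time $\gtrsim\labs{\sB}$ to "see" $\tBE_{\sB\sC}[\vO]$. But the sentence that carries all the weight — that $\CK_a$ having bounded KMS-norm, bounded range after truncation, and the kernel property \emph{suffices for the Hastings--Koma / Nachtergaele--Sims commutator expansion} — is precisely the step that does \emph{not} go through, and it is exactly the obstruction the paper singles out. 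The iterated-Duhamel/path expansion unavoidably produces intermediate exponentials $e^{u\,\CK'}$ where $\CK'$ is a sub-sum of annulus terms $\delta\CK_a^{\ell}=\CK_a^{(\ell)}-\CK_a^{(\ell-1)}$. Those terms are \emph{not} individually KMS-self-adjoint with nonpositive spectrum (they are differences of generators detailed-balanced against \emph{different} restricted Gibbs states), so $\|e^{u\CK'}\|_{\vrho}$ has no a-priori contraction bound; the standard LR proofs you invoke all rely on exactly this contractivity (unitarity for Hamiltonians, CP-unitality for Lindbladians), and your approach has no replacement for it.

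The missing idea in the paper's proof is the self-avoiding-path regrouping of~\cite{chen2021operator}: one organizes the expansion so that every intermediate exponential is of the form $e^{u\,\CK_{<w}}$, where $\CK_{<w}$ collects \emph{almost all} annulus terms to the left of a cut $w$. The key point is that $\CK_{<w}$ differs from a genuine $\sum_a\CK_a$ (which \emph{is} a KMS-contraction generator) by a perturbation whose total KMS-norm is a system-size-independent constant $c_1$; Lemma~\ref{lem:trotter} then gives $\|e^{u\CK_{<w}}\|_{\vrho}\le e^{c_1\labs{u}}$, which is what lets the path sum converge. Your proposal treats this as "a secondary nuisance ... folded into $a_{\mathsf{LB}}$", but without the regrouping there is no bound at all on the intermediate exponentials — a naive sub-sum of $\delta\CK_a^\ell$'s gives only $\|e^{u\CK'}\|_{\vrho}\le e^{\labs{u}\sum\|\delta\CK_a^\ell\|_{\vrho}}$, which is extensive in $\labs{\sA\sB\sC}$ and destroys the light cone. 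Relatedly, globally truncating each $\CK_a$ to a ball of radius $r$ (as you suggest) does not yield a single restricted Gibbs state to compare against: the truncation balls overlap, so the "$\labs{\partial\sR}$-type factors" you mention are not a fixed constant and cannot simply be absorbed. Recognizing the non-CP obstruction is not the same as resolving it; the self-avoiding path regrouping with the $\CK_{<w}$ bound is the one ingredient that makes the lemma true, and it is absent from your proposal.
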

Since the proof is somewhat tangent to the present discussion, it is deferred to~\autoref{sec:proofLRCE}. Intuitively, if the spectral gap of $\CK_{\sA\sB\sC}$ is any better than the inverse of the size of $\sB$, then \autoref{lem:LRforCE} suggests that the conditional expectations on $\sA\sB\sC$ and $\sA\sB$ for operators trivial on $\sB\sC$ should match. This implies a ``gluing'' statement for conditional expectations akin to those developed in \autoref{section:clustering}. This gluing statement, in turn, implies an improvement to strong clustering, at least for suitable well-separated tripartitions. 

\begin{lem}
    [Strong Clustering from an a priori Spectral Gap]\label{lem:strong-from-gap} 
    In the context of \autoref{prop:bootstrapping}, assume there exists constants $\ell_0>1$, $1>c_{\mathsf{gap}}\geq 0$, s.t. the conditional spectral gap of $\CK_\sR$ on every interval $\sR\subseteq [n]$ of length $|\sR|\geq \ell_0$ satisfies the lower bound $\lambda_\sR \geq |\sR|^{-c_{\mathsf{gap}}}.$
    Then, there exists constants $\ell_1, c_{\mathsf{decay}}, 1>c_{\mathsf{size}}\geq 0$ as a function only of $(\ell_0, v_{\mathsf{LB}}, c_\mathsf{LB}, \beta, q)$, such that for every consecutive disjoint intervals $\sA, \sB, \sC$ where $|\sB|\geq |\sA\sB\sC|^{c_{\mathsf{size}}} \geq \ell_1$ and every operator $\vO$:
    \begin{equation}
        \mathsf{Cov}_{\sA\sB\sC}(\tBE_{\sA\sB}[\vO], \tBE_{\sB\sC}[\vO])  \leq \|\vO\|_{\vrho}^2\cdot \exp\big[- |\sB|^{c_{\mathsf{decay}}}\big].
    \end{equation}
\end{lem}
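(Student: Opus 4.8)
The plan is to reduce \textsf{KMS Strong Clustering} to a ``gluing'' estimate $\tBE_{\sA\sB}\tBE_{\sB\sC}\approx\tBE_{\sA\sB\sC}$ and then prove that estimate by interpolating through finite-time truncations of $\tBE$ and invoking the Lieb--Robinson bound \autoref{lem:LRforCE}. Exactly as in \eqref{eq:gluing-to-clustering}, since $\tBE$ is projective and consistent under inclusion (so $\tBE_{\sA\sB}\tBE_{\sA\sB\sC}=\tBE_{\sA\sB\sC}$ and, by self-adjointness, $\tBE_{\sA\sB\sC}\tBE_{\sB\sC}=\tBE_{\sA\sB\sC}$), one has
\[
  \mathsf{Cov}_{\sA\sB\sC}\big(\tBE_{\sA\sB}[\vO],\tBE_{\sB\sC}[\vO]\big)=\braket{\vO,\,(\tBE_{\sA\sB}-\tBE_{\sA\sB\sC})\,\tBE_{\sB\sC}[\vO]}_{\vrho},
\]
so by Cauchy--Schwarz it suffices to control $\lnorm{(\tBE_{\sA\sB}-\tBE_{\sA\sB\sC})[\vO']}_{\vrho}$, where $\vO':=\tBE_{\sB\sC}[\vO]$ satisfies $\lnorm{\vO'}_{\vrho}\le\lnorm{\vO}_{\vrho}$ and, crucially, is trivial on $\sB\sC$ (its image lies in $\CB(\CH_{\bar{\sR}})$ for $\sR=\sB\sC$), hence $\tBE_{\sB\sC}[\vO']=\vO'$.

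For a cutoff time $t>0$ to be chosen, split
\[
  \tBE_{\sA\sB}-\tBE_{\sA\sB\sC}=\big(\tBE_{\sA\sB}-\tBE_{\sA\sB,t}\big)+\big(\tBE_{\sA\sB,t}-\tBE_{\sA\sB\sC,t}\big)+\big(\tBE_{\sA\sB\sC,t}-\tBE_{\sA\sB\sC}\big).
\]
If $\ell_1\ge\ell_0$ then $|\sA\sB|,|\sA\sB\sC|\ge|\sB|\ge\ell_1\ge\ell_0$, so the a priori gap hypothesis applies and \autoref{lem:truncation_expectations} bounds the first and third terms, applied to $\vO'$, by $e^{-|\sA\sB|^{-c_{\mathsf{gap}}}t}\lnorm{\vO}_{\vrho}$ and $e^{-|\sA\sB\sC|^{-c_{\mathsf{gap}}}t}\lnorm{\vO}_{\vrho}$ respectively, hence both by $e^{-|\sA\sB\sC|^{-c_{\mathsf{gap}}}t}\lnorm{\vO}_{\vrho}$. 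For the middle term, since $\tBE_{\sB\sC}[\vO']=\vO'$, \autoref{lem:LRforCE} gives $a_{\mathsf{LB}}\exp\!\big[-c_{\mathsf{LB}}(|\sB|-v_{\mathsf{LB}}t)\big]\lnorm{\vO}_{\vrho}$.

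It remains to optimize the cutoff. Choosing $t=|\sB|/(2v_{\mathsf{LB}})$, the middle term becomes $\le a_{\mathsf{LB}}e^{-c_{\mathsf{LB}}|\sB|/2}\lnorm{\vO}_{\vrho}$ and the two truncation terms become $\le e^{-|\sB|\,|\sA\sB\sC|^{-c_{\mathsf{gap}}}/(2v_{\mathsf{LB}})}\lnorm{\vO}_{\vrho}$. To turn the exponent $|\sB|\,|\sA\sB\sC|^{-c_{\mathsf{gap}}}$ into a genuine stretched exponential in $|\sB|$, impose $|\sB|\ge|\sA\sB\sC|^{c_{\mathsf{size}}}$ with $c_{\mathsf{size}}:=c_{\mathsf{gap}}/(1-c_{\mathsf{decay}})$; then $|\sA\sB\sC|\le|\sB|^{1/c_{\mathsf{size}}}$ yields $|\sB|\,|\sA\sB\sC|^{-c_{\mathsf{gap}}}\ge|\sB|^{1-c_{\mathsf{gap}}/c_{\mathsf{size}}}=|\sB|^{c_{\mathsf{decay}}}$. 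The hypothesis $c_{\mathsf{gap}}<1$ is precisely what permits, e.g., the choice $c_{\mathsf{decay}}=(1-c_{\mathsf{gap}})/2\in(0,1)$ while keeping $c_{\mathsf{size}}=2c_{\mathsf{gap}}/(1+c_{\mathsf{gap}})<1$, so the size constraint is non-vacuous. Collecting the three contributions and feeding back into the first display,
\[
  \mathsf{Cov}_{\sA\sB\sC}\big(\tBE_{\sA\sB}[\vO],\tBE_{\sB\sC}[\vO]\big)\le\lnorm{\vO}_{\vrho}^{2}\Big(2\,e^{-|\sB|^{c_{\mathsf{decay}}}/(2v_{\mathsf{LB}})}+a_{\mathsf{LB}}\,e^{-c_{\mathsf{LB}}|\sB|/2}\Big),
\]
and absorbing the constants $v_{\mathsf{LB}},a_{\mathsf{LB}},c_{\mathsf{LB}}$ by shrinking $c_{\mathsf{decay}}$ slightly and taking $\ell_1$ large enough gives the advertised $\le\lnorm{\vO}_{\vrho}^{2}\,e^{-|\sB|^{c_{\mathsf{decay}}}}$.

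The substantive technical input is \autoref{lem:LRforCE}, whose proof is deferred to \autoref{sec:proofLRCE}; granting it, the only delicate point in the present argument is the cutoff-time trade-off. The truncation error $e^{-\lambda_\sR t}$ wants $t$ large and is weakest when $\lambda_{\sA\sB\sC}\sim|\sA\sB\sC|^{-c_{\mathsf{gap}}}$ is small, while the Lieb--Robinson factor $e^{-c_{\mathsf{LB}}(|\sB|-v_{\mathsf{LB}}t)}$ wants $t$ small; the exponent $c_{\mathsf{size}}$ is exactly the balance point of this trade-off, and $c_{\mathsf{size}}<1$ holds precisely because the assumed gap is better than inverse-linear.
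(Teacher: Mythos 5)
Your proof is correct and follows essentially the same strategy as the paper: reduce strong clustering via projectivity to bounding $\lnorm{(\tBE_{\sA\sB}-\tBE_{\sA\sB\sC})\tBE_{\sB\sC}[\vO]}_{\vrho}$, insert finite-time truncations $\tBE_{\cdot,t}$ bounded by \autoref{lem:truncation_expectations}, control the middle difference by the Lieb--Robinson bound \autoref{lem:LRforCE}, and choose $t$ to balance the two error sources. The only (inessential) deviation is your cutoff $t=|\sB|/(2v_{\mathsf{LB}})$ versus the paper's $t=|\sA\sB\sC|^{c_{\mathsf{gap}}}|\sB|^{\delta}$; both lead to $c_{\mathsf{size}}=2c_{\mathsf{gap}}/(1+c_{\mathsf{gap}})$ and an admissible $c_{\mathsf{decay}}$, and your bookkeeping is accurate.
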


In words, any ``better-than-inverse-linear'' conditional spectral gap implies a version of strong clustering, but only over suitable tripartitions where the middle region is at least polynomially large, relative to the total size. One should compare this to \autoref{thm:strong_from_weak}, where the middle region must be of size $|\sA\sB\sC| / \log |\sA\sB\sC|$. Although this may seem like a modest improvement, it is precisely this improvement that will give rise to the constant spectral gap. 

\begin{proof}

    [of \autoref{lem:strong-from-gap}] Let us begin by recollecting that $\tBE_{\sA\sB\sC}\tBE_{\sB\sC}=\tBE_{\sA\sB\sC}$. We proceed by applying the triangle inequality:
    \begin{align}
         \bigg\|\bigg(\tBE_{\sA\sB} - \tBE_{\sA\sB\sC}\bigg)\tBE_{\sB\sC}[\vO]\bigg\|_{\vrho} \leq & \bigg\|\bigg(\tBE_{\sA\sB} - \tBE_{\sA\sB, t}\bigg)\tBE_{\sB\sC}[\vO]\bigg\|_{\vrho} + \\
           & \bigg\|\bigg(\tBE_{\sA\sB, t} - \tBE_{\sA\sB\sC, t}\bigg) \tBE_{\sB\sC}[\vO]\bigg\|_{\vrho} + \\
          & \bigg\|\bigg(\tBE_{\sA\sB\sC, t} - \tBE_{\sA\sB\sC}\bigg)\tBE_{\sB\sC}[\vO]\bigg\|_{\vrho} ,\label{eq:trotterization_error}
    \end{align}
    \noindent and analyze the resulting contributions separately. The first and the third term above are both special cases of \autoref{lem:truncation_expectations}; as a result:
    \begin{align}
        \bigg\|\bigg(\tBE_{\sA\sB} - \tBE_{\sA\sB, t}\bigg)\tBE_{\sB\sC}[\vO]\bigg\|_{\vrho}, \bigg\|\bigg(\tBE_{\sA\sB\sC, t} - \tBE_{\sA\sB\sC}\bigg)\tBE_{\sB\sC}[\vO]\bigg\|_{\vrho} \leq \|\vO\|_{\vrho}\cdot \exp\bigg[ -\min(\lambda_{\mathsf{ABC}}, \lambda_{\mathsf{AB}})\cdot t\bigg].
    \end{align}
    In turn, we note that the operator $\tBE_{\sB\sC}[\vO]$ is trivial on $\sB\sC$. By the Lieb-Robinson bounds for the spectral conditional expectation in \autoref{lem:LRforCE}:
    \begin{align}
        \bigg\|\bigg(\tBE_{\sA\sB, t} - \tBE_{\sA\sB\sC, t}\bigg) \tBE_{\sB\sC}[\vO]\bigg\|_{\vrho} \leq \|\vO\|_{\vrho}\cdot a_{\mathsf{LB}} \exp\bigg[ - c_\mathsf{LB} \bigg(|\sB| - v_{\mathsf{LB}}\cdot t\bigg)\bigg].
    \end{align}
    Now, assume that $|\sA\sB| \geq \ell_0$, such that we can leverage the assumption in the lemma statement on the conditional spectral gap of $\CK$.  For any constant $\delta$, let us fix
    \begin{equation}
        t = |\sA\sB\sC|^{c_{\mathsf{gap}}}\cdot|\sB|^{\delta}.
    \end{equation}
    Then, so long as the size of $\sB$ satisfies the lower bound
    \begin{equation}\label{eq:size-of-b}
        |\sB| \geq  2\cdot v_{\mathsf{LB}}\cdot |\sA\sB\sC|^{\frac{c_{\mathsf{gap}}}{1-\delta}}.
    \end{equation}
    We have that the conditional expectations on $\sA,\sB,\sC$ admit the following gluing statement.
    \begin{equation}
        \bigg\|\bigg(\tBE_{\sA\sB}\tBE_{\sB\sC} - \tBE_{\sA\sB\sC}\bigg)[\vO]\bigg\|_{\vrho} \lesssim \|\vO\|_{\vrho}\cdot \max(a_{\mathsf{LB}}, 1) \exp\bigg[-\min(c_{\mathsf{LB}}, 1)\cdot |\sB|^\delta\bigg].
    \end{equation}
    Since the conditional expectations are projective, we then get strong clustering as in \autoref{defn:strong_clustering}:
    \begin{align}
        \mathsf{Cov}_{\sA\sB\sC}(\tBE_{\sA\sB}[\vO], \tBE_{\sB\sC}[\vO])  &= \braket{\vO, \bigg(\tBE_{\sA\sB}\tBE_{\sB\sC} - \tBE_{\sA\sB\sC}\bigg)[\vO]}_{\vrho} \\&\leq \|\vO\|_{\vrho}\cdot \bigg\|\bigg(\tBE_{\sA\sB}\tBE_{\sB\sC} - \tBE_{\sA\sB\sC}\bigg)[\vO]\bigg\|_{\vrho}  \\ &\lesssim \|\vO\|_{\vrho}^2\cdot \max(a_{\mathsf{LB}}, 1) \exp\bigg[-\min(c_{\mathsf{LB}}, 1)\cdot |\sB|^\delta\bigg].
    \end{align}
    To conclude, since $c_{\mathsf{gap}}<1$, one can pick $\delta = \frac{1}{2}(1-c_{\mathsf{gap}})>0$ and $|\sB|$ sufficiently large relative as a function of $c_{\mathsf{gap}}, c_{\mathsf{LB}}, v_{\mathsf{LB}}, a_{\mathsf{LB}}$, such that if
    \begin{equation}
        |\sB| \geq 2\cdot v_{\mathsf{LB}}\cdot|\sA\sB\sC|^{\frac{2c_{\mathsf{gap}}}{1+c_{\mathsf{gap}}}} \Rightarrow \mathsf{Cov}_{\sA\sB\sC}(\tBE_{\sA\sB}[\vO], \tBE_{\sB\sC}[\vO])  \leq \|\vO\|_{\vrho}^2\cdot \exp\bigg[- |\sB|^{\frac{1-c_{\mathsf{gap}}}{4}}\bigg].
    \end{equation}
    Appropriately adjusting the constants gives the advertised bound.  
    
\end{proof}

\subsection{Spectral Gap Recursion II (Proof of \autoref{prop:bootstrapping})}

We are now equipped with the relevant tools to prove \autoref{prop:bootstrapping}, on the constant spectral gap of $\CK$. The proof is also based on a recursion over conditional spectral gaps akin to \cite{martinelli1994approach, kastoryano2016commuting} and is very similar to that of \autoref{prop:log-gap}. The only distinction lies in the partitioning scheme: now that strong clustering holds over ``well-separated" tripartitions with improved bounds on the separation, we can leverage a more refined partitioning scheme, which will yield better parameters in the recursion.

\begin{proof}

    [of \autoref{prop:bootstrapping}] Fix $|\sR|=r$, and consider the balancing function $b(x) = \ceil{x^{c_{\mathsf{size}}}}$ with $c_{\mathsf{size}}$ as defined in \autoref{lem:strong-from-gap}. Consider the collection of $s \geq r^{c_{\mathsf{count}}}$ decompositions of the interval $\sR = \sA_i\sB_i\sC_i$ with $|\sB_i| \geq r^{c_{\mathsf{size}}}$ guaranteed by \autoref{lem:balanced_partitions}, so long as $r$ is sufficiently large and $1>c_{\mathsf{count}}> 0$ is an appropriately chosen constant. For any global operator $\vO$ with $\tBE_{\sA\sB\sC}[\vO]=0$, we again have from \autoref{lem:factorization1} and strong clustering \autoref{lem:strong-from-gap} (under the gap assumption) that
    \begin{align}
        \mathsf{var}_\sR[\vO] & \leq \bigg(1+\frac{1}{s}\bigg) \cdot  \bigg(1- 2e^{-r^{c_{\mathsf{decay}}\cdot c_{\mathsf{size}}}}\bigg)^{-1}\cdot \frac{1}{\min_i(\lambda_{\sA_i\sB_i}, \lambda_{\sB_i\sC_i})} \cdot \braket{\vO, -\CK_{\sR}[\vO]}_{\vrho} \\
        &\leq \bigg(1+\frac{2}{r^{c_{\mathsf{count}}}}\bigg) \cdot \frac{1}{\min_i(\lambda_{\sA_i\sB_i}, \lambda_{\sB_i\sC_i})} \cdot \braket{\vO, -\CK_{\sR}[\vO]}_{\vrho}.
    \end{align}
    So long as $r$ is at least a sufficiently large constant $r_0$ as a function of $c_{\mathsf{count}}, c_{\mathsf{size}}, c_{\mathsf{decay}}$, as well as the length-scales of strong clustering \autoref{lem:strong-from-gap}, and the partitioning \autoref{lem:balanced_partitions}. We denote $\lambda_\ell:=\min_{|\sR|\leq \ell}\lambda_\sR$. With $r_i = (4/3)^i\cdot r_0$, we then have the following recursion for $i\geq 1$:
    \begin{align}
         \lambda_{r_i} \geq \lambda_{r_{i-1}} \cdot  \bigg(1+\bigg(\frac{3}{4}\bigg)^{i\cdot c_{\mathsf{count}}}\bigg)^{-1}.
    \end{align}
    To solve this recursion, we observe that for any fixed constant $0\leq a<1$ (in this case, $a =  (3/4)^{c_{\mathsf{count}}}$), we have the geometric sum:
    \begin{equation}
        \prod_{i=1}(1+a^i)\leq \exp\bigg(\sum_{i=1}a^i \bigg) = \exp\bigg(\frac{a}{1-a}\bigg)
    \end{equation}
    and therefore, one can solve the recursion by computing:
    \begin{align}
        \lambda_{r_i} \geq \lambda_{r_0} \cdot  \prod_{j=1}^\infty\bigg(1+\bigg(\frac{3}{4}\bigg)^{i\cdot c_{\mathsf{count}}}\bigg)^{-1} \geq \lambda_{r_0}\cdot \exp\bigg(-\frac{a}{1-a}\bigg)
    \end{align}
    which is always lower-bounded by a constant, as desired. 
\end{proof}

\subsection{Lieb-Robinson Bounds for the Spectral Conditional Expectation (\autoref{lem:LRforCE})}
\label{sec:proofLRCE}
Since the $\norm{\cdot}_{\vrho}$ can be treated as a spectral norm under similarity transformation, the proof of \autoref{lem:LRforCE} reduces to a standard Lieb-Robinson bound for rapidly decaying interactions~\cite{hastings2006spectral} but in the KMS-norm. However, we must be somewhat careful with how we decompose the generator $\CK$ into smaller pieces, as they may not individually preserve the KMS norm.

\begin{proof}

    [of~\autoref{lem:LRforCE}] WLOG we arrange the intervals $\sA, \sB, \sC$ from left to right, where the leftmost site of $\sA$ is site $1$. For any single-site jump $a\in \CS_{\sA\sB\sC}^1$, let its location be the site $x(a)\in \sA\sB\sC \subseteq [n].$ We begin by considering an annulus decomposition for $\CK_a$ akin to that computed in the proof of \autoref{lem:localized_E}.
    \begin{align}
        \CK_a = \sum_{\ell} \delta\CK_a^{\ell}\quad \text{where}\quad \delta\CK_a^{\ell}:= \CK_a^{(\ell)} - \CK_a^{(\ell-1)} \quad \text{and}\quad \|\delta\CK_a^\ell\|_{\vrho}  \leq d_1\cdot \frac{d_2^{\ell}}{\ell!},\label{eq:K-annulus-lb}
    \end{align}
    for appropriate constants $d_1, d_2$. We note $\delta\CK_a^{\ell}$ is supported on the interval $[x(a)-\ell,x(a)+\ell]$. These decomposed superoperators still respect some locality:
    \begin{align}
      \delta\CK_a^{\ell}[\vY] = 0\quad \text{if}\quad   \vY \quad \text{is trivial on $[x(a)-\ell,x(a)+\ell]$}\label{eq:commute_with_Aa},
    \end{align}
    since $\delta\CK_a^{\ell}[\vY]$ vanishes if $[\vA,\vY]=0.$ This can almost be plugged into the Lieb-Robinson bounds for rapidly decaying interactions~\cite{hastings2006spectral}, except that individual terms are not Hermitian, so exponentiating a subset of $\delta\CK_a^{\ell}$ could increase the KMS norm $\norm{\cdot}_{\vrho}$, thereby spoiling the Lieb-Robinson bound. The remedy is that if we carefully expand the difference of exponentials $\tBE_{\sA\sB,t}-\tBE_{\sA\sB\sC,t}$, many of the terms still regroup together, nearly recovering the original $\CK_a$ (which monotonically decreases the KMS-norm).\\

    \noindent \textbf{The Growth of the Superoperator Norm.} To proceed, for integer $w \in [n],$ consider the subset of terms in the expansion~\eqref{eq:commute_with_Aa} labeled by $a\in \CS_{\sA\sB\sC}^1, \ell\in [n]$, whose support does not exceed $z$:
    \begin{align}
        \CV_{<w} &:= \{ a, \ell: a \in \CS_{\sA\sB\sC}^1 \quad \text{and}\quad  x(a)+\ell < w \} \subseteq \CS_{\sA\sB\sC}^1\times [n],
    \end{align}
    and let us further introduce the truncation of the generator $\CK$ to such terms: 
    \begin{align}
        \CK_{< w} &:= \sum_{(a, \ell)\in \CV_{<w}} \delta\CK_a^{\ell} = \sum_{\substack{a\in \CS_{\sA\sB\sC}^1 \text{ s.t.} \\ x(a) \in [1, w-1]}} \CK^{(\mathsf{dist}(a,w)-1)}_a.
    \end{align}
    
    Indeed, for a jump $a$ far away from the boundary index $w,$ almost all the components $\delta\CK_a^{\ell}$ in the decomposition of $\CK_{a}$ are contained in $\CK_{< w}$, and the ``strength'' of non-Hermiticity can be shown to be actually constant~\eqref{eq:nonHermiticity_bound}. Indeed, for any $x\in [n]$ and $s\in \BR^+$, the KMS induced superoperator norm is controlled by the norm of the non-self-adjoint part. By elementary linear algebra (\autoref{lem:trotter}), we then have:
\begin{align}
  \sum_{\substack{a\in \CS_{\sA\sB\sC}^1 \text{ s.t.} \\ x(a) \in [1, w-1]}}  \lnorm{\CK_a - \CK^{(\mathsf{dist}(a,w)-1)}_a}_{\vrho} \le c_1 \quad \Rightarrow \quad  \norm{e^{\CK_{<x} s}}_{\vrho} \le e^{c_{1} \labs{s}}, \label{eq:nonHermiticity_bound}
\end{align}
for an appropriate constant $c_1$ as a function of $d_1, d_2$ \eqref{eq:K-annulus-lb}. This significantly improves over the naive bound $e^{s\norm{\CK_{<w}}_{\vrho}}$ that grows with $w.$\\

\noindent \textbf{Duhamel's Principle.} The rest of the proof proceeds via a careful ``self-avoiding paths'' decomposition~\cite[Theorem 3]{chen2021operator}, which explicitly expose the intermediate exponentials so that~\eqref{eq:nonHermiticity_bound} applies. For integer $k\in \BN^+$, we denote as $\CP_k\subseteq (\CS_{\sA\sB\sC}^1\times [n])^{\times k}$ the set of ``paths'' or sequences of length $k$ comprised of components $(a', \ell')\in  \CS_{\sA\sB\sC}^1\times [n]$, satisfying:

\begin{align}
        \CP_{k} := \bigg\{\{(a_i,\ell_i)\}_{i\in [k]} \quad \text{s.t.} \quad i\in [k]:\undersetbrace{\text{consecutive terms overlap}}{ x(a_i) - \ell_i \le x(a_{i-1}) + \ell_{i-1}}\quad \text{and}\quad \undersetbrace{\text{next term must make progress}}{x(a_i) + \ell_i \ge x(a_{i-1}) + \ell_{i-1}}\bigg\}.\label{eq:pathsk}
    \end{align}
    For $z\in \CS_{\sA\sB\sC}^1$, we denote as $\CP_{k}^z\subseteq \CP_{k}$ the subset of paths where the $k$th step is the first such that $x(z)\leq x(a_k) + \ell_k$. We denote the edge cases $x(a_{k+1}) =x(z), \ell_{k+1}=\ell_0=0$, and $x(a_0):=y_0$, where $y_0\in [n]$ is the right-most site of $\sA$. Now, by Duhamel's principle, 
    \begin{align}
        \tBE_{\sA\sB,t}-\tBE_{\sA\sB\sC,t} =  \int_{0}^t \tBE_{\sA\sB\sC,t-s} \CK_{\sC} \tBE_{\sA\sB,s} \rd s\label{eq:DuhamelEABC},
    \end{align}
    with $\tBE_{\sB\sC}[\vO] = \vX$, one can bound each term $z\in \CS_\sC^1$ in the integral above by applying the sum over self-avoiding paths~\cite[Theorem 3]{chen2021operator} for $\CK = \sum_{a,\ell} \delta{\CK}_{a}^\ell$. We have $ \CK_{z} \cdot  \tBE_{\sA\sB,s} [\vX] = $ 
    \begin{align} 
        &=  \sum_{k=1}^{\infty} \sum_{\{(a_i, \ell_i)\} \in \CP_k^z}\CK_{z}\int_{s>s_k > \cdots s_1 >0}  \tBE_{\sA\sB,(s-s_k)} \prod_{i\in [k]} \bigg(\delta\CK_{a_i}^{\ell_i}\cdot e^{ (s_i-s_{i-1})\cdot \CK_{<x(a_{i+1})-\ell_{i+1}}} \bigg)[\vX] \prod_{i\in [k]}\rd s_i,
    \end{align}
   with the edge case $s_0 = 0$. As a consequence of the bounds \eqref{eq:nonHermiticity_bound} on the super-operator norm, we then have:    \begin{align}
        \|\CK_{z} \tBE_{\sA\sB,s}[\vX] \|_{\vrho} \le \|\vX\|_{\vrho}\cdot \lnorm{ \CK_{z} }_{\vrho}\cdot  e^{c_1s}\cdot \sum_{k=1}^{\infty}  \frac{s^k}{k!}\cdot  \sum_{\{(a_i, \ell_i)\} \in \CP_k^z} \prod_{i\in [k]}\|\delta\CK^{\ell_i}_{a_i}\|_{\vrho}.\label{eq:sumoverpath}
    \end{align}

    The path-counting expression above (without $e^{c_1s}$) is essentially what one encounters in analyzing Lieb-Robinson bounds with exponentially decaying strength~\cite{hastings2006spectral}; we give a self-contained argument for completeness. \\

\noindent \textbf{Path Counting.} To organize the sum over paths, we proceed by pinning the right-most site $y_i = x(a_i)+\ell_i$ in the support of each component $(a_i, \ell_i)\in \CS^1_{\sA\sB\sC}\times [n]$ on the path. Recall $y_0$ is the rightmost site in $\sA$. Note $y_0<y_1\cdots <y_{k-1}<x(z)\leq y_k$. Given fixed consecutive right-most sites $y_{j+1} > y_j$, we proceed by summing over all possible consistent terms $(a_{j+1}, \ell_{j+1})$: 
\begin{align}
\sum_{\substack{x(a)+ \ell =y_{j+1} \\ x(a) - \ell \leq y_{j}}} \lnorm{\delta\CK_{a}^\ell}  &= \sum_{\ell \geq (y_{j+1}-y_j)/2}\sum_{x(a) = y_{j+1}-\ell}\lnorm{\delta\CK_{a}^\ell} \\
&\le g'_1 \sum_{\ell \geq (y_{j+1}-y_j)/2} \frac{d_2^\ell}{\ell!} \leq g_1\cdot e^{- g_2(y_{j+1}-y_j)},
\end{align}
for appropriately chosen constants $g_1, g'_1, g_2$. For each fixed sequence, $y_0 < y_1 < \cdots < y_{k-1} < x(z)\leq y_k,$ the contribution decays exponentially with the total distance but grows exponentially with $k$. 
\begin{align}
    \sum_{\{(a_i, \ell_i)\} \in \CP_k^z} \prod_{i\in [k]}\|\delta\CK^{\ell_i}_{a_i}\|_{\vrho} &\leq g_1^k\cdot  \sum_{y_k\geq x(z)>y_{k-1} >\cdots > y_1 > y_0} \prod_i e^{-g_2(y_{i+1}-y_{i})} \\
    &\leq g_1^k \cdot \sum_{y_k\geq x(z)} \binom{y_k-y_0-1}{k-1} \cdot e^{-g_2(y_k-y_0)} \tag{Counting Subsequences}\\
    &\leq  g_1^k \cdot (2/g_2)^k \cdot \sum_{y_k\geq x(z)} e^{-g_2(y_k-y_0)/2} \tag{By $\binom{w}{k}\leq e^{w/r}\cdot r^k$, and $r = 2/g_2$.} \\
    &\leq g_3^k\cdot e^{-g_2(x(z)-y_0)/2},
\end{align}
for an appropriately chosen constant $g_3$. We can now return to \eqref{eq:sumoverpath}, to bound
\begin{align}
   \sum_{z\in \CS_\sC^1} \|\CK_{z} \tBE_{\sA\sB,s}[\vX] \|_{\vrho} &\le \|\vX\|_{\vrho}\cdot g_4\cdot  e^{c_1s} \cdot \bigg(\sum_{x(z)\geq y_0+|\sB|}  e^{-g_2(x(z)-y_0)/2}\bigg)\cdot  \sum_k \frac{(sg_3)^k}{k!} \\
   &\leq \|\vX\|_{\vrho}\cdot g_5\cdot \exp\bigg(-g_2\cdot \mathsf{dist}(\sA, \sC)/2 + g_6\cdot s\bigg).
\end{align}
Integrating over $s\in [0, t]$ in \eqref{eq:DuhamelEABC} and relabeling the constants then gives the desired bound. 
\end{proof}

\begin{lem}\label{lem:trotter}
Suppose a Hermitian matrix has no positive eigenvalues $\vH \prec 0$. Then,
    \begin{align}
        \norm{e^{\vH + \vY}} \le e^{\norm{\vY}}.
    \end{align}
\end{lem}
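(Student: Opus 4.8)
The plan is to reduce everything to two elementary facts: (i) since $\vH$ is Hermitian with no positive eigenvalues, $\norm{e^{s\vH}}\le 1$ for every $s\ge 0$, because the eigenvalues of $e^{s\vH}$ are $e^{s\lambda}$ with $\lambda\le 0$; and (ii) unconditionally $\norm{e^{s\vY}}\le e^{s\norm{\vY}}$ by the power-series bound $\norm{e^{s\vY}}\le\sum_{k\ge 0}\frac{s^k\norm{\vY}^k}{k!}$. Given these, I would invoke the Lie--Trotter product formula $e^{\vH+\vY}=\lim_{m\to\infty}\big(e^{\vH/m}e^{\vY/m}\big)^m$ and use submultiplicativity of the operator norm:
\begin{align}
\norm{e^{\vH+\vY}} = \lim_{m\to\infty}\norm{\big(e^{\vH/m}e^{\vY/m}\big)^m} \le \lim_{m\to\infty}\Big(\norm{e^{\vH/m}}\cdot\norm{e^{\vY/m}}\Big)^m \le \lim_{m\to\infty}\big(1\cdot e^{\norm{\vY}/m}\big)^m = e^{\norm{\vY}}.
\end{align}
That is the entire argument; it is a routine ``monotone factor vs. contractive factor'' splitting.

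Alternatively, and perhaps cleaner to write self-containedly, I would run a differential-inequality argument that avoids quoting Trotter. Fix a unit vector $v$ and set $u(t):=e^{t(\vH+\vY)}v$ for $t\in[0,1]$. Then
\begin{align}
\frac{\rd}{\rd t}\norm{u(t)}^2 = 2\,\mathrm{Re}\,\braket{u(t),(\vH+\vY)u(t)} = 2\braket{u(t),\vH u(t)} + 2\,\mathrm{Re}\,\braket{u(t),\vY u(t)} \le 0 + 2\norm{\vY}\,\norm{u(t)}^2,
\end{align}
where $\braket{u,\vH u}\le 0$ uses $\vH\preceq 0$ and $\mathrm{Re}\,\braket{u,\vY u}\le\norm{\vY}\norm{u}^2$ is Cauchy--Schwarz. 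Gr\"onwall's inequality then yields $\norm{u(t)}^2\le e^{2\norm{\vY}t}\norm{u(0)}^2$, and taking $t=1$ and the supremum over unit vectors $v$ gives $\norm{e^{\vH+\vY}}\le e^{\norm{\vY}}$.

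\textbf{Main obstacle.} There is essentially no obstacle here: the only points needing a line of justification are that $\vH\preceq 0$ forces $\norm{e^{s\vH}}\le 1$ (immediate from the spectral theorem, and note the statement's $\vH\prec 0$ is stronger than needed) and that the norm is submultiplicative; in the differential version one must also observe that $\vY$ need not be normal, so one cannot diagonalize it, which is precisely why we bound $\mathrm{Re}\,\braket{u,\vY u}$ by $\norm{\vY}$ rather than by an eigenvalue. In the intended application the ``$\vH$'' is the self-adjoint (KMS-negative-semidefinite) part of the truncated generator and ``$\vY$'' collects the small non-Hermitian remainder, so the content of the lemma is just that a bounded non-self-adjoint perturbation of a contraction semigroup grows at most exponentially with rate $\norm{\vY}$.
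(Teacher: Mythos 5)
Your first argument (Lie--Trotter product formula combined with $\norm{e^{\vH/m}}\le 1$ and $\norm{e^{\vY/m}}\le e^{\norm{\vY}/m}$ plus submultiplicativity) is exactly the proof given in the paper. Your alternative Gr\"onwall-style differential inequality is also correct and equally elementary; it trades the Trotter limit for a derivative bound, and would be the more natural route if one wanted to generalize to time-dependent generators, but for this lemma the two are of comparable length and the paper's choice is fine.
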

\begin{proof}
By the Trotterization limit
\begin{align}
    e^{\vH + \vY} = \lim_{p\rightarrow \infty} (e^{\vH/p} e^{\vY/p})^{p} 
\end{align}
and use that $\norm{e^{\vH/p}} \le 1$ and $\norm{e^{\vY/p}} \le e^{\norm{\vY}/p}$ to conclude the proof. 
\end{proof}

\section{Transferring the gap of \texorpdfstring{$\CK$}{K} to that of the Lindbladian}
\label{sec:CKGgap}
So far, our arguments have been largely built around the auxiliary generator $\CK,$ due to its nice interplay with the spectral conditional expectation. Although we do not know how to do the recursion on $\CL$ directly, fortunately, we are able to transfer the hard-earned constant spectral gap for the generator $\CK$ in \eqref{eq:CK}, to that of the Lindbladian $\CL$ of \cite{chen2023efficient}. We refer the reader to \eqref{eq:exact_DB_L} to recollect the explicit form of $\CL$.

\begin{thm}
    [Comparing the Gap of $\CK$ to that of $\CL$]\label{thm:ckg_gap} For any inverse-temperature $\beta\in \BR^+$, and local dimension $2^q$, fix a family of 1D Hamiltonians (\autoref{defn:1D}), and consider the generator $\CK$ of the spectral conditional expectation~\eqref{eq:CK}, and the Lindbladian $\CL$~\eqref{eq:exact_DB_L} defined by the set of single-site Pauli jumps $\CS^1_{[n]}$ and Gaussian weight function with width $\sigma = \beta^{-1}$, which are both detailed-balanced under the Gibbs state $\vrho$.
    Then, there exists a constant $\infty >\alpha>0$ dependent only on $\beta, q$ such that, for any operator $\vO,$
    \begin{equation}
        \alpha\cdot \braket{\vO, -\CL^\dagger[\vO]}_{\vrho} \ge \braket{\vO, -\CK[\vO]}_{\vrho} = \sum_{\vS\in \CS^1_{[n]}} \|[\vS, \vO]\|_{\vrho}^2. \label{eq:compare_dirichlet}
        \end{equation}
    Consequently, the spectral gap of $\CL$ is bounded by the spectral gap of $\CK$:
    \begin{equation}
        \inf_{\vO}  \frac{\braket{\vO, -\CL^\dagger[\vO]}_{\vrho}}{\|\vO - \vI\cdot \tr[\vrho\vO]\|^2_{\vrho}}  \geq  \frac{1}{\alpha}\cdot  \inf_{\vO} \frac{\braket{\vO, -\CK[\vO]}_{\vrho}}{\|\vO - \vI\cdot \tr[\vrho\vO]\|^2_{\vrho}}. 
    \end{equation}
\end{thm}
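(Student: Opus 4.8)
The plan is to prove the Dirichlet-form domination \eqref{eq:compare_dirichlet}, since the spectral gap comparison is then immediate: both $\CK$ and $\CL^\dagger$ are KMS-detailed-balanced and primitive (they share the same kernel, the span of $\vI$, because $\CK_\sR$ for $\sR=[n]$ has kernel $\CB(\CH_{\bar{[n]}})=\BC\vI$, and the same holds for $\CL$), so dividing both sides of \eqref{eq:compare_dirichlet} by the common variance $\|\vO-\vI\tr[\vrho\vO]\|_\vrho^2$ and taking the infimum over $\vO$ gives the stated bound on the gaps. So the whole content is the pointwise inequality $\sum_{\vS\in\CS^1_{[n]}}\|[\vS,\vO]\|_\vrho^2\le \alpha\braket{\vO,-\CL^\dagger[\vO]}_\vrho$, which must hold in the \emph{KMS} inner product (no other norm will preserve the PSD order required for the Dirichlet-form inequality, and we only have KMS detailed balance for $\CL$).

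The strategy for a single jump $\vS=\vA^a$ (a single-site Pauli) is to write $\vA^a$ in terms of its operator Fourier transforms and split into low- and high-frequency parts. Using \autoref{lem:sumoverenergies}, $\vA^a=\tfrac{1}{\sqrt{2\sigma\sqrt{2\pi}}}\int \hat{\vA}^a(\omega)\,\rd\omega$, so that the commutator $[\vA^a,\vO]$ decomposes accordingly. Fix a cutoff frequency $\Omega$. For the low-frequency piece $|\omega|\le\Omega$, one relates $\|[\hat{\vA}^a(\omega),\vO]\|_\vrho^2$ directly to the Dirichlet form of $\CL^\dagger$: recall that the transition and decay parts of $\CL$ produce, in the KMS quadratic form, exactly terms of the shape $\gamma(\omega)\|\vrho^{1/4}(\cdot)\vrho^{1/4}\|_2^2$-type expressions built from $\hat{\vA}^a(\omega)(\cdot)-(\cdot)\hat{\vA}^a(\omega)$ conjugated appropriately by powers of $\vrho$; since the Gaussian weight $\gamma_\mathsf{G}(\omega)$ is bounded below by a $\beta$-dependent constant on the compact band $|\omega|\le\Omega$, each low-frequency commutator contribution is controlled by $\braket{\vO,-\CL^\dagger[\vO]}_\vrho$ up to a factor depending on $\Omega,\beta$. (The coherent part contributes zero to the KMS quadratic form by skew-adjointness, so it can be ignored.) For the high-frequency piece $|\omega|\ge\Omega$, one uses the a priori norm decay of $\hat{\vA}^a(\omega)$ from \autoref{lem:bounds_imaginary_conjugation} together with \autoref{lem:loose_AO} (Hölder in KMS norm, valid in 1D because complex-time conjugation of a single-site operator is bounded, \autoref{lem:locality_complextime}) to bound $\|[\hat{\vA}^a(\omega),\vO]\|_\vrho\le (\|\vrho^{1/4}\hat{\vA}^a(\omega)\vrho^{-1/4}\|+\|\vrho^{-1/4}\hat{\vA}^a(\omega)\vrho^{1/4}\|)\|\vO\|_\vrho$, and this integrand decays like $e^{-\beta|\omega|+O(\beta^2/\sigma^2)}$ times a $\beta$-dependent constant; choosing $\Omega$ a large enough multiple of $\sigma^{-1}=\beta^{-1}$ makes the total high-frequency contribution at most, say, $\tfrac12$ of $\braket{\vO,-\CK[\vO]}_\vrho$, which can then be absorbed to the left-hand side. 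Summing over the $O(4^q n)$ single-site jumps $a\in\CS^1_{[n]}$ — the low-frequency bound scales with the same collection on the $\CL$ side, so there is no loss in $n$ — yields the claimed $\alpha=\alpha(\beta,q)$.

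The main obstacle is the high-frequency absorption step: one needs the bound on $\|[\hat{\vA}^a(\omega),\vO]\|_\vrho$ to be \emph{relative to the Dirichlet form of $\CK$}, i.e. $\sum_a\|[\vA^a,\vO]\|_\vrho^2$, rather than relative to $\|\vO\|_\vrho^2$, because only then can a term $\le\tfrac12\braket{\vO,-\CK[\vO]}_\vrho$ be moved to the other side of \eqref{eq:compare_dirichlet}. This requires expressing the high-frequency commutator itself as a (bounded) linear transform applied to $[\vA^a,\vO]$ — concretely $[\hat{\vA}^a(\omega),\vO]=\sum_\nu \hat f_\sigma(\omega-\nu)[\vA^a_\nu,\vO]$, and one wants to re-sum these against $[\vA^a,\vO]=\sum_\nu[\vA^a_\nu,\vO]$ with coefficients that, after integrating $|\omega|\ge\Omega$, have small operator norm on the relevant space; this is exactly the kind of manipulation done via \autoref{lem:loose_AO} and \autoref{lem:bounds_imaginary_conjugation}, but getting the constants to close (so that the absorbed fraction is genuinely $<1$ uniformly in $n$ and in $\vO$) is the delicate point. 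A secondary technical nuisance is bookkeeping the powers of $\vrho$: the Dirichlet form of $\CL$ naturally lives with $\vrho^{1/2}$ split symmetrically, and one must insert $\vrho^{\pm1/4}$ conjugations carefully (as in the proof of \autoref{lem:loose_AO}) so that every intermediate quantity is a genuine KMS norm and the PSD ordering is never broken. I would organize the final writeup as: (i) reduce to the per-jump KMS commutator bound; (ii) Fourier-decompose and isolate low vs.\ high frequency; (iii) low-frequency $\le C_{\Omega,\beta}\braket{\vO,-\CL^\dagger[\vO]}_\vrho$; (iv) high-frequency $\le\tfrac12\braket{\vO,-\CK[\vO]}_\vrho$ via Hölder-in-KMS plus the $\hat{\vA}(\omega)$ decay, with $\Omega=\Theta(\beta^{-1})$; (v) rearrange and sum over jumps.
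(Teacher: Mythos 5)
Your high-level skeleton matches the paper's: Fourier-decompose $\vA^a$, split at a cutoff $\Omega$, relate the low-frequency piece to the $\CL$-Dirichlet form, and absorb the high-frequency piece into the $\CK$-Dirichlet form as a small self-bounding term. The reduction from \eqref{eq:compare_dirichlet} to the gap inequality is also fine. However, both halves of the frequency split have genuine gaps as you've sketched them.

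For the low-frequency piece, you claim the coherent part ``contributes zero to the KMS quadratic form by skew-adjointness,'' so that the Dirichlet form of $\CL$ reduces to $\int\gamma(\omega)\|\,[\hat{\vA}^a(\omega),\vO]\,\|_\vrho^2\,\rd\omega$-type terms. This is wrong: $\ri[\vC^a,\cdot]$ is Hilbert--Schmidt skew-adjoint but not KMS skew-adjoint, and its non-skew part precisely combines with the dissipator to give the \emph{time-integrated} form in \autoref{lem:Dirichlet}, $\CE_a(\vO,\vO)=\iint g(t)h(\omega)\|[\hat{\vA}^a(\omega,t),\vO]\|_\vrho^2\,\rd t\,\rd\omega$, with the time-evolved OFT $\hat{\vA}^a(\omega,t)$ in the commutator. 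You cannot lower-bound this $\rd t$-integral by its $t=0$ slice, so a direct Cauchy--Schwarz over $\rd\omega$ alone does not expose $\CE_a$. The missing idea is the Gaussian convolution identity (\autoref{lem:OFT_Gaussian_expansion}): choose the initial OFT width $\sigma_1$ so that $1/\sigma_1^2 = 1/\sigma_2^2 + 1/\sigma_3^2$ and re-write the low-frequency integral of $\hat{\vA}_{\sigma_1}(\omega)$ as a further $\rd t$-integral of $\hat{\vA}_{\sigma_2}(\omega)(t)\,e^{-\ri\omega t}f_{\sigma_3}(t)$; only then does Cauchy--Schwarz over $\rd\omega\,\rd t$ land cleanly on $\sqrt{\CE_a(\vO,\vO)}$ after dividing by $\sqrt{g(t)h(\omega)}$.

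For the high-frequency piece, you correctly identify that bounding $\|[\hat{\vA}^a(\omega),\vO]\|_\vrho$ by $\|\vO\|_\vrho$ is useless and that one must express it as a bounded transform of commutators $\|[\vB,\vO]\|_\vrho$, but your proposed route — resumming over Bohr frequencies $[\vA^a_\nu,\vO]$ — does not close, since the $[\vA^a_\nu,\vO]$ are nonlocal, not KMS-orthogonal for distinct $\nu$, and carry no usable bound in terms of $\sum_\vB\|[\vB,\vO]\|_\vrho^2$. The paper's mechanism is spatial, not spectral: \autoref{lem:annulus_OFT} gives an annulus decomposition $\hat{\vA}_\sigma(\omega)=\sum_\ell\vY_{\ell,\omega}$ with $\|\vY_{\ell,\omega}\|\lesssim e^{-c|\omega|-c'\ell\log\ell}$; then expanding each $\vY_{\ell,\omega}$ in the Pauli basis, writing a multi-site commutator as a telescoping Leibniz sum, and applying KMS-Hölder with the 1D complex-time bound (\autoref{lem:locality_complextime}) gives \autoref{lem:norms-of-commutators}, $\|[\hat{\vA}_\sigma(\omega),\vO]\|_\vrho\lesssim e^{-c_2|\omega|}\sum_{j,\vB\in\CS_j^1}\|[\vB,\vO]\|_\vrho\,e^{-c_3|i-j|\ln|i-j|}$. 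That superexponential spatial decay is what makes the kernel have $O(1)$ row and column $\ell_1$ mass, so that after squaring and Cauchy--Schwarz the high-frequency contribution is $\lesssim e^{-2c_2\Omega}\sum_\vB\|[\vB,\vO]\|_\vrho^2$, and the absorption closes by taking $\Omega$ large. Without this 1D spatial quasi-locality of the OFT, the ``delicate constants'' you flag do not close.
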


Here, we focus on single-site Pauli jump operators for simplicity. Since this comparison is the only place in our proof where the choice of Lindbladian explicitly comes into play, our spectral gap results are by no means restricted to~\cite{chen2023efficient}.

\begin{rmk}[Other Lindbladians] \label{rmk:other_L}
    The comparison above holds for both the Metropolis and Gaussian transition weights, see \eqref{eq:Metropolis}, with different $\beta$-dependences in $\alpha$. In principle, we believe the comparison holds for KMS-detailed-balanced Lindbladians with (1) a locally ergodic collection of jump operators and (2) reasonable choices of weights and filters that may come from~\cite{ding2024efficient,SA24}. 
\end{rmk}

\subsection{The Dirichlet Form Comparison}

The starting point for~\eqref{eq:compare_dirichlet} is the following explicit integral expression for the Dirichlet form of the Lindbladian $\CL$ \eqref{eq:exact_DB_L}.

\begin{lem}[The Dirichlet Form of $\CL$ {~\cite[Lemma X.2]{chen2025quantumMarkov}}]\label{lem:Dirichlet}
    Fix a single Hermitian jump operator $\vA^a = \vA^{a, \dagger}$. The Dirichlet form for the local Lindbladian $\CL_a$~\eqref{eq:exact_DB_L} associated to the jump operator $\vA^a$ with frequency width $\sigma\in \BR^+$, can be written as
\begin{align}
    \CE_{a}(\vO,\vO) &:= -\braket{\vO,\CL_a^{\dagger}[\vO]}_{\vrho}=  \iint_{-\infty}^{\infty} g(t)h(\omega) \cdot \norm{[\hat{\vA}^a(\omega,t),\vO]}^2_{\vrho} \cdot  \rd t \rd \omega.\label{eq:dirichlet_L}
\end{align}
where $\hat{\vA}^a(\omega,t):=\e^{i\vH t}\hat{\vA}^a(\omega)\e^{-i\vH t}$,  and $g(t) = \frac{1}{\beta\cosh(2\pi t/ \beta)}\ge 0.$ Further, the frequency filter $h(\omega)$ depends on the choice of transition rate $\gamma(\omega)$ \eqref{eq:Metropolis} as:
\begin{align}
    \emph{(Metropolis)} \quad h_\mathsf{M}(\omega) = e^{-\sigma^2\beta^2/8}e^{-|\omega|\beta/2}, \quad \emph{(Gaussian)} \quad h_\mathsf{G}(\omega) = e^{-1/4}\exp\bigg(-\frac{\omega^2\beta^2}{2(2-\sigma^2\beta^2)}\bigg).
\end{align}
\end{lem}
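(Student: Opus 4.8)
The plan is to expand $\CE_a(\vO,\vO)=-\braket{\vO,\CL_a^{\dagger}[\vO]}_{\vrho}$ from the explicit form of $\CL_a$ in \eqref{eq:exact_DB_L}, split it into the coherent, transition, and decay contributions, decompose every operator Fourier transform into Bohr-frequency components, and then reassemble the result into the advertised integral. Throughout I suppress the superscript $a$ and the width $\sigma$, and write $\vA=\sum_{\nu\in B(\vH)}\vA_\nu$, so that $\hat{\vA}(\omega)=\sum_\nu \vA_\nu\,\hat{f}_\sigma(\omega-\nu)$ by \eqref{eq:OFT}; by construction the Bohr components obey $e^{i\vH t}\vA_\nu e^{-i\vH t}=e^{i\nu t}\vA_\nu$ and $\vrho^{s}\vA_\nu\vrho^{-s}=e^{-s\beta\nu}\vA_\nu$ for all $s\in\BR$. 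Substituting these, the transition term $\int\gamma(\omega)\braket{\vO,\hat{\vA}(\omega)^{\dagger}\vO\hat{\vA}(\omega)}_{\vrho}\,\rd\omega$ and the decay term turn into $\omega$-integrals of $\sum_{\nu_1,\nu_2}\hat{f}_\sigma(\omega-\nu_1)\hat{f}_\sigma(\omega-\nu_2)(\cdots)$, whose building blocks are the three matrix elements $\braket{\vO,\vA_{\nu_2}^{\dagger}\vO\vA_{\nu_1}}_{\vrho}$, $\braket{\vO,\vA_{\nu_2}^{\dagger}\vA_{\nu_1}\vO}_{\vrho}$, $\braket{\vO,\vO\vA_{\nu_2}^{\dagger}\vA_{\nu_1}}_{\vrho}$; commuting the half-powers of $\vrho$ through the $\vA_{\nu_i}$ rewrites each as a single ``bare'' form times an explicit exponential in $\beta\nu_1,\beta\nu_2$.

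The crucial step is to merge the three contributions into a commutator square. After the rewriting above, the dissipative part of $\CE_a$ equals an $\omega$-integral of $\sum_{\nu_1,\nu_2}\hat{f}_\sigma(\omega-\nu_1)\hat{f}_\sigma(\omega-\nu_2)$ against $\braket{[\vA_{\nu_1},\vO],[\vA_{\nu_2},\vO]}_{\vrho}$ with a certain coefficient, \emph{plus} a residual that is not commutator-shaped. The role of the coherent operator $\vC^a$ --- whose defining formula in \cite{chen2023efficient} is engineered precisely so that $\CL_a$ is \emph{exactly} KMS-detailed-balanced (\autoref{thm:ckg-db}) --- is that the coherent contribution $-\braket{\vO,i[\vC^a,\vO]}_{\vrho}$, once its own $t$-integral against $c(t)=\tfrac{1}{\beta\sinh(2\pi t/\beta)}$ is expanded in Bohr components, exactly cancels this residual and corrects the coefficient so that it depends on $\nu_1,\nu_2$ only through $\nu_1-\nu_2$ together with an $\omega$-dependent prefactor. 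To read off that prefactor I would apply the Gaussian completion-of-square identity $e^{-\beta\nu/2}\hat{f}_\sigma(\omega-\nu)=e^{\sigma^2\beta^2/4}\,e^{-\beta\omega/2}\,\hat{f}_\sigma(\omega-\sigma^2\beta-\nu)$, which absorbs the $\nu_i$-dependent exponentials into re-centered Gaussians and leaves exactly the $\omega$-profile $h(\omega)$: the $e^{-\sigma^2\beta^2/8}e^{-|\omega|\beta/2}$ shape for $\gamma_{\mathsf M}$ and the Gaussian shape for $\gamma_{\mathsf G}$ of \eqref{eq:Metropolis}, with the kink at $\omega=0$ and the absolute value arising from the symmetrization of the transition weight against the decay weight.

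Finally, I would convert the leftover $(\nu_1-\nu_2)$-dependence into the real-time integral. Since $\hat{\vA}(\omega,t)=\sum_\nu e^{i\nu t}\vA_\nu\hat{f}_\sigma(\omega-\nu)$, one has $\|[\hat{\vA}(\omega,t),\vO]\|_{\vrho}^2=\sum_{\nu_1,\nu_2}e^{i(\nu_1-\nu_2)t}\hat{f}_\sigma(\omega-\nu_1)\hat{f}_\sigma(\omega-\nu_2)\braket{[\vA_{\nu_2},\vO],[\vA_{\nu_1},\vO]}_{\vrho}$, so $\int g(t)\|[\hat{\vA}(\omega,t),\vO]\|_{\vrho}^2\,\rd t$ reproduces the double sum weighted by the Fourier coefficient $\widehat g(\nu_1-\nu_2)$. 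It then remains only to check the scalar identity that, with $g(t)=\tfrac{1}{\beta\cosh(2\pi t/\beta)}$, the transform $\widehat g(\nu_1-\nu_2)$ matches the coefficient extracted above --- a standard contour computation using that the Fourier transform of $1/\cosh$ is again a $1/\cosh$, together with the companion $1/\sinh$ identity governing $c(t)$ --- and to collect the remaining scalar constants into $h(\omega)$. I expect the main obstacle to be the cancellation in the second paragraph: verifying that the Bohr-component expansion of the coherent term exactly removes the non-commutator residual and factorizes the coefficient across $\omega$ and $\nu_1-\nu_2$. This must hold because it is the very computation underlying exact detailed balance (\autoref{thm:ckg-db}), but executing it cleanly requires careful bookkeeping of the Gaussian re-centering factors and the $\sinh/\cosh$ Fourier identities; once factorization is in hand, identifying $g$ and $h$ is routine.
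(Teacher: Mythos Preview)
The paper does not give its own proof of this lemma; it is imported verbatim from \cite[Lemma X.2]{chen2025quantumMarkov} and only used as a black box. So there is nothing to compare against in the present paper.

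As a standalone plan, your outline is the correct one and matches how such identities are derived in \cite{chen2023efficient,chen2025quantumMarkov}: decompose $\hat{\vA}(\omega)$ into Bohr components, use $\vrho^{s}\vA_\nu\vrho^{-s}=e^{-s\beta\nu}\vA_\nu$ to turn every KMS inner product into a bare one times explicit exponentials, invoke the Gaussian re-centering $e^{-\beta\nu/2}\hat f_\sigma(\omega-\nu)=e^{\sigma^2\beta^2/4}e^{-\beta\omega/2}\hat f_\sigma(\omega-\sigma^2\beta-\nu)$, and then recognize the $(\nu_1-\nu_2)$-dependent factor as the Fourier transform of $g(t)=\frac{1}{\beta\cosh(2\pi t/\beta)}$ (paired with the $1/\sinh$ transform governing $c(t)$). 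You have correctly flagged the only nontrivial step: the coherent term $-i\braket{\vO,[\vC^a,\vO]}_\vrho$ must exactly cancel the asymmetric residuals left by the decay term so that the coefficient factorizes as $h(\omega)\cdot \widehat g(\nu_1-\nu_2)$. This cancellation is not automatic from detailed balance alone --- it is the specific design of $\vC^a$ via $c(t)=\frac{1}{\beta\sinh(2\pi t/\beta)}$ that does the work --- so at that point you would need to actually expand $\vC^a$ in Bohr components and check the identity $\int c(t)e^{i(\nu_1-\nu_2)t}\,\rd t = \tfrac{1}{2}\tanh\!\big(\tfrac{\beta(\nu_1-\nu_2)}{4}\big)$ (principal value), which combines with the decay contributions to produce the $1/\cosh$ structure. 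That is the bookkeeping you anticipated; the rest is routine.
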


\begin{rmk}
    Note that when $\sigma = 1/\beta$, we have $\forall\omega\in \BR: h_\mathsf{M}(\omega)\geq h_\mathsf{G}(\omega)$. Therefore, the spectral gap of the Lindbladian under the Gaussian weight always provides a lower bound for that under the Metropolis weight. In this section, we thus only prove the relation under the Gaussian weight function. 
\end{rmk}

Roughly, the reason why~\eqref{eq:compare_dirichlet} should be possible, is that the kernel of $\CK_a$ should contain the kernel of $\CL^\dagger_a$:
\begin{align}
    [\hat{\vA}^a(\omega,t),\vO] = 0 \quad \text{for all $\omega,t \in \BR$} \quad \text{implies that}\quad [\vA^a,\vO] =0 ,\label{eq:rough_comparison}
\end{align}
since $\vA^a$ can be decomposed as an integral over the operator Fourier transforms $\hat{\vA}^a(\omega,t).$ 
Unfortunately, to establish a robust, quantitative version of \eqref{eq:rough_comparison}, challenges arise in both (I) the choice of norm and (II) the non-locality of $\hat{\vA}(\omega, t)$.
To proceed, we heavily rely on fine-grained quasi-locality properties of operator Fourier transforms in both space and frequency in order to relate the Dirichlet forms of the full chain. To begin, we build up two tools on the locality of Operator Fourier Transforms in 1D, whose proofs are deferred to \autoref{section:deferred-dirichlet-comparison} for clarity. 

\begin{lem}
    [Locality of Operator Fourier Transforms in Space and Frequency]\label{lem:annulus_OFT}
    For any single-site operator $\vA$, $\|\vA\|\leq 1$, supported on site $i\in [n]$, frequency $\omega\in \BR$, frequency width $\sigma \in \BR^{+}$, the operator Fourier transform w.r.t. a 1D Hamiltonian \autoref{defn:1D} admits an annulus decomposition
    \begin{equation}
        \hat{\vA}_\sigma(\omega) = \sum_\ell \vY_{\ell, \omega, \sigma}, \quad \text{where}\quad  \| \vY_{\ell, \omega, \sigma}\| \leq c_1 \cdot e^{-c_2|\omega|-c_3\ell\log\ell},
    \end{equation}
    where $\vY_{\ell, \omega, \sigma}$ is supported on the neighbourhood interval $[i-\ell,i+\ell]$, and where the constants $c_1, c_2, c_3$ depend only on $\sigma,  q$.
\end{lem}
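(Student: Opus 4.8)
The plan is to work in the time domain and combine two separate locality mechanisms: a Lieb--Robinson bound in real time, responsible for confining the operator into annuli, and a contour shift into imaginary time, responsible for the frequency decay $e^{-c_2|\omega|}$. Start from the integral representation $\hat{\vA}_\sigma(\omega)=\tfrac{1}{\sqrt{2\pi}}\int_{\BR} e^{\ri\vH t}\vA e^{-\ri\vH t}e^{-\ri\omega t}f_\sigma(t)\,\rd t$. The integrand extends to an entire function of $t$ (finite-dimensional Heisenberg evolution and a Gaussian are entire), and on any fixed vertical strip $|\mathrm{Im}\,t|\le c$ the factor $|f_\sigma(t)|\lesssim e^{-\sigma^2(\mathrm{Re}\,t)^2+\sigma^2 c^2}$ dominates the crude, $n$-dependent bound $e^{2\|\vH\|c}$ on the Heisenberg piece, so Cauchy's theorem lets me shift the contour to $\BR-\ri\kappa\,\sgn(\omega)$ for a fixed $\kappa$ (e.g.~$\kappa=1$). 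This produces
\[
\hat{\vA}_\sigma(\omega)=\frac{e^{-\kappa|\omega|}}{\sqrt{2\pi}}\int_{\BR} e^{\ri\vH t}\,\vA'\,e^{-\ri\vH t}\,w(t)\,\rd t,\qquad \vA':=e^{\kappa\,\sgn(\omega)\,\vH}\,\vA\,e^{-\kappa\,\sgn(\omega)\,\vH},
\]
with $|w(t)|\le e^{\sigma^2\kappa^2}f_\sigma(t)$. The crucial input is that in a 1D Hamiltonian the complex-time evolution of a single-site operator is quasi-local with \emph{super}-exponential tails (\autoref{lem:locality_complextime}, \eqref{eq:imaginary-time-conjugation}): $\vA'=\sum_{m\ge 0}\vA'_m$ with $\vA'_m$ supported on $[i-m,i+m]$ and $\|\vA'_m\|\le c\,e^{-c'\,m\log m}$ for constants depending only on $\sigma,q$; in particular $\|\vA'\|$ is bounded independently of $n$, which is the only place 1D structure enters.

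Next I would run a standard annulus decomposition on this shifted integral. For each fixed $m$, the Lieb--Robinson series bound for a 1D nearest-neighbour Hamiltonian (with velocity $v=v(q)$) approximates $e^{\ri\vH t}\vA'_m e^{-\ri\vH t}$ to within $\|\vA'_m\|\cdot\tfrac{(v|t|)^k}{k!}$ by an operator supported on $[i-m-k,i+m+k]$; taking successive differences in $k$ produces operators supported on the annulus of radius $m+k$ whose norms telescope to zero. Collecting, for each $\ell$, the $\tfrac{e^{-\kappa|\omega|}}{\sqrt{2\pi}}\int(\cdots)\,w(t)\,\rd t$ of all annulus pieces with $m+k=\ell$ defines $\vY_{\ell,\omega,\sigma}$, supported on $[i-\ell,i+\ell]$, with $\sum_\ell\vY_{\ell,\omega,\sigma}=\hat{\vA}_\sigma(\omega)$. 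For the norm bound, integrate the Lieb--Robinson tail against the Gaussian: $\int_0^\infty\tfrac{(vt)^k}{k!}e^{-\sigma^2 t^2}\,\rd t=\tfrac{(v/\sigma)^k}{2\sigma\,k!}\Gamma(\tfrac{k+1}{2})=e^{-\frac12 k\log k+O(k)}$ by Stirling, which is super-exponential; convolving this with the weights $\|\vA'_m\|\le c\,e^{-c'm\log m}$ and using convexity of $x\log x$ to see $\sum_{m+k=\ell}e^{-c'm\log m-c''k\log k}\le e^{-c_3\ell\log\ell+O(\ell)}$, then absorbing the $O(\ell)$ loss (for $\ell$ beyond a constant depending on $\sigma,q$), the factor $e^{\sigma^2\kappa^2}$, and the finitely many small-$\ell$ terms into $c_1$, I obtain $\|\vY_{\ell,\omega,\sigma}\|\le c_1\,e^{-\kappa|\omega|}e^{-c_3\ell\log\ell}$. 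This is exactly the claimed bound with $c_2=\kappa$ and $c_1,c_3$ depending only on $\sigma,q$.

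I expect the main difficulty to be bookkeeping rather than conceptual. The one point that genuinely must work is that the imaginary-time-shifted operator $\vA'$ retains a super-exponential (not merely exponential) spatial tail, since an exponential tail would degrade to only an exponential bound after convolution with the real-time Lieb--Robinson light cone and one would lose the $e^{-c_3\ell\log\ell}$ rate; this is precisely what the 1D complex-time locality estimate of \autoref{lem:locality_complextime} supplies. The remainder is tracking that $\kappa$, $v$, the complex-time constants, and the Stirling losses all depend only on $\sigma$ and $q$ and never on the system size $n$ or on $\beta$ (the statement carries no $\beta$ at all), plus the elementary telescoping and Gaussian-moment estimates, all routine. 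This localized decomposition then feeds directly into the Dirichlet-form comparison of \autoref{thm:ckg_gap}.
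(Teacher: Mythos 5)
Your proposal is correct but takes a genuinely different route from the paper. The paper keeps a single annulus decomposition of the real-time evolved $\vA(t)$ (from \autoref{lem:real_time_locality}), sets $\vY_{\ell,\omega,\sigma}:=\int (\vA_\ell(t)-\vA_{\ell-1}(t))e^{\ri\omega t}f_\sigma(t)\,\rd t$, and then proves two \emph{independent} norm bounds on that single object: a super-exponential $e^{-a_4\ell\log\ell}$ bound by integrating the Lieb--Robinson tail against the Gaussian moments, and an $e^{-a_5|\omega|}$ bound by viewing $\vY_{\ell,\omega,\sigma}$ as a difference of operator Fourier transforms of truncated Hamiltonians and invoking \autoref{lem:bounds_imaginary_conjugation} together with \autoref{lem:locality_complextime} at $\beta=a_5\,\sgn(\omega)$. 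The two are then merged via $\min(x,y)\le\sqrt{xy}$. You instead pull out the frequency decay up front by a contour shift to $\BR-\ri\kappa\,\sgn(\omega)$, producing the shifted operator $\vA'$ and the prefactor $e^{-\kappa|\omega|}$, and then run a two-level decomposition (complex-time annuli for $\vA'$, real-time Lieb--Robinson annuli for each piece) whose convolution yields the spatial decay; your $\vY_{\ell,\omega,\sigma}$ is not the same object as the paper's. Both arguments rely on exactly the same two physical ingredients; yours avoids the $\sqrt{xy}$ geometric-mean loss and therefore gets slightly tighter decay exponents directly as a product, at the price of the extra bookkeeping around the contour shift (in particular justifying it for each finite $n$ before invoking $n$-independent 1D locality, which you flag correctly) and the convolution estimate $\sum_{m+k=\ell}e^{-c'm\log m-c''k\log k}\le e^{-c_3\ell\log\ell+O(\log\ell)}$. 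Either approach proves the lemma with constants depending only on $\sigma,q$.
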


Roughly speaking, the operator Fourier transform of local operators satisfies locality in frequency: a local jump $\vA$ is unlikely to substantially change the energy of the state. Furthermore, in 1D, the real-time evolution of local operators admits a Lieb-Robinson bound and thereby should be localized in space.

Building on \autoref{lem:annulus_OFT}, we show that commutators with operator Fourier transforms of local jumps admit a similar form of locality, in that they are bounded by the commutators of ``nearby'' Pauli jumps. 

\begin{lem}
    [Locality of Commutators of Operator Fourier Transforms]\label{lem:norms-of-commutators} In the context of \autoref{lem:annulus_OFT}, for any observable $\vO$, the commutator of the operator Fourier transform of $\vA$ w.r.t. a 1D Hamiltonian, satisfies the following norm bound:
    \begin{align}
        \lnorm{[\hat{\vA}_{\sigma}(\omega),\vO]}_{\vrho} \le c_1\cdot e^{-c_2|\omega|}\cdot \sum_{j\in [n]}\sum_{\vB\in \CS_j^1}   \|[\vB, \vO]\|_{\vrho}\cdot e^{-c_3\cdot |i-j|\ln |i-j|},
    \end{align}
    where $\vB\in \CS_{j}^1$ denotes the sum over all single-site Pauli operators on site $j$, and $c_1, c_2, c_3>0$ are constants that depend only on $\sigma,\beta,  q$.
\end{lem}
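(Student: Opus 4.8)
The plan is to combine the space--frequency annulus decomposition of \autoref{lem:annulus_OFT} with a site-by-site ``peeling'' of the commutator in the KMS norm. First I would invoke \autoref{lem:annulus_OFT} to write $\hat{\vA}_\sigma(\omega) = \sum_{\ell\ge 0}\vY_{\ell,\omega,\sigma}$, where $\vY_{\ell,\omega,\sigma}$ is supported on the interval $\sI_\ell := [i-\ell,i+\ell]$ and has operator norm at most $c_1 e^{-c_2|\omega|-c_3\ell\log\ell}$; the triangle inequality then reduces the claim to bounding $\sum_{\ell\ge 0}\|[\vY_{\ell,\omega,\sigma},\vO]\|_{\vrho}$. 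For each $\ell$ I would expand $\vY_{\ell,\omega,\sigma}$ in the Pauli basis of $\CH_{\sI_\ell}$, writing it as a sum of at most $4^{q(2\ell+1)}$ Pauli strings on $\sI_\ell$, each with coefficient bounded in modulus by $\|\vY_{\ell,\omega,\sigma}\|$; hence it suffices to control $\|[\vP,\vO]\|_{\vrho}$ for a single Pauli string $\vP$ supported on $\sI_\ell$.

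The core step is the peeling. Writing $\vP$ as an ordered product $\vP = \prod_{j\in\sI_\ell}\vP_j$ with each $\vP_j\in\CS_j^1$ (possibly the identity), the telescoping (Leibniz) identity gives
\[ [\vP,\vO] = \sum_{j\in\sI_\ell}\Big(\textstyle\prod_{j'<j}\vP_{j'}\Big)\,[\vP_j,\vO]\,\Big(\textstyle\prod_{j'>j}\vP_{j'}\Big). \]
Applying \autoref{lem:loose_AO} (Hölder in the KMS norm) to each summand bounds it by $\|\vrho^{1/4}\vL_j\vrho^{-1/4}\|\cdot\|\vrho^{-1/4}\vR_j\vrho^{1/4}\|\cdot\|[\vP_j,\vO]\|_{\vrho}$ with $\vL_j=\prod_{j'<j}\vP_{j'}$ and $\vR_j=\prod_{j'>j}\vP_{j'}$. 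Since imaginary-time conjugation is multiplicative, $\vrho^{1/4}\vL_j\vrho^{-1/4}=\prod_{j'<j} e^{-\beta\vH/4}\vP_{j'} e^{\beta\vH/4}$, and each single-site factor has norm at most a $\beta$-dependent constant $c_\beta\ge 1$ by the 1D imaginary-time locality bound (\autoref{lem:locality_complextime}, \eqref{eq:imaginary-time-conjugation}); thus $\|\vrho^{\pm1/4}\vL_j\vrho^{\mp1/4}\|,\|\vrho^{\pm1/4}\vR_j\vrho^{\mp1/4}\|\le c_\beta^{2\ell+1}$. Bounding $\|[\vP_j,\vO]\|_{\vrho}$ by $\sum_{\vB\in\CS_j^1}\|[\vB,\vO]\|_{\vrho}$ (the identity contributes nothing) yields $\|[\vP,\vO]\|_{\vrho}\le c_\beta^{2(2\ell+1)}\sum_{j\in\sI_\ell}\sum_{\vB\in\CS_j^1}\|[\vB,\vO]\|_{\vrho}$.

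Finally I would resum over $\ell$. Combining the three estimates gives $\|[\hat{\vA}_\sigma(\omega),\vO]\|_{\vrho}\le c_1 e^{-c_2|\omega|}\sum_{\ell\ge 0}K^{\ell}e^{-c_3\ell\log\ell}\sum_{j:\,|i-j|\le\ell}\sum_{\vB\in\CS_j^1}\|[\vB,\vO]\|_{\vrho}$, where the constant $K$ absorbs the factors $4^q$, the $c_\beta^{O(1)}$, and numerical constants. Exchanging the order of summation, the total coefficient of $\sum_{\vB\in\CS_j^1}\|[\vB,\vO]\|_{\vrho}$ for a site $j$ at distance $d=|i-j|$ is $c_1 e^{-c_2|\omega|}\sum_{\ell\ge d}K^{\ell}e^{-c_3\ell\log\ell}$; because the superexponential weight $e^{-c_3\ell\log\ell}$ eventually dominates $K^\ell$, this tail is at most $c_1' e^{-c_3' d\log d}$ for suitable $\beta,q$-dependent constants, which is exactly the claimed bound. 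The main thing to be careful about is this last bookkeeping: one must verify that the polynomially-many-per-site losses --- the $4^{q(2\ell+1)}$ Pauli-basis count and the $c_\beta^{O(\ell)}$ from peeling --- are indeed overwhelmed by the decay coming out of \autoref{lem:annulus_OFT}, which is precisely where the (mild) superexponential form $e^{-c_3\ell\log\ell}$, rather than a pure exponential, of that lemma is genuinely used; the remainder is a routine composition of Hölder's inequality, the single-site imaginary-time bound, and geometric summation.
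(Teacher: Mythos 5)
Your proposal is correct and follows essentially the same route as the paper: annulus decomposition from \autoref{lem:annulus_OFT}, Pauli-basis expansion of each annulus term with coefficients bounded by $\|\vY_{\ell,\omega}\|$, Leibniz-rule peeling of the commutator combined with KMS H\"older (\autoref{lem:loose_AO}) and the single-site imaginary-time conjugation bound from \autoref{lem:locality_complextime}, and a final resummation over $\ell$ in which the superexponential $e^{-c_3\ell\log\ell}$ decay absorbs the $K^{\ell}$ losses from the Pauli-count and the per-site $c_\beta$ factors. The only cosmetic difference is that you phrase the peeling as one application of H\"older followed by multiplicativity of $\vrho^{1/4}(\cdot)\vrho^{-1/4}$, whereas the paper iterates H\"older per site; these are equivalent.
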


We are now in a position to relate the Dirichlet forms~\eqref{eq:compare_dirichlet}.

\subsubsection{Proof of the Dirichlet Form Comparison \eqref{eq:compare_dirichlet}}

The high-level idea of the proof is to write $\vA$ in terms of a linear combination of the operator Fourier transforms at different frequencies. We proceed by decomposing the linear combination into a ``low-frequency'' component, whose commutators can be related directly to the Dirichlet form of $\CL$; and a ``high-frequency'' component, which we analyze via the previously derived locality properties of the operator Fourier transforms (\autoref{lem:norms-of-commutators}).

\begin{proof}

[of~\autoref{thm:ckg_gap}] To begin, consider a single-site Pauli jump operator $\vA \in \CS_{[n]}^1$, WLOG with support on some site $i\in [n]$. We first rewrite the operator $\vA$ in terms of a linear combination of its operator Fourier transform $\hat{\vA}_{\sigma_1}(\omega)$, for a suitably chosen uncertainty $\sigma_1 = 1/(\sqrt{2}\beta)$. Then, we split the integral over $\omega$ into ``high'' and ``low'' frequency components, tuned by a threshold $\Omega>0.$
\begin{align}
   \sqrt{2\sigma_1\sqrt{2\pi}} \vA &= \int_{-\infty}^{\infty} \hat{\vA}_{\sigma_1}(\omega) \rd \omega  = \int_{-\infty}^{\infty} \bigg(\indicator(\labs{\omega}\le \Omega) + \indicator(\labs{\omega}> \Omega) \bigg) \hat{\vA}_{\sigma_1}(\omega) \rd \omega.\\
    & = \undersetbrace{\text{low-frequency}}{\sqrt{\frac{\sigma_2\sigma_3\sqrt{2\pi}}{\sigma_1}}\cdot  \int_{\labs{\omega}\le \Omega} \int_{-\infty}^{\infty} \hat{\vA}_{\sigma_2}(\omega)(t) e^{-i \omega t}f_{\sigma_3}(t) \rd t \rd \omega} + \undersetbrace{\text{high-frequency}}{\int_{\labs{\omega}> \Omega} \hat{\vA}_{\sigma_1}(\omega) \rd \omega}.\label{eq:A_high_low}
\end{align}
In the second line, we re-wrote the low frequency component $\labs{\omega} \le \Omega$ in terms of the time-evolved operators $\hat{\vA}(\omega,t)$ using a simple convolution property of operator Fourier transforms (\autoref{lem:OFT_Gaussian_expansion}, below). The uncertainties are defined to satisfy $1/\sigma_1^2 = 1/\sigma_2^2 +1/\sigma_3^2$, and in particular, we pick $\sigma_2 = \sigma_3 = 1/\beta:=\sigma$. We proceed by taking the $\vrho$-weighted norm with the commutator $\norm{[\cdot,\vO]}_{\vrho}$ on both sides. \\

\noindent \textbf{The Low Frequency Component.} The low frequency $\labs{\omega} \le \Omega$ part can be bounded by the Dirichlet form of the local Lindbladian $\CL_a$ via the Cauchy-Schwarz inequality over $\rd \omega \rd t$:
\begin{align}
    \int_{\labs{\omega}\le \Omega} \int_{-\infty}^{\infty} \norm{[\hat{\vA}_{\sigma}(\omega)(t),\vO]}_{\vrho}\cdot  \labs{f_{\sigma}(t)} \rd t \rd \omega 
    &\leq \sqrt{\int_{-\infty}^{\infty}  \int_{\labs{\omega}\le \Omega} \frac{f^2_{\sigma}(t)}{g(t)h_\mathsf{G}(\omega)}\rd t \rd \omega} \sqrt{\CE(\vO,\vO)}\\
    & \lesssim e^{\beta^2\Omega^2/4}\cdot \sqrt{\CE(\vO,\vO)},
\end{align}
where the last line evaluates the $\rd t, \rd \omega$ integrals using the expression for the Dirichlet form \autoref{lem:Dirichlet} under the Gaussian transition weight $h_\mathsf{G}(\omega): = \e^{-\frac{1}{4}} \e^{-\omega^2\beta^2/2}\ge 0$, $g(t) = \frac{1}{\beta\cosh(2\pi t/ \beta)}\ge 0,$ and $f_{\sigma}(t)=e^{-\sigma^2t^2}\sqrt{\sigma\sqrt{2/\pi}}.$ \\

\noindent \textbf{The High Frequency Component.} For the high frequency $\labs{\omega} > \Omega$ part, we first invoke \autoref{lem:norms-of-commutators} for the following bound on the norms of commutators of operator Fourier transforms of single-site jump operators in 1D:
\begin{equation}
    \lnorm{[\hat{\vA}_{\sigma_1}(\omega),\vO]}_{\vrho} \le c_1\cdot e^{-c_2|\omega|}\cdot  \sum_{j\in [n]}\sum_{\vB\in \CS_j^1}   \|[\vB, \vO]\|_{\vrho}\cdot e^{-c_3\cdot |i-j|\ln |i-j|},
\end{equation}
for suitable $\beta, q$ dependent constants $c_1, c_2, c_3$. To proceed, we integrate over $\omega$:

\begin{align}
    \int_{\labs{\omega}> \Omega}\norm{[\hat{\vA}_{\sigma_1}(\omega),\vO]}_{\vrho} \rd \omega 
    &\leq c_4\cdot e^{-c_2\Omega}\cdot \sum_{j\in [n]}\sum_{\vB\in \CS_j^1}   \|[\vB, \vO]\|_{\vrho}\cdot e^{-c_3\cdot |i-j|\ln |i-j|},
\end{align}
for a suitable choice of constant $c_4$. \\

\noindent \textbf{Combining the Regimes.} To proceed, we consider the Dirichlet form of global generator $\CK$, defined on the full chain, where the collection of jump operators is the set of all single-site Pauli operators $\CS^1_{[n]}$ on $[n]$. Applying the high and low frequency decompositions, 
\begin{align}
    \sum_{\vB\in \CS^1_{[n]}} \|[\vB, \vO]\|_{\vrho}^2 &\leq 
    c_5 \sum_{\substack{i\in [n] \\ \vB\in \CS^1_{i}}} \bigg( e^{\beta^2\Omega^2/4} \sqrt{\CE_{\vB}(\vO, \vO)} +e^{-c_2\Omega}\cdot \sum_{\substack{j\in [n] \\ \vB\in \CS_j^1}}   \|[\vB, \vO]\|_{\vrho}\cdot e^{-c_3\cdot |i-j|\ln |i-j|}\bigg)^2 \\
    &\leq 2\cdot c_5\cdot \bigg( e^{\beta^2\Omega^2/2} \CE(\vO, \vO) + e^{-2c_2\Omega}\cdot\sum_{i\in [n]} \bigg(\sum_{\substack{j\in [n] \\ \vB\in \CS_j^1}}   \|[\vB, \vO]\|_{\vrho}\cdot e^{-c_3\cdot |i-j|\ln |i-j|}\bigg)^2\bigg),\label{eq:put-together-frequencies}
\end{align}
using that $(a+b)^2 \le 2(a^2+b^2)$ pointwise for each $i, \vS$. Our goal is to expose a self-bounding inequality by bringing the second term into the same commutator-square form of $\CK$ on the LHS. Let $\alpha_{j} = \sum_{\vS\in \CS_j^1}\|[\vS, \vO]\|_{\vrho}$ and $f(k) = e^{-c_3\cdot \labs{k}\ln \labs{k}}$, then 
\begin{align}
    \sum_{i\in [n]} \bigg(\sum_{\substack{j\in [n]}}  \alpha_j \cdot f(i-j)\bigg)^2 &\le \sum_{i\in [n]} \bigg(\sum_{\substack{j\in [n]}}  \alpha_j^2 \cdot f(i-j) \bigg) \bigg( \sum_{\substack{j\in [n]}} f(i-j) \bigg) \\
    &\le \bigg(\sum_{k} f(k)\bigg) \cdot \sum_{j\in [n]} \alpha_j^2 \sum_{i\in [n]} f(i-j)\\
    &\le \bigg(\sum_{k} f(k)\bigg)^2 \sum_{j\in [n]} \alpha_j^2.
\end{align}

Therefore, we establish a self-bounding inequality for the Dirichlet form of $\CK$, by choosing a sufficiently large truncation frequency $\Omega$ and a function of $c_2, c_6, \beta$ to suppress the coefficients of the second term:
\begin{align}
    &\sum_{\vA\in \CS^1_{[n]}} \|[\vA, \vO]\|_{\vrho}^2 \leq c_6\cdot \bigg( e^{\beta^2\Omega^2/2} \CE(\vO, \vO) + e^{-2c_2\Omega}\cdot \sum_{\vA\in \CS^1_{[n]}}  \|[\vA, \vO]\|_{\vrho}^2\bigg) \\ \Rightarrow &\sum_{\vA\in \CS^1_{[n]}} \|[\vA, \vO]\|_{\vrho}^2 \leq \CE(\vO, \vO)\cdot \frac{c_6\cdot e^{\beta^2\Omega^2/2}}{1 - c_6 \cdot e^{-2c_2\Omega}},
\end{align}
which concludes the proof.
\end{proof}

To conclude this subsection, we present the deferred claim on a convolution property of operator Fourier transforms. 
\begin{lem}
    [Gaussian Convolution of Operator Fourier Transforms]
    \label{lem:OFT_Gaussian_expansion} For any operator $\vA$ and uncertainties $\sigma_1, \sigma_2, \sigma_3\in \BR^+$ satisfying $1/\sigma_1^2 = 1/\sigma_2^2+1/\sigma_3^2$, we have:
    \begin{equation}
        \hat{\vA}_{\sigma_1}(\omega) =  \sqrt{\frac{\sigma_2\sigma_3}{\sigma_1\sqrt{2\pi}}} \int_{-\infty}^{\infty} \hat{\vA}_{\sigma_3}(\omega, t) e^{-i \omega t}f_{\sigma_2}(t) \rd t.
    \end{equation}
\end{lem}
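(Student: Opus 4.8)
This is a direct computation; the plan is to expand both sides via the time-domain definition~\eqref{eq:OFT}, interchange the order of integration, and collapse a Gaussian convolution, using the constraint $1/\sigma_1^2 = 1/\sigma_2^2 + 1/\sigma_3^2$ precisely to match the width of the resulting Gaussian with $f_{\sigma_1}$. The only thing to be careful about is bookkeeping the normalization constants $\sqrt{\sigma\sqrt{2/\pi}}$.

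\textbf{Step 1: Unfold the definitions.} Starting from the right-hand side, substitute $\hat{\vA}_{\sigma_3}(\omega,t) = \e^{\ri\vH t}\hat{\vA}_{\sigma_3}(\omega)\e^{-\ri\vH t}$ and then the definition of $\hat{\vA}_{\sigma_3}(\omega)$ from \eqref{eq:OFT}. This produces a double integral over two time variables $s,t$ whose integrand is $\e^{\ri\vH(s+t)}\vA\,\e^{-\ri\vH(s+t)}\,\e^{-\ri\omega s}\e^{-\ri\omega t} f_{\sigma_3}(s) f_{\sigma_2}(t)$, up to the prefactor $\sqrt{\sigma_2\sigma_3/(\sigma_1\sqrt{2\pi})}\cdot(2\pi)^{-1/2}$. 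Fubini applies since the Gaussian weights $f_{\sigma_2}, f_{\sigma_3}$ make everything absolutely integrable (the operator norm of $\e^{\ri\vH\tau}\vA\e^{-\ri\vH\tau}$ is bounded).

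\textbf{Step 2: Change variables and convolve.} Set $u = s+t$; then $\e^{-\ri\omega s}\e^{-\ri\omega t} = \e^{-\ri\omega u}$ automatically, and the $s$-integral becomes the Gaussian convolution $(f_{\sigma_3}*f_{\sigma_2})(u) = \int f_{\sigma_3}(u-t) f_{\sigma_2}(t)\,\rd t$. Using $\int \e^{-a\tau^2}\e^{-b(u-\tau)^2}\rd\tau = \sqrt{\pi/(a+b)}\,\e^{-\frac{ab}{a+b}u^2}$ with $a=\sigma_2^2$, $b=\sigma_3^2$, and the identity $\frac{\sigma_2^2\sigma_3^2}{\sigma_2^2+\sigma_3^2} = \big(\tfrac{1}{\sigma_2^2}+\tfrac{1}{\sigma_3^2}\big)^{-1} = \sigma_1^2$, one finds $(f_{\sigma_3}*f_{\sigma_2})(u) = \sqrt{\sigma_1\sqrt{2\pi}/(\sigma_2\sigma_3)}\cdot f_{\sigma_1}(u)$ after simplifying the powers of $2/\pi$.

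\textbf{Step 3: Collect constants.} What remains is $\sqrt{\tfrac{\sigma_2\sigma_3}{\sigma_1\sqrt{2\pi}}}\cdot(2\pi)^{-1/2}\cdot\sqrt{\tfrac{\sigma_1\sqrt{2\pi}}{\sigma_2\sigma_3}}\cdot\int \e^{\ri\vH u}\vA\,\e^{-\ri\vH u}\,\e^{-\ri\omega u} f_{\sigma_1}(u)\,\rd u$; the three scalar factors multiply to $(2\pi)^{-1/2}$, so the right-hand side is exactly $\hat{\vA}_{\sigma_1}(\omega)$ by \eqref{eq:OFT}. There is no real obstacle here — the computation is forced once one commits to the change of variables; the only place to slip is mismatching the normalization constants, which the constraint on $\sigma_1,\sigma_2,\sigma_3$ is designed to reconcile. (Equivalently, one could work in the Bohr-frequency representation $\hat{\vA}_\sigma(\omega) = \sum_\nu \vA_\nu \hat{f}_\sigma(\omega-\nu)$, using $\e^{\ri\vH t}\vA_\nu\e^{-\ri\vH t} = \e^{\ri\nu t}\vA_\nu$ and the fact that a product of two Gaussians $\hat f_{\sigma_2}\hat f_{\sigma_3}$ in frequency is again a Gaussian $\hat f_{\sigma_1}$ up to a constant; this is the Fourier dual of the same convolution identity.)
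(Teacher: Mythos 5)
Your computation is correct. The paper proves the identity in the frequency domain: it writes $\hat{\vA}_{\sigma_1}(\omega) = \sum_\nu \vA_\nu \hat f_{\sigma_1}(\omega-\nu)$, factors the frequency-side Gaussian as $\hat f_{\sigma_1}(\omega-\nu) \propto \hat f_{\sigma_2}(\omega-\nu)\hat f_{\sigma_3}(\omega-\nu)$ (the statement that a product of two Gaussian frequency filters, with combined width set by $1/\sigma_1^2 = 1/\sigma_2^2 + 1/\sigma_3^2$, is again a Gaussian filter), recognizes $\vA_\nu \hat f_{\sigma_3}(\omega-\nu)$ as the Bohr component $(\hat\vA_{\sigma_3}(\omega))_\nu$, and then re-Fourier-transforms the remaining sum to the time domain. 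You instead stay in the time domain throughout: expand $\hat\vA_{\sigma_3}(\omega,t)$ as a double time integral, change variables, and collapse $f_{\sigma_2}*f_{\sigma_3}$. These are dual expressions of the same statement (a fact you flag yourself in your parenthetical), so neither is materially more general; your route is somewhat more hands-on and avoids needing to trust the compact Bohr-frequency bookkeeping, while the paper's route is shorter to write. One worthwhile sanity check your version surfaces: your explicit constant accounting reproduces the prefactor $\sqrt{\sigma_2\sigma_3/(\sigma_1\sqrt{2\pi})}$ in the lemma statement, which is correct; the paper's own final displayed line of the proof appears to be missing the $(2\pi)^{-1/2}$ from \eqref{eq:OFT} when converting $\sum_\nu(\hat\vA_{\sigma_3}(\omega))_\nu\hat f_{\sigma_2}(\omega-\nu)$ to a time integral, a benign typo in the proof (not the statement) that your derivation implicitly corrects.
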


\begin{proof}
    By definition,

    \begin{align}
    \hat{\vA}_{\sigma_1}(\omega) = \sum_{\nu} \vA_{\nu} \hat{f}_{\sigma_1}(\omega-\nu)
    &= \sqrt{\frac{\sigma_2\sigma_3\sqrt{2\pi}}{\sigma_1}} \sum_{\nu} \vA_{\nu} \hat{f}_{\sigma_3}(\omega-\nu) \hat{f}_{\sigma_2}(\omega-\nu)\\
        &= \sqrt{\frac{\sigma_2\sigma_3\sqrt{2\pi}}{\sigma_1}} \sum_{\nu}(\hat{\vA}_{\sigma_3}(\omega))_{\nu} \hat{f}_{\sigma_2}(\omega-\nu)\\
    & = \sqrt{\frac{\sigma_2\sigma_3\sqrt{2\pi}}{\sigma_1}} \int_{-\infty}^{\infty} \hat{\vA}_{\sigma_3}(\omega)(t) e^{-i \omega t}f_{\sigma_2}(t) \rd t.
\end{align}
The first line is a variance property of Gaussian distributions $\hat{f}_{\sigma_1}(x)\propto \hat{f}_{\sigma_2}(x)\hat{f}_{\sigma_3}(x)$ assuming $1/\sigma_1^2 = 1/\sigma_2^2 +1/\sigma_3^2.$ The last two lines rewrite the sum over the Bohr frequencies as a convolution in the time domain to expose $\vA(\omega,t)$.
\end{proof}

\subsection{Deferred Proofs of the Locality of Operator Fourier Transforms}
\label{section:deferred-dirichlet-comparison}

Here we present the proofs of \autoref{lem:annulus_OFT} and \autoref{lem:norms-of-commutators} (stated above), on the locality in space and frequency of operator Fourier transforms of single-site Pauli operators. Let us begin with the proof of the annulus decomposition of the operator Fourier transform.

\begin{proof}

    [of \autoref{lem:annulus_OFT}]
    By applying \autoref{lem:real_time_locality}, we first write the annulus decomposition for the real-time evolution $ \vA(t) = \sum_\ell \vA_{\ell}(t) - \vA_{\ell-1}(t), $ with 
    \begin{equation}
        \|\vA_{\ell}(t) - \vA_{\ell-1}(t) \| \leq \min \bigg( 2,\frac{(4\labs{t})^\ell}{\ell!} \bigg).
    \end{equation}
Next, we define the Fourier transform of each annulus: 
    \begin{equation}
        \vY_{\ell, \omega, \sigma} := \int \bigg(\vA_{\ell}(t) - \vA_{\ell-1}(t)\bigg) e^{i\omega t}f_\sigma(t) \rd t.
    \end{equation}
    We will prove two bounds on the norm of $ \vY_{\ell, \omega, \sigma}$. The first bound lies in the regime of large $\ell$:
    \begin{align}
        \|\vY_{\ell, \omega, \sigma}\| &\leq \int \bigg\|\vA_{\ell}(t) - \vA_{\ell-1}(t)\bigg\| f_\sigma(t) \rd t \\& \leq \frac{a_1^\ell}{\ell!} \int |t|^\ell\cdot  f_\sigma(t) \cdot \rd t  \leq (a_2 \sigma^{-1})^\ell \cdot \frac{\ell^{\ell/2} }{\ell!} \leq a_3 e^{-a_4\ell\log \ell},
    \end{align}
    where we evaluated the moments of Gaussians. The constants $a_2, a_3, a_4$ depend on $a_1$ and the width $\sigma$. In turn, for the frequency dependence, one can leverage the bounds on the norms of operator Fourier transforms in 1D, as guaranteed by \autoref{lem:bounds_imaginary_conjugation} + \autoref{lem:locality_complextime} applied to $\beta = a_5\cdot \textsf{sign}(\omega)$
    \begin{equation}
        \|\vY_{\ell, \omega, \sigma}\| \leq \bigg\|\int \vA_{\ell}(t)  e^{i\omega t}f_\sigma(t) \rd t\bigg\| + \bigg\|\int \vA_{\ell-1}(t)  e^{i\omega t}f_\sigma(t) \rd t\bigg\| \leq a_4\cdot \exp\bigg(-a_5|\omega| \bigg),
    \end{equation}
    where $a_4, a_5$ depend only on $\sigma, q$. Since $x, y\geq 0:\min(x, y)\leq \sqrt{xy}$ we arrive at the desired bound. 
\end{proof}

Now, we can prove \autoref{lem:norms-of-commutators} on the decay of commutators of operator Fourier transforms. 

\begin{proof}

    [of \autoref{lem:norms-of-commutators}] WLOG suppose that $\vA$ is supported at site $i\in [n]$. In a slight abuse of notation, we let $\CS_{i, \ell}:=\CS_{[i-\ell, i+\ell]}$ be the collection of all multi-qubit Pauli operators with support at distance $\leq \ell$ of $i$, and $\CS^1_{i, \ell} = \CS_{[i-\ell, i+\ell]}^1$ all the single-site $q$-ary Pauli operators at distance $\ell$ of $i$. We begin with the annulus decomposition for the operator Fourier transform $\hat{\vA}_\sigma(\omega) = \sum_\ell \vY_{\ell, \omega}$, as computed in \autoref{lem:annulus_OFT}. We proceed by expanding $\vY_{\ell, \omega, \sigma}$ in the Pauli basis $\CS_{i, \ell}$: 
\begin{equation}
    \vY_{\ell, \omega} = \sum_{\vS\in \CS_{i, \ell}} \alpha_{\vS, \ell, \omega} \vS, \quad\text{where}\quad |\alpha_{\vS, \ell, \omega}|  = 2^{-2q\ell} \cdot |\tr[\vS \vY_{\ell, \omega}]|\leq \|\vY_{\ell, \omega}\|.
\end{equation}
To relate this sum over multi-qubit Pauli's $\CS_{i, \ell}$, to single-qubit Pauli's $\CS_{i, \ell}^1$, one can expand the commutator as follows. Note that each multi-qubit Pauli is a tensor product of single-qubit Pauli's: $\vS = \otimes_j^{u}\vB_j$, with $\vB_j\in \CS_{i, \ell}^1$ and $u = 2\ell+1$.
\begin{align}
    \|[\vS, \vO]\|_{\vrho}&\leq \sum_j \|\prod_{k}^{j-1}\vB_k[\vB_j, \vO]\prod_{k=j+1}\vB_k\|_{\vrho} \tag*{(Expanding the Commutator)}  \\
    &\leq \max_{\vX\in \CS^1_{[n]}} \bigg(\|\vrho^{-1/4}\vX\vrho^{1/4}\|, \|\vrho^{1/4}\vX\vrho^{-1/4}\|\bigg)^{u-1}\cdot \sum_j \|[\vB_j, \vO]\|_{\vrho} \tag*{(KMS Holder's Inequality)} \\
    &\leq a_{\beta, q}^{u}\cdot  \sum_j \|[\vB_j, \vO]\|_{\vrho},  \tag*{(\autoref{lem:locality_complextime})}
\end{align}
where at last we applied the locality of complex time evolution \autoref{lem:locality_complextime} to single-qubit Pauli operators $\vX$, resulting in a constant $a_{\beta, q}$ which depends only on $\beta, q$. Returning to the annulus decomposition, we can now proceed via the triangle inequality to bound the norms of commutators with each component in the annulus decomposition:
\begin{equation}\label{lem:annulus-commutator}
    \|[ \vY_{\ell, \omega}, \vO]\|_{\vrho} \leq  a_{\beta, q}^{2\ell+1}\cdot 4^{2q\ell} \cdot \bigg(\sum_{\vS\in \CS_{i, \ell}^{1}} \|[\vB, \vO]\|_{\vrho}\bigg) \cdot \|\vY_{\ell, \omega}\|.
\end{equation}
To conclude, we proceed again via the triangle inequality to sum over the components in the decomposition. Recall we defined the original operator $\vA$ to lie at site $i\in [n]$. 
\begin{align}
    \|[ \hat{\vA}_\sigma(\omega), \vO]\|_{\vrho}  &\leq   \sum_\ell \|[ \vY_{\ell, \omega}, \vO]\|_{\vrho} \\& \leq e^{-c_3|\omega|}\cdot  \sum_\ell \undersetbrace{\le c_4\cdot e^{-c_2\ell\log \ell /2}}{b^{\ell}\cdot e^{-c_2\ell\log \ell}}   \cdot \bigg(\sum_{\vB\in \CS_{i, \ell+1}^1} \|[\vB, \vO]\|_{\vrho}\bigg) 
    \\ &\leq c_4\cdot e^{-c_3|\omega|}\cdot \sum_{j\in [n]}\sum_{\vB\in \CS_{j}^1} \|[\vB, \vO]\|_{\vrho} \cdot \sum_{\ell \geq d(i, j)} e^{-c_2\ell\log \ell/2} \tag{Exchanging the sums}\\
    &\leq c_5\cdot e^{-c_3|\omega|}\cdot \sum_{j\in [n]}\sum_{\vB\in \CS_{j}^1} \|[\vB, \vO]\|_{\vrho} \cdot e^{-c_2d(i, j)\log d(i, j)/2},
\end{align}
where, in the second inequality we leveraged \eqref{lem:annulus-commutator} and the bound on $\|\vY_{\ell, \omega}\|$ from \autoref{lem:annulus_OFT}; $b$ is the smallest constant s.t. $\forall\ell>1:b^\ell\geq c_1\cdot  a_{\beta, q}^{2(\ell+1)+1}4^{2q(\ell+1)}$, which can be further absorbed by the $\ell \log(\ell)$ decay. In the third inequality, we appropriately pick a constant $c_4$ (as a function of $c_2, b$) to upper bound the exponential factor in $\ell$ by the exponential decay in $\ell\log \ell$. Next, we re-ordered the sum over $\ell, \vB$ in terms of the support $j\in [n]$ of $\vS$ and the distance $d(i, j)$ between the sites. To conclude, we again leverage the superexponential decay in $d(i, j)$. adjusting the constants gives the advertised bound. 
\end{proof}

\section{Proofs of the Main Results}
\label{sec:results}

We dedicate this section to the proofs of the main results by collecting all the ingredients derived in the previous sections. We begin with the proof of the spectral gap for $\CL$ at all constant temperatures.

\begin{proof}

    [of \autoref{thm:main}] We put together the sequence of theorems developed in the previous sections. Fix $\beta\in \BR^+, q\in \BN^+$, and consider any family of 1D local Hamiltonians $\{\vH_n\}_{n\in \BN^+}$.\\

    \noindent \textbf{Decay-of-Correlations.} The starting point of our proof is the statement in \autoref{thm:weak-kimura} \cite{Kimura_2025}, which establishes $\infty$-Weak Clustering for 1D quantum spin chains at all temperatures. By the result in \autoref{thm:kms_from_weak}, $\infty$-Weak Clustering implies KMS-Weak Clustering at all temperatures in 1D. Finally,  KMS-Weak Clustering implies strong clustering for suitable tripartitions under the conditional expectation map $\{\BE_\sR\}_\sR$, \autoref{thm:strong_from_weak}. \\

    \noindent \textbf{The Generator $\CK$ is gapped.} Following the results of \autoref{sec:conditionalgapK}, \autoref{thm:K-is-gapped}, strong-clustering for suitable tripartitions implies that the generator $\CK$, defined by all single-site Pauli jumps, admits a constant spectral gap at all temperatures in 1D. \\

    \noindent \textbf{The Lindbladian $\CL$ is gapped.} By the comparison between the Dirichlet forms of $\CK$ and $\CL$, we conclude the Lindbladian $\CL$ \eqref{eq:exact_DB_L} defined by all single-site Pauli jumps similarly admits a constant spectral gap \autoref{thm:ckg_gap}. \\

    \noindent Finally, we conclude by remarking that all the constants in this paper are bounded for finite $\beta$; and thereby over any compact set $s\in [0, \beta]$ the spectral gap of $\CL_s$ admits a lower bound as a function only of $\beta$ (and $q$).
\end{proof}

\noindent Next, we show that via adiabatic evolution, one can design a low-depth circuit to prepare the purified Gibbs state.

\begin{proof}

    [of \autoref{cor:adiabatic}] We follow the description of the adiabatic algorithm present in  \autoref{thm:adiabatic-algorithm}. The \textit{discriminant} $\vcH_\beta$ resulting from the vectorization of the Lindbladian $\CL_\beta$. 
    \begin{equation}
      - \vrho^{-1/4} \CL_\beta[\vrho^{1/4}\cdot\vrho^{1/4}]\vrho^{-1/4} = \sum_i \vX_i\cdot \vY_i\quad \rightarrow \quad \vcH_\beta  = \sum_i \vX_i\otimes \vY_i^T.
    \end{equation}
    The discriminant is Hermitian, frustration-free, and has as ground state the purified Gibbs state \cite[Proposition I.1]{chen2023efficient}. What is more, the spectral gap of $\vcH_\beta$ is the same as that of $\CL_\beta$. By a consequence of \autoref{thm:main}, we then have that $\vcH_s$ admits a spectral gap over each $s\in [0, \beta]$. Since all the constants as a function of $\beta$ in this paper are bounded, the infimum of the gap along the compact set $[0, \beta]$ is also bounded away from 0 by a function of $\beta$.

    We can thereby prepare the ground state of $\vcH_\beta$ via the adiabatic algorithm along a path of decreasing temperature, from $\beta_0 = 0$ to $\beta$ \cite{Bachmann_2011}. A self-contained description of the algorithm is present in \autoref{sec:adiabatic}. As an immediate corollary of \autoref{thm:adiabatic-algorithm}, there exists a quantum circuit of depth $O(\mathsf{polylog}(n/\epsilon))$ and size $O(n\cdot \mathsf{polylog}(n/\epsilon))$ to prepare the purified Gibbs state, where the implicit constants depend only on $\beta, q$. 
\end{proof}

To conclude, we prove the exponential clustering for Gibbs states of 1D Hamiltonians at all temperatures. 

\begin{proof}
    
    [of \autoref{cor:clustering}] We recall that the existence of a constant spectral gap in a Hamiltonian with exponentially-decaying interactions implies an exponential decay of correlations \cite{hastings2006spectral}.    
    
    In \autoref{lem:lindblad-decaying-interactions},  we prove the discriminant $\vcH_\beta$ of $\CL_\beta$ admits an exponential decay of interactions. From the discussion in the proof of \autoref{cor:adiabatic} above, $\vcH_\beta$ is Hermitian, frustration-free, gapped (by \autoref{thm:main}), and admits the purified Gibbs state as unique ground state. This implies $\vcH_\beta$ admits \cite[Assumption 2.3]{hastings2006spectral}, and consequently, the purified Gibbs state satisfies exponential clustering for an appropriately chosen constant decay rate \cite[Theorem 2.8]{hastings2006spectral}. Finally, we can relate the correlation in the purified Gibbs state to that in the Gibbs state:
    \begin{align}
    \bra{\sqrt{\vrho}} \vX_{\sA}\otimes \vI \cdot \vY_{\sC}\otimes \vI \ket{\sqrt{\vrho}} - \bra{\sqrt{\vrho}} \vX_{\sA}\otimes \vI \ket{\sqrt{\vrho}} \bra{\sqrt{\vrho}} \vY_{\sC}\otimes \vI \ket{\sqrt{\vrho}} = \tr[\vrho\vX_{\sA}\vY_{\sC}] - \tr[\vrho\vX_{\sA}]\tr[\vrho\vY_{\sC}].
\end{align}

\end{proof}

\bibliographystyle{alphaUrlePrint.bst}
\bibliography{ref}

\appendix

\section{Additional Background on the Locality of One-Dimensional Quantum Systems}
\label{section:1D-locality}

We dedicate this appendix to a series of known facts and properties on the locality of the Gibbs states of one-dimensional, short-range, local Hamiltonians, as defined in \autoref{defn:1D}. These bounds are very well understood, and are among the earliest results in mathematical physics \cite{araki1969gibbs}, but we give a minimal and self-contained summary as follows. 

\subsection{Locality of One-Dimensional Complex Time Evolution}

Let $\vO$ be a generic operator defined over the spin chain Hilbert space $\otimes_i^n (\BC^{2^q})$; with support on the interval $\sA:=[a, b]\subseteq [n]$. For any integer $\ell$, recall the $\ell$-radius of $\sA$ is the interval $\sA_\ell:=[a-\ell, b+\ell]$. Now, fix a 1D Hamiltonian $\vH = \sum_\gamma \vH_\gamma$ as in \autoref{defn:1D}. We introduce the restricted Hamiltonian $\vH_{\sA_\ell} = \vH_{\ell}$: 
\begin{align}
    \vH_{\ell} = \vH_{\sA_\ell} := \sum_{\gamma \subset [a-\ell,b+\ell]} \vH_{\gamma}.
\end{align}
We denote complex-time operator dynamics under the truncated Hamiltonian by
\begin{align}
    \vO_{\ell}(z): =e^{i \vH_{\ell}z} \vO e^{-i \vH_{\ell}z} \quad \text{for any}\quad z \in \BC.
\end{align}
For the edge case $\ell = -1$, we define 
\begin{align}
\vO_{-1}(z):= 0.
\end{align}

The well-known Lieb-Robinson bounds quantify the locality of $\vO_\ell(t)$ for real time $t\in\BR$.

\begin{lem}[Locality of Real-Time Evolution]\label{lem:real_time_locality}
For any operator $\vO$ supported on $[a,b]$, $\norm{\vO}\le 1$, the real time evolution for time $t\in \BR$ under a 1D Hamiltonian (\autoref{defn:1D}) satisfies an annulus decomposition:
    \begin{align}
    \vO(t) = \sum_{\ell=0}^{\infty} (\vO_{\ell}(t)-\vO_{\ell-1}(t)) \quad \text{where}\quad
   \big\|\vO_{\ell}(t)-\vO_{\ell-1}(t)\big\| \le \min\bigg(2,\frac{(4\labs{t})^{\ell}}{\ell!}\bigg),
\end{align}
and similarly:
\begin{align}
    \norm{\vO(t)-\vO_{\ell-1}(t)} \le \min \bigg( 2, \frac{(4\labs{t})^{\ell}}{\ell!}\bigg).
\end{align}
\end{lem}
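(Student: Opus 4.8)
\textbf{Proof plan for \autoref{lem:real_time_locality}.}

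The plan is to prove this as a standard Lieb-Robinson bound specialized to a 1D chain of bounded-strength nearest-neighbor interactions, using the Duhamel/interaction-picture iteration and then summing the series. First I would fix the operator $\vO$ supported on $[a,b]$ with $\norm{\vO}\le 1$, define the truncated Hamiltonians $\vH_\ell = \vH_{\sA_\ell}$, and set $\vO_\ell(t) = e^{i\vH_\ell t}\vO e^{-i\vH_\ell t}$ with $\vO_{-1}(t):=0$. The telescoping identity $\vO(t) = \sum_{\ell\ge 0}(\vO_\ell(t)-\vO_{\ell-1}(t))$ is formal (the partial sums converge to $\vO(t)$ once the terms are summable, which the bound will establish), so the real content is the per-annulus estimate $\norm{\vO_\ell(t)-\vO_{\ell-1}(t)}\le \min(2, (4|t|)^\ell/\ell!)$, and the tail estimate $\norm{\vO(t)-\vO_{\ell-1}(t)}\le \min(2,(4|t|)^\ell/\ell!)$.

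For the main estimate I would use Duhamel's principle: writing $\Delta_\ell := \vH_\ell - \vH_{\ell-1}$, which consists of the (at most two) Hamiltonian links joining the annulus boundary to the next site, one has
\begin{align}
\vO_\ell(t) - \vO_{\ell-1}(t) = i\int_0^t e^{i\vH_\ell(t-s)}\,[\Delta_\ell,\, \vO_{\ell-1}(s)]\, e^{-i\vH_\ell(t-s)}\,\rd s.
\end{align}
The key observation is that $\vO_{\ell-1}(s)$ is supported on $\sA_{\ell-1}=[a-\ell+1,b+\ell-1]$, so $[\Delta_\ell,\vO_{\ell-1}(s)]=0$ unless one of the $\le 2$ links in $\Delta_\ell$ actually overlaps $\sA_{\ell-1}$, i.e.\ essentially unless $\ell-1$ is reached. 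Iterating this identity produces a nested time-ordered integral; because the interaction is nearest-neighbor, the commutator must ``travel'' site by site, so after $k$ steps the contribution to distance $\ell$ involves an ordered $k$-fold integral whose value is $|t|^k/k!$ times a product of commutator norm bounds. Each elementary commutator with a normalized link term contributes at most $2\cdot 2 = 4$ (factor $2$ from $\norm{[\cdot,\cdot]}\le 2\norm{\cdot}\norm{\cdot}$, and a factor from the at most two links at a boundary), and $k$ must be at least $\ell$ to reach distance $\ell$ from the original support. Summing the geometric combinatorics of which links fire at each step and using $\norm{\vO}\le 1$ yields the bound $(4|t|)^\ell/\ell!$; the trivial bound $2$ always holds since $\norm{\vO_\ell(t)},\norm{\vO_{\ell-1}(t)}\le 1$. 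The tail bound $\norm{\vO(t)-\vO_{\ell-1}(t)}\le \sum_{m\ge\ell}(4|t|)^m/m!$ then follows by the triangle inequality, and one checks $\sum_{m\ge\ell}(4|t|)^m/m!\le (4|t|)^\ell/\ell!\cdot e^{4|t|}$ — however, to get the clean bound stated (without the $e^{4|t|}$) one instead does the iteration directly to all remaining orders at once, tracking that the combined contribution of everything beyond radius $\ell-1$ is itself bounded by a single time-ordered $\ell$-fold integral, giving exactly $(4|t|)^\ell/\ell!$, and capping by $2$.

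The main obstacle is bookkeeping the combinatorial factor in the iterated Duhamel expansion so that it comes out as precisely $4^\ell$ (rather than something worse): one must be careful that at each of the $\ell$ required ``hops'' the number of candidate links is bounded (here by $2$, since each annulus boundary touches at most one new link on each side, and the relevant boundary that can make progress toward distance $\ell$ is essentially unique in a 1D chain), and that the nested integrals genuinely collapse to $|t|^\ell/\ell!$ rather than accumulating extra multiplicities from non-contributing terms. This is entirely standard for 1D nearest-neighbor models, and is the content of the references \cite{hastings2006spectral, chen2021operator, chen2023speed} cited in the paper; I would cite one of these for the detailed counting and present only the Duhamel step and the support observation in full.
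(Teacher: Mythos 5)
The paper itself does not prove \autoref{lem:real_time_locality}; it cites~\cite[(3.31), Example~3.9]{chen2023speed} and~\cite[Theorem~3]{chen2021operator}. Your approach (Duhamel plus 1D nearest-neighbor structure) is the right one, and the claimed bound is correct, but your description of an ``iterated Duhamel expansion'' with combinatorial bookkeeping of which links fire at each hop is more complicated than necessary, and your intermediate support observation is imprecise: the two links in $\Delta_\ell := \vH_\ell - \vH_{\ell-1}$ \emph{always} overlap $\sA_{\ell-1}$ (the link $\{a-\ell,a-\ell+1\}$ contains the boundary site $a-\ell+1\in\sA_{\ell-1}$), so $[\Delta_\ell,\vO_{\ell-1}(s)]$ is not structurally zero. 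The correct and clean observation, which collapses all the bookkeeping, is about which \emph{annulus pieces} of $\vO_{\ell-1}(s)=\sum_{k\le\ell-1}\delta_k(s)$ commute with $\Delta_\ell$: for $k\le\ell-2$ the support $\sA_k$ of $\delta_k(s)$ is at distance $\ge 1$ from $\Delta_\ell$, so $[\Delta_\ell,\delta_k(s)]=0$, and hence $[\Delta_\ell,\vO_{\ell-1}(s)]=[\Delta_\ell,\delta_{\ell-1}(s)]$. The one-step Duhamel identity you wrote then yields the single recursion
\begin{align}
\norm{\delta_\ell(t)} \;\le\; \int_0^{|t|}\norm{[\Delta_\ell,\delta_{\ell-1}(s)]}\,\rd s \;\le\; 4\int_0^{|t|}\norm{\delta_{\ell-1}(s)}\,\rd s,
\end{align}
with $\norm{\Delta_\ell}\le 2$ (two links of unit norm) supplying the factor of $4$. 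With base case $\norm{\delta_0}\le 1$ this solves by induction directly to $(4|t|)^\ell/\ell!$ --- no tallying of nested integrals or hop multiplicities. The tail estimate then also comes out clean without the extra $e^{4|t|}$ you flagged: apply Duhamel to $\vO(t)-\vO_{\ell-1}(t)$ with the full $\vH-\vH_{\ell-1}$, note again that only $\Delta_\ell$ fails to commute with $\vO_{\ell-1}(s)$ and even then only with $\delta_{\ell-1}(s)$, and integrate the per-annulus bound once more to get $(4|t|)^\ell/\ell!$ directly. So the proposal is sound but you should replace the iterated-expansion sketch (and its bookkeeping caveats) by this one-line recursion.
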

\begin{proof}
    See~\cite[(3.31), Example 3.9]{chen2023speed} and \cite[Theorem 3]{chen2021operator}.
\end{proof}

Next, we present bounds on the locality of complex time evolution in 1D. The original proof is due to Araki~\cite{araki1969gibbs} and later generalized to exponentially decaying interactions~\cite[Theorem 2.3]{perez2023locality}. See also~\cite[Proposition 3.2]{bluhm2022exponential} for a recent usage in studying decay of correlation.

\begin{lem}[Locality of Complex-Time Evolution in 1D]\label{lem:locality_complextime}
For any operator $\vO$ supported on $[a,b]$, $\norm{\vO}\le 1$, the complex-time dynamics with a 1D Hamiltonian (\autoref{defn:1D}) satisfies an annulus decomposition
\begin{align}
    \vO(z) = \sum_{\ell=0}^{\infty} (\vO_{\ell}(z)-\vO_{\ell-1}(z)) \quad \text{where}\quad
   \norm{\vO_{\ell}(z)-\vO_{\ell-1}(z)} \le \frac{(8\labs{z}e^{8\labs{z}})^{\ell}}{\ell!} e^{4\labs{z}(b-a)}.
\end{align}
Therefore, for any $\ell,$ the truncation error is bounded by
\begin{align}
    \norm{\vO(z) - \vO_{\ell-1}(z)} \le e^{4\labs{z}(b-a)}\frac{(8\labs{z}e^{8\labs{z}})^{\ell}}{\ell!} e^{8\labs{z}e^{8\labs{z}}}\label{eq:imaginary-time-truncate}.
\end{align}
The norm of the operator is independent of the system size $n$
\begin{align}
    \norm{\vO(z)} \le e^{8\labs{z}e^{8\labs{z}}}e^{4\labs{z}(b-a)}. \label{eq:imaginary-time-conjugation}
\end{align}
\end{lem}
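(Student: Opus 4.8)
The plan is to reduce everything to the annulus estimate on the increments $\vO_{\ell}(z)-\vO_{\ell-1}(z)$, since the other two assertions are immediate consequences. Indeed, $\nrm{\vO(z)-\vO_{\ell-1}(z)}\le\sum_{\ell'\ge\ell}\nrm{\vO_{\ell'}(z)-\vO_{\ell'-1}(z)}$, and using the tail bound $\sum_{k\ge0}\frac{x^{k}\,\ell!}{(k+\ell)!}\le e^{x}$ together with the increment bound gives $\nrm{\vO(z)-\vO_{\ell-1}(z)}\le e^{4\labs{z}(b-a)}\frac{(8\labs{z}e^{8\labs{z}})^{\ell}}{\ell!}e^{8\labs{z}e^{8\labs{z}}}$, which is \eqref{eq:imaginary-time-truncate}; and taking $\ell=0$ with the convention $\vO_{-1}(z)=0$ yields \eqref{eq:imaginary-time-conjugation}. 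So the only real content is the increment bound, and it suffices to prove it for single-site $\vO$ (the case we use), though the argument is the same in general.

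For orientation, recall the real-time argument: writing $\vH_{\ell}:=\vH_{\sA_{\ell}}$, $g_{\ell}(t):=\vO_{\ell}(t)-\vO_{\ell-1}(t)$, and noting that the two links $\delta\vH_{\ell}:=\vH_{\ell}-\vH_{\ell-1}$ added at radius $\ell$ touch no site of $\sA_{\ell-2}$ (so that $[\delta\vH_{\ell},\vO_{\ell-1}(t)]=[\delta\vH_{\ell},g_{\ell-1}(t)]$), one obtains $\partial_{t}g_{\ell}=\ri[\vH_{\ell},g_{\ell}]+\ri[\delta\vH_{\ell},g_{\ell-1}]$ and hence the Duhamel identity $g_{\ell}(t)=\ri\int_{0}^{t}e^{\ri\vH_{\ell}(t-s)}[\delta\vH_{\ell},g_{\ell-1}(s)]e^{-\ri\vH_{\ell}(t-s)}\rd s$. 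Since $\nrm{\delta\vH_{\ell}}\le2$ and the conjugations are unitary, iterating from $\nrm{g_{0}(t)}\le\nrm{\vO}\le1$ gives $\nrm{g_{\ell}(t)}\le\frac{(4\labs{t})^{\ell}}{\ell!}$ (and always $\le2$), i.e.\ \autoref{lem:real_time_locality}. The genuine obstacle is that for complex $z$ the conjugations $e^{\ri\vH_{\ell}z}$ are no longer isometries: $\nrm{e^{\ri\vH_{\ell}z}}\le e^{\nrm{\vH_{\ell}}\labs{z}}$ with $\nrm{\vH_{\ell}}\le(b-a)+2\ell$, so iterating the recursion $\ell$ times accrues a prefactor $\exp\!\big(\sum_{j\le\ell}((b-a)+2j)\labs{z}\big)\sim e^{\ell^{2}\labs{z}}$, which swamps the $1/\ell!$. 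Thus the naive recursion fails and something sharper is needed.

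The remedy, which I would take from Araki \cite{araki1969gibbs} (see also \cite[Thm.~2.3]{perez2023locality} and \cite[Prop.~3.2]{bluhm2022exponential}), is to abandon the recursion and work with the norm-convergent Taylor series $\vO(z)=\sum_{m\ge0}\frac{(\ri z)^{m}}{m!}\,\mathrm{ad}_{\vH}^{m}(\vO)$, where $\mathrm{ad}_{\vH}^{m}(\vO)=\sum_{(\gamma_{1},\dots,\gamma_{m})}[\vH_{\gamma_{m}},[\cdots[\vH_{\gamma_{1}},\vO]\cdots]]$ and only the \emph{anchored connected} sequences survive, meaning $\gamma_{1}$ overlaps $\mathrm{supp}(\vO)$ and each $\gamma_{j}$ overlaps $\mathrm{supp}(\vO)\cup\gamma_{1}\cup\dots\cup\gamma_{j-1}$. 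Grouping these terms by the outermost radius their support-interval reaches produces the annulus decomposition. The key point is that a sequence can involve a link outside $\sA_{\ell-1}$ only if its support-interval has grown by $\ell$ sites on one side, which forces at least $\ell$ of the $m$ steps to be boundary-extending; the combinatorial heart of the proof is to count, for each $m$, the anchored sequences with $\ge\ell$ extending steps and to check that this count, weighted by $2^{m}\labs{z}^{m}/m!$ (the $2^{m}$ from the commutators, $\nrm{\vH_{\gamma_{j}}}\le1$), sums to at most $e^{4\labs{z}(b-a)}(8\labs{z}e^{8\labs{z}})^{\ell}/\ell!$ — the $1/\ell!$ coming from the ordering of the $\ell$ extending steps, the $e^{4\labs{z}(b-a)}$ from links internal to $\mathrm{supp}(\vO)$, and $e^{8\labs{z}e^{8\labs{z}}}$ absorbing the ``internal/backtracking'' steps, which advance the support negligibly and so cost little once the $1/m!$ is accounted for.

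I expect the main difficulty to lie precisely in that last combinatorial estimate: controlling the proliferation of internal steps — which consume the $1/m!$ budget yet do not help reach radius $\ell$ — against the cost of forcing $\ell$ extending steps, uniformly in the system size $n$. Since for the present paper every $\vO$ to which the lemma is applied is a single-site Pauli operator ($b=a$), the factor $e^{4\labs{z}(b-a)}$ is trivial and one needs only the universal bound $\nrm{\vO_{\ell}(z)-\vO_{\ell-1}(z)}\le(8\labs{z}e^{8\labs{z}})^{\ell}/\ell!$; in any case it is legitimate simply to cite this as the classical theorem of Araki, which is all \autoref{section:1D-locality} is meant to record.
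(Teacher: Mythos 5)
Your reduction of \eqref{eq:imaginary-time-truncate} and \eqref{eq:imaginary-time-conjugation} to the increment bound is correct and matches the paper's step-for-step, and your diagnosis of why the real-time Duhamel recursion fails at complex time (the $\exp(\sum_{j\le\ell}\nrm{\vH_j}\labs{z})\sim e^{\ell^2\labs{z}}$ prefactor swamping $1/\ell!$) is a genuinely useful observation that the paper does not spell out. Your proposed route — abandon the recursion, Taylor-expand $\vO(z)$ in multi-commutators, note that reaching radius $\ell$ forces at least $\ell$ boundary-extending steps — is exactly the right strategy and is the one the paper follows.

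The gap is precisely where you flag it: you never establish the combinatorial estimate, and the sketch you give (count anchored connected link-sequences with $\ge\ell$ extending steps, interleaved with ``internal/backtracking'' moves) is more delicate to make rigorous than you may expect, since one must decide step-by-step which moves are extending and control the resulting interleavings. The paper bypasses that bookkeeping with a clean algebraic identity: writing $\CC_{\vH}[\cdot]=[\vH,\cdot]$, it rewrites the Taylor coefficient of the increment as a nested telescope
\[
\CC_{\vH_{\ell}}^k[\vO]-\CC_{\vH_{\ell-1}}^k[\vO]
=\sum_{k_\ell+\cdots+k_0=k-\ell}\CC_{\vH_\ell}^{k_\ell}\,\CC_{\vH_\ell-\vH_{\ell-1}}\cdots\CC_{\vH_1-\vH_0}\,\CC_{\vH_0}^{k_0}[\vO],
\]
which forces exactly one ``advancing move'' $\CC_{\vH_j-\vH_{j-1}}$ at each level $j\in[\ell]$, with the intermediate truncated Hamiltonian non-decreasing from the innermost to the outermost slot. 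The count of placements of the $\ell$ advancing moves among $k$ commutators is then a bare $\binom{k}{\ell}$; each advancing move costs $2\nrm{\vH_j-\vH_{j-1}}\le4$ and each internal commutator costs $2\nrm{\vH_i}\le2(2\ell+b-a)$, and using $\binom{k}{\ell}\le2^k$ and $k!\ge\ell!(k-\ell)!$ closes the sum over $k$ to the stated bound. Without this identity (or an equivalent device to linearize the path-count into a binomial), the explicit constants in the statement remain unverified. Citing Araki is a defensible fallback for the qualitative statement, but the paper carries the estimate through because those constants are used quantitatively in \autoref{lem:strictly_local} and later.
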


The above bound only depends on the complex times $z\in \BC$ and is independent of the system size. However, the bound grows doubly-exponentially with the magnitude $z$, and for real-time evolution, the Lieb-Robinson bounds are significantly tighter.

\begin{proof}
We begin with a Taylor expansion of the difference. For $\ell \ge 2,$ and abbreviating the commutator as a superoperator $\CC_{\vH}[\vO]=[\vH,\vO],$
\begin{align}
    \vO_{\ell}(z)-\vO_{\ell-1}(z) &= \sum_{k=0}^{\infty} \frac{(iz)^{k}}{k!}(\CC_{\vH_{\ell}}^k[\vO] - \CC_{\vH_{\ell-1}}^k[\vO])\\
    &=\sum_{k\ge \ell} \frac{(iz)^k}{k!} \sum_{k_{\ell}+\cdots+k_0 = k-\ell} \CC_{\vH_{\ell}}^{k_{\ell}}\CC_{\vH_{\ell}-\vH_{\ell-1}} \cdots \CC_{\vH_1-\vH_0}\CC_{\vH_0}^{k_0}[\vO]. 
\end{align}
Indeed, the locality is monotonically increasing from inner to outer commutators. Now, each commutator can be bounded by 
\begin{align}
\norm{\CC_{\vH_{i}}^{k_{i}}[\vO']} &\le (2\norm{\vH_{i}})^{k_i}\norm{\vO'} \le (2(2i+b-a))^{k_i}\norm{\vO'},\\
\norm{\CC_{\vH_{i}-\vH_{i-1}}[\vO']} &\le 2\cdot \norm{\vH_{i}-\vH_{i-1}}\norm{\vO'},
\end{align}
for each $\vO'.$ Since the number of patterns of the advancing moves $\CC_{\vH_{i}-\vH_{i-1}}$ is equal to the binomial $\binom{k}{\ell},$ 
\begin{align}
    \norm{\vO_{\ell}(z)-\vO_{\ell-1}(z)} &\le \sum_{k\ge \ell} \frac{(2\labs{z})^k}{k!} \binom{k}{\ell} 2^{\ell} (2\ell+b-a)^{k-\ell}\\
    &\le \sum_{k\ge \ell} \frac{(8\labs{z})^k}{k!}  (\ell+(b-a)/2)^{k-\ell}\tag*{(Since $\binom{k}{\ell}\le 2^k$)}\\
    &\le \frac{(8\labs{z})^{\ell}}{\ell!} \sum_{k\ge \ell}  \frac{(8\labs{z})^{k-\ell}}{(k-\ell)!} (\ell+(b-a)/2)^{k-\ell} \tag*{(Since $k! \ge \ell! (k-\ell)!$)}\\
    &= \frac{(8\labs{z})^{\ell}}{\ell!} e^{8\labs{z}(\ell+(b-a)/2)} = \frac{(8\labs{z}e^{8\labs{z}})^{\ell}}{\ell!} e^{4\labs{z}(b-a)}.\label{eq:Oell_Oell}
\end{align}

For the edge cases $\ell=0,1$, we also have 
\begin{align}
\norm{\vO_0(z)}&\le e^{2\labs{z}(b-a)}, \quad 
\norm{\vO_{1}(z)-\vO_{0}(z)} \le 4\labs{z} e^{4\labs{z}(\ell+(b-a)/2)},
\end{align}
which means that the above bound~\eqref{eq:Oell_Oell} remain valid for all $\ell \ge 0.$ Thus, 
\begin{align}
    \norm{\vO(z)-\vO_{\ell-1}(z)} \le \sum_{\ell'\ge \ell}^{\infty} \norm{\vO_{\ell'}(z)-\vO_{\ell'-1}(z)} &\le \sum_{\ell=\ell'}^{\infty}\frac{(8\labs{z}e^{8\labs{z}})^{\ell}}{\ell!} e^{4\labs{z}(b-a)}\\
    &\le e^{4\labs{z}(b-a)}\frac{(8\labs{z}e^{8\labs{z}})^{\ell}}{\ell!} \sum_{\ell'\ge \ell}^{\infty}\frac{(8\labs{z}e^{8\labs{z}})^{\ell-\ell'}}{(\ell'-\ell)!} \\
    &\le e^{4\labs{z}(b-a)}\frac{(8\labs{z}e^{8\labs{z}})^{\ell}}{\ell!} e^{8\labs{z}e^{8\labs{z}}}. 
\end{align}
In particular,
\begin{align}
    \norm{\vO(z)} \le \sum_{\ell=0}^{\infty} \norm{\vO_{\ell}(z)-\vO_{\ell-1}(z)} \le \sum_{\ell=0}^{\infty}\frac{(8\labs{z}e^{8\labs{z}})^{\ell}}{\ell!} e^{4\labs{z}(b-a)} = e^{8\labs{z}e^{8\labs{z}}}e^{4\labs{z}(b-a)},
\end{align}
which concludes the proof.
\end{proof}

\subsection{Araki's Expansionals}

Araki leveraged the locality of complex-time evolution in 1D to relate the Gibbs state of the full 1D lattice, to that with a link in the lattice removed -- decoupling the intervals to the left and right of said link. We review some of these relations, oftentimes referred to as Araki's expansionals, here. 

\begin{lem}
    [Norms of Expansionals, \cite{araki1969gibbs}]\label{lem:expansionals} Fix $z\in \BC$ and a 1D Hamiltonian $\vH$ (\autoref{defn:1D}). Fix two adjacent intervals of the 1D chain $\sA, \sB\subseteq [n]$. Then, 
    \begin{align}
        \|e^{-z\vH_{\sA\sB}} e^{z(\vH_{\sA}+\vH_\sB)}\|\leq \exp( 8\labs{z} e^{\labs{z} e^{8\labs{z}}}e^{4\labs{z}}).
    \end{align}
\end{lem}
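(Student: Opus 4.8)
The plan is to prove this as a form of Araki's expansional estimate, via the interaction-picture (Dyson series) expansion around the decoupled Hamiltonian. Write $\vH^{\sA\sB} = \vH^{\sA} + \vH^{\sB} + \vh$, where $\vh := \vH^{\sA:\sB}$ is the single link joining the two adjacent intervals, with $\norm{\vh}\le 1$ (\autoref{defn:1D}). I would introduce the interpolation
\begin{align}
    F(s) := e^{-sz\vH^{\sA\sB}}\,e^{sz(\vH^{\sA}+\vH^{\sB})}, \qquad s\in[0,1],
\end{align}
so that $F(0)=\vI$ and $F(1)$ is the operator whose norm is to be bounded. Differentiating and using $\vH^{\sA\sB}=\vH^{\sA}+\vH^{\sB}+\vh$, one obtains the linear ODE $F'(s) = -z\,\vh(isz)\,F(s)$, where $\vh(w) := e^{iw\vH^{\sA\sB}}\vh\,e^{-iw\vH^{\sA\sB}}$ is the complex-time evolution of the link under the restricted $1$D Hamiltonian $\vH^{\sA\sB}$ (which is itself a $1$D Hamiltonian in the sense of \autoref{defn:1D}).

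Next I would solve this ODE as a time-ordered exponential, equivalently the Dyson series $F(1)=\sum_{k\ge 0}(-z)^k\int_{0\le s_1\le\cdots\le s_k\le 1}\vh(is_kz)\cdots\vh(is_1z)\,\rd s_1\cdots\rd s_k$, and bound it by submultiplicativity of the operator norm together with Gronwall's inequality:
\begin{align}
    \norm{F(1)} \;\le\; \exp\!\Big(|z|\int_0^1 \norm{\vh(isz)}\,\rd s\Big)\;\le\;\exp\!\Big(|z|\cdot \sup_{s\in[0,1]}\norm{\vh(isz)}\Big).
\end{align}
Since $\vh$ is supported on the two-site interval straddling the cut and $|isz|=s|z|\le|z|$, the remaining task is to control $\norm{\vh(isz)}$ by an $n$-independent quantity. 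This is exactly what \autoref{lem:locality_complextime} provides: it bounds complex-time evolution of a local operator in $1$D by a double exponential in $|z|$ — and here one genuinely must use the Araki-type imaginary-time estimate rather than a Lieb--Robinson real-time bound, which would not apply for complex $z$. Plugging the bound $\norm{\vh(w)}\le e^{8|w|e^{8|w|}}e^{4|w|}$ (with $b-a=1$ for a two-site link) into the display and tracking constants yields an estimate of precisely the claimed nested-exponential shape $\exp(8|z|\,e^{|z|e^{8|z|}}\,e^{4|z|})$.

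I expect the ODE derivation and the Gronwall/Dyson bookkeeping to be routine; the delicate step is the last one. To land on the exact constants of the lemma one should feed the \emph{annulus decomposition} of $\vh(isz)$ from \autoref{lem:locality_complextime} term-by-term into the Dyson series rather than using a single crude norm bound — a looser accounting still produces a double exponential of the same form, which already suffices for every downstream use of this estimate (e.g. \autoref{lem:compare_measure}). The one conceptual obstacle, already handled by \autoref{lem:locality_complextime} and originally by \cite{araki1969gibbs}, is that $1$D complex-time evolution is merely quasi-local and grows doubly-exponentially in $|z|$; the essential point to keep in view is that all resulting constants depend only on $|z|$ (hence on $\beta$ after rescaling) and not on the system size $n$.
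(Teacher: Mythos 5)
Your proposal is correct and follows essentially the same route as the paper: reduce to an ODE/Gronwall (equivalently Dyson-series) bound that isolates the complex-time conjugation of the single link $\vH^{\sA:\sB}$, and then invoke \autoref{lem:locality_complextime} to control that conjugation independently of the system size. The only cosmetic difference is that you conjugate the link by the coupled Hamiltonian $\vH^{\sA\sB}$ whereas the paper conjugates by the decoupled one $\vH^{\sA}+\vH^{\sB}$ (both are valid 1D Hamiltonians in the sense of \autoref{defn:1D}), and both choices lead to the same nested-exponential estimate.
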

Which is triply-exponential in $|z|$, but technically a fixed constant independent of system size. Perhaps unsurprisingly, we will also require an annulus decomposition for the expansional, which quantifies its quasi-locality around the link $\vH^{\sA:\sB}$. We slowly build up to the proof of said quasi-locality. 

\begin{proof}
Consider the following differential equation for any matrix $\vX,\vY$ and $s\in \BR$
\begin{align}
    \frac{\rd }{\rd s} e^{-s\vX}e^{s(\vX+\vY)}  = e^{-s\vX}\vY e^{s(\vX+\vY)} = (e^{-s\vX}\vY e^{s\vX}) e^{-s\vX}e^{s(\vX+\vY)}.
    \end{align}
    Therefore,
    \begin{align}
        \frac{\rd }{\rd s}\lnorm{e^{-s\vX}e^{s(\vX+\vY)}} \le \lnorm{\frac{\rd }{\rd s} e^{-s\vX}e^{s(\vX+\vY)}} \le \lnorm{e^{-s\vX}\vY e^{s\vX}}\cdot \norm{e^{-s\vX}e^{s(\vX+\vY)}}.
    \end{align}
    By Gronwell's lemma, 
    \begin{align}
        \lnorm{e^{-s\vX}e^{s(\vX+\vY)}}:=f(s) \le e^{\CT \int_{0}^s \lnorm{e^{-t\vX}\vY e^{t\vX}} \rd t} \undersetbrace{=1}{\labs{f(0)}}.
    \end{align}
    Integrate from $s=0$ to $s=1$, set $\vX= z (\vH_{\mathsf{A}} + \vH_{\mathsf{B}})$ and $\vY= z \vH_{\mathsf{A:B}}$, and apply ~\autoref{lem:locality_complextime}
\begin{align}
   \lnorm{e^{\beta(\vH_{\sA}+\vH_\sB)} \vH_{\sA:\sB} e^{-\beta(\vH_{\sA}+\vH_\sB)}} \le e^{8\beta e^{8\beta}}e^{4\beta}.
\end{align}
to conclude the proof.
\end{proof}

We further require the following tripartite variant of the above, which in some sense establishes a ``gluing'' lemma for the Gibbs states of consecutive intervals in the lattice. The proof is largely discussed in \cite[Corollary 3.4 (i), Remark 3.5 (ii)']{bluhm2022exponential}. We include a self-contained statement for completeness.

\begin{lem}[Imaginary Time Gluing]\label{lem:gluing}
Fix $z\in \BC$ and a 1D Hamiltonian $\vH$ (\autoref{defn:1D}). Fix three consecutive disjoint intervals $\sA,\sB,\sC\subseteq [n] $. Then, there exists explicit constants $a_{\labs{z}},b_{\labs{z}}>0$ depending only on $\labs{z}$ such that
    \begin{align}
    \lnorm{e^{-z \vH_{\sA\sB\sC}}e^{z \vH_{\sA\sB}}e^{-z\vH_{\sB}}e^{z\vH_{\sB\sC}}-\vI}\le a\cdot \frac{b^{\ell}}{\ell!}, 
    \end{align}
    where the length-scale $\ell = |\sB|$. 
\end{lem}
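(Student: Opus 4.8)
The statement is a tripartite (three-interval) analogue of Araki's expansional bound, \autoref{lem:expansionals}, and the natural strategy is to reduce it to that bipartite statement by a telescoping/differential argument. Write $\vK(s) := e^{-s \vH_{\sA\sB\sC}}e^{s \vH_{\sA\sB}}e^{-s\vH_{\sB}}e^{s\vH_{\sB\sC}}$, so that $\vK(0) = \vI$ and we want to bound $\|\vK(1) - \vI\|$. Differentiating and using the product rule, $\tfrac{\rd}{\rd s}\vK(s)$ produces a sum of four terms, each of which, after conjugating the inner factors outward, has the schematic form $\big(e^{-s\vH_{\cdots}}\,\vL\,e^{s\vH_{\cdots}}\big)\vK(s)$ where $\vL$ is one of the \emph{link} terms $\vH^{\sA\sB:\sC}$ or $\vH^{\sA:\sB}$ — precisely because $\vH_{\sA\sB\sC} - \vH_{\sA\sB} - \vH_{\sB\sC} + \vH_{\sB} = \vH^{\sA\sB:\sC} - \vH^{\sA:\sB}$ (two cancellations occur, leaving only the two links). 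Thus $\tfrac{\rd}{\rd s}\vK(s) = \vM(s)\vK(s)$ with $\|\vM(s)\|$ controlled by the norm of the complex-time-evolved link operators, which by \autoref{lem:locality_complextime} is a finite constant depending only on $|z|$ (here $z = $ the fixed complex time, playing the role of $s$'s endpoint after rescaling). Grönwall then gives $\|\vK(1)\| \le e^{c_{|z|}}$.

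That controls the \emph{norm} of the expansional but not its closeness to $\vI$; for the $b^\ell/\ell!$ decay I would instead track $\vK(s) - \vI = \int_0^1 \vM(s)\vK(s)\,\rd s$ and observe that the link operators $\vH^{\sA:\sB}$ and $\vH^{\sA\sB:\sC}$ are supported at the two \emph{ends} of the buffer interval $\sB$, at distance $\ell = |\sB|$ from each other. The conjugated operators $e^{-s\vH_{\sA\sB}}\vH^{\sA:\sB}e^{s\vH_{\sA\sB}}$ and $e^{-s\vH_{\sB\sC}}\vH^{\sA\sB:\sC}e^{s\vH_{\sB\sC}}$ spread by complex-time evolution, but by the annulus decomposition of \autoref{lem:locality_complextime} their ``tails'' that reach across $\sB$ are suppressed like $(8|z|e^{8|z|})^\ell/\ell!$. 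The point is that the two link terms, as operators, become approximately \emph{supported on disjoint regions} (the left half-neighborhood of $\sB$ and the right half-neighborhood), and modulo those super-exponentially small tails the product structure $e^{s\vH_{\sA\sB}}e^{-s\vH_\sB}e^{s\vH_{\sB\sC}}$ telescopes against $e^{-s\vH_{\sA\sB\sC}}$ the way a one-dimensional Hamiltonian factorizes across a cut. Concretely, I would insert truncations $\vH^{\sA:\sB} \to (\vH^{\sA:\sB})_{<\ell/2}$ (evolution restricted to a neighborhood of radius $\ell/2$ around the left end of $\sB$) and similarly for the right link, pay the truncation error $a_{|z|} b_{|z|}^\ell/\ell!$ from \autoref{lem:locality_complextime}, and then note that the two truncated pieces commute with the ``far'' Hamiltonian factors, causing the telescoping cancellation to be exact. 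Summing a geometric-type series in the annulus index reproduces the claimed $a\cdot b^\ell/\ell!$ bound.

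The main obstacle I anticipate is bookkeeping in the telescoping/cancellation step: one has to carefully commute the truncated link operators past the correct combination of exponentials $e^{\pm s\vH_{\sA\sB\sC}}, e^{\pm s\vH_{\sA\sB}}, e^{\mp s\vH_\sB}, e^{\pm s\vH_{\sB\sC}}$ and verify that the remainder terms are exactly those governed by the annulus tails — getting the radii ($\ell/2$ vs.\ $\ell$, and which end of $\sB$ each link lives at) and the direction of each conjugation right is where errors creep in. A cleaner route, which I would try first, is simply to chain two applications of the bipartite gluing identity: first glue $\sA\sB$ to $\sC$ across the cut between $\sB$ and $\sC$ (distance from the $\sA:\sB$ link to this cut is $\ge |\sB|$), then glue $\sA$ to $\sB$; each gluing introduces an error of size $a_{|z|} b_{|z|}^{|\sB|}/|\sB|!$ by a direct expansional estimate analogous to \autoref{lem:expansionals} but keeping track of the distance of the perturbing link to the region being decoupled. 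Either way the quantitative input is entirely \autoref{lem:locality_complextime}, and the $1/\ell!$ comes from the Taylor series of the complex-time evolution as in \eqref{eq:Oell_Oell}. As the statement cites \cite[Corollary 3.4, Remark 3.5]{bluhm2022exponential}, I would also cross-check the constants $a,b$ against that reference rather than optimizing them here.
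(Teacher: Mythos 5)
Your broad strategy — differentiate the product, pull the Hamiltonian terms to convenient positions, and invoke locality of complex-time evolution via \autoref{lem:locality_complextime} — matches the paper's proof, and you are also right that the $1/\ell!$ comes from the Taylor-series/annulus bound in \eqref{eq:Oell_Oell}. But the key mechanism for extracting the $b^\ell/\ell!$ decay is mis-stated in two ways.

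First, the algebraic identity you invoke is incorrect: for a nearest-neighbour Hamiltonian, $\vH_{\sA\sB\sC} - \vH_{\sA\sB} - \vH_{\sB\sC} + \vH_{\sB} = 0$, not $\vH^{\sA\sB:\sC} - \vH^{\sA:\sB}$. This cancellation is indeed what makes the differential argument work, but it is a \emph{zeroth-order} cancellation; the surviving part of the derivative is not simply ``link terms'' but a \emph{commutator/conjugation residue}. Concretely, after grouping terms $1$–$2$ and $3$–$4$ and using that $\vH_\sC$ commutes with both $\vH_{\sA\sB}$ and $\vH_\sB$, the derivative collapses to
\begin{align}
e^{-z\vH_{\sA\sB\sC}}e^{z\vH_{\sA\sB}}\Big(e^{-z\vH_\sB}\,\vH_{\sB:\sC}\,e^{z\vH_\sB} - e^{-z\vH_{\sA\sB}}\,\vH_{\sB:\sC}\,e^{z\vH_{\sA\sB}}\Big)e^{-z\vH_\sB}e^{z\vH_{\sB\sC}},
\end{align}
i.e., a \emph{single} link $\vH_{\sB:\sC}$ conjugated by two Hamiltonians that agree in a radius-$|\sB|$ neighbourhood of that link. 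The decay then comes from a single application of the truncation bound \eqref{eq:imaginary-time-truncate} to this difference, not from ``two truncated links at opposite ends commuting with far factors.'' Your picture of the two links $\vH^{\sA:\sB}$ and $\vH^{\sB:\sC}$ both appearing and spreading from opposite ends would require a different (more delicate) bookkeeping that the paper avoids entirely by picking the right pairing in the product rule.

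Second, your proposed ``cleaner route'' — chaining two bipartite gluings — does not work as stated, because the bipartite expansional bound \autoref{lem:expansionals} only controls the operator norm $\|e^{-z\vH_{\sA\sB}}e^{z(\vH_\sA+\vH_\sB)}\|$; it says nothing about closeness to $\vI$, and indeed that expression is \emph{not} close to $\vI$ — the link is $O(1)$ and there is no length scale to make it decay. Any ``distance-aware'' bipartite gluing you would need to make that route go through is essentially the tripartite statement itself, so the alternative is circular. Your fallback plan (differentiate, use Grönwall for the norm, then integrate $\vK(s)-\vI$) is the correct frame; the missing piece is identifying the right pairing of terms so that what sits inside the integral is the annulus-small difference above, after which the remaining factors are controlled by \autoref{lem:expansionals} and \eqref{eq:imaginary-time-conjugation}.
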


The above is naturally exactly zero if the Hamiltonian is commuting; in non-commuting 1D Hamiltonians, however, it only remains approximately true provided that the separation $\sB$ is large enough. Furthermore, the above gluing lemma is special to 1D. In higher dimensions, only the case of real-time is known~\cite{haah2020quantum}.

\begin{proof}

[of \autoref{lem:gluing}] Take the derivative w.r.t. $z$, and collect the appropriate Hamiltonian terms on convenient sides of their exponentials -- $\vH_{\sA\sB}$ with $\vH_{\sA\sB\sC}$, and $\vH_{\sB}$ with $\vH_{\sB\sC}$:
    \begin{align}
        \frac{\rd }{\rd z} e^{-z \vH_{\sA\sB\sC}}e^{z \vH_{\sA\sB}}e^{-z\vH_{\sB}}e^{z\vH_{\sB\sC}} &= e^{-z \vH_{\sA\sB\sC}}(\vH_{\sA\sB}-\vH_{\sA\sB\sC})e^{z \vH_{\sA\sB}}e^{-z\vH_{\sB}}e^{z\vH_{\sB\sC}} \\
        &- e^{-z \vH_{\sA\sB\sC}}e^{z \vH_{\sA\sB}}e^{-z\vH_{\sB}}(\vH_{\sB}-\vH_{\sB\sC})e^{z\vH_{\sB\sC}}\\
        &= - e^{-z \vH_{\sA\sB\sC}}e^{z \vH_{\sA\sB}}  \L( e^{-z \vH_{\sA\sB}}( \vH_{\sB:\sC}+\vH_{\sC})e^{z \vH_{\sA\sB}} \R)e^{-z\vH_{\sB}}e^{z\vH_{\sB\sC}}\\
        &+ e^{-z \vH_{\sA\sB\sC}}e^{z \vH_{\sA\sB}}  \L( e^{-z\vH_{\sB}}(\vH_{\sB:\sC}+\vH_{\sC})e^{z\vH_{\sB}}  \R)  e^{-z\vH_{\sB}}e^{z\vH_{\sB\sC}}\\
        &=e^{-z \vH_{\sA\sB\sC}}e^{z \vH_{\sA\sB}}  \L( e^{-z\vH_{\sB}}\vH_{\sB:\sC}e^{z\vH_{\sB}}- e^{-z\vH_{\sA\sB}}\vH_{\sB:\sC}e^{z\vH_{\sA\sB}}  \R) e^{-z\vH_{\sB}}e^{z\vH_{\sB\sC}},
    \end{align}
    where $H_{\sB:\sC}$ denotes the link acting on both $\sB$ and $\sC.$ Introduce factors of $e^{-z \vH_\sC}$ and take the norm
\begin{align}
    &\lnorm{e^{-z \vH_{\sA\sB\sC}}e^{z \vH_{\sA\sB}}e^{z \vH_\sC} e^{-z \vH_\sC} \L( e^{-z\vH_{\sB}}\vH_{\sB:\sC}e^{z\vH_{\sB}}- e^{-z\vH_{\sA\sB}}\vH_{\sB:\sC}e^{z\vH_{\sA\sB}} \R) e^{z \vH_\sC} e^{-z \vH_\sC} e^{-z\vH_{\sB}}e^{z\vH_{\sB\sC}}} \\
    &\le \norm{e^{-z \vH_{\sA\sB\sC}}e^{z \vH_{\sA\sB}}e^{z \vH_\sC}} \cdot \lnorm{e^{-z \vH_\sC} e^{-z\vH_{\sB}}e^{z\vH_{\sB\sC}}} \cdot \lnorm{e^{-z \vH_\sC} \undersetbrace{=:\vO}{\L( e^{-z\vH_{\sB}}\vH_{\sB:\sC}e^{z\vH_{\sB}}- e^{-z\vH_{\sA\sB}}\vH_{\sB:\sC}e^{z\vH_{\sA\sB}}  \R)} e^{z \vH_\sC}}.
\end{align}
The first two terms are controlled precisely by the norms of expansionals~\autoref{lem:expansionals}. In turn, apply \eqref{eq:imaginary-time-truncate} to note that $\vO$ has norm bounded as a function of $|\sB|$. Furthemore, $\vO$ acts only on a single site of $\sC$, thus \eqref{eq:imaginary-time-conjugation} entails a bound on the norm of the last term above.  Integrate the complex variable and use the fact that the expression equals the identity at $z=0$ to conclude the proof.
\end{proof}

We will also require the following minor variant of the gluing identity above in our proof of KMS Weak Clustering from Weak Clustering (\ref{app:Weakclustering}); where the exponential $e^{-\beta \vH}$ is instead placed with square roots on either side of the identity. The proof strategy is the same as the above. 

\begin{lem}[Imaginary Time Gluing II]\label{lem:variant_gluing} In the setting of ~\autoref{lem:gluing}, there exists explicit constants $a_{|z|}, b_{|z|}$ satisfying:
    \begin{align}
        \lnorm{e^{-z \vH_{\sA\sB\sC}/2}e^{z \vH_{\sA\sB}}e^{-z\vH_{\sB}}e^{z\vH_{\sB\sC}}e^{-z \vH_{\sA\sB\sC}/2}-\vI}\le a\cdot \frac{b^{\ell}}{\ell!},
    \end{align}
    where once again $\ell = |\sB|$.
\end{lem}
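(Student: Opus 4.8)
\textbf{Proof proposal for \autoref{lem:variant_gluing}.}

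The plan is to imitate the proof of \autoref{lem:gluing} almost verbatim, the only difference being where the outermost copies of $e^{-z\vH_{\sA\sB\sC}/2}$ sit. First I would introduce the shorthand $\vG(z) := e^{-z \vH_{\sA\sB\sC}/2}e^{z \vH_{\sA\sB}}e^{-z\vH_{\sB}}e^{z\vH_{\sB\sC}}e^{-z \vH_{\sA\sB\sC}/2}$, observe $\vG(0)=\vI$, and differentiate in $z$. The derivative produces five terms, one for each matrix exponential. I would then group the Hamiltonian terms exactly as in \autoref{lem:gluing}: pair the two outer half-copies of $\vH_{\sA\sB\sC}$ with $\vH_{\sA\sB}$ and with $\vH_{\sB\sC}$ respectively (note $-\tfrac12\vH_{\sA\sB\sC}-\tfrac12\vH_{\sA\sB\sC}+\vH_{\sA\sB}+\vH_{\sB\sC} = \vH_{\sA\sB}+\vH_{\sB\sC}-\vH_{\sA\sB\sC}$, and $\vH_{\sA\sB}+\vH_{\sB\sC}-\vH_{\sA\sB\sC} = -\vH_{\sB:\sC}-\vH_{\sC}+\vH_{\sB} + (\vH_{\sA\sB}-\vH_{\sA\sB} \cdots)$ --- the point is that, as in \autoref{lem:gluing}, the $\vH_{\sB:\sC}+\vH_{\sC}$ contributions can be conjugated past the adjacent exponentials and assembled into a single commutator-like difference $\vO := e^{-z\vH_{\sB}}\vH_{\sB:\sC}e^{z\vH_{\sB}} - e^{-z\vH_{\sA\sB}}\vH_{\sB:\sC}e^{z\vH_{\sA\sB}}$ sandwiched by well-behaved factors). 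Concretely I would show
\begin{align}
    \frac{\rd}{\rd z}\vG(z) = \vP_1(z)\cdot \vO(z)\cdot \vP_2(z),
\end{align}
where $\vP_1, \vP_2$ are products of expansionals of the form $e^{-z\vH_{\sX}}e^{z\vH_{\sY}}$ on adjacent interval pairs, after inserting resolution-of-identity factors $e^{\pm z\vH_{\sC}}$ or $e^{\pm z\vH_{\sA\sB\sC}/2}$ as needed so that every factor is a genuine Araki expansional.

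Next I would bound each piece by the off-the-shelf estimates already in the excerpt. The expansional factors $\vP_1, \vP_2$ are controlled by \autoref{lem:expansionals}, giving constants depending only on $|z|$ (triply exponential in $|z|$, but system-size independent). The middle factor $\vO$ is a difference of two complex-time evolutions of the \emph{single link operator} $\vH_{\sB:\sC}$, one under $\vH_{\sB}$ and one under $\vH_{\sA\sB}$; since these two Hamiltonians agree on everything within $\sB$ and differ only on $\sA$, which is at distance $|\sB|$ from the support of $\vH_{\sB:\sC}$, the annulus/truncation bound \eqref{eq:imaginary-time-truncate} of \autoref{lem:locality_complextime} gives $\|\vO(z)\| \le e^{4|z|}\frac{(8|z|e^{8|z|})^{|\sB|}}{|\sB|!}e^{8|z|e^{8|z|}}$, i.e. the desired $a\cdot b^{\ell}/\ell!$ decay with $\ell = |\sB|$. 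A further conjugation of $\vO$ by $e^{\pm z\vH_{\sC}}$ only costs a bounded factor via \eqref{eq:imaginary-time-conjugation} since $\vO$ acts on just one site of $\sC$. Multiplying the three bounds and integrating $\frac{\rd}{\rd z'}\vG(z')$ from $z'=0$ to $z'=z$ (using $\vG(0)=\vI$ and $\|\vG(z)-\vI\|\le \int_0^{|z|}\|\vG'(sz/|z|)\|\,|dz|$, or simply integrating along the ray) yields $\|\vG(z)-\vI\|\le a_{|z|}\cdot b_{|z|}^{|\sB|}/|\sB|!$ for suitable constants.

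I do not expect a genuine obstacle here: the only subtlety, and the step I would be most careful about, is the bookkeeping in the grouping of Hamiltonian terms after differentiation --- making sure that with the \emph{symmetric} placement $e^{-z\vH_{\sA\sB\sC}/2}(\cdots)e^{-z\vH_{\sA\sB\sC}/2}$ the cancellations still leave a single difference $\vO$ sandwiched by valid expansionals, rather than something that needs both halves simultaneously. This is purely algebraic and mirrors the asymmetric case of \autoref{lem:gluing}; once the grouping is set up the norm estimates are immediate from \autoref{lem:expansionals} and \autoref{lem:locality_complextime}. Hence the statement follows with the same proof structure, and I would simply remark that ``the proof strategy is the same as \autoref{lem:gluing}'' and record the three bounds above.
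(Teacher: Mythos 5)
Your strategy coincides with the paper's: differentiate $\vG(z)$ in $z$, regroup the Hamiltonian terms so that (after commuting through adjacent exponentials) everything reduces to differences of complex-time conjugations of a link operator sandwiched by Araki expansionals, then invoke \autoref{lem:expansionals} and \autoref{lem:locality_complextime}, and integrate from $z=0$.

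However, the specific decomposition you assert, $\frac{\rd}{\rd z}\vG(z) = \vP_1(z)\,\vO(z)\,\vP_2(z)$ with a \emph{single} middle factor $\vO$ built from $\vH_{\sB:\sC}$, is not what the symmetric placement produces, and this is precisely the subtlety you flagged but then dismissed. Because $\vH_{\sA\sB\sC}$ now appears with weight $\tfrac12$ on \emph{both} ends, each of $\vH_{\sA\sB}$, $\vH_\sB$, $\vH_{\sB\sC}$ must itself be split into halves: one half of each is paired ``to the left'' (giving $\vH_{\sA\sB}-\vH_{\sA\sB\sC}$ and $\vH_\sB-\vH_{\sB\sC}$, both $= -\vH_{\sB:\sC}-\vH_\sC$, whose $\vH_\sC$ parts cancel after sliding through the intervening exponentials), and the other half is paired ``to the right'' (giving $\vH_{\sA\sB}-\vH_\sB$ and $\vH_{\sA\sB\sC}-\vH_{\sB\sC}$, both $= \vH_\sA + \vH_{\sA:\sB}$, whose $\vH_\sA$ parts cancel). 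The upshot is that $\frac{\rd}{\rd z}\vG(z)$ is a \emph{sum of two} expansional-sandwiched differences --- one involving the link $\vH_{\sB:\sC}$ and one involving the link $\vH_{\sA:\sB}$ --- not the single $\vO$ you posit. Your trailed-off identity $\vH_{\sA\sB}+\vH_{\sB\sC}-\vH_{\sA\sB\sC}=\vH_\sB$ is correct on the level of operators, but it conflates terms sitting in four different positions in the product of five exponentials; tracking positions is where the second piece appears. The good news is that each of the two pieces is bounded exactly as you describe, by \eqref{eq:imaginary-time-truncate} and \eqref{eq:imaginary-time-conjugation} with $\ell=|\sB|$, so nothing deeper is needed --- but as written your plan produces a derivative of the wrong form, and you should carry out the $\tfrac12$-splitting and handle both links to make the argument close.
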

\begin{proof} Pull down the Hamiltonian terms at suitable sides to cancel out terms with the neighbouring exponentials
    \begin{align}
        \frac{\rd }{\rd z} e^{-z \vH_{\sA\sB\sC}/2}e^{z \vH_{\sA\sB}}e^{-z\vH_{\sB}}e^{z\vH_{\sB\sC}}e^{-z \vH_{\sA\sB\sC}/2} &= \frac{1}{2}e^{-z \vH_{\sA\sB\sC}/2}\undersetbrace{-\vH_{\sB:\sC}-\vH_\sC}{(\vH_{\sA\sB}-\vH_{\sA\sB\sC})}e^{z \vH_{\sA\sB}}e^{-z\vH_{\sB}}e^{z\vH_{\sB\sC}}e^{-z \vH_{\sA\sB\sC}/2} \\
        &- \frac{1}{2}e^{-z \vH_{\sA\sB\sC}/2}e^{z \vH_{\sA\sB}}e^{-z\vH_{\sB}}\undersetbrace{-\vH_{\sB:\sC}-\vH_\sC}{(\vH_{\sB}-\vH_{\sB\sC})}e^{z\vH_{\sB\sC}}e^{-z \vH_{\sA\sB\sC}/2}\\
        &+\frac{1}{2}e^{-z \vH_{\sA\sB\sC}/2}e^{z \vH_{\sA\sB}}\undersetbrace{\vH_{\sA}+\vH_{\sA:\sB}}{(\vH_{\sA\sB}-\vH_{\sB})}e^{-z\vH_{\sB}}e^{z\vH_{\sB\sC}}e^{-z \vH_{\sA\sB\sC}/2} \\
        &-\frac{1}{2}e^{-z \vH_{\sA\sB\sC}/2}e^{z \vH_{\sA\sB}}e^{-z\vH_{\sB}}e^{z\vH_{\sB\sC}}\undersetbrace{\vH_{\sA}+\vH_{\sA:\sB}}{( \vH_{\sA\sB\sC}-\vH_{\sB\sC})}e^{-z \vH_{\sA\sB\sC}/2}.
    \end{align}
Note that the $\vH_{C}$ in the first two terms cancel, and $\vH_{A}$ in the last two terms cancel. Treat the two cases as in~\autoref{lem:gluing} to conclude the proof.
\end{proof}

To conclude this section, we are now able to derive the desired annulus decomposition for Araki's expansionals. 

\begin{cor}[An Annulus Decomposition for Araki's Expansionals] Fix $\beta\in \BR$ and a 1D Hamiltonian $\vH$ (\autoref{defn:1D}). Fix two adjacent intervals of the 1D chain $\sA, \sB\subseteq [n]$. Then, the expansional on the link $\sA:\sB$ admits the annulus decomposition:
    \begin{align}
        e^{\beta\vH_{\sA\sB}} e^{-\beta(\vH_{\sA}+\vH_\sB)} = \sum_{\ell\geq 0} \vO_{\ell} \quad \text{where}\quad \norm{\vO_{\ell}} \le a_{\beta} \cdot \frac{b_{\beta}^{\ell}}{\ell!},
    \end{align}
    $a_\beta, b_\beta>0$ are constants that depend only on $\beta$, and the support of $\vO_\ell$ lies within distance $\ell$ from the link $\sA:\sB$.
\end{cor}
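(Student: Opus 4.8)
The plan is to realise the expansional as the solution of a linear ODE driven by the imaginary-time evolution of the link term, and then feed the complex-time locality estimate of \autoref{lem:locality_complextime} into the resulting Dyson series. We may assume $\beta\ge 0$ (the case $\beta<0$ is identical after replacing $\beta$ by $\labs{\beta}$). Write $\vH^{\sA:\sB}=\vH_{\sA\sB}-\vH_\sA-\vH_\sB$ for the single link between the two adjacent intervals, a term of norm $\le 1$ whose support has diameter $1$, and set $\vE(s):=e^{s\vH_{\sA\sB}}e^{-s(\vH_\sA+\vH_\sB)}$ for $s\in[0,\beta]$. Differentiating, $\tfrac{\rd}{\rd s}\vE(s)=\vG(s)\vE(s)$ with $\vE(0)=\vI$, where $\vG(s):=e^{s\vH_{\sA\sB}}\vH^{\sA:\sB}e^{-s\vH_{\sA\sB}}$ is exactly the complex-time evolution of \autoref{lem:locality_complextime} applied to $\vO=\vH^{\sA:\sB}$ with Hamiltonian $\vH_{\sA\sB}$ at $z=-\ri s$. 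Since $\sup_{s\in[0,\beta]}\norm{\vG(s)}<\infty$ by that lemma, the Dyson series
\[
\vE(\beta)=\sum_{k\ge0}\int_{0\le s_1\le\cdots\le s_k\le\beta}\vG(s_k)\cdots\vG(s_1)\,\rd s_1\cdots\rd s_k
\]
converges absolutely.

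Next I would substitute the annulus decomposition of \autoref{lem:locality_complextime}: with $\vG_r(s)$ the evolution truncated to the $r$-neighbourhood of the link, $\delta\vG_r(s):=\vG_r(s)-\vG_{r-1}(s)$ ($\vG_{-1}\equiv 0$), one has $\vG(s)=\sum_{r\ge0}\delta\vG_r(s)$ with $\delta\vG_r(s)$ supported within distance $r$ of the link and $\delta_r:=\sup_{s\in[0,\beta]}\norm{\delta\vG_r(s)}\le e^{4\beta}(8\beta e^{8\beta})^r/r!$. Expanding each factor of the Dyson series, the summand $\delta\vG_{r_k}(s_k)\cdots\delta\vG_{r_1}(s_1)$ is supported within distance $\max_i r_i$ of the link, so I set
\[
\vO_\ell:=\sum_{k\ge0}\ \sum_{\substack{r_1,\dots,r_k\ge0\\ \max_i r_i=\ell}}\ \int_{0\le s_1\le\cdots\le s_k\le\beta}\delta\vG_{r_k}(s_k)\cdots\delta\vG_{r_1}(s_1)\,\rd s_1\cdots\rd s_k,
\]
with the $k=0$ term (equal to $\vI$) placed in $\vO_0$. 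By absolute convergence this rearrangement is valid, $\vE(\beta)=\sum_{\ell\ge0}\vO_\ell$, and $\vO_\ell$ is supported within distance $\ell$ of the link.

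For the estimate, put $S:=\sum_{r\ge0}\delta_r=e^{4\beta}e^{8\beta e^{8\beta}}<\infty$. For fixed $\ell$ and fixed length $k$, the constraint $\max_i r_i=\ell$ forces at least one index to carry $r_i=\ell$ with the others $\le\ell$; a union bound over that index gives $\sum_{\max_i r_i=\ell}\prod_i\delta_{r_i}\le k\,\delta_\ell\,S^{k-1}$. Multiplying by the simplex volume $\beta^k/k!$ and summing over $k\ge1$ yields $\norm{\vO_\ell}\le\beta e^{\beta S}\,\delta_\ell$ for $\ell\ge1$, while the only extra contribution to $\vO_0$ is $\vI$, so $\norm{\vO_0}\le 1+\beta e^{\beta S}\delta_0$. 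In both cases this is of the claimed form $a_\beta\,b_\beta^{\ell}/\ell!$ with $b_\beta:=8\beta e^{8\beta}$ and $a_\beta:=1+\beta e^{4\beta+\beta S}$, each depending only on $\beta$.

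The argument has no real analytic content beyond \autoref{lem:locality_complextime}; the one place requiring care is the combinatorial bookkeeping just above — one must see that exactly one factor per Dyson term is permitted to carry the large radius $\ell$ (this is what produces the decisive $1/\ell!$), that all remaining factors together contribute only the convergent constant $S$, and that the factor $k$ generated by the union bound is absorbed without loss by the $\beta^k/k!$ coming from the time-ordered simplex. Mishandling any of these would leave a useless $C^\ell$ in place of $b_\beta^\ell/\ell!$.
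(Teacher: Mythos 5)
Your proof is correct, and it takes a genuinely different route from the paper's. The paper telescopes the expansional over nested neighborhoods $\sA_\ell\sB_\ell \supset \sA_{\ell-1}\sB_{\ell-1}$ of the link and controls consecutive differences by combining the imaginary-time gluing (Lemma~\ref{lem:gluing}) with the expansional norm bound (Lemma~\ref{lem:expansionals}); this forces the authors to carry out the $\sA$-side and $\sB$-side replacements in two stages. You instead write the expansional as the solution of the ODE $\frac{\rd}{\rd s}\vE(s)=\vG(s)\vE(s)$ with $\vG(s)=e^{s\vH_{\sA\sB}}\vH^{\sA:\sB}e^{-s\vH_{\sA\sB}}$, expand as a time-ordered Dyson series, and inject the annulus decomposition of $\vG(s)$ coming directly from Lemma~\ref{lem:locality_complextime}, then regroup Dyson terms by $\ell=\max_i r_i$. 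The rearrangement is justified by absolute convergence ($\sum_k \frac{\beta^k}{k!}S^k=e^{\beta S}$, with $S=\sum_r\delta_r<\infty$), the support statement is immediate since each $\delta\vG_{r_i}$ lives in the $r_i$-neighborhood of the link, and the combinatorial estimate $\sum_{\max_i r_i=\ell}\prod_i\delta_{r_i}\le k\,\delta_\ell\,S^{k-1}$ plus the factorial from the simplex volume correctly delivers the $\delta_\ell\sim b_\beta^\ell/\ell!$ decay. Your approach is arguably cleaner — it bypasses Lemmas~\ref{lem:expansionals} and~\ref{lem:gluing} entirely and re-derives the needed boundedness of the expansional for free as a byproduct of the convergent Dyson series — at the cost of the slightly more delicate bookkeeping you flag at the end. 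One could further observe that for $\ell$ exceeding $\min(|\sA|,|\sB|)$ the truncated Hamiltonians saturate so $\delta\vG_r\equiv 0$ and the sum is actually finite, but this is not needed for the estimate.
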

\begin{proof} 
We begin by considering a telescoping sum over increasing $(\ell+1)$ around the link $\sA:\sB$:
\begin{align}
    e^{\beta\vH_{\sA\sB}} e^{-\beta(\vH_{\sA}+\vH_\sB)} = \sum e^{\beta\vH_{\sA_{\ell}\sB_{\ell}}} e^{-\beta(\vH_{\sA_{\ell}}+\vH_{\sB_{\ell}})} - e^{\beta\vH_{\sA_{\ell-1}\sB_{\ell-1}}} e^{-\beta(\vH_{\sA_{\ell-1}}+\vH_{\sB_{\ell-1}})},
\end{align}
where $\sA_\ell, \sB_\ell$ are the subintervals at distance $\leq \ell$ to the left/right of the link $\sA:\sB$, and the edge cases $\sA_{-1}, \sB_{-1} :=\emptyset$. To proceed, we alter $\sA_{\ell}\rightarrow \sA_{\ell-1}$ using the imaginary-time gluing \autoref{lem:gluing} on the regions $\sA' = \sA_{\ell}\setminus \sA_{\ell-1}$, $\sB' = \sA_{\ell-1}$, $\sC' = \sB_\ell$:
\begin{align}
    e^{\beta\vH_{\sA_{\ell}\sB_{\ell}}} e^{-\beta(\vH_{\sA_{\ell}}+\vH_{\sB_{\ell}})} = (1+\vX)e^{\beta\vH_{\sA_{\ell-1}\sB_{\ell}}} e^{-\beta(\vH_{\sA_{\ell-1}}+\vH_{\sB_{\ell}})} 
\end{align}
    \noindent where the norm of the additive correction satisfies, by \autoref{lem:gluing} and \autoref{lem:expansionals}:
    \begin{align}
        \lnorm{\vX e^{\beta\vH_{\sA_{\ell-1}\sB_{\ell}}} e^{-\beta(\vH_{\sA_{\ell-1}}+\vH_{\sB_{\ell}})}} &\le \lnorm{\vX }\lnorm{e^{\beta\vH_{\sA_{\ell-1}\sB_{\ell}}} e^{-\beta(\vH_{\sA_{\ell-1}}+\vH_{\sB_{\ell}})}}\leq a'\cdot \frac{b'^{\ell-1}}{(\ell-1)!},
    \end{align}
    for suitable $\beta$-dependent constants $a', b'$. Repeat for $\sB\rightarrow \sB_{\ell}$, and appropriately define $\ell$ to conclude the proof. 
\end{proof}

\section{Weak Clustering in KMS Norm at all Temperatures}
\label{app:Weakclustering}
In this section, we show that the standard definition of Weak Clustering in the literature -- where the observables are normalized in operator norm (\autoref{defn:clustering}, below) -- implies KMS Weak Clustering (as in \autoref{defn:kms_clustering}) -- where the observables are normalized in KMS norm. Combined with the results of \cite{Kimura_2025}, who proved that arbitrary 1D Hamiltonians (\autoref{defn:1D}) admit Weak Clustering at all temperatures, we conclude that KMS Weak Clustering similarly holds at all temperatures. To state our results, we begin by presenting the relevant definitions. 

\begin{defn}
    [$\infty-$Weak Clustering]\label{defn:clustering} A Gibbs state $\vrho$ is said to satisfy $\infty-$\emph{\textsf{Weak Clustering}} with error $\epsilon_\mathsf{\infty}(\cdot)$, if for every pair of operators $\vX, \vY$ with support on intervals $\sA, \sC \subseteq [n]$:
 \begin{equation}\label{eq:inf-weak-clustering}
     \bigg|\tr[\vX^\dagger\vY\vrho] - \tr[\vrho \vX^\dagger] \cdot \tr[\vrho \vY]\bigg| \leq \|\vX\|\cdot  \|\vY\|\cdot \epsilon_\infty(\mathsf{dist}(\sA, \sC)).
\end{equation}
\end{defn}

To formulate our statements in a self-contained manner, we make the following assumption on the decay of correlations in 1D Hamiltonians. 

\begin{assumption}
    [$\mathsf{WC}_{\infty, q, \beta}$]\label{def:wc-assumption} For any $\beta\in \BR^+, q\in \BN^+$, we say $\mathsf{WC}_{\infty, q, \beta}$ holds with error $\epsilon_{\infty, q, \beta}(\cdot )$ if the Gibbs state at inverse-temperature $\beta$ of every family of 1D Hamiltonians (\autoref{defn:1D}) with local dimension $2^q$ satisfies $\infty-$\emph{\textsf{Weak Clustering}} with said error. 
\end{assumption}

These assumptions are ``downwards contained'' in that $\mathsf{WC}_{\infty, q_1, \beta_1}$ implies $\mathsf{WC}_{\infty, q_2, \beta_2}$, so long as $q_1\geq q_2, \beta_1\geq \beta_2$; simply by rescaling the interaction strengths of the Hamiltonian. Here, we only require the results of \cite{Kimura_2025} for 1D Hamiltonians with short-range interactions, as those are the ones covered by \autoref{defn:1D}.

\begin{thm}
    [Weak Clustering in 1D at all Finite Temperatures, \cite{Kimura_2025}]\label{thm:weak-kimura} For any $\beta\in \BR^+, q\in \BN^+$, there exists constants $\alpha, \gamma>0$ depending only on $\beta, q$ such that $\mathsf{WC}_{\infty, q, \beta}$ holds with sub-exponential error:
    \begin{equation}
        \epsilon_{\infty}(\ell) \leq \alpha\cdot \exp\big(-\ell^{\gamma}\big).
    \end{equation}
\end{thm}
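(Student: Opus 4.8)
The plan is to deduce two-point clustering for the Gibbs state $\vrho\propto e^{-\beta\vH}$ of an arbitrary $1$D short-range chain (\autoref{defn:1D}) from the quasi-locality of imaginary-time evolution collected in \autoref{lem:locality_complextime} together with the expansional/gluing estimates of \autoref{lem:expansionals}--\autoref{lem:gluing}. Fix $\vX$ supported on an interval $\sA$, $\vY$ on $\sC$, and let $\ell=\mathsf{dist}(\sA,\sC)$. First I would pick a single Hamiltonian link $\vh$, $\|\vh\|\le1$, lying roughly in the middle of the separating region at distance $\gtrsim\ell/2$ from both $\sA$ and $\sC$, and set $\vK=\vH-\vh$, so that $e^{-\beta\vK}=e^{-\beta\vK_\sL}\otimes e^{-\beta\vK_\sR}$ factorizes across the cut. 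Writing $e^{-\beta\vH}=\vE\,e^{-\beta\vK}\,\vE^{\dagger}$ for the bounded, invertible quasi-local rotation $\vE$ that turns $\vK$ into $\vH$ (Hastings' quantum belief propagation, or equivalently the one-sided Araki expansional $\mathcal{T}\exp(-\int_0^\beta e^{u\vK}\vh e^{-u\vK}\,\rd u)$ governed by \autoref{lem:expansionals}), one truncates $\vE\to\vE^{(r)}$ to a window of radius $r\approx\ell/4$ around the link; by the annulus decomposition this costs only a super-exponentially small $a_\beta b_\beta^{\,r}/r!$ and leaves $\vE^{(r)}$ commuting with both $\vX$ and $\vY$. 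Substituting this truncation into $\tr[\vX\vY e^{-\beta\vH}]$, $\tr[\vX e^{-\beta\vH}]$, $\tr[\vY e^{-\beta\vH}]$, and $Z=\tr[e^{-\beta\vH}]$, and using the product structure of $e^{-\beta\vK}$, reduces the covariance $\mathsf{Cov}(\vX,\vY)$ to a residual term that measures the failure of $(\vE^{(r)})^{\dagger}\vE^{(r)}$ to factorize across the cut inside the window.

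The main obstacle is precisely that residual term: unlike the classical or commuting case, the belief-propagation operator $(\vE^{(r)})^{\dagger}\vE^{(r)}$ genuinely correlates the two halves of the separating window, so a one-shot decoupling does not close. I would resolve this by a multi-scale peeling argument: decouple again \emph{inside} the window, at a geometric sequence of scales $\ell,\ell/2,\ell/4,\dots$, at each stage re-expressing the residual correlation through a shorter window in terms of expansionals whose tails are controlled by \autoref{lem:gluing} and \autoref{lem:locality_complextime}, and bounding the partition-function ratios that appear by the Golden--Thompson/expansional estimates already used in \autoref{lem:compare_measure} and \autoref{lem:expansionals}. Summing the errors over the $O(\log\ell)$ scales, each carrying a super-exponentially small but $\beta$-dependent prefactor (the constants in \autoref{lem:locality_complextime}--\autoref{lem:expansionals} grow like towers of exponentials in $|z|=\beta$), one cannot preserve a single-exponential rate; optimizing how much buffer to spend per scale yields the stretched-exponential bound $\epsilon_\infty(\ell)\le\alpha\exp(-\ell^{\gamma})$ with $\alpha,\gamma$ depending only on $\beta,q$. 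Alternatively one can run Araki's original complex-analytic argument — boundedness of the expansional in a strip plus a Phragm\'en--Lindel\"of/Cauchy estimate — which recovers the translation-invariant exponential rate but whose strip width cannot be controlled uniformly once interaction strengths vary from site to site, which is exactly the case \cite{Kimura_2025} must handle.

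Packaging this over all interval pairs gives $\mathsf{WC}_{\infty,q,\beta}$ with the stated error; since each fixed $(\beta,q)$ is treated on its own, the downwards-containment remark after \autoref{def:wc-assumption} lets one reduce to, say, large $q$ and $\beta$. I expect the genuinely hard part to be controlling the recursion itself — keeping the accumulated $\beta$-dependent expansional prefactors under control across all scales, and checking at every stage that the truncated rotations can be reinserted into the trace without ever disturbing the supports of $\vX$ and $\vY$. This is exactly the bookkeeping that the present paper bypasses a posteriori: once \autoref{thm:main} is in hand, \cite{hastings2006spectral} upgrades this sub-exponential input to the genuine exponential clustering of \autoref{cor:clustering}.
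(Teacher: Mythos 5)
The paper does not prove this theorem: \autoref{thm:weak-kimura} is imported as a black box from \cite{Kimura_2025}, and the paper's own work in this appendix is only the downstream conversion in \autoref{thm:kms_from_weak} from operator-norm clustering to KMS clustering. So your proposal is a reconstruction of the cited external result, not something to compare against an internal argument. Your outline is in the right spirit -- the reference does hinge on a quasi-local factorization of $e^{-\beta\vH}$ across a cut together with a recursive scheme for controlling the non-factorizing residual, and a stretched-exponential rate is what that mechanism delivers.

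Two issues, though. First, you identify ``Hastings' quantum belief propagation'' with ``the one-sided Araki expansional,'' but these are distinct constructions giving distinct factorizations: the one-sided expansional satisfies $e^{-\beta\vH}=e^{-\beta\vK}\,V(\beta)$ with $V(\beta)=\CT\exp\bigl(-\int_0^\beta e^{u\vK}\vh\, e^{-u\vK}\,\rd u\bigr)$ (the object governed by \autoref{lem:expansionals}), whereas belief propagation produces, via a different ODE with a filter function, a quasi-local $\vE$ with the symmetric sandwich $e^{-\beta\vH}=\vE\, e^{-\beta\vK}\,\vE^\dagger$. The residual you must control is $\vE^{(r)\dagger}\vE^{(r)}$ in the latter case and a one-sided $V^{(r)}$ in the former, and the commutation and adjoint bookkeeping are not interchangeable. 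Second -- and this is the substantive gap -- the ``multi-scale peeling at scales $\ell,\ell/2,\ell/4,\dots$'' is not a closed recursion as stated: cutting once does not hand you back a two-point covariance at scale $\ell/2$; it hands you a sum over Schmidt components of $\vE^{(r)\dagger}\vE^{(r)}$ across the inner cut, whose effective rank and norm must be controlled simultaneously before anything resembling a shorter instance of the same problem appears. You rightly flag this as ``the genuinely hard part,'' and it is exactly where \cite{Kimura_2025} invests its effort. Note also that $O(\log\ell)$ scales each carrying a $\beta$-dependent prefactor give, naively, only a polynomial loss in $\ell$, not a stretched-exponential one, so the emergence of the $e^{-\ell^\gamma}$ rate requires a more careful accounting of how the belief-propagation norm grows with the window than the sketch supplies. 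As a sketch of the external input, this is plausible but far from complete; as a comparison with the paper, there is nothing to match, since the paper takes the result on faith.
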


We are now in a position to state the main result of this section.

\begin{thm}
    [\textsf{KMS Weak Clustering} from $\infty-$\textsf{Weak Clustering}]\label{thm:kms_from_weak} For any $\beta\in \BR^+, q\in \BN^+$, assume $\mathsf{WC}_{\infty, 2q, 2\beta}$ holds with error function $\epsilon_{\infty}(\cdot)$. Then, there exists explicit constants $c_1, c_2, c_3>0$ as a function of $\beta, q$ such that the Gibbs state of every 1D Hamiltonian admits \emph{\textsf{KMS Weak Clustering}} (\autoref{defn:kms_clustering}) with error function:
    \begin{equation}
        \epsilon_{\mathsf{KMS}}(\ell) \leq c_1\cdot \bigg( e^{-c_2\ell\log \ell} +  \epsilon_\infty(c_3\cdot \ell)\bigg).
    \end{equation}
\end{thm}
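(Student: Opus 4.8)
## Proof Proposal for Theorem~\ref{thm:kms_from_weak}

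\textbf{Overall approach.} The goal is to upgrade the operator-norm-normalized weak clustering bound (\autoref{def:wc-assumption}, available at doubled parameters $2q, 2\beta$) to the KMS-norm-normalized one (\autoref{defn:kms_clustering}). The plan is to start from a pair of operators $\vX, \vY$ supported on disjoint intervals $\sA, \sC$ with $\tr[\vrho\vX] = \tr[\vrho\vY]=0$ (the centered case; the general case reduces to it by subtracting the means), and to rewrite the KMS covariance $\langle \vX, \vY\rangle_{\vrho} = \tr[\vX^\dagger \vrho^{1/2}\vY\vrho^{1/2}]$ in a form where I can apply the $\infty$-weak clustering hypothesis. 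The key manipulation is to absorb the ``half-Gibbs'' factors $\vrho^{1/2}$ into the operators by conjugation, turning $\tr[\vX^\dagger\vrho^{1/2}\vY\vrho^{1/2}]$ into a GNS-type correlator $\tr[\tilde\vX^\dagger \tilde\vY \vrho']$ for a modified Gibbs state $\vrho'$ — but this naively destroys locality because $\vrho^{1/2}\vY\vrho^{-1/2}$ spreads $\vY$ across the whole chain. The remedy, exactly as in the proof of \autoref{lem:localized_E} and \autoref{lem:island}, is to use the locality of 1D complex-time evolution (\autoref{lem:locality_complextime}, \autoref{lem:expansionals}) to \emph{truncate} the conjugating exponentials to a neighborhood of $\sA$ and $\sC$, paying only an $e^{-c\ell\log\ell}$ error, and to replace the true Gibbs state by a punctured/factorized one (\autoref{lem:compare_measure}, \autoref{lem:gluing}) in the buffer region between $\sA$ and $\sC$.

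\textbf{Key steps, in order.}
First, reduce to centered operators and normalize so that $\|\vX\|_{\vrho} = \|\vY\|_{\vrho}=1$; record via \autoref{lem:loose_AO} the a priori bound relating $\|\vrho^{\pm 1/4}\vX\vrho^{\mp 1/4}\|$ to quantities controlled by \autoref{lem:locality_complextime}.
Second, let $\ell = \mathsf{dist}(\sA,\sC)$ and pick a midpoint splitting the buffer into two halves of width $\sim \ell/2$; puncture the Hamiltonian link at this midpoint, so $\vrho$ is replaced by $\vrho_L\otimes\vrho_R$ with $\vX$ living on the $L$-factor and $\vY$ on the $R$-factor — this costs a factor $c_{\beta}$ by \autoref{lem:compare_measure} and more precisely forces me to compare $\langle\cdot,\cdot\rangle_{\vrho}$ with $\langle\cdot,\cdot\rangle_{\vrho_L\otimes\vrho_R}$, for which $\mathsf{Cov} = 0$ identically; hence the covariance is bounded by the \emph{difference} of the two inner products applied to $\vX,\vY$. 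Third, express that difference using the annulus/expansional decomposition for $e^{\beta\vH_{\sA\sB\sC}}e^{-\beta(\cdots)}$ (the corollary at the end of \autoref{section:1D-locality}) inserted around the midpoint; truncate it to radius $\sim\ell/2$ with error $a_\beta b_\beta^{\ell/2}/(\ell/2)! \le c_1 e^{-c_2\ell\log\ell}$. Fourth — the actual use of the hypothesis — the truncated, operator-norm-bounded correction term is a GNS-type correlator $\big|\tr[\vX'^\dagger \vY' \vrho'']\big|$ for operators $\vX',\vY'$ of controlled \emph{operator} norm, supported on enlarged but still-separated intervals (separation $\gtrsim c_3\ell$), with respect to a Gibbs state of a 1D Hamiltonian whose local dimension and interaction strength have at most doubled (the ``folding''/doubling device of \autoref{rmk:folding}, needed because conjugation by $\vrho^{1/2}$ effectively squares the relevant Boltzmann factor); apply $\mathsf{WC}_{\infty,2q,2\beta}$ to bound it by $\|\vX'\|\cdot\|\vY'\|\cdot\epsilon_\infty(c_3\ell) \le c_1 \epsilon_\infty(c_3\ell)$. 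Collecting the two error contributions gives $\epsilon_{\mathsf{KMS}}(\ell) \le c_1(e^{-c_2\ell\log\ell} + \epsilon_\infty(c_3\ell))$, and the combination with \autoref{thm:weak-kimura} yields \autoref{thm:kms-clustering-uncon}.

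\textbf{Main obstacle.} The delicate point is bookkeeping the conversion between the three bilinear forms — the KMS inner product $\tr[\vX^\dagger\vrho^{1/2}\vY\vrho^{1/2}]$, the GNS form $\tr[\vX^\dagger\vY\vrho]$ used by \cite{Kimura_2025}, and the operator-norm normalization — \emph{without} letting the conjugating exponentials delocalize the operators or blow up their norms. Getting the factors of $\vrho$ in the right places so that (i) the residual state is a \emph{bona fide} 1D Gibbs state (possibly with doubled $q,\beta$), (ii) the modified observables $\vX',\vY'$ have operator norm bounded by a $\beta,q$-dependent constant times their original KMS norms, and (iii) the separation between their supports degrades only by a constant factor, is the crux; this is precisely the ``somewhat laborious'' step the paper warns about, and it rests entirely on the 1D complex-time locality estimates (\autoref{lem:locality_complextime}, \autoref{lem:expansionals}, \autoref{lem:gluing}, \autoref{lem:variant_gluing}) assembled in \autoref{section:1D-locality}. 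I expect no conceptual surprises beyond careful application of those tools, but the constant-chasing is where all the work lives.
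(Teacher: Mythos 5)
Your proposal shares the paper's macroscopic skeleton (puncture the chain, peel off expansional corrections with $e^{-c\ell\log\ell}$ error using \autoref{lem:locality_complextime} and the expansional machinery, then invoke the $\infty$-clustering input on the residual), but it takes a genuinely different route at the decisive step, and that step has a real gap. The paper does not apply $\mathsf{WC}_{\infty}$ directly to any modified observables; it first converts $\mathsf{WC}_{\infty}$ to \textsf{Local Indistinguishability} of reduced density matrices via \cite{bluhm2022exponential} (\autoref{lem:li-from-weak}), and it is LI --- a statement about $\|\tr_{\sB\sC}[\vrho^{\sA\sB\sC}]-\tr_{\sB}[\vrho^{\sA\sB}]\|_1$ --- that is ultimately used inside \autoref{prop:kms-from-LI}.

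The gap in your plan is the claim that the residual correction is a GNS-type correlator $\tr[\vX'^\dagger\vY'\vrho'']$ with $\vX',\vY'$ of \emph{operator} norm bounded by $c_1\|\vX\|_\vrho\|\vY\|_\vrho$. As written, this is false: conjugating a KMS-norm-$1$ observable $\vX$ by truncated expansionals or complex-time factors produces an operator whose operator norm is controlled only by $\|\vX\|$, and $\|\vX\|/\|\vX\|_\vrho$ can be enormous. The 1D complex-time estimates (\autoref{lem:locality_complextime}, \autoref{lem:expansionals}, \autoref{lem:gluing}) bound conjugation factors but do not perform the KMS-to-operator norm conversion you need. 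The missing idea, which is the actual content of \autoref{lem:step-2-kms-li}, is to \emph{trace out the far regions}: form the local operator $\vX_{\ell,k}=\tr_{\sL\sR}[\vO_\sL\, e^{-\beta\vH_{\sL\sR}/2}\,\vY_\ell\,\vO_\sR^\dagger\,\vY_k^\dagger\, e^{-\beta\vH_{\sL\sR}/2}]$ supported near the cuts, Schmidt-decompose the annulus component $\vY_\ell$ across the $\sL\sR:\sA\sC$ bipartition, and use the KMS H\"older inequality (\autoref{lem:loose_AO}) to bound $\|\vX_{\ell,k}\|\lesssim \|\vO_\sL\|_\vrho\|\vO_\sR\|_\vrho\, Z_{\sL\sR}\, g^{\ell+k}/(\ell!\,k!)$. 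It is this partial trace that absorbs the KMS-normalization of the original observables into a \emph{bounded, local} operator, after which LI (not $\mathsf{WC}_\infty$ as a two-operator statement) can be applied to $\vX_{\ell,k}$ traced against the two nearby Gibbs states. A secondary gap: centering $\vX,\vY$ with respect to $\vrho$ does not center them with respect to $\vrho_L\otimes\vrho_R$, so the factorized covariance is not identically zero, and the mismatch $\tr[\vrho\vX]-\tr[\vrho_L\vX]$ again requires an LI-type comparison (the paper handles it by applying \autoref{prop:kms-from-LI} to the pair $(\vO_\sA,\vI)$). You correctly identify item (ii) in your obstacle paragraph as the crux, but the plan offered does not close it.
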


As an immediate corollary of \autoref{thm:kms_from_weak}, when combined with \autoref{thm:weak-kimura} \cite{Kimura_2025}, we prove the sub-exponential decay of correlations when measured in the KMS inner product, as originally presented in \autoref{thm:kms-clustering-uncon}. As previously discussed in Section~\ref{sec:conditionalgapK}, this is the ingredient we start with to prove the mixing time of these systems. Our proof of \autoref{thm:kms_from_weak} is based on an analogous decay-of-correlations property for Gibbs states, known as local indistinguishability. To begin, we dedicate the next subsection to the relevant definitions and their relationship to $\infty-$Weak Clustering.

\subsection{Local Indistinguishability}

We begin with the following definition of local indistinguishability, which roughly speaking captures the fact that ``local expectations can be evaluated locally''. In other words, when evaluating the expectation $\tr[\vO_\sA\vrho]$ of an observable supported on a region $\sA$, it suffices to consider the induced Gibbs state on the surrounding region $\sA\sB$.

\begin{defn}
    [Local Indistinguishability]\label{defn:LI} For any $\beta\in \BR^+, q\in \BN^+$, a 1D Hamiltonian is said to satisfy \emph{\textsf{Local Indistinguishability}} with error function $\epsilon_{\mathsf{LI}}(\cdot)$, if for every set of consecutive disjoint intervals $\sA, \sB, \sC \subseteq [n]$, the marginals of the induced Gibbs states at inverse-temperature $\beta$ on said regions satisfy:
    \begin{equation}
        \bigg\|\tr_{\sB\sC}[\vrho^{\sA\sB\sC}]  - \tr_{\sB}[\vrho^{\sA\sB}]\bigg\|_1\leq \epsilon_{\mathsf{LI}}(\mathsf{dist}(\sA, \sC)).
    \end{equation}
\end{defn}

Again, in the interest of self-containment, we formulate an assumption on the local indistinguishability of families of 1D Hamiltonians. 

\begin{assumption}\label{def:li-assumption}
    For any $\beta\in \BR^+, q\in \BN^+$, we say $\mathsf{LI}_{q, \beta}$ holds with error $\epsilon_{\mathsf{LI}, \beta, q}(\cdot )$ if every family of 
    1D Hamiltonians (\autoref{defn:1D}) satisfies \emph{\textsf{Local Indistinguishability}} (\autoref{defn:LI}) with said error function. 
\end{assumption}

Perhaps unsurprisingly, \textsf{Local Indistinguishability} is a known consequence of $\infty-$\textsf{Weak Clustering} in 1D Hamiltonians. Here, we simply invoke the result of \cite[Proposition 7.1]{bluhm2022exponential}.

\begin{lem}
    [\textsf{Local Indistinguishability} from $\infty-$\textsf{Weak Clustering}, \cite{bluhm2022exponential}]\label{lem:li-from-weak} For any $\beta\in \BR^+, q\in \BN^+$, assume $\mathsf{WC}_{\infty,q,  \beta}$ holds with error $\epsilon_{\infty, q, \beta}$ (\autoref{def:wc-assumption}). Then, there exists explicit constants $c_1, c_2, c_3>0$ as a function of $\beta, q$, such that $\mathsf{LI}_{q, \beta}$ holds with error function:
    \begin{equation}
        \epsilon_{\mathsf{LI}, q, \beta}(\ell) \leq c_1\cdot \bigg( e^{-c_2\ell\log \ell} +  \epsilon_{\infty, q, \beta}(c_3\cdot \ell)\bigg).
    \end{equation}
\end{lem}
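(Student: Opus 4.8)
\medskip
\noindent\emph{Proof plan (for \autoref{lem:li-from-weak}).}
The plan is to reduce the trace-norm estimate to a single correlation bound and then invoke the weak clustering hypothesis. Write $\vh := \vH^{\sB:\sC}$ for the one Hamiltonian link straddling the $\sB$--$\sC$ boundary (so $\|\vh\|\le 1$); then $\vH^{\sA\sB\sC} = \vH^{\sA\sB} + \vH^{\sC} + \vh$, whereas $\vH^{\sA\sB}+\vH^{\sC}$ acts on the disjoint sets $\sA\sB$ and $\sC$, so its Gibbs state on $\sA\sB\sC$ factorizes as $\vrho^{\sA\sB}\otimes\vrho^{\sC}$, and hence $\tr_{\sB}[\vrho^{\sA\sB}] = \tr_{\sB\sC}[\vrho^{\sA\sB}\otimes\vrho^{\sC}]$. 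Since $\|M\|_1=\max_{\|\vO_\sA\|\le1}|\tr[\vO_\sA M]|$ over Hermitian $\vO_\sA$ supported on $\sA$, it suffices to bound $|\tr[\vO_\sA\,\vrho^{\sA\sB\sC}]-\tr[\vO_\sA\,(\vrho^{\sA\sB}\otimes\vrho^{\sC})]|$ for each such $\vO_\sA$. First I would interpolate along the coupling of the removed link: set $\vH_s:=\vH^{\sA\sB}+\vH^{\sC}+s\,\vh$ and $\vrho_s:=e^{-\beta\vH_s}/\tr[e^{-\beta\vH_s}]$ for $s\in[0,1]$, so $\vrho_0=\vrho^{\sA\sB}\otimes\vrho^{\sC}$ and $\vrho_1=\vrho^{\sA\sB\sC}$. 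Each $\vH_s$ is again a 1D nearest-neighbour Hamiltonian all of whose terms have norm $\le 1$ (including $s\vh$), so it lies in the class of \autoref{defn:1D} and $\mathsf{WC}_{\infty,q,\beta}$ applies verbatim to every $\vrho_s$ with the same error. A standard Duhamel expansion of $\tfrac{\rd}{\rd s}\tr[\vO_\sA\vrho_s]$ then gives
\[ \tr[\vO_\sA\,\vrho^{\sA\sB\sC}]-\tr[\vO_\sA\,(\vrho^{\sA\sB}\otimes\vrho^{\sC})] = -\beta\int_0^1\!\!\int_0^1\Big(\tr[\vO_\sA\,\vrho_s^{u}\,\vh\,\vrho_s^{1-u}]-\tr[\vO_\sA\vrho_s]\,\tr[\vh\vrho_s]\Big)\,\rd u\,\rd s. \]

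Next I would estimate the integrand, which, writing $\vh_s(u):=\vrho_s^{u}\vh\vrho_s^{-u}=e^{-u\beta\vH_s}\vh\,e^{u\beta\vH_s}$, equals $\tr[\vO_\sA\,\vh_s(u)\,\vrho_s]-\tr[\vO_\sA\vrho_s]\,\tr[\vh_s(u)\vrho_s]$ — a $\vrho_s$-correlation function between $\vO_\sA$ (supported on the interval $\sA$) and the imaginary-time-evolved link $\vh_s(u)$. Since $u\beta\le\beta$ and $\vh$ is supported on two adjacent sites, the one-dimensional complex-time locality estimate \autoref{lem:locality_complextime} lets me split $\vh_s(u)=\vh_s(u)_{[r]}+\vE_r$, where $\vh_s(u)_{[r]}$ is supported on the interval of radius $r:=\lceil|\sB|/2\rceil$ around $\vh$, $\|\vh_s(u)_{[r]}\|\le M_\beta$, and $\|\vE_r\|\le a_\beta\,e^{-b_\beta\, r\log r}$, with $M_\beta,a_\beta,b_\beta$ depending only on $\beta$. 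The interval carrying $\vh_s(u)_{[r]}$ is separated from $\sA$ by at least $|\sB|-r\ge|\sB|/2$, so $\infty$-Weak Clustering for $\vrho_s$ (\autoref{defn:clustering}) bounds $|\tr[\vO_\sA\,\vh_s(u)_{[r]}\,\vrho_s]-\tr[\vO_\sA\vrho_s]\,\tr[\vh_s(u)_{[r]}\vrho_s]|\le M_\beta\,\epsilon_{\infty,q,\beta}(|\sB|/2)$ (using $\|\vO_\sA\|\le1$), while replacing $\vh_s(u)$ by $\vh_s(u)_{[r]}$ in the two remaining places costs at most $M_\beta\|\vE_r\|$ each. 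Hence each integrand is at most $M_\beta\,\epsilon_{\infty,q,\beta}(|\sB|/2)+2a_\beta M_\beta\,e^{-b_\beta r\log r}$.

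Finally, integrating over $u,s$ and maximizing over $\vO_\sA$ yields $\|\tr_{\sB\sC}[\vrho^{\sA\sB\sC}]-\tr_{\sB}[\vrho^{\sA\sB}]\|_1\le\beta\big(M_\beta\,\epsilon_{\infty,q,\beta}(|\sB|/2)+2a_\beta M_\beta\,e^{-b_\beta r\log r}\big)$. Using $|\sB|\ge\mathsf{dist}(\sA,\sC)-1$, taking $\epsilon_{\infty,q,\beta}$ WLOG non-increasing (else pass to its non-increasing majorant, which changes nothing downstream), and absorbing the finitely many small-distance cases into the prefactor, this has the advertised form $\epsilon_{\mathsf{LI},q,\beta}(\ell)\le c_1\big(e^{-c_2\ell\log\ell}+\epsilon_{\infty,q,\beta}(c_3\ell)\big)$ with $\ell=\mathsf{dist}(\sA,\sC)$ and $c_1,c_2,c_3$ depending only on $\beta,q$. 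The one genuinely one-dimensional ingredient is the appeal to \autoref{lem:locality_complextime} to localize and norm-bound the imaginary-time-evolved link $\vh_s(u)$ — exactly the step that would fail in higher dimensions below a threshold temperature — and I expect the only (mild) subtlety to be the verification that the interpolating states $\vrho_s$ remain within the Hamiltonian class, so that the clustering hypothesis legitimately applies to each of them; everything else is routine bookkeeping.
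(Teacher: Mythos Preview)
The paper does not prove this lemma itself; it simply invokes \cite[Proposition~7.1]{bluhm2022exponential} as a black box. Your argument is a correct, self-contained proof. The Duhamel interpolation in the coupling parameter $s$, followed by \autoref{lem:locality_complextime} to localize the imaginary-time-evolved link $\vh_s(u)$ to a window of radius $r\approx|\sB|/2$, and then the weak-clustering hypothesis applied to the interpolating state $\vrho_s$, is a clean and standard route. You also correctly isolate the one point that needs care: each $\vH_s=\vH^{\sA\sB}+\vH^{\sC}+s\vh$ is again a nearest-neighbour 1D Hamiltonian with all terms of norm $\le 1$, so the universal hypothesis $\mathsf{WC}_{\infty,q,\beta}$ (\autoref{def:wc-assumption}) applies uniformly in $s$ --- without this, the argument would not close. (Two cosmetic remarks: the replacement error is already $\le 2\|\vE_r\|$ without the extra factor $M_\beta$; and your $M_\beta$ is legitimately system-size independent because $\vh$ is supported on two sites, so $b-a\le 1$ in \eqref{eq:imaginary-time-conjugation}.) The argument in \cite{bluhm2022exponential} is organized via expansionals rather than interpolation, but both routes rest on the same two 1D-specific ingredients --- convergent complex-time dynamics and weak clustering --- and yours is no less direct.
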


The statement and proof presented in \cite[Proposition 7.1]{bluhm2022exponential} is defined over consecutive disjoint intervals $\sA,\sB,\sC$. Due to the folding argument present in \autoref{rmk:folding}, it can be extended to topologies $\sA_1\sB_1\sC\sB_2\sA_2$ with the distance measured by $\min(\sB_1, \sB_2)+1$ at the cost of invoking $\infty-$\textsf{Weak Clustering} on 1D Hamiltonians where the inverse-temperature and number of qubits per-site have doubled.

\subsection{KMS Weak Clustering from Local Indistinguishability}

The main result of this subsection is the following proposition, which entails that local indistinguishability of reduced density matrices (as in \autoref{defn:LI}), implies some form of local indistinguishability of KMS inner products, where furthermore the error is measured in the KMS norm of the observables.

\begin{prop}
    [Local Indistinguishability of KMS Inner Products] \label{prop:kms-from-LI} For any $\beta\in \BR^+, q\in \BN^+$, assume $\mathsf{LI}_{2q, 2\beta}$ holds with error function $\epsilon_{\mathsf{LI}, 2q, 2\beta}(\cdot)$ (\autoref{def:li-assumption}). Then, there exists constants $c_1, c_2, c_3>0$ dependent only on $\beta, q$ such that for every set of consecutive disjoint intervals $\sA, \sB, \sC \subseteq [n]$ and any pair of observables $\vO_\sA, \vO_\sC$:
    \begin{equation}
       \bigg|\tr\L[\vO_{\sA} \vrho^{1/2}\vO^{\dagger}_{\sC}\vrho^{1/2}\R] -  \tr\big[\vrho^{\sA\sB}\vO_{\sA}\big]\cdot \tr\big[\vrho^{\sB\sC}\vO_{\sC}^\dagger\big]\bigg|  \leq c_1\cdot \|\vO_\sA\|_{\vrho}\cdot \|\vO_\sC\|_{\vrho} \cdot \bigg(e^{-c_2\Delta\log \Delta}+ \epsilon_{\mathsf{LI}, 2q, 2\beta}\big(c_3\cdot\Delta\big)\bigg),
    \end{equation}
    where $\Delta = d(\sA, \sC)$.
\end{prop}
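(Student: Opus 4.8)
\textbf{Proof plan for Proposition~\ref{prop:kms-from-LI}.} The plan is to reduce the KMS inner product $\tr[\vO_\sA\vrho^{1/2}\vO_\sC^\dagger\vrho^{1/2}]$ to an expression that only sees the Gibbs state \emph{locally} around $\sA$ and around $\sC$, and then to invoke \textsf{Local Indistinguishability} to replace the global reduced density matrices by the induced ones. The first step is to introduce a buffer: split $\sB$ into three contiguous pieces $\sB = \sB_\sL\cup \sB_\mathsf{M}\cup\sB_\sR$ with $|\sB_\sL|,|\sB_\mathsf{M}|,|\sB_\sR| \gtrsim \Delta/3$, and use the imaginary-time gluing lemmas of Appendix~\ref{section:1D-locality} (\autoref{lem:gluing}, \autoref{lem:variant_gluing}, and the expansional norm bounds \autoref{lem:expansionals}) to factorize $\vrho^{1/2} = e^{-\beta\vH/2}/\sqrt{Z}$ across the link at $\sB_\mathsf{M}$ up to an error $e^{-c\cdot|\sB_\mathsf{M}|\log|\sB_\mathsf{M}|}$ in operator norm. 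Concretely, $e^{-\beta\vH/2}$ is within that error, after conjugation-friendly manipulation, of a product $e^{-\beta \vH_{\mathsf{left}}/2}\cdot(\text{bounded local operator on }\sB_\mathsf{M})\cdot e^{-\beta\vH_{\mathsf{right}}/2}$ where $\mathsf{left}\supseteq \sA\sB_\sL$, $\mathsf{right}\supseteq\sB_\sR\sC$. Throughout, errors must be converted from operator norm to the relevant weighted norms using \autoref{lem:operatornorm} and the Hölder-in-KMS inequality \autoref{lem:loose_AO}, paying only $\beta,q$-dependent constants from the convergence of complex-time evolution (\autoref{lem:locality_complextime}).

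The second step handles the two halves separately. Once $\vrho^{1/2}$ is (approximately) factorized, the trace $\tr[\vO_\sA\vrho^{1/2}\vO_\sC^\dagger\vrho^{1/2}]$ becomes a product of two ``half-traces'', each involving $\vO_\sA$ (resp.\ $\vO_\sC$) against a factor that, up to the gluing error, depends only on the Hamiltonian terms on $\mathsf{left}$ (resp.\ $\mathsf{right}$). Partial-tracing out the far region in each half-trace yields exactly the reduced density matrices $\tr_{\mathsf{far}}[\vrho^{\mathsf{left}}]$ and $\tr_{\mathsf{far}}[\vrho^{\mathsf{right}}]$ of \emph{restricted} Gibbs states on the buffered regions. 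The last step is then a clean application of \autoref{lem:li-from-weak}/\autoref{def:li-assumption}: \textsf{Local Indistinguishability} at inverse temperature $2\beta$ and local dimension $2q$ (the doubling accounts for the fact that the KMS inner product involves two copies of $\vrho^{1/2}$, i.e.\ effectively $\vrho$ at the ``doubled'' temperature, plus the folding remark \autoref{rmk:folding} if needed to turn the relevant topology into consecutive intervals) lets us replace $\tr_{\mathsf{far}}[\vrho^{\mathsf{left}}]$ by $\tr_\sB[\vrho^{\sA\sB}]$ and similarly on the right, with trace-norm error $\epsilon_{\mathsf{LI},2q,2\beta}(c_3\Delta)$. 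Against each observable this costs $\|\vO_\sA\|$ (bounded by $\|\vO_\sA\|_\vrho$ times a $\beta,q$-constant via the measure-comparison \autoref{lem:compare_measure}, or directly via Hölder), giving the stated bound with $\|\vO_\sA\|_\vrho\cdot\|\vO_\sC\|_\vrho$ normalization.

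I expect the main obstacle to be \emph{bookkeeping the norm conversions} while keeping the observable normalization in the KMS norm rather than the operator norm. The natural way to factorize $\vrho^{1/2}$ produces errors measured in operator norm, but the target inequality is homogeneous in $\|\vO_\sA\|_\vrho,\|\vO_\sC\|_\vrho$; one cannot afford to bound $\|\vO\|_\vrho$ by $\|\vO\|$, since that is exactly the lossy direction (\autoref{lem:operatornorm}) that would ruin a tight clustering statement. The fix is to always move the $\vrho^{1/2}$-factors and the bounded ``gluing'' operators past $\vO_\sA$, $\vO_\sC$ using the KMS-Hölder inequality \autoref{lem:loose_AO}, absorbing $\|\vrho^{\pm1/4}(\,\cdot\,)\vrho^{\mp1/4}\|$-type factors into $\beta,q$-dependent constants (finite in 1D by \autoref{lem:locality_complextime}), so that every error term ultimately multiplies $\|\vO_\sA\|_\vrho\|\vO_\sC\|_\vrho$ and never a bare operator norm. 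A secondary subtlety is ensuring the region on which the residual bounded operator from the $\sB_\mathsf{M}$ link acts is genuinely localized, so that when we partial-trace we land on the restricted Gibbs states $\vrho^{\sA\sB}$, $\vrho^{\sB\sC}$ named in the statement and not on some slightly larger region — this is why the buffer $\sB$ is cut into thirds, leaving $\sB_\sL,\sB_\sR$ inside $\sA\sB,\sB\sC$ respectively and $\sB_\mathsf{M}$ as the sacrificial separation carrying the $e^{-c\Delta\log\Delta}$ error.
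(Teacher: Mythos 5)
Your skeleton — introduce a buffer inside $\sB$, use Araki's gluing lemmas to factorize $\vrho^{1/2}$ across it, then invoke \textsf{Local Indistinguishability} — is the right idea and is indeed what the paper's proof does. You also correctly flag the real technical hazard (operator-norm errors from gluing must be converted to error factors multiplying $\|\vO_\sA\|_{\vrho}\|\vO_\sC\|_{\vrho}$, which forces systematic use of \autoref{lem:loose_AO} rather than the lossy \autoref{lem:operatornorm}). That said, there are two concrete gaps.

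\textbf{The residual after gluing is not local, so LI cannot be applied ``cleanly''.} You claim $e^{-\beta\vH/2}$ is, up to the gluing error, a product $e^{-\beta\vH_{\mathsf{left}}/2}\cdot(\text{bounded local operator on }\sB_{\sM})\cdot e^{-\beta\vH_{\mathsf{right}}/2}$, after which partial-tracing ``yields exactly the reduced density matrices'' of restricted Gibbs states, and LI then finishes the job. But the Araki expansionals and the gluing error $\vX$ of \autoref{lem:gluing} are \emph{not} supported on a bounded neighbourhood of the cut: they admit an annulus decomposition with super-exponential (but not compact) tails. Consequently, after the gluing step, a quasi-local residual operator is wedged into the trace, and the partial trace does not land on a product of restricted-Gibbs-state expectations. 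This is precisely what the paper's second and hardest step (\autoref{lem:step-2-kms-li}) is for: it annulus-decomposes the expansionals, Schmidt-decomposes each annulus shell across the two cuts, verifies each Schmidt piece has controlled norm via \autoref{lem:locality_complextime} and KMS-H\"older, and then applies LI only to the small-radius shells (where the operator support is genuinely far from $\sB$) while discarding the large-radius shells by their smallness. Skipping the annulus/Schmidt case-split would require the residual to be strictly local, which it is not.

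\textbf{The $(2q,2\beta)$ doubling is misattributed.} You suggest it arises because two copies of $\vrho^{1/2}$ ``effectively'' give $\vrho$ at the doubled temperature. This is false: $\vrho^{1/2}\propto e^{-\beta\vH/2}$ is proportional to the Gibbs weight at inverse temperature $\beta/2$, and the two copies simply reconstitute $\vrho$ at inverse temperature $\beta$. The real reason, which you mention only as a secondary ``if needed'', is that after the gluing manipulations, both the partition-function-ratio lemma and LI must be applied to operators supported on \emph{pairs} of separated intervals (e.g.\ the tripartition $\sA'=\sL\cup\sR$, $\sB'=\sA\cup\sC$, $\sC'=\sB$ in \autoref{lem:step-1-kms-li}, and the pairs near the two cuts in \autoref{lem:step-2-kms-li}). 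Handling this non-interval topology requires the folding trick of \autoref{rmk:folding}, which doubles both the per-site local dimension and the per-site interaction strength, hence $(2q,2\beta)$. This is not an optional remark; without it the LI hypothesis cannot be applied to the operators that actually appear.
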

Before we prove the proposition above, we show why it implies KMS Weak Clustering. 
\begin{proof}

    [of \autoref{thm:kms_from_weak}] To begin, we invoke \autoref{lem:li-from-weak} to use the fact that for every constant $\beta', q'$, $\infty-$\textsf{Weak Clustering} $\mathsf{WC}_{\infty, q', \beta'}$ implies \textsf{Local Indistinguishability} $\mathsf{LI}_{q', \beta'}$ with related error. We proceed by assuming $\mathsf{LI}_{2q, 2\beta}$ holds with error function $\epsilon_{\mathsf{LI}, 2q, 2\beta}$. 
    
    Now, fix observables $\vO_{\sA}, \vO_{\sC}$ with support on intervals $\sA, \sC$. In light of \autoref{prop:kms-from-LI}, it only remains to relate the expectations under $\vrho$ to that under $\vrho^{\sA\sB}$ up to error in KMS norm. For this purpose, we apply \autoref{prop:kms-from-LI} on $\vO_\sA$ and $\vI$, resulting in 
    \begin{equation}
       \bigg| \tr\L[\vO_{\sA} \vrho\R] - \tr\L[\vO_{\sA} \vrho^{\sA\sB}\R]\bigg| \leq  c_1\cdot \|\vO_\sA\|_{\vrho}\cdot \bigg(e^{-c_2\Delta\log \Delta} + \epsilon_{\mathsf{LI}, 2q, 2\beta}\big(c_3\cdot\Delta\big)\bigg)
    \end{equation}
    and similarly for $\vO_\sC$; where $\Delta = d(\sA, \sC)$. To conclude, we observe by the comparison of measures \autoref{lem:compare_measure}: 
    \begin{equation}
        |\tr\big[\vrho^{\sB\sC}\vO_{\sC}\big]| \leq \|\vO_{\sC}\|_{\vrho^{\sB\sC}} \leq c_{\beta, q, k}\cdot \|\vO_{\sC}\|_{\vrho},
    \end{equation}
    and in this manner, 
    \begin{align}
        &\bigg|\tr\big[\vrho\vO_{\sA}\big]\cdot \tr\big[\vrho\vO_{\sC}^\dagger\big] -  \tr\big[\vrho^{\sA\sB}\vO_{\sA}\big]\cdot \tr\big[\vrho^{\sB\sC}\vO_{\sC}^\dagger\big]\bigg|   \\\leq &\bigg| \tr\L[\vO_{\sA} \vrho\R] - \tr\L[\vO_{\sA} \vrho^{\sA\sB}\R]\bigg| \cdot |\tr\big[\vrho^{\sB\sC}\vO_{\sC}^\dagger\big]| + \bigg| \tr\L[\vO_{\sC}^\dagger \vrho\R] - \tr\L[\vO_{\sC}^\dagger \vrho^{\sC\sB}\R]\bigg|\cdot |\tr\L[\vO_{\sA} \vrho\R]| \\
        \leq & c'\cdot  \bigg(e^{-c_2\Delta\log \Delta}+ \epsilon_{\mathsf{LI}, 2q, 2\beta}\big(c_3\cdot\Delta\big)\bigg)\cdot \|\vO_\sA\|_{\vrho}\cdot \|\vO_\sC\|_{\vrho},
    \end{align}
    adjusting constants and applying \autoref{prop:kms-from-LI} again, then gives the desired claim.
\end{proof}

To conclude this subsection, we present the proof of the Local Indistinguishabilty statement for KMS inner products. 

\subsubsection{Proof of \autoref{prop:kms-from-LI}}
We divide the proof of \autoref{prop:kms-from-LI} into a sequence of three lemmas, which together imply the desired statement by the triangle inequality. We first require a certain gluing statement for the partition functions of intervals of the chain, whose proof can be found in \cite[Step 2, Page 28]{bluhm2022exponential}.

\begin{lem}
    [Ratios of Partition Functions, \cite{bluhm2022exponential}] \label{lem:partition-func-ratio}
    Under the assumptions of \autoref{prop:kms-from-LI}, consider a partition of the 1D chain into consecutive disjoint intervals $[n] = \sA\cup\sB\cup\sC$. Then, there exists constants $c_1, c_2, c_3$ as a function of $\beta, q$ such that: 
     \begin{equation}
         \bigg|\frac{Z_{\sA\sB}\cdot Z_{\sB\sC}}{Z_{\sA\sB\sC}\cdot Z_\mathsf{B}} - 1\bigg| \leq c_1\cdot \bigg( e^{-c_2\ell\log \ell} +  \epsilon_{\mathsf{LI}, q, \beta}(c_3\cdot \ell)\bigg),
     \end{equation}
     where the length-scale $\ell = d(\sA, \sC)$.
\end{lem}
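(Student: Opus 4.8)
The plan is to follow the imaginary-time interpolation argument of \cite{bluhm2022exponential}. The key structural observation is that, since $\sA,\sB,\sC$ are consecutive disjoint intervals, every nearest-neighbour link lying inside $\sA\sB\sC$ is accounted for with the same total weight on each side of the operator identity
\[
\vH^{\sA\sB}+\vH^{\sB\sC}-\vH^{\sB}=\vH^{\sA\sB\sC}.
\]
Hence the ``interaction of $\sA$ with the rest'' is the \emph{same} operator in $\sA\sB$ and in $\sA\sB\sC$: writing $\sA=\{a_1,\dots,a_m\}$ ordered from $\sB$ outward, $\vh_0:=\vH^{\sB:\sA}$, and $\vh_j:=\vH_{a_j,a_{j+1}}$ for $1\le j\le m-1$ (each a genuine link of $\vH$, so $\norm{\vh_j}\le1$), one has $\vV:=\vH^{\sA\sB}-\vH^{\sB}=\vH^{\sA\sB\sC}-\vH^{\sB\sC}=\sum_{j=0}^{m-1}\vh_j$. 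Taking logarithms, the quantity to control is
\[
\log\frac{Z_{\sA\sB}\,Z_{\sB\sC}}{Z_{\sA\sB\sC}\,Z_{\sB}}=\big(\log Z_{\sA\sB}-\log Z_{\sB}\big)-\big(\log Z_{\sA\sB\sC}-\log Z_{\sB\sC}\big),
\]
so I want to show the two ``free-energy increments'' produced by switching on $\vV$ nearly cancel, with the discrepancy governed by \textsf{Local Indistinguishability}.

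I would switch on the links $\vh_0,\dots,\vh_{m-1}$ one at a time. For $0\le k\le m$ set $\vG^{(k)}_{\sB}:=\vH^{\sB}+\sum_{j<k}\vh_j$ and $\vG^{(k)}_{\sB\sC}:=\vH^{\sB\sC}+\sum_{j<k}\vh_j$; these are precisely $\vH$ restricted to the intervals $\sB\cup\{a_1,\dots,a_k\}$ and $\sB\sC\cup\{a_1,\dots,a_k\}$, hence bona fide $1$D Hamiltonians of the form in \autoref{defn:1D} (a property preserved if the last added link is rescaled by $s\in[0,1]$). The elementary Duhamel identity $\tfrac{\rd}{\rd s}\log\tr[e^{-\beta(\vG+s\vh_k)}]=-\beta\,\tr[\vh_k\,\varrho_s(\vG)]$, where $\varrho_s(\vG):=e^{-\beta(\vG+s\vh_k)}/\tr[e^{-\beta(\vG+s\vh_k)}]$, then telescopes both increments and gives
\[
\Big|\log\frac{Z_{\sA\sB}\,Z_{\sB\sC}}{Z_{\sA\sB\sC}\,Z_{\sB}}\Big|\le\beta\sum_{k=0}^{m-1}\int_0^1\Big|\tr\big[\vh_k\big(\varrho_s(\vG^{(k)}_{\sB})-\varrho_s(\vG^{(k)}_{\sB\sC})\big)\big]\Big|\,\rd s .
\]
For each $k,s$ the Hamiltonian defining $\varrho_s(\vG^{(k)}_{\sB\sC})$ is the one defining $\varrho_s(\vG^{(k)}_{\sB})$ extended by the region $\sC$ (adding $\vH^{\sB:\sC}$ and all links inside $\sC$), while $\vh_k$ is supported on a two-site sub-interval on the opposite side of $\sB$ from $\sC$. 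Applying the \textsf{Local Indistinguishability} hypothesis of \autoref{prop:kms-from-LI} to the $1$D Hamiltonian $\vG^{(k)}_{\sB\sC}+s\vh_k$, with regions $\sX=\mathrm{supp}(\vh_k)$, $\sY$ the sites of $\{a_1,\dots,a_k\}\cup\sB$ separating $\sX$ from $\sC$, and $\sZ=\sC$, bounds the two reduced states on $\sX$ in trace norm, so that, since $\norm{\vh_k}\le1$,
\[
\Big|\tr\big[\vh_k\big(\varrho_s(\vG^{(k)}_{\sB})-\varrho_s(\vG^{(k)}_{\sB\sC})\big)\big]\Big|\le\epsilon_{\mathsf{LI}}\big(\mathsf{dist}(\sX,\sC)\big)\le\epsilon_{\mathsf{LI}}\big(|\sB|+k\big),
\]
uniformly in $s$, because any lattice path from $\sX$ to $\sC$ must cross all of $\sB$.

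Summing over $k$ gives $\big|\log\tfrac{Z_{\sA\sB}Z_{\sB\sC}}{Z_{\sA\sB\sC}Z_{\sB}}\big|\le\beta\sum_{j\ge|\sB|}\epsilon_{\mathsf{LI}}(j)$ with $|\sB|=\ell-1$, $\ell=\mathsf{dist}(\sA,\sC)$; crucially this is independent of $m=|\sA|$ and of the total system size, which is the whole point of keeping the $k$-dependence in the distances rather than using $\epsilon_{\mathsf{LI}}(\ell)$ for every term. Since $\epsilon_{\mathsf{LI}}$ decays sub-exponentially — as follows from \autoref{lem:li-from-weak} together with \autoref{thm:weak-kimura}, or may simply be taken of that shape — the tail $\sum_{j\ge|\sB|}\epsilon_{\mathsf{LI}}(j)$ is dominated by $c_1\big(e^{-c_2\ell\log\ell}+\epsilon_{\mathsf{LI}}(c_3\ell)\big)$ for suitable $\beta,q$-dependent constants, and in particular is at most a fixed constant $C(\beta,q)$ for all $\ell\ge1$. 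Finally $|e^{x}-1|\le|x|\,e^{|x|}$ converts this into the asserted bound $\big|\tfrac{Z_{\sA\sB}Z_{\sB\sC}}{Z_{\sA\sB\sC}Z_{\sB}}-1\big|\le c_1'\big(e^{-c_2\ell\log\ell}+\epsilon_{\mathsf{LI}}(c_3\ell)\big)$.

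The step I expect to require the most care is the bookkeeping of the second paragraph: one must (i) add genuine Hamiltonian \emph{links} — not arbitrary operators — one at a time, so that every interpolated Hamiltonian $\vG^{(k)}_{\sB}+s\vh_k$ (resp.\ $\vG^{(k)}_{\sB\sC}+s\vh_k$) is literally a restriction of a $1$D Hamiltonian covered by \autoref{defn:1D} and hence eligible for the \textsf{Local Indistinguishability} assumption, and (ii) check that the separating region $\sY$ always contains the whole of $\sB$, so that the distance feeding $\epsilon_{\mathsf{LI}}$ grows with both $|\sB|$ and $k$; this $k$-growth is exactly what tames the sum over the (arbitrarily many) links of $\sA$ and yields a system-size-independent estimate. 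Everything else — the Duhamel/telescoping identity, the trace-norm-to-expectation inequality, and the tail-sum and $|e^{x}-1|$ estimates — is routine.
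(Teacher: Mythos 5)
The paper does not reprove this lemma; it cites \cite[Step 2, Page 28]{bluhm2022exponential}, whose argument (like the surrounding material in this paper) runs through Araki expansionals: one writes the ratio as an expectation of an expansional under $\vrho^{\sA\sB}$, compares to the corresponding expectation under $\vrho^{\sB}$, and invokes local indistinguishability once, picking up the super-exponential $e^{-c_2\ell\log\ell}$ error from the gluing of expansionals. Your proof takes a genuinely different, more elementary route: you write $\log\frac{Z_{\sA\sB}Z_{\sB\sC}}{Z_{\sA\sB\sC}Z_{\sB}}$ as a telescoping sum of free-energy increments obtained by switching on the links of $\sA$ one at a time, differentiate each increment via Duhamel, and then apply local indistinguishability at \emph{every} link scale $k$. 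This avoids expansionals entirely, which is a real simplification for this particular lemma, and it makes transparent that the statement is a free-energy subadditivity estimate with a boundary-insensitivity input. The bookkeeping is correct: the identity $\vH^{\sA\sB}+\vH^{\sB\sC}-\vH^{\sB}=\vH^{\sA\sB\sC}$ holds because $\sA$ and $\sC$ share no link; each interpolating Hamiltonian $\vG^{(k)}_{\cdot}+s\vh_k$ is a bona fide $1$D Hamiltonian of the form in \autoref{defn:1D} (so the \textsf{LI} hypothesis, which is assumed for \emph{all} such chains, does apply to it); and the distance feeding $\epsilon_{\mathsf{LI}}$ grows as $|\sB|+k$, which tames the sum over $|\sA|$ links and gives a system-size-independent bound. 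Two minor points: the separating region $\sY$ should be $\{a_1,\dots,a_{k-1}\}\cup\sB$, not $\{a_1,\dots,a_k\}\cup\sB$ (the latter intersects $\sX=\{a_k,a_{k+1}\}$), and your final error is the tail $\beta\sum_{j\ge\ell-1}\epsilon_{\mathsf{LI}}(j)$ rather than literally $c_1\bigl(e^{-c_2\ell\log\ell}+\epsilon_{\mathsf{LI}}(c_3\ell)\bigr)$; you correctly flag that the two agree only because the $\epsilon_{\mathsf{LI}}$ fed in by \autoref{lem:li-from-weak} together with \autoref{thm:weak-kimura} decays sub-exponentially and is therefore summable with geometrically dominated tails. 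The expansional-based proof is insensitive to the shape of $\epsilon_{\mathsf{LI}}$ and yields the displayed error form for any nonincreasing error function, whereas yours implicitly requires summability; in the paper's actual application this costs nothing, but it is worth stating as a hypothesis rather than leaving it implicit.
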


As also commented below \autoref{lem:li-from-weak}, the original proof presented in \cite[Step 2, Page 28]{bluhm2022exponential} consists of a single tripartition; however, the proof similarly extends to $k$-sequences of intervals \eqref{eq:many-intervals} by \autoref{rmk:folding}. 

\begin{lem}
    \label{lem:step-1-kms-li} In the context of \autoref{prop:kms-from-LI}, consider a decomposition of the 1D chain into consecutive disjoint intervals as $[n] := \sL \cup \sA \cup \sB \cup \sC\cup \sR$. Then, there exists constants $c_1, c_2, c_3$ such that for any pair of observables $\vO_\sL, \vO_\sR$ with support on $\sL, \sR$:
    \begin{equation}
       \bigg|\tr\L[\vO_{\sL} \sqrt{\vrho}\vO^{\dagger}_{\sR}\sqrt{\vrho}\R] -  \tr\bigg[\vrho_{\sA\sC}^{-1/2}\vrho_{\sL\sA\sC\sR}^{1/2}\vO_{\sL}\vrho_{\sL\sA\sC\sR}^{1/2}\vrho_{\sA\sC}^{-1/2}\vO_\sR^\dagger\cdot \vrho_{\sA\sB\sC}\bigg]\bigg|  \leq c_1\cdot \bigg(e^{-c_2\ell\log \ell} + \epsilon_{\mathsf{LI}, 2q, 2\beta}(c_3\cdot \ell)\bigg),
    \end{equation}
    with $\ell = \min(|\sA|, |\sC|)$.    
\end{lem}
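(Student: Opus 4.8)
The goal is to replace the exact Gibbs weight $\vrho = e^{-\beta\vH}/Z$ inside the KMS inner product $\tr[\vO_\sL \sqrt{\vrho}\,\vO_\sR^\dagger\sqrt{\vrho}]$ by a ``factorized'' weight that decouples the middle piece $\sA\sB\sC$ from the outer pieces $\sL,\sR$, at a cost controlled by local indistinguishability. The strategy is to feed in the imaginary-time gluing identities of Appendix~\ref{section:1D-locality} (\autoref{lem:gluing}, \autoref{lem:variant_gluing}, and the expansional bounds \autoref{lem:expansionals}) to rewrite $\sqrt{\vrho}$ as a product of square-root Gibbs weights of sub-intervals times a correction close to the identity, and then to absorb the correction into the error. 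The key bookkeeping is that $\sL,\sR$ are far from each other, separated by $\sA\cup\sB\cup\sC$, and the observables $\vO_\sL,\vO_\sR$ live on the two ends, so the cross terms that survive are genuinely long-range and can be bounded by $\epsilon_{\mathsf{LI}}$ and the super-exponential decay $e^{-c_2\ell\log\ell}$.

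\textbf{Key steps, in order.} First I would write $\sqrt{\vrho} = Z^{-1/2} e^{-\beta\vH/2}$ and use \autoref{lem:variant_gluing} (the square-root-split gluing identity) applied to the tripartition $\sL\sA : \sB' : \sC\sR$ — or more precisely iterated gluing across the chain $\sL\cup\sA\cup\sB\cup\sC\cup\sR$ — to peel off the $\sB$-Hamiltonian link, obtaining
\begin{align}
e^{-\beta\vH_{\sL\sA\sB\sC\sR}/2} = e^{-\beta\vH_{\sL\sA\sB\sC\sR}/2}\big(e^{\beta\vH_{\sL\sA\sC\sR}}e^{-\beta\vH_{\sA\sC}}\cdots\big)\cdot(\text{stuff}),
\end{align}
so that the quantity $\tr[\vO_\sL\sqrt{\vrho}\vO_\sR^\dagger\sqrt{\vrho}]$ becomes, up to an additive error of size $a_\beta\, b_\beta^{\min(|\sA|,|\sC|)}/(\min(|\sA|,|\sC|))!$ from \autoref{lem:variant_gluing}, the stated trace $\tr[\vrho_{\sA\sC}^{-1/2}\vrho_{\sL\sA\sC\sR}^{1/2}\vO_\sL\vrho_{\sL\sA\sC\sR}^{1/2}\vrho_{\sA\sC}^{-1/2}\vO_\sR^\dagger\,\vrho_{\sA\sB\sC}]$. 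Along the way I would need to track the ratio of partition functions $Z_{\sA\sB}Z_{\sB\sC}/(Z_{\sA\sB\sC}Z_\sB)$, which is handled exactly by \autoref{lem:partition-func-ratio}; this is what converts the local-indistinguishability hypothesis $\mathsf{LI}_{2q,2\beta}$ into the $\epsilon_{\mathsf{LI},2q,2\beta}(c_3\ell)$ term. To control operator norms of all the intermediate expansionals (so that Hölder in the KMS norm, \autoref{lem:loose_AO}, keeps everything proportional to $\|\vO_\sL\|_\vrho\|\vO_\sR\|_\vrho$), I would repeatedly invoke \autoref{lem:expansionals} and \autoref{lem:locality_complextime} to bound $\|e^{\pm\beta\vH_{\CX}/4}(\cdots)e^{\mp\beta\vH_{\CY}/4}\|$ by $\beta$-dependent constants, using that the observables are supported at the two ends so the ``pinned'' weights appear on the correct sides.

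\textbf{Main obstacle.} The hard part is the careful choreography of which Hamiltonian link to split against which neighboring exponential so that (i) the telescoping leaves only a correction supported near the narrow middle region $\sB$ — where the gluing error is super-exponentially small in $|\sB|\le d(\sA,\sC)$ — and (ii) the residual operators $\vrho_{\sA\sC}^{-1/2}\vrho_{\sL\sA\sC\sR}^{1/2}$ that conjugate $\vO_\sL$ (resp.\ $\vO_\sR$) have $O(1)$ operator norm, which requires the cancellations of $\vH_\sA$- and $\vH_\sC$-type terms exactly as in the proof of \autoref{lem:gluing} and \autoref{lem:variant_gluing}. Getting the fractions of $\vrho$ in the right places — so that Hölder's inequality in the KMS norm (\autoref{lem:loose_AO}) yields $\|\vO_\sL\|_\vrho\|\vO_\sR\|_\vrho$ rather than an operator-norm bound — is the delicate point, and it is why the statement pins $\vrho^{-1/2}_{\sA\sC}\vrho^{1/2}_{\sL\sA\sC\sR}$ symmetrically around each observable. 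Once this rewriting is in place, the triangle inequality combining the gluing error, the partition-function-ratio error (\autoref{lem:partition-func-ratio}), and the local-indistinguishability error (\autoref{lem:li-from-weak}-type input, applied in the folded topology of \autoref{rmk:folding} to handle the two-sided pattern) delivers the claimed bound with $\ell=\min(|\sA|,|\sC|)$.
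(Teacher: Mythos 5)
Your overall plan is in the right spirit — gluing, expansional bounds, the KMS‑Hölder trick to pull out $\|\vO_\sL\|_\vrho\|\vO_\sR\|_\vrho$, and the partition‑function ratio (\autoref{lem:partition-func-ratio}) as the single point where $\mathsf{LI}$ enters, together with the folding remark (\autoref{rmk:folding}) to address the $\sL\sR : \sA\sC : \sB$ topology — but the decomposition you actually propose is on the wrong tripartition and would not give the stated bound.

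You suggest applying the gluing identity to the tripartition $\sL\sA : \sB : \sC\sR$, i.e.\ ``peeling off the $\sB$-Hamiltonian link,'' yielding an error controlled by $|\sB|$. That cannot be right on two counts. First, the length scale in the lemma is $\ell=\min(|\sA|,|\sC|)$, not $|\sB|$; the superexponential factor $e^{-c_2\ell\log\ell}$ must come from factorial decay in $|\sA|$ and $|\sC|$. Second, look at the operators that survive in the target trace: $\vrho_{\sA\sB\sC}$, $\vrho_{\sL\sA\sC\sR}^{1/2}$, $\vrho_{\sA\sC}^{-1/2}$. The $\sB$-Hamiltonian must remain intact inside $\vrho_{\sA\sB\sC}$; what is removed are the links $\sL:\sA$ and $\sC:\sR$ (which is what produces $\vrho_{\sL\sA\sC\sR}=\vrho_{\sL\sA}\otimes\vrho_{\sC\sR}$ and $\vrho_{\sA\sC}=\vrho_\sA\otimes\vrho_\sC$). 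Gluing on $\sB$ would instead produce factors like $\vrho_{\sL\sA\sB}$, $\vrho_\sB$, $\vrho_{\sB\sC\sR}$, which do not match the statement. The correct choreography, used in the paper, is to apply \autoref{lem:gluing} twice with middle regions $\sA$ and then $\sC$ — tripartitions $\sL : \sA : \sB\sC\sR$ and then $\sA\sB : \sC : \sR$ — giving
\begin{align}
e^{-\beta\vH/2} = e^{-\beta(\vH_{\sL\sA}+\vH_{\sC\sR})/2}\,e^{\beta(\vH_\sA+\vH_\sC)/2}\,e^{-\beta\vH_{\sA\sB\sC}/2}\,\vX_2\vX_1,
\end{align}
with $\|\vX_1-\vI\|\lesssim b^{|\sA|}/|\sA|!$ and $\|\vX_2-\vI\|\lesssim b^{|\sC|}/|\sC|!$, which is exactly what yields $\ell=\min(|\sA|,|\sC|)$ and the factorized weights $\vrho_{\sL\sA\sC\sR}$, $\vrho_{\sA\sC}$, $\vrho_{\sA\sB\sC}$ in the target expression. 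The corrections $\vX_1,\vX_2$ are then absorbed via the KMS‑Hölder estimate and \autoref{lem:variant_gluing}, and the residual ratio of partition functions is handled by \autoref{lem:partition-func-ratio} on the tripartition $\sA'=\sL\sR,\ \sB'=\sA\sC,\ \sC'=\sB$.
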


\begin{proof} 

We first decompose the full Gibbs state on $[n]$ using the imaginary time gluing lemma \autoref{lem:gluing} twice, on $(\sR, \sA, \sB\sC\sL)$ and subsequently on $(\sA\sB, \sC,\sR)$:
\begin{align}
        e^{-\beta \vH/2} &= e^{-\beta \vH_{\sL\sA}/2} e^{\beta \vH_{\sA}/2} e^{-\beta \vH_{\sA\sB\sC\sR}/2}\cdot \vX_1\\
        &= e^{-\beta \vH_{\sL\sA}/2} e^{\beta \vH_{\sA}/2} \bigg(e^{-\beta \vH_{\sC\sR}/2} e^{\beta \vH_{\sC}/2} e^{-\beta \vH_{\sA\sB\sC}/2}\vX_2\bigg)\vX_1\\
        &= e^{-\beta (\vH_{\sL\sA} + \vH_{\sC\sR})/2}e^{\beta (\vH_{\sA} + \vH_{\sC})/2}\cdot e^{-\beta \vH_{\sA\sB\sC}/2}\vX_2 \vX_1.
\end{align}
Observe that for any $\vY$, by KMS Holder's inequality and with $\vY(i\beta/4) = \vrho \vY\vrho^{-1/4}$:
\begin{align}
    &\labs{\tr\L[\vO_{\sL} \sqrt{\vrho}\vY\vO^{\dagger}_{\sR}\vY^{\dagger}\sqrt{\vrho}\R] - \tr\L[\vO_{\sL} \sqrt{\vrho}\vO^{\dagger}_{\sR}\sqrt{\vrho}\R]}
    \le  \bigg( 2 \norm{\vY(i\beta/4) - \vI} + \norm{\vY(i\beta/4) - \vI}^2\bigg) \norm{\vO_\sL}_{\vrho} \norm{\vO_\sR}_{\vrho}.
\end{align}
to apply the above in the context of the gluing lemma, it remains then to compute $\norm{\vY(i\beta/4) - \vI}$ with $\vY = (\vX_2\vX_1)^{-1}$. This, in turn, follows directly from \autoref{lem:variant_gluing}:
\begin{align}
    \|(\vX_2\vX_1)^{-1}-\vI\| &\leq \|\vX_2^{-1}-\vI\|+\|\vX_1^{-1}-\vI\|+\|\vX_2^{-1}-\vI\|\cdot \|\vX_1^{-1}-\vI\|\\
    &\leq a_1\cdot \bigg( \frac{a_2^{|\sA|}}{|\sA|!} +  \frac{a_2^{|\sC|}}{|\sC|!}\bigg),
\end{align}
for appropriate constants $a_1, a_2, a_3$. We thereby have
\begin{align}
    \tr\L[\vO_{\sL} \sqrt{\vrho}\vY\vO^{\dagger}_{\sR}\vY^{\dagger}\sqrt{\vrho}\R] &= \tr\bigg[\vrho_{\sA\sC}^{-1/2}\vrho_{\sL\sA\sC\sR}^{1/2}\vO_{\sL}\vrho_{\sL\sA\sC\sR}^{1/2}\vrho_{\sA\sC}^{-1/2} \vrho_{\sA\sB\sC}^{1/2}\vO_\sR^\dagger\vrho_{\sA\sB\sC}^{1/2}\bigg] \cdot \frac{Z_{\sL\sA\sC\sR}\cdot Z_{\sA\sB\sC}}{Z\cdot Z_{\sA\sC}} \\
    & \approx \tr\bigg[\vrho_{\sA\sC}^{-1/2}\vrho_{\sL\sA\sC\sR}^{1/2}\vO_{\sL}\vrho_{\sL\sA\sC\sR}^{1/2}\vrho_{\sA\sC}^{-1/2} \vrho_{\sA\sB\sC}^{1/2}\vO_\sR^\dagger\vrho_{\sA\sB\sC}^{1/2}\bigg]\\
    & = \tr\bigg[\vrho_{\sA\sC}^{-1/2}\vrho_{\sL\sA\sC\sR}^{1/2}\vO_{\sL}\vrho_{\sL\sA\sC\sR}^{1/2}\vrho_{\sA\sC}^{-1/2}\vO_\sR^\dagger\cdot \vrho_{\sA\sB\sC}\bigg],
\end{align}
the approximation leveraged \autoref{lem:partition-func-ratio} on the ratio of partition functions, on the tripartition $\sA' = \sL\cup\sR, \sB'=\sA\cup\sC, \sC' = \sB.$ By \autoref{rmk:folding}, to address this topology of observables, it suffices to invoke $\mathsf{LI}_{2q, 2\beta}$ where the inverse-temperature and the number of qubits per-site have doubled. The last line recombines the two square-roots of $\vrho_{\sA\sB\sC}$ by commuting through $\vO_{\sR}^{\dagger}$. 
\end{proof}

\begin{lem}  \label{lem:step-2-kms-li}
    In the context of \autoref{lem:step-1-kms-li}, there exists constants $c_1, c_2$ such that for any pair of observables $\vO_\sL, \vO_\sR$ with support on $\sL, \sR$:
    \begin{align}
       \bigg|\tr\L[\vO_{\sL} \vrho_{\sL\sA\sC\sR}^{1/2}\vO^{\dagger}_{\sR}\vrho_{\sL\sA\sC\sR}^{1/2}\R] -  \tr\bigg[\vrho_{\sL\sA\sC\sR}^{1/2}\vO_{\sL}&\vrho_{\sL\sA\sC\sR}^{1/2}\vO_\sR^\dagger\cdot \vrho_{\sA\sC}^{-1/2}\vrho_{\sA\sB\sC}\vrho_{\sA\sC}^{-1/2}\bigg]\bigg|  \\ &\leq c_1\cdot \|\vO_\sL\|_{\vrho}\cdot \|\vO_\sR\|_{\vrho} \cdot \bigg(\frac{c_2^\Delta}{\Delta!} + \epsilon_{\mathsf{LI}, 2q, 2\beta}\big(\Delta\big)\bigg),
    \end{align}
    where $\Delta = \frac{1}{2} \min(|\sA|, |\sC|)$.
\end{lem}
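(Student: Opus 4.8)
The plan is to subtract the two traces, simplify by cyclicity, and then peel the intervals $\sA$ and $\sC$ in half; the argument closely parallels the proof of \autoref{lem:step-1-kms-li}. Write $\sigma := \vrho_{\sL\sA\sC\sR}$, which factorizes as $\vrho_{\sL\sA}\otimes\vrho_{\sC\sR}$ (tensored with the maximally mixed state on $\sB$), since the Hamiltonian links touching $\sB$ are absent, and $\vM := \vrho_{\sA\sC}^{-1/2}\vrho_{\sA\sB\sC}\vrho_{\sA\sC}^{-1/2}$. By cyclicity of the trace, the difference of the two traces in the statement equals
\begin{equation}
\tr\!\left[\sigma^{1/2}\vO_\sL\,\sigma^{1/2}\,\vO_\sR^\dagger\,(\vI-\vM)\right],
\end{equation}
whose modulus is what we must bound. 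Here $\vI-\vM$ is supported on $\sA\sB\sC$ and is exactly the obstruction to $\vrho_{\sA\sB\sC}$ agreeing with $\vrho_{\sA\sC}=\vrho_\sA\otimes\vrho_\sC$; morally this obstruction is concentrated near the two links $\sA{:}\sB$ and $\sB{:}\sC$, hence far from $\sL$ and $\sR$.

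To make that precise, split $\sA=\sA_1\cup\sA_2$ and $\sC=\sC_1\cup\sC_2$, with $\sA_1$ (resp.\ $\sC_2$) the half adjacent to $\sL$ (resp.\ $\sR$), so that the buffers $\sA_2,\sC_1$ (each of length $\ge\Delta$) separate $\sA_1,\sC_2$ from the links $\sA{:}\sB$, $\sB{:}\sC$. Using the imaginary-time gluing lemmas (\autoref{lem:gluing}, \autoref{lem:variant_gluing}) one peels the outer halves $\sA_1,\sC_2$ off of $\vrho_{\sA\sB\sC}$ through these buffers, incurring an error $\sim c^{\Delta}/\Delta!$, and uses the partition-function-ratio bound (\autoref{lem:partition-func-ratio}) to control the normalization constants that appear — applied, after folding (\autoref{rmk:folding}) the chain so that the split region $\sA_1\cup\sC_2$ becomes an interval, through $\mathsf{LI}_{2q,2\beta}$, which is the origin of the $\epsilon_{\mathsf{LI},2q,2\beta}(\Delta)$ term. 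Commuting the residual $\vrho_{\sA_1}^{\mp1/2},\vrho_{\sC_2}^{\mp1/2}$ conjugations that this produces past the ``dressings'' $\vrho_{\sL\sA}^{1/2}\vO_\sL\vrho_{\sL\sA}^{1/2}$ and $\vrho_{\sC\sR}^{1/2}\vO_\sR^\dagger\vrho_{\sC\sR}^{1/2}$ — again a complex-time-locality estimate (\autoref{lem:locality_complextime}) on the halves $\sA_1,\sC_2$ — moves $\vI-\vM$ off the support of the dressings, so the trace telescopes back to $\tr[\vrho_{\sL\sA}\vO_\sL]\cdot\tr[\vrho_{\sC\sR}\vO_\sR^\dagger]$, which is precisely the first trace in the statement. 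Finally, all the error terms are routed through KMS-Hölder (\autoref{lem:loose_AO}) and the comparison of measures (\autoref{lem:compare_measure}) to produce the prefactor $\|\vO_\sL\|_\vrho\|\vO_\sR\|_\vrho$, exactly as in \autoref{lem:step-1-kms-li}.

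The main obstacle is that $\vrho_{\sL\sA}^{1/2}\vO_\sL\vrho_{\sL\sA}^{1/2}$ is \emph{not} a localized operator — it carries the full state $\vrho_{\sL\sA}$ — so the cancellation cannot be read off from supports alone; one genuinely has to commute the $\vrho_{\sA\sC}^{\mp1/2}$-factors of $\vM$ through it using imaginary-time locality, and carry every estimate in KMS norm, since these dressed operators have large operator norm. The structural point that makes everything work at the advertised scale is that one peels only the outer \emph{halves} of $\sA$ and $\sC$: the buffers $\sA_2,\sC_1$ have length $\ge\Delta=\tfrac12\min(|\sA|,|\sC|)$, so the argument never invokes decay of correlations across the (possibly short) middle interval $\sB$.
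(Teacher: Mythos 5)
Your cyclicity reduction is the right move and matches the paper's setup: both traces are recast as expectations against the two states $\vrho_{\sA\sB\sC}$ versus $\vrho_{\sA\sC}$, of a single dressed observable $\vX$ (the $\sA\sB\sC$-marginal of $\sigma^{1/2}\vO_\sL\sigma^{1/2}\vO_\sR^\dagger$ after conjugation by $\vrho_{\sA\sC}^{-1/2}$), and the task is to show $\lvert\tr[\vX(\vrho_{\sA\sB\sC}-\vrho_{\sA\sC})]\rvert$ is small. From there on, however, your localization heuristic is pointing the wrong way, and this is a genuine gap. You argue that $\vI-\vM$ with $\vM=\vrho_{\sA\sC}^{-1/2}\vrho_{\sA\sB\sC}\vrho_{\sA\sC}^{-1/2}$ is ``concentrated near the $\sA{:}\sB$ and $\sB{:}\sC$ links, hence far from $\sL,\sR$,'' and that after peeling the trace ``telescopes.'' But $\vM$ is not a small perturbation of the identity: even for a commuting Hamiltonian $\vM=\vec\tau_\sB^{-1}\vrho_\sB$, which is order-one on $\sB$, and the difference of the two traces vanishes only because of the normalization $\tr[\vec\tau_\sB\vM]=\tr[\vrho_{\sA\sB\sC}]=1$, not because $\vI-\vM$ is small or can be ``moved off'' anything. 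Controlling $\vI-\vM$ directly therefore cannot work; what the proof must control is the other factor $\vX$.

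Concretely, the paper writes $\vrho^{1/2}_{\sL\sA\sC\sR}e^{\beta\vH_{\sA\sC}/2}=e^{-\beta\vH_{\sL\sR}/2}\vY$, where $\vY=\vY_{\sL\sA}\vY_{\sC\sR}$ is a product of Araki expansionals concentrated at the $\sL{:}\sA$ and $\sC{:}\sR$ cuts; the resulting $\vX=\tr_{\sL\sR}[\vO_\sL e^{-\beta\vH_{\sL\sR}/2}\vY\vO_\sR^\dagger\vY^\dagger e^{-\beta\vH_{\sL\sR}/2}]$ then inherits an annulus decomposition $\vX=\sum_{\ell,k}\vX_{\ell,k}$ whose terms live within distance $\ell,k$ of those \emph{outer} cuts and whose norms decay super-exponentially. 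The $\epsilon_{\mathsf{LI},2q,2\beta}(\Delta)$ term is \emph{not} a normalization artefact controlled by \autoref{lem:partition-func-ratio}; it comes from applying local indistinguishability \emph{directly} to $\tr[\vX_{\ell,k}\vrho_{\sA\sB\sC}]-\tr[\vX_{\ell,k}\vrho_{\sA\sC}]$ in the regime $\ell,k\le\Delta$ where $\vX_{\ell,k}$ is supported at distance $\ge\Delta$ from $\sB$ (folding per \autoref{rmk:folding} because that support is a pair of intervals). The large-$\ell$ or $k$ tail is where the super-exponential $c_2^{\Delta}/\Delta!$ term comes from. Your instinct that the proof must commute $\vrho_{\sA\sC}^{\mp1/2}$ through the dressing and carry everything in KMS norm is correct — that is exactly how \eqref{eq:norm-on-LI-truncation}-type bounds on $\|\vX_{\ell,k}\|$ are obtained — but the recursion needs to be organized around the annulus decomposition of $\vX$ near the $\sL{:}\sA$, $\sC{:}\sR$ cuts, not around peeling $\vM$ near $\sB$.
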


\begin{proof}
    We proceed by expanding the Gibbs state on $\sL\sA, \sC\sR$ on the boundary between $\sL\sA$ and $\sC\sR$, using the expansionals:
    \begin{align}
       e^{-\beta \vH_{\sL\sA\sC\sR}/2}e^{\beta \vH_{\sA\sC}/2} &= e^{-\beta \vH_{\sL\sR}/2} (e^{\beta \vH_{\sL\sR}/2}e^{-\beta \vH_{\sL\sA\sC\sR}/2}e^{\beta \vH_{\sA\sC}/2})\\
       &= e^{-\beta \vH_{\sL\sR}/2} (e^{\beta \vH_{\sL}/2}e^{-\beta \vH_{\sL\sA}/2}e^{\beta \vH_{\sA}/2})  (e^{\beta \vH_{\sR}/2}e^{-\beta \vH_{\sC\sR}/2}e^{\beta \vH_{\sC}/2})\\
       &= e^{-\beta \vH_{\sL\sR}/2} \vY_{\sL\sA} \vY_{\sC\sR}:=  e^{-\beta \vH_{\sL\sR}/2} \vY,
   \end{align}
   where furthermore the expansionals satisfy the annulus decomposition:
   \begin{align}\label{eq:annulus-LI}
      \vY_{\sL\sA} = \sum_{\ell} \vY_{\sL\sA,\ell}\quad& \text{where}\quad \norm{\vY_{\sA\sL,\ell}} \le a_{\beta} \frac{b_{\beta}^\ell}{\ell!}, \quad
      \vY_{\sC\sR} = \sum_{\ell}\vY_{\sC\sR, \ell} \quad \text{where}\quad  \norm{\vY_{\sC\sR, \ell} } \le a_{\beta} \frac{b_{\beta}^\ell}{\ell!} ,\\
      \text{ as does their product: }& \vY:= \sum_\ell \vY_\ell \quad \text{where}\quad \vY_\ell = \vY_{\sL\sA,\ell}\cdot \sum_{k\leq \ell} \vY_{\sC\sR, k}, \quad  \|\vY_\ell\| \leq c_\beta \cdot \frac{d_\beta^\ell}{\ell!}.
  \end{align}
  the support of $\vY_\ell$ is within distance $\ell$ from both cuts $\sL:\sA$ and $\sC:\sR$. We proceed by expressing the desired quantities in terms of expectations over the expansionals:
  \begin{align}
&\tr\bigg[\vrho_{\sL\sA\sC\sR}^{1/2}\vO_{\sL}\vrho_{\sL\sA\sC\sR}^{1/2}\vO_\sR^\dagger\cdot \vrho_{\sA\sC}^{-1/2}\vrho_{\sA\sB\sC}\vrho_{\sA\sC}^{-1/2}\bigg] = \frac{Z_{\sA\sC}}{Z_{\sL\sA\sC\sR}} \cdot \tr\big[\vX\cdot  \vrho_{\sA\sB\sC}\big],  \\ &\tr\L[\vO_{\sL} \vrho_{\sL\sA\sC\sR}^{1/2}\vO^{\dagger}_{\sR}\vrho_{\sL\sA\sC\sR}^{1/2}\R] = \frac{Z_{\sA\sC}}{Z_{\sL\sA\sC\sR}}\tr\big[\vX \vrho_{\sA\sC}\big] \quad \text{ with } \quad \vX = \tr_{\sL\sR}\bigg[\vO_\sL e^{-\beta \vH_{\sL\sR}/2} \vY \vO_\sR^\dagger \vY^\dagger e^{-\beta \vH_{\sL\sR}/2}\bigg].
  \end{align}
  To relate the two, the observation is that the operator $\vX$ is localized on the boundaries $\sL:\sA, \sC:\sR$, and leaks into the regions $\sA, \sC$ under the annulus decomposition. Therefore, in principle, we should be able to leverage local indistinguishability, so long as the distance to the region $\sB$ is sufficiently large. To make this concrete, we first decompose $\vX$ into an annulus decomposition:
  \begin{equation}
      \vX = \sum_{\ell, k} \vX_{\ell, k}, \quad \vX_{\ell, k}:=\tr_{\sL\sR}\bigg[\vO_\sL e^{-\beta \vH_{\sL\sR}/2} \vY_\ell \vO_\sR^\dagger \vY_k^\dagger e^{-\beta \vH_{\sL\sR}/2}\bigg].
  \end{equation}
  We claim, and prove shortly (see below \eqref{eq:schmidt-X}), that the operator norm of each component is bounded by the KMS norm of $\vO_\sL, \vO_\sR$:
  \begin{equation}
      \|\vX_{\ell, k}\| \leq \|\vO_\sL\|_{\vrho}\cdot \|\vO_\sL\|_{\vrho} \cdot Z_{\sL\sR}\cdot \frac{g^{\ell+k}}{\ell!\cdot k!},\label{eq:norm-on-LI-truncation}
  \end{equation}
  for an appropriate choice of constant $g$. We can now proceed by invoking local-indistinguishability at small scales $k, \ell$; and at large scales we leverage the quasi-locality of $\vX_{\ell, k}$. For $\Delta = \frac{1}{2}\min(|\sA|, |\sC|)$, we have:\\

\noindent \textbf{Small $\ell, k$.} If both $\ell, k\leq \Delta$, then the distance of the support of $\vX_{\ell, k}$ to the region $\sB$ is at least $\Delta=\frac{1}{2}\min(|\sA|, |\sC|)$. Under the assumption that local-indistinguishability holds, i.e., $\mathsf{LI}_{2q, 2\beta}$ with error function $\epsilon_{\textsf{LI}, 2\beta, 2q}(\cdot)$, then we have the following bound on the contribution of the small length-scale truncation to $\vX$:
\begin{align}
    \sum_{k, \ell \leq \Delta} \bigg|\tr\big[\vX_{\ell, k}\cdot  \vrho_{\sA\sB\sC}\big] - \tr\big[\vX_{\ell, k}\cdot  \vrho_{\sA\sC}\big]\bigg| &\leq \epsilon_{\textsf{LI}, 2q, 2\beta}\big(\Delta\big)\cdot  \sum_{k, \ell} \|\vX_{k, \ell}\| \\
    &\leq \|\vO_\sL\|_{\vrho}\cdot \|\vO_\sL\|_{\vrho} \cdot Z_{\sL\sR}\cdot  \epsilon_{\textsf{LI}, 2q, 2\beta}\big(\Delta\big) \sum_{k,\ell } \frac{g^{\ell+k}}{\ell!k!} \\
    &\leq  \|\vO_\sL\|_{\vrho}\cdot \|\vO_\sL\|_{\vrho} \cdot Z_{\sL\sR}\cdot  \epsilon_{\textsf{LI}, 2q, 2\beta}\big(\Delta\big)\cdot K
\end{align}
for a suitable constant $K$. \\

\noindent \textbf{Large $\ell, k$.} In the regime that either $\ell\geq \Delta$ or $k\geq \Delta$, we leverage instead the bound \eqref{eq:norm-on-LI-truncation}.
 \begin{align}
     \sum_{\ell \geq \Delta} \sum_k \bigg|\tr\big[\vX_{\ell, k}\cdot  \vrho_{\sA\sB\sC}\big] - \tr\big[\vX_{\ell, k}\cdot  \vrho_{\sA\sC}\big]\bigg| &\leq 2\cdot  \sum_{\ell \geq \Delta} \sum_k\|\vX_{\ell, k}\|\\
     &\leq 2\cdot \|\vO_\sL\|_{\vrho}\cdot \|\vO_\sL\|_{\vrho} \cdot Z_{\sL\sR}\cdot  \sum_{\ell \geq \Delta} \sum_k \frac{g^{\ell+k}}{\ell!\cdot k!} \\
     &\leq 2\cdot \|\vO_\sL\|_{\vrho}\cdot \|\vO_\sL\|_{\vrho} \cdot Z_{\sL\sR}\cdot K\cdot \sum_{\ell\geq \Delta} \frac{g^{\ell}}{\ell!} \\
     &\leq 2K^2\cdot Z_{\sL\sR} \cdot \|\vO_\sL\|_{\vrho}\cdot \|\vO_\sL\|_{\vrho} \cdot \frac{g^\Delta}{\Delta!}.
  \end{align}
  Applying the triangle inequality and the Golden-Thompson inequality to the ratio of partition functions (as similarly done in \autoref{lem:compare_measure}), then gives the following result:
  \begin{align}
     \frac{Z_{\sA\sC}}{Z_{\sL\sA\sC\sR}}\cdot  \bigg|\tr\big[\vX\cdot  \vrho_{\sA\sB\sC}\big] - \tr\big[\vX\cdot  \vrho_{\sA\sC}\big]\bigg| \leq c'\cdot \|\vO_\sL\|_{\vrho}\cdot \|\vO_\sR\|_{\vrho} \cdot \bigg(\frac{g^\Delta}{\Delta!} + \epsilon_{\textsf{LI}, 2q, 2\beta}\big(\Delta\big)\bigg),
  \end{align}
  as advertised. It remains only to justify \eqref{eq:norm-on-LI-truncation}.\\

  \noindent\textbf{The norm of $\vX_{\ell, k}$.} For fixed $\ell$, $\vY_{\ell}$ has support only within distance $\leq \ell$ from the cuts $\sL:\sA$ and $\sC:\sR$. One can then express a Schmidt decomposition of $\vY_{\ell}$ across the bipartition $\sL_\ell\sR_\ell:\sA_\ell\sC_\ell$:
  \begin{align}
      \vY_{\ell} = \sum_{i}^{r_\ell} \vZ_{\sL\sR,i,\ell} \otimes \vZ_{\sA\sC,i,\ell},\quad \text{where}\quad \forall i\in [r_\ell]:\quad  \norm{\vZ_{\sL\sR,i,\ell} \otimes \vZ_{\sA\sC,i,\ell}} \le a_{\beta} \frac{b_{\beta}^\ell}{\ell!}, \label{eq:schmidt-X}
  \end{align}
  the rank of the decomposition $r_\ell\leq d^{\ell}$ for a suitable constant $d$, due to the bound on the dimension of the Hilbert space of the bipartition $\sA_\ell\sC_\ell$. We therefore have the expansion
  \begin{align}
      \vX_{\ell, k} = \sum_{i, j} \tr_{\sL\sR}\bigg[\vO_\sL e^{-\beta \vH_{\sL\sR}/2} \vW_{\sL\sR, i, \ell} \vO_\sR^\dagger \vW_{\sL\sR, j, k}^\dagger e^{-\beta \vH_{\sL\sR}/2}\bigg] \cdot \vW_{\sA\sC, i, \ell} \vW_{\sA\sC, j, k}^\dagger .
  \end{align}
  For each $i, j$ in the expansion, we can expose a KMS norm:
\begin{align}
  &\bigg|\tr_{\sL\sR}\bigg[\vO_\sL e^{-\beta \vH_{\sL\sR}/2} \vW_{\sL\sR, i, \ell} \vO_\sR^\dagger \vW_{\sL\sR, j, k}^\dagger e^{-\beta \vH_{\sL\sR}/2}\bigg]\bigg|\cdot   \| \vW_{\sA\sC, i, \ell} \vW_{\sA\sC, j, k}^\dagger\|  \\
  \leq& Z_{\sL\sR}\cdot  \|\vO_\sL\|_{\vrho_\sL}\cdot \|\vO_\sR\|_{\vrho_\sR}\cdot \bigg\|e^{-\beta \vH_{\sL\sR}/2} \vW_{\sL\sR, i, \ell}e^{\beta \vH_{\sL\sR}/2}\bigg\|\cdot \bigg\|e^{-\beta \vH_{\sL\sR}/2} \vW_{\sL\sR, j, k}e^{\beta \vH_{\sL\sR}/2}\bigg\|\cdot  \|\vW_{\sA\sC, i, \ell}\|\cdot \|\vW_{\sA\sC, j, k}\| 
  \\ \leq & Z_{\sL\sR} \cdot \|\vO_\sL\|_{\vrho_\sL}\cdot \|\vO_\sR\|_{\vrho_\sR}\cdot \bigg( c^{\ell}\cdot c^k\bigg) \cdot \norm{\vW_{\sL\sR,i,\ell} \otimes \vW_{\sA\sC,i,\ell}} \cdot \norm{\vW_{\sL\sR,j, k} \otimes \vW_{\sA\sC,j,k}}
  \\ \leq & Z_{\sL\sR}\cdot \|\vO_\sL\|_{\vrho_\sL}\cdot \|\vO_\sR\|_{\vrho_\sR}\cdot a_\beta \frac{(b_\beta c)^{k+\ell}}{k!\cdot \ell!}.
\end{align}
In sequence, we applied KMS Holder's inequality, then the bounds on norms of complex time evolution of $\ell, k$ local operators \autoref{lem:locality_complextime}, and finally the bound on the norm of the Schmidt components \eqref{eq:schmidt-X}. To conclude, we apply the comparison of measures \autoref{lem:compare_measure} to relate the KMS norm under $\vrho_{\sL\sR}$ to that under $\vrho$, up to a constant. The triangle inequality over the $r_\ell\cdot r_k\leq d^{\ell+k}$ Schmidt components concludes the proof, after appropriately adjusting constants. 

  \end{proof}

We are now in a position to prove \autoref{prop:kms-from-LI}.

\begin{proof}

    [of \autoref{prop:kms-from-LI}] Fixed three consecutive disjoint intervals $\sA, \sB, \sC$, we proceed by further decomposing the $\sB$ region into ``closer to $\sA$'' and ``closer to $\sC$'' segments:
    \begin{equation}
        \sB = \sB_\sA\cup\sB_\sM\cup\sB_\sC,
    \end{equation}
    where the middle piece $\sB_\sM$ consists of a single site, and $|\sB_{\sC}|, |\sB_{\sA}|\geq |\sB|/2 - 1$. 
    
    We can now invoke \autoref{lem:step-1-kms-li} and  \autoref{lem:step-2-kms-li}, where we define the segments $\sL' = \sA$, $\sR'=\sC$, $\sA'=\sB_{\sA}, \sB' = \sB_\sM, \sC'=\sB_\sC$. By design, the length-scale satisfies:
    \begin{equation}
        \ell = \min(|\sA'|, |\sC'|) = \min(|\sB_\sA|, |\sB_\sC|)\geq \frac{1}{2} d(\sA, \sC) -1.
    \end{equation}
    The fact that the theorem is mute if $d(\sA, \sC)\leq 3$ gives the desired bound. 
\end{proof}

\section{Low-Depth Quantum Circuits for Quasi-Adiabatic Evolution}
\label{sec:adiabatic}
For any fixed local Hamiltonian and inverse-temperature $\beta$, a lower bound on the spectral gap of the Lindbladian yields a Gibbs state preparation algorithm: simply evolve under $\CL$, and wait for convergence. Unfortunately, in the absence of sharper mixing-time estimates, the circuit depth to implement such an evolution can scale polynomially with $n$. However, if the family $\{\CL_\beta\}$ admits a spectral gap lower bound over a range of inverse temperatures, one can often obtain a much shorter circuit using techniques from adiabatic quantum computation.

We now give a self-contained exposition of adiabatic algorithms for the purified Gibbs state, assuming the corresponding family of Lindbladians is gapped. Since the argument is the same in any dimension, here we assume a generic Hamiltonian with exponentially decaying interactions defined on a $d$-dimensional lattice $[L]^d$~.

\begin{defn}
    [Exponentially Decaying Interactions]\label{defn:exp-decay} For any $q\in \BN^+$, an integer $L\in \BN^+$, and dimension $d\in \BN^+$, a Hamiltonian $\vH:\otimes ^n_i(\BC^{2^q})\rightarrow \otimes ^n_i(\BC^{2^q})$ on $n=L^d$ qudits is said to have \emph{exponentially-decaying interactions} on the $d$-dimensional lattice if it admits an annulus decomposition of the form:
    \begin{align}
        \vH = \sum_{a, r\in [n]}\big(\vH^{r}_a-\vH^{r-1}_a\big), \quad \text{ where }\quad  \|\vH^{r}_a-\vH^{r-1}_a\|\leq c_1\cdot e^{-c_2\cdot r},
    \end{align}
    for a choice of constants $c_1\geq 0, c_2>0$, and where $\vH^{r}_a$ has non-trivial support only on a ball of radius $\leq r$ around the ``center'' site $a\in [n]$.\footnote{The distance is measured under the Manhattan distance $\mathsf{dist}(x,y)  = \sum_{i=1}^d \labs{x_i-y_i}.$}
\end{defn}

We remark that other definitions of exponential decay of interactions are possible, although the above phrasing in terms of an annulus decomposition will be a recurring theme in this section. A simple consequence of \autoref{defn:exp-decay} is a Lieb-Robinson bound for local observables time-evolved under $\vH$, see e.g. \cite{nachtergaele2010LRB, nachtergaele2018lieb}. In what follows, the presentation in this section can be understood as a formalization of the discussion in \cite[Appendix C]{chen2023efficient}. 

\begin{thm}
    [Low-depth Adiabatic Algorithms]\label{thm:adiabatic-algorithm}
    For each inverse-temperature $\beta\in \BR^+$, constants $q, d\in \BN^+$, and a $d$-dimensional Hamiltonian $\vH$ with exponentially decaying interactions on $n$ qudits of local dimension $2^q$, consider the associated family of Lindbladians $\{\CL_{s\cdot \beta}\}_{s\in [0, 1]}$, defined in $\eqref{eq:exact_DB_L}$ by the set of single-site Pauli jumps $\CS^1_{[n]}$, under the Gaussian transition weight \eqref{eq:Metropolis}. Assume they admit a uniform lower bound on their spectral gap over a range of inverse-temperatures $\min_{s\in [0, 1]}\lambda_{\mathsf{gap}}(\CL_{s\beta})\geq \Delta$.

    Then, there exists a quantum circuit of size $\leq s_1\cdot n\cdot \mathsf{log}^{s_2}(n/\epsilon)$ and depth $\leq d_1\cdot \mathsf{log}^{d_2}(n/\epsilon)$ to prepare the purified Gibbs state at inverse temperature $\beta$, where $s_1, s_2, d_1, d_2$ are constants that depend only on $\beta, q, d, \Delta$.
\end{thm}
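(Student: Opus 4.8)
The plan is to realize the purified Gibbs state $\ket{\sqrt{\vrho_\beta}}$ as the gapped ground state of the discriminant (parent Hamiltonian) $\vcH_{s\beta}$ obtained by vectorizing the Lindbladian $\CL_{s\beta}$ on a duplicated Hilbert space, and then run a discretized quasi-adiabatic continuation from the infinite-temperature endpoint $s=0$ to $s=1$, where the initial state $\ket{\sqrt{\vrho_0}}$ is a trivial product state preparable in depth $O(1)$. First I would record the standard structural facts about the discriminant: $\vcH_{s\beta}=-\vrho_{s\beta}^{-1/4}\CL_{s\beta}[\vrho_{s\beta}^{1/4}\cdot\vrho_{s\beta}^{1/4}]\vrho_{s\beta}^{-1/4}$ is Hermitian, PSD, frustration-free, has unique ground state $\ket{\sqrt{\vrho_{s\beta}}}$, and its spectrum coincides with that of $-\CL_{s\beta}^\dagger$ (this is cited to \cite[Proposition I.1]{chen2023efficient}); in particular the gap hypothesis $\min_{s}\lambda_{\mathsf{gap}}(\CL_{s\beta})\ge\Delta$ transfers verbatim to a uniform ground-state gap $\Delta$ for the path $\{\vcH_{s\beta}\}_{s\in[0,1]}$. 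Next I would verify that the path is a ``quasi-local'' family of Hamiltonians with exponentially (or stretched-exponentially) decaying interactions and bounded local norms, uniformly in $s$: the Lindblad operators are operator Fourier transforms $\hat{\vA}^a(\omega)$ of single-site Paulis, which by \autoref{lem:annulus_OFT} (and its $d$-dimensional real-time Lieb-Robinson analog) admit an annulus decomposition with stretched-exponential tails, and the coherent term $\vC^a$ inherits the same via the $c(t)$ kernel; assembling the vectorized terms $\vX_i\otimes\vY_i^T$ and their $s$-derivatives $\partial_s\vcH_{s\beta}$ then gives a decomposition in the sense of \autoref{defn:exp-decay}, with constants depending only on $\beta,q,d$. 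This is the content invoked as \autoref{lem:lindblad-decaying-interactions} in the proof of \autoref{cor:clustering}, and I would state and use it here as well.

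With the gapped quasi-local path in hand, the core of the argument is the quasi-adiabatic / spectral-flow machinery of \cite{Bachmann_2011} (building on \cite{has04}): there is a quasi-local ``generator'' $\CD_s$ (built from the gap $\Delta$ and a suitably chosen filter function with stretched-exponential tails) such that the unitary spectral flow $\vU_s$ solving $\partial_s\vU_s=i\CD_s\vU_s$ maps $\ket{\sqrt{\vrho_0}}$ exactly to $\ket{\sqrt{\vrho_\beta}}$, and $\CD_s$ is a sum of local terms whose strength decays stretched-exponentially in the radius, with a Lieb-Robinson ``velocity'' and norms controlled by $\beta,q,d,\Delta$. I would then discretize: chop $[0,1]$ into $M=\mathrm{poly}\log(n/\epsilon)$ steps, approximate each $\vU_{s_{j+1}}\vU_{s_j}^{-1}$ by the evolution of the time-independent truncated local Hamiltonian $\CD_{s_j}$ restricted to balls of radius $R=O(\log(n/\epsilon))$ around each site (Lieb-Robinson bounds guarantee the truncation error per step is $\le\epsilon/M$ per term, hence $\le\epsilon/2$ total after union-bounding over $O(n)$ terms), and finally Trotterize/Suzuki-decompose each radius-$R$, geometrically-local, bounded-norm Hamiltonian evolution into $\mathrm{poly}\log(n/\epsilon)$ layers of $2$-qubit gates acting within each ball, arranging them on the $d$-dimensional (for $d=1$, 1D) lattice with ancillas. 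Tracking the parameters: depth is $M\times(\text{Trotter depth per step})\times(\text{depth to synthesize a radius-}R\text{ geometrically-local gate})=\mathrm{poly}\log(n/\epsilon)$, and size is $O(n)$ terms $\times$ per-term cost $=n\cdot\mathrm{poly}\log(n/\epsilon)$, with all implicit constants functions of $\beta,q,d,\Delta$ only; this is exactly the claimed bound.

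The main obstacle, and where I would spend the most care, is \emph{uniformity of the quasi-locality estimates across the whole temperature path} and their interaction with the stretched-exponential (rather than pure-exponential) decay coming from the operator Fourier transforms. Concretely: (i) one must check that the annulus-decomposition constants for $\vcH_{s\beta}$ and for $\partial_s\vcH_{s\beta}$ do not blow up as $s\to 0$ or along the path — here the fact (remarked after \autoref{lem:Dirichlet}) that all constants in the paper are bounded on compact $\beta$-intervals is the key input, but it needs to be applied to the vectorized generator and its derivative, not just to $\CL$ itself; (ii) the quasi-adiabatic generator $\CD_s$ is obtained by integrating $\partial_s\vcH_{s\beta}$ against a filter whose own tails must be chosen to dominate the stretched-exponential operator tails while keeping the Lieb-Robinson velocity finite — this is standard (the filter in \cite{Bachmann_2011} already has faster-than-any-polynomial decay) but the bookkeeping of the convolution of a stretched exponential $e^{-c\ell\log\ell}$ with the filter must be done explicitly to confirm one still gets, say, $e^{-c'\ell^{\eta'}}$ decay for $\CD_s$, which is all that is needed for the $R=O(\log(n/\epsilon))$ truncation. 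None of these steps is conceptually new — they are precisely the ``formalization of \cite[Appendix C]{chen2023efficient}'' the theorem advertises — so the proof is a matter of assembling \autoref{defn:exp-decay}, the cited Lieb-Robinson bounds \cite{nachtergaele2010LRB, nachtergaele2018lieb}, the spectral-flow construction \cite{Bachmann_2011}, and the discriminant identity \cite[Proposition I.1]{chen2023efficient}, with the uniform-in-$s$ constant tracking being the one place requiring genuine vigilance.
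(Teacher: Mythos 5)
Your high-level plan is the same as the paper's: pass to the discriminant/parent Hamiltonian $\vcH_{s\beta}$ (citing \cite[Proposition I.1]{chen2023efficient} for its properties and the gap transfer), verify (sub-)exponential decay of its interactions and those of $\partial_s\vcH_{s\beta}$ (this is exactly \autoref{lem:lindblad-decaying-interactions}), invoke the quasi-adiabatic generator and spectral flow of \cite{Bachmann_2011}, and then simulate the resulting quasi-local, sub-exponentially-decaying, time-dependent evolution with a Lieb-Robinson-controlled circuit. Where you diverge is in the final circuit-assembly step, and this is where the bookkeeping actually matters. The paper does \emph{not} first discretize $s$ into $M$ time-independent slices and then Trotterize; it treats the time-dependent evolution $\CT[e^{i\vW t}]$ directly, proves a real-time gluing identity with sub-exponential tails (\autoref{lem:gluing-time-dep}, a time-dependent variant of \cite[Lemma 6]{haah2020quantum}), and applies the HHKL recursive lattice bisection \cite[Theorem 1]{haah2020quantum} to reduce to evolutions on patches of $\ell^d$ qubits with $\ell=O(\log(L/\epsilon)\cdot\mathsf{polyloglog})$, each of which is synthesized by high-precision (block-encoding/qubitization) Hamiltonian simulation \cite{low2018hamiltonian} in depth $\mathsf{polylog}(n/\epsilon)$. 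Your ``truncate to radius-$R$ balls and Trotterize/Suzuki'' is morally the thing HHKL formalizes, but as written it leaves two real gaps: (i) naive Trotterization of an extensive sum of $\Theta(n)$ terms over $O(1)$ time does not by itself give polylog depth — one needs the local-error-accumulation structure from commutators of geometrically-local terms, which is precisely what the gluing/recursion supplies; and (ii) any fixed-order product formula has error scaling $\mathrm{poly}(1/\epsilon)$, not $\mathrm{polylog}(1/\epsilon)$, so to obtain the stated $\log^{d_2}(n/\epsilon)$ depth you need a high-precision black-box simulation of the small patches, not just Trotter. Your identified obstacles — uniform-in-$s$ control of the quasi-locality constants, and tracking how the stretched-exponential tails ($e^{-c\ell/\log^2\ell}$) survive convolution with the filter $w(t)$ — are exactly the two places where the paper also spends its care (in \autoref{lem:lindblad-decaying-interactions} and \autoref{lem:adiabatic_LB}), and your instinct there is correct. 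In short: same route, but the final step as you describe it is not yet a proof and would need to be replaced by (or carefully reduced to) an HHKL-style recursion plus high-precision patch simulation.
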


The big picture is to consider the ``parent Hamiltonian'' of the family of Lindbladians, which for each $\beta$ is frustration-free, has as its unique ground-state the purified Gibbs state, and by assumption admits a uniform lower bound on its spectral gap. We then follow the quasi-adiabatic framework for gapped ground states of \cite{Bachmann_2011}, which reduces the state preparation task to the Hamiltonian simulation of a time-dependent ``adiabatic operator'', built out of derivatives of the parent Hamiltonian. In order to do so, we prove that each of these Hamiltonians admits (sub-)exponentially decaying interactions and therefore Lieb-Robinson bounds, which in turn implies low-depth time-evolution algorithms using the framework of \cite{haah2020quantum}.

\begin{rmk}
    Although the discussion below is tailored to the Gaussian transition weight and single-site Pauli jumps, we highlight where the relevant modifications would be necessary to generalize the statement.
\end{rmk}

In the next section, we present the relevant ingredients in more detail.

\subsection{The Quasi-Adiabatic Framework}

We refer the reader to \cite[Appendix B.2]{rouze2024efficient} for an explicit calculation of the parent Hamiltonians of the \cite{chen2023efficient} family of Lindbladians, which we review in the next subsection. The first ingredient we require is the following lemma on the quasi-adiabatic framework for gapped ground states \cite{Bachmann_2011}, which states that one can exactly evolve between two ground states via the time-evolution of a time-dependent Hamiltonian, so long as there exists a continuous path of gapped Hamiltonians between the two:

\begin{lem}[{Quasi-Adiabatic Evolution~\cite[Proposition 2.4]{Bachmann_2011}}]\label{lem:quasi_adibatic_W}
    Consider a one-parameter family of Hamiltonians $\vcH_s$ for each $s\in [0, 1]$, and assume the family is uniformly gapped $\min_{s\in [0, 1]}\mathsf{gap}(\vcH_s)\geq \Delta$. Then, for each $s\in [0, 1]$, the associated ground state projection can be generated by the unitary time evolution $\vV(s)$ of a time-dependent Hamiltonian $\vW(s)$:
    \begin{align}
        \frac{\rd}{\rd s}\vV(s) = \ri \vW(s)\vV(s)\quad \text{where}\quad 
    \vW(s):= \int_{-\infty}^{\infty}\rd t\ w(t) \int_0^t \rd u\ \e^{\ri u \vcH_s}\bigg(\frac{\rd}{\rd s}\vcH_s\bigg)\e^{-\ri u \vcH_s},\label{eq:adiabatic-op}
\end{align}
where we adopt the specific choice of weight function \emph{\cite[Lemma 2.3]{Bachmann_2011}}\footnote{Any choice of function $w$ with bounded $L_1$ norm and with Fourier spectra contained in the interval $[-\Delta, \Delta]$ suffices. }
\begin{align}
    w(t) = c\cdot \prod_{k=1}^\infty \bigg(\frac{\sin a_k t}{a_kt}\bigg)^2, \quad \text{ where }\quad a_{k} = a (k \log^2 k)^{-1}\quad \text{for each}\quad k>1
\end{align}
for appropriate choice of constants $a, c$ as a function of $\Delta, q$.
\end{lem}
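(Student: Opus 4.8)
The plan is to verify the standard quasi-adiabatic (spectral flow) construction. Denote by $\vP_s$ the spectral projection of $\vcH_s$ onto the part of the spectrum below the gap (the ``ground state projection'') and by $\vQ_s := \vI - \vP_s$ its complement. First I would record that, under the standing hypothesis that $s\mapsto\vcH_s$ is continuously differentiable and $\min_s\mathsf{gap}(\vcH_s)\ge\Delta$, the operator $\vW(s)$ in \eqref{eq:adiabatic-op} is a well-defined bounded self-adjoint operator depending continuously on $s$: the inner $u$-integral is norm-bounded by $|t|\cdot\|\tfrac{\rd}{\rd s}\vcH_s\|$, and since the displayed weight $w$ decays faster than any polynomial one has $\int_{\BR}(1+|t|)\,|w(t)|\,\rd t<\infty$, so the $t$-integral converges absolutely with $\|\vW(s)\|\le \|\tfrac{\rd}{\rd s}\vcH_s\|\int_{\BR}|t|\,|w(t)|\,\rd t$; self-adjointness uses that $w$ is real (and even). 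Once $\vW(s)$ is in hand, the linear ODE $\tfrac{\rd}{\rd s}\vV(s)=\ri\vW(s)\vV(s)$ with $\vV(0)=\vI$ has a unique solution, the time-ordered exponential, and it is unitary because $\vW(s)$ is self-adjoint.

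The crux is the intertwining identity $\tfrac{\rd}{\rd s}\vP_s=\ri[\vW(s),\vP_s]$. I would prove it in the eigenbasis of $\vcH_s$ (finite dimensional here, so no domain subtleties). For eigenvectors with energies $E_a,E_b$, the Heisenberg factor in \eqref{eq:adiabatic-op} contributes a phase $\e^{\ri u(E_a-E_b)}$; carrying out the $u$- and $t$-integrals gives the matrix element $\langle a|\vW(s)|b\rangle$ as $\langle a|\tfrac{\rd}{\rd s}\vcH_s|b\rangle$ times $\int_{\BR} w(t)\,\tfrac{\e^{\ri t(E_a-E_b)}-1}{\ri(E_a-E_b)}\,\rd t$, with the obvious limiting value $\int_{\BR} w(t)\,t\,\rd t=0$ when $E_a=E_b$. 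On the other side, a Riesz-projection contour formula for $\tfrac{\rd}{\rd s}\vP_s$ shows $\langle a|\tfrac{\rd}{\rd s}\vP_s|b\rangle$ equals $\langle a|\tfrac{\rd}{\rd s}\vcH_s|b\rangle/(E_b-E_a)$ when exactly one of $a,b$ lies below the gap, and vanishes otherwise, while $\langle a|\ri[\vW(s),\vP_s]|b\rangle$ visibly vanishes unless exactly one of $a,b$ lies below the gap. In that remaining case $|E_a-E_b|\ge\Delta$, so the two properties of $w$ that are actually used --- the normalization $\int_{\BR}w(t)\,\rd t=1$ and the Fourier-support condition that $\widehat w(\omega):=\int_{\BR}w(t)\,\e^{\ri\omega t}\,\rd t$ vanishes for $|\omega|\ge\Delta$ --- collapse the weight integral to $-1/(\ri(E_a-E_b))$, which is exactly the factor needed to match $\tfrac{\rd}{\rd s}\vP_s$ entrywise. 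For completeness I would recall why the stated $w$ enjoys these properties: each factor $\big(\sin(a_kt)/(a_kt)\big)^2$ is a Fej\'er kernel whose Fourier transform is a triangle supported in $[-2a_k,2a_k]$, the infinite product transforms to the infinite convolution supported in $[-2\textstyle\sum_k a_k,\,2\sum_k a_k]$, and $\sum_k a_k = a\sum_k(k\log^2k)^{-1}<\infty$, so choosing $a$ small forces the support inside $[-\Delta,\Delta]$ while $c$ is fixed by $\int_{\BR}w(t)\,\rd t=1$; the super-polynomial decay of $w(t)$ is the classical estimate for such infinite products and is not needed here beyond $(1+|t|)w\in L^1$.

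Finally I would close the loop: set $\widetilde{\vP}_s:=\vV(s)\vP_0\vV(s)^\dagger$. Then $\widetilde{\vP}_0=\vP_0$ and, differentiating and using the ODE for $\vV$, $\tfrac{\rd}{\rd s}\widetilde{\vP}_s=\ri\vW(s)\widetilde{\vP}_s-\widetilde{\vP}_s\ri\vW(s)=\ri[\vW(s),\widetilde{\vP}_s]$. Thus $\widetilde{\vP}_s$ and $\vP_s$ solve the same linear ODE $\tfrac{\rd}{\rd s}Y=\ri[\vW(s),Y]$ with the same initial value, so uniqueness for ODEs with continuous bounded coefficients yields $\vV(s)\vP_0\vV(s)^\dagger=\vP_s$ for every $s\in[0,1]$ --- which is precisely the assertion that $\vV(s)$ generates the ground state projection, proving the lemma.

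I expect the main obstacle to be bookkeeping rather than anything conceptual: justifying the interchanges of the $u$-, $t$- and matrix-element operations via absolute convergence, aligning the Fourier and normalization conventions so the weight integral collapses to exactly $-1/(\ri(E_a-E_b))$, and --- if one insists on the general uniformly gapped case rather than a nondegenerate ground state --- handling a degenerate low-energy band cleanly in the contour computation of $\tfrac{\rd}{\rd s}\vP_s$. In the intended application the parent Hamiltonian is frustration-free with a one-dimensional ground space, so the band is a single eigenvalue and the degenerate case does not arise.
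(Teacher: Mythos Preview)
Your argument is correct and is precisely the standard spectral-flow proof from the cited reference. The paper itself does not supply a proof of this lemma at all --- it simply imports the statement from \cite[Proposition 2.4]{Bachmann_2011} and moves on --- so there is no in-paper argument to compare against; your write-up fills in what the authors delegate to the literature.

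One minor remark: there is a sign slip in your Riesz-projection computation. With the standard contour convention, for $a$ below the gap and $b$ above one gets $\langle a|\tfrac{\rd}{\rd s}\vP_s|b\rangle = \langle a|\tfrac{\rd}{\rd s}\vcH_s|b\rangle/(E_a-E_b)$, not $1/(E_b-E_a)$. This is exactly the kind of convention-alignment you flag in your closing paragraph, and once corrected it matches your computation of $-\ri\langle a|\vW|b\rangle$ on the nose, so the intertwining identity and the final uniqueness argument go through as stated.
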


In the context of our work, we recale $\beta$ and let $\{\vcH_{s\beta}\}_{s\in [0, 1]}$ denote the family of parent Hamiltonians of the Lindbladians $\CL_{s\beta}$ \eqref{eq:exact_DB_L} for varying inverse-temperature, and we assume a uniform lower bound $\Delta$ on the spectral gap in said range of $s$. The ground state of $\vcH_0$ is a set of $n$ EPR pairs, and that of $\vcH_\beta$ is the purified Gibbs state at inverse-temperature $\beta$. In order to generate the adiabatic evolution, it suffices to perform the time-evolution of the time-dependent adiabatic operator $\vW$ guaranteed by \autoref{lem:quasi_adibatic_W}.

We follow the description present in \cite[Appendix C]{chen2023efficient}. In order to perform the Hamiltonian simulation above, we first need to establish basic locality properties for the parent Hamiltonian, and subsequently, the adiabatic operator $\vW$. Fortunately, by adapting a calculation of \cite[Appendix B.2]{rouze2024efficient}, one can easily extract an annulus decomposition for the parent Hamiltonian and its derivatives:

\begin{lem}[Exponentially Decaying Interactions in the Parent Hamiltonian]\label{lem:lindblad-decaying-interactions}
    For each $\beta\in \BR^+, q, d\in \BN^+$, the parent Hamiltonian $\vcH_\beta$ of the Lindbladian $\CL_\beta$ \eqref{eq:exact_DB_L} and its derivatives w.r.t. $\beta$ admit exponentially decaying interactions as in \autoref{defn:exp-decay}. That is, there exists an annulus decomposition $\vcH_\beta = \sum_{a, r\in [n]}\vcH^{a, r}_\beta-\vcH^{a, r-1}_\beta$, where $\vcH^{a, r}_\beta$ has support on a ball of radius $r$ around the site $a\in [n]$, and furthermore:
    \begin{equation}
        \|\vcH^{a, r}_\beta-\vcH^{a, r-1}_\beta\|\leq c_1\cdot e^{-c_2\cdot r}, \quad\text{and}\quad \bigg\|\frac{\rd}{\rd\beta}\big(\vcH^{a, r}_\beta-\vcH^{a, r-1}_\beta\big)\bigg\|\leq c_3\cdot e^{-c_4\cdot r},
    \end{equation}
    for constants $c_1 ,c_2, c_3, c_4$ that depend only on $\beta, q, d$.
\end{lem}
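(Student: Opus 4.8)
\textbf{Proof plan for \autoref{lem:lindblad-decaying-interactions}.}

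The plan is to import the explicit formula for the parent Hamiltonian (the vectorized discriminant) of the \cite{chen2023efficient} Lindbladian family, as computed in \cite[Appendix B.2]{rouze2024efficient}, and then to track, term by term, how the spatial tails of its building blocks decay. Recall that after the similarity transform and vectorization, $-\vcH_\beta$ is built out of operator Fourier transforms $\hat{\vA}^a(\omega)$ of the single-site Pauli jumps, the Gaussian transition weight $\gamma_\mathsf{G}(\omega)$, the filter $f_\sigma$, and — crucially — conjugations by fractional powers of $\vrho$ arising from the symmetrization to the KMS/GNS inner product. The key observation is that each such factor is either strictly local (the bare jump $\vA^a$), or quasi-local with a controllable tail. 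For the real-time pieces $\e^{\ri\vH t}\vA^a\e^{-\ri\vH t}$ one uses the Lieb-Robinson bound for exponentially decaying interactions (\cite{nachtergaele2010LRB, nachtergaele2018lieb}; the 1D version is \autoref{lem:real_time_locality}), which gives an annulus decomposition with factorial-type tails $\|\delta\vA^a_\ell(t)\|\le (v|t|)^\ell/\ell!$; convolving against the Gaussian $f_\sigma(t)$ and the integrable weight $\gamma_\mathsf{G}(\omega)$ then yields the $e^{-c_2 r}$ decay of $\hat{\vA}^a(\omega)$, exactly as in \autoref{lem:annulus_OFT} but now in the $d$-dimensional setting. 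For the factors of $\vrho^{\pm 1/4}$, one invokes the convergence of complex-time (imaginary-time) evolution; in 1D this is \autoref{lem:locality_complextime}, and in higher dimensions at fixed finite $\beta$ one uses the analogous cluster-expansion / Lieb-Robinson statements for imaginary time that hold above a $\beta$-dependent threshold — which is precisely the regime the appendix restricts to (``$D$-dimensional Hamiltonians with exponentially decaying interactions''), so the norms $\|\vrho^{1/4}\vA^a\vrho^{-1/4}\|$ and their annulus tails are bounded by constants depending only on $\beta,q,d$.

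Concretely, I would proceed as follows. First, write down the parent Hamiltonian as $\vcH_\beta = \sum_a \vcH_\beta^{(a)}$, a sum over jump labels $a$, where each $\vcH_\beta^{(a)}$ is an integral over $\omega$ (and $t$ for the coherent part) of products of the quasi-local operators listed above, acting on the doubled Hilbert space; this is a verbatim transcription of the formula in \cite[Appendix B.2]{rouze2024efficient}. Second, for each such $\vcH_\beta^{(a)}$, combine the individual annulus decompositions of its factors into a single annulus decomposition centered at the site $a$ supporting the jump: if operator $\vF$ has $r$-tail $\phi_F(r)$ and $\vG$ has tail $\phi_G(r)$, then the product $\vF\vG$ (both anchored near $a$) has an annulus decomposition with tail $\lesssim \sum_{r_1+r_2\gtrsim r}\phi_F(r_1)\phi_G(r_2)$, which remains (sub-)exponentially decaying when each factor is. Third, perform the $\omega$- and $t$-integrals: the Gaussian/exponential weights $\gamma_\mathsf{G}(\omega)$, $f_\sigma(t)$, and $c(t)=1/(\beta\sinh(2\pi t/\beta))$ are all absolutely integrable with Gaussian or exponential decay, so integrating the pointwise bounds produces a finite constant times a still-exponentially-decaying-in-$r$ function; this is the step that pins down $c_1, c_2$. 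Fourth, for the derivative $\frac{\rd}{\rd\beta}\vcH_\beta$, differentiate under the integral sign: $\partial_\beta$ hits either the explicit $\beta$-dependence in $\gamma_\mathsf{G}, f_\sigma, c(t)$ (producing extra polynomial factors in $\omega, t$ that are harmless against the Gaussian/exponential decay), or the $\beta$-dependence inside $\vrho^{\pm 1/4} = \e^{\mp\beta\vH/4}/Z^{\mp 1/4}$, producing an insertion of $\vH = \sum_\gamma \vH_\gamma$; since $\vH$ itself has exponentially decaying (here: finite-range or exp-decaying) interactions, the inserted term $\int_0^{1/4}\e^{-s\beta\vH/... }\vH\,\e^{...}\rd s$ is again quasi-local with the same type of tail, and one re-runs steps two and three. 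This gives the $c_3, c_4$ bound.

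The main obstacle — and the part deserving the most care — is controlling the $\vrho^{\pm 1/4}$ conjugations uniformly: these are genuinely nonlocal operators, and the clean factorial tails one gets for real-time Lieb-Robinson become the much weaker $e^{-cr\log r}$ (or, in higher dimensions, merely $e^{-cr}$ above a temperature threshold) tails of imaginary-time evolution. One must be careful that (i) the composition of several such tails (there are $\vrho^{1/4}$ factors on both sides after symmetrization) does not degrade the decay below exponential — it does not, since a finite composition of $e^{-cr\log r}$-type tails is still of that type — and (ii) the $\omega$-integral does not reintroduce growth: here the a priori norm bound on operator Fourier transforms (\autoref{lem:bounds_imaginary_conjugation}), which says $\|\hat{\vA}(\omega)\|$ is controlled by $e^{-\beta\omega+\sigma^2\beta^2}\|\e^{\beta\vH}\vA\e^{-\beta\vH}\|$, is exactly the tool that tames the large-$\omega$ regime, using that the Gaussian weight $\gamma_\mathsf{G}$ decays fast enough relative to this growth. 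Once these bookkeeping points are handled, the lemma follows by routine (if tedious) estimation; I would present the real-time/Lieb-Robinson part in detail and cite \autoref{lem:locality_complextime} (1D) together with the standard higher-dimensional imaginary-time analogues for the complex-time part, mirroring the structure already used in the proof of \autoref{lem:annulus_OFT} and \autoref{lem:norms-of-commutators}.
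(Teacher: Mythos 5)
There is a genuine gap, and it is not a matter of bookkeeping: your proposed route fundamentally cannot deliver the lemma in the regime it claims. You posit that the parent Hamiltonian, after the similarity transform, is ``built out of \ldots conjugations by fractional powers of $\vrho$,'' and you plan to control those factors via the convergence of imaginary-time evolution, noting that in $d>1$ this only holds above a $\beta$-dependent temperature threshold. But the lemma is stated for every $\beta\in\BR^+$ in arbitrary dimension $d$; if $\vrho^{\pm 1/4}$-conjugations genuinely survived in the formula, the statement would simply be false for large $\beta$ in $d\ge 2$, since $\norm{\vrho^{1/4}\vA\vrho^{-1/4}}$ diverges there. You also misread the appendix footnote: the restriction is to Hamiltonians with exponentially decaying \emph{interactions}, not to a high-temperature regime.

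The proof in the paper sidesteps this entirely because the explicit formula imported from Rouz\'e et al.\ has no surviving $\vrho^{\pm 1/4}$ factors at all. After the similarity transform, the parent Hamiltonian for the Gaussian weight reads
\begin{align}
\vcH_\beta=\sum_{a}\vcS_a+\tfrac12(\vN_a\otimes\vI+\vI\otimes\vN_a^{*}),\quad
\vcS_a=\iint h_1(t_1)h_2(t_2)\,\vA^a\bigl(\beta(t_2-t_1)\bigr)\otimes\vA^a\bigl(-\beta(t_1+t_2)\bigr)^{T}\,\rd t_1\rd t_2,
\end{align}
and similarly for $\vN_a$, with $h_1,h_2,n_1,n_2$ Gaussian/sinh-type and $L^1$-normalized. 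Every Hamiltonian-dependent object in sight is a \emph{real-time} Heisenberg evolution $\vA^a(\beta t)$, so the only tool needed is the ordinary Lieb--Robinson bound for exponentially decaying interactions, valid for all $\beta$ and all $d$. The annulus decomposition then comes from telescoping $\vA^a\to\vA^{a,\ell}$ and splitting the $t$-integral at a cutoff $T\sim\ell$, pitting the factorial LR tail against the Gaussian decay of $h_i$. For the derivative, $\partial_\beta$ hits only the time arguments $\beta t$ and produces a factor $t\,[\vH^{\ell},\vA^{a,\ell}(\beta t)]$, which is again a real-time evolution of a quasi-local operator; no differentiation of $\vrho^{\pm 1/4}$ is ever needed. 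Your steps one through four are otherwise structurally reasonable, and the $\omega$-integral observation via \autoref{lem:bounds_imaginary_conjugation} would be the right idea \emph{if} the formula had been left in the OFT form with imaginary-time conjugations; the missing ingredient is the algebraic fact that the Gaussian weight and filter conspire to eliminate those conjugations in favor of purely real-time integrals. Without that, your argument collapses in $d\ge 2$ at low temperature, precisely where the lemma is supposed to hold.
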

The proof of which is deferred to \autoref{sec:quasi-locality-pH}, and is mostly comprised of adapting components from \cite{rouze2024efficient}. A simple corollary of \autoref{lem:lindblad-decaying-interactions} is a Lieb-Robinson bound for the (time-independent) time evolution of $\vcH^{a, r}_\beta$. Together with the assumption on the spectral gap of the family of parent Hamiltonians $\{\vcH_{s\beta}\}_{s\in [0, 1]}$, the results of \cite{Bachmann_2011} imply that the associated adiabatic operator $\vW$ and its time-evolution $\vV$ satisfy similar locality properties:

\begin{lem}
    [Sub-Exponentially Decaying Interactions in the Adiabatic Operator]\label{lem:adiabatic_LB}
    For each $\beta\in \BR^+, q, d\in \BN^+$ and $s\in [0, 1]$, the adiabatic operator $\vW(s)$ associated to the family of parent Hamiltonians $\{\vcH_{s\beta}\}_{s\in [0, 1]}$ satisfies an annulus decomposition with sub-exponential decay. That is, there exists a decomposition $\vW(s) =\sum_{a, r\in [n]} \vW^{a, r}(s)-\vW^{a, r-1}(s)$, where $\vW^{a, r}(s)$ has support on a ball of radius $2r$ around $a\in [n]$, and furthermore
    \begin{align}
        \forall r>1:\quad\|\vW^{a, r}(s)-\vW^{a, r-1}(s)\|\leq c_1 \cdot \exp\bigg(-c_2\cdot \frac{r}{\log^2 r}\bigg),\label{eq:adiabatic-annulus}
    \end{align}
  for $c_1, c_2$ are constants that depend on $\beta, q, d$ and the gap $\Delta$.
\end{lem}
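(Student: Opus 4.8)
The plan is to combine the explicit integral representation of the adiabatic operator from \autoref{lem:quasi_adibatic_W} with two locality inputs: the exponentially decaying interactions of $\frac{\rd}{\rd s}\vcH_{s\beta}$ supplied by \autoref{lem:lindblad-decaying-interactions} (via the chain rule $\frac{\rd}{\rd s}\vcH_{s\beta} = \beta\,\frac{\rd}{\rd\beta}\vcH_\beta\big|_{s\beta}$), and a Lieb--Robinson bound for the unitary group generated by $\vcH_{s\beta}$. Concretely, decompose $\frac{\rd}{\rd s}\vcH_{s\beta} = \sum_{a,k}\delta\vG^{a,k}_s$ into annuli $\delta\vG^{a,k}_s := \frac{\rd}{\rd s}\vcH^{a,k}_{s\beta}-\frac{\rd}{\rd s}\vcH^{a,k-1}_{s\beta}$, each supported on a ball of radius $k$ about $a$ with $\|\delta\vG^{a,k}_s\|\leq \beta c_3\, e^{-c_4 k}$. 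Since each $\vcH_{s\beta}$ is a genuine bounded-strength Hamiltonian with exponentially decaying interactions (again \autoref{lem:lindblad-decaying-interactions}), standard Lieb--Robinson estimates \cite{hastings2006spectral, nachtergaele2010LRB, nachtergaele2018lieb} give a velocity $v$ and rate $\mu$ such that any operator $\vQ$ supported on a ball of radius $k$ around $a$ admits an approximation $\vQ_m(u)$ supported on the ball of radius $k+m$ with $\|\e^{\ri u\vcH_{s\beta}}\vQ\e^{-\ri u\vcH_{s\beta}} - \vQ_m(u)\|\leq C\,\|\vQ\|\,e^{-\mu(m - v|u|)}$, all uniformly in $s\in[0,1]$.

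First I would use these to define the annulus decomposition of $\vW(s)$: for each term $\delta\vG^{a,k}_s$, replace the time-evolved operator $\e^{\ri u\vcH_{s\beta}}\delta\vG^{a,k}_s\e^{-\ri u\vcH_{s\beta}}$ inside the double integral by its Lieb--Robinson truncation to the ball of radius $\min(k+\ceil{r}, 2r)$, and sum over $k$ to obtain $\vW^{a,r}(s)$. The difference $\vW^{a,r}(s)-\vW^{a,r-1}(s)$ then has support in the ball of radius $2r$ about $a$, and its norm is controlled by two error sources. The first is the contribution of annuli with $k > r$, bounded by $\sum_{k>r}\|\delta\vG^{a,k}_s\|\lesssim e^{-c_4 r}$. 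The second is, for $k\leq r$, the Lieb--Robinson tail $e^{-\mu(r - v|u|)}$, which is non-negligible only when $|u|\gtrsim r/v$, hence only when $|t|\gtrsim r/v$; at that point the weight $w$ is tiny. Here I would invoke the defining property of $w$ from \cite[Lemma 2.3]{Bachmann_2011}: the frequencies $a_k = a(k\log^2 k)^{-1}$ are chosen precisely so that $|w(t)|\leq c\,e^{-c'|t|/\log^2|t|}$, whence $\int_{|t|>T}|w(t)|\,(1+|t|)\,\rd t\lesssim e^{-c''T/\log^2 T}$ (the inner $\rd u$ integral only adds a polynomial-in-$t$ factor, which is absorbed). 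Combining, $\|\vW^{a,r}(s)-\vW^{a,r-1}(s)\|\lesssim e^{-c_4 r} + e^{-c''(r/v)/\log^2(r/v)}\lesssim \exp\!\big(-c_2\, r/\log^2 r\big)$, which is the claimed bound, with the uniform gap $\Delta$ entering only through the admissible Fourier support $[-\Delta,\Delta]$ of $w$ and not through the locality estimate itself.

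The main obstacle is the careful bookkeeping of the triple convolution of decays --- the interaction decay of $\frac{\rd}{\rd s}\vcH_{s\beta}$, the Lieb--Robinson light-cone tail, and the super-polynomial decay of the quasi-adiabatic kernel $w$ --- and, in particular, verifying that it is exactly the double-logarithmic correction in $a_k$ that turns a would-be Gaussian/exponential weight decay into $e^{-c|t|/\log^2|t|}$; this is the step responsible for the $1/\log^2 r$ loss relative to true exponential decay in \eqref{eq:adiabatic-annulus}. The remaining points are routine: uniformity in $s$ follows since all constants in \autoref{lem:lindblad-decaying-interactions} and the gap lower bound $\Delta$ are uniform over $s\in[0,1]$; unitarity of $\e^{\ri u\vcH_{s\beta}}$ keeps all norm manipulations clean; and the passage from a Hamiltonian with exponentially (rather than finitely) ranged interactions to a Lieb--Robinson bound with a (quasi-)exponential spatial tail is standard \cite{nachtergaele2010LRB, nachtergaele2018lieb}. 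With \autoref{lem:adiabatic_LB} in hand, the sub-exponential annulus decomposition feeds directly into the low-depth Hamiltonian-simulation framework of \cite{haah2020quantum} to complete \autoref{thm:adiabatic-algorithm}.
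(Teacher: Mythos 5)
Your proposal is essentially the same as the paper's proof: decompose $\frac{\rd}{\rd s}\vcH_{s\beta}$ into annuli via \autoref{lem:lindblad-decaying-interactions}, truncate the Heisenberg-evolved operators spatially, and then trade off the Lieb--Robinson light-cone tail against the sub-exponential decay of the quasi-adiabatic kernel $w$ guaranteed by \cite[Lemma 2.3]{Bachmann_2011}. Your ``the LR tail is non-negligible only when $|u|\gtrsim r/v$, at which point $w$ is tiny'' is exactly the paper's explicit cut at a threshold time $T=\alpha r$; both give an error of order $e^{-c_4 r} + e^{-c''(r/v)/\log^2(r/v)}$, with the $1/\log^2 r$ loss coming solely from the kernel $w$.

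The one place you should be more careful is the definition of the annulus terms themselves. You invoke a generic ``Lieb--Robinson truncation'' $\vQ_m(u)$ of $\e^{\ri u\vcH_{s\beta}}\delta\vG^{a,k}_s\e^{-\ri u\vcH_{s\beta}}$, which is harmless for the \emph{estimate}, but the paper deliberately avoids this: it instead defines $\vW^{a,r}$ by conjugating by $\e^{\pm\ri u\vcH^{\CB_a(2r)}_{s\beta}}$, i.e.\ by the time-evolution of the \emph{truncated} Hamiltonian, rather than by an unspecified local approximant (the paper remarks that the choice of \cite[Eq.~4.30]{Bachmann_2011} is not obviously ``explicitly computable''). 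This matters not for \autoref{lem:adiabatic_LB} itself but for the very thing you state at the end -- feeding the decomposition into the \cite{haah2020quantum} patching scheme and then block-encoding each patch Hamiltonian -- since that requires each $\vW^{a,r}(s)$ to be a concrete operator you can actually build, not merely a compactly supported approximation that exists by Lieb--Robinson. If you instantiate $\vQ_m(u)$ as conjugation by the ball-restricted Hamiltonian, your argument coincides with the paper's; if you take it to be the conditional-expectation projection (the standard LR construction), you should still check that it is efficiently implementable before closing the loop to \autoref{thm:adiabatic-algorithm}.
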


The proof of which is deferred to \autoref{sec:quasi-locality-adiabatic-evol}, and follows from the discussion in \cite[Section 4]{Bachmann_2011}. Again, as an immediate corollary of \cite[Theorem 4.5]{Bachmann_2011}, the time-evolution of $\vW$ similarly admits a Lieb-Robinson bound, however with sub-exponential decay in $r$. 

Equipped with the above locality properties for the quasi-adiabatic evolution, we are now able to present the main ``gluing'' lemma for the time-evolution of time-dependent Hamiltonians, with sub-exponentially decaying interactions. We follow closely the presentation of \cite[Lemma 6]{haah2020quantum}; the only distinction is that our interactions decay slightly slower. To set up some notation, for any time-dependent Hamiltonian $\vW(s)$, denote the time-ordered exponential as the unique solution to the differential equation:
\begin{align}
    \vV(t) =: \CT[e^{i\vW t}] \quad \text{such that} \quad \frac{\rd}{ \rd t}\vV(t) =  i \vW(t) \vV(t),\quad \vV(0) = \vI\label{eq:time-dep-evol}
\end{align}
in an abuse of notation, we also denote the inverse as $\vV(t)^{-1}=:\CT [e^{-i\vW t}]$.

\begin{lem}
    [{\cite[Lemma 6]{haah2020quantum}} Real-Time Gluing with Sub-Exponential Tails]\label{lem:gluing-time-dep} Consider the time-evolution of the time-dependent adiabatic operator $\vW$ with sub-exponential decay of interactions as in \eqref{eq:adiabatic-annulus}. Let $\sA, \sB, \sC\subseteq [L]^d$ be arbitrary disjoint regions of the lattice, and assume $\mathsf{d}(\sA, \sC):=\ell\geq \ell_0$ for a fixed, sufficiently large constant $\ell_0$. Then, the time-evolution of $\vW$ satisfies the following gluing lemma:
     \begin{align}
            \bigg\|\CT[e^{i\vW^{(\sA\sB\sC)} t}] - \CT[e^{i\vW^{(\sA\sB)} t}] \CT[e^{-i\vW^{(\sB)} t}] \CT[e^{i\vW^{(\sB\sC)} t}] \bigg\| \leq |\sA\sB\sC|\cdot c_1\cdot \exp\bigg(c_2\cdot t - c_3\cdot \frac{\ell}{\log^2 \ell}\bigg),
        \end{align}
        for an appropriate set of constants $c_1, c_2, c_3$ which depend only on $\beta, q, d, $ and $\Delta$.
\end{lem}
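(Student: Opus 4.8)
The plan is to reproduce the argument of \cite[Lemma 6]{haah2020quantum}, checking that the slightly slower sub-exponential decay of \autoref{lem:adiabatic_LB} still suffices. First I would fix the meaning of the truncations: using the annulus decomposition $\vW(s)=\sum_{a,r}(\vW^{a,r}(s)-\vW^{a,r-1}(s))$ from \autoref{lem:adiabatic_LB}, define $\vW^{(\sR)}(s)$ by keeping only the annulus terms anchored in (and supported within a controlled neighborhood of) $\sR$. With this in hand, both sides of the asserted inequality are products of time-ordered unitaries, so it suffices to compare their generators. Writing $\vV_1(s)=\CT[e^{i\vW^{(\sA\sB)}s}]$ and $\vV_2(s)=\CT[e^{-i\vW^{(\sB)}s}]$, differentiating the product $\vV_1\vV_2\,\CT[e^{i\vW^{(\sB\sC)}s}]$ shows it solves a Schr\"odinger equation with effective generator
\[
\widetilde{\vW}(s)=\vW^{(\sA\sB)}(s)-\vV_1(s)\vW^{(\sB)}(s)\vV_1(s)^{-1}+\vV_1(s)\vV_2(s)\vW^{(\sB\sC)}(s)\vV_2(s)^{-1}\vV_1(s)^{-1},
\]
and then a standard Duhamel / interaction-picture manipulation, using that all time-ordered exponentials are norm-preserving, reduces the claim to bounding $\int_0^t\|\vW^{(\sA\sB\sC)}(s)-\widetilde{\vW}(s)\|\,\mathrm{d}s$ in operator norm.

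The heart of the argument is estimating this integrand. One decomposes $\vW^{(\sA\sB\sC)}(s)$ into the terms collected in $\vW^{(\sA\sB)}$, the terms collected in $\vW^{(\sB\sC)}-\vW^{(\sB)}$, and a remainder $\vR(s)$ consisting of interaction terms that bridge $\sA$ and $\sC$; since any such bridging term has range at least $\ell=\mathsf{d}(\sA,\sC)$, the annulus bound of \autoref{lem:adiabatic_LB} gives $\|\vR(s)\|\le |\sA\sB\sC|\cdot c\cdot\sum_{r\ge\ell}\exp(-c'\,r/\log^2 r)\le |\sA\sB\sC|\cdot c''\exp(-c'''\,\ell/\log^2\ell)$, using that a sub-exponential tail is dominated by its leading term up to constants. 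The remaining contributions are \emph{leakage} terms: $\widetilde{\vW}$ differs from $\vW^{(\sA\sB)}+(\vW^{(\sB\sC)}-\vW^{(\sB)})$ only through the conjugations by $\vV_1(s),\vV_2(s)$, which are supported near $\sA\sB$. By a Lieb--Robinson bound — which for the sub-exponentially decaying interactions here follows from the standard recursive argument for rapidly decaying interactions, in the style of \cite{nachtergaele2010LRB}, losing only a $1/\log^2$ factor in the exponent — these conjugations move an operator originally supported near the far side of $\sB$ by no more than a light cone of radius $\sim s$, so their net effect on the pieces anchored near $\sC$ is nonzero only once that light cone has crossed the entire middle region $\sB$, whose width is at least $\ell$. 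Each such term then contributes at most $\exp(c\,s-c'\,\ell/\log^2\ell)$; summing over the $\le|\sA\sB\sC|$ lattice sites that can anchor a leakage term and integrating $s\in[0,t]$ yields the stated bound $|\sA\sB\sC|\cdot c_1\exp(c_2 t-c_3\,\ell/\log^2\ell)$ after relabeling constants, with the hypothesis $\ell\ge\ell_0$ absorbing the finitely many small-$\ell$ edge cases and the polylogarithmic corrections.

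I expect the main obstacle to be the combinatorial bookkeeping rather than any single estimate: one must organize exactly which truncated pieces of $\vW^{(\sA\sB\sC)}$ are conjugated by which partial evolution, so that after the cancellations between the three factors the only survivors are genuinely bridging or leakage terms that feel the separation $\ell$. This is precisely the content of \cite[Lemma 6]{haah2020quantum}, and the only new quantitative point is that the interactions here decay like $\exp(-c\,r/\log^2 r)$ rather than $\exp(-c\,r)$; hence I would need to verify that (i) the Lieb--Robinson bound survives this weaker decay with merely a $1/\log^2$ loss in the exponent, and (ii) all geometric sums over annulus radii and over anchoring sites remain convergent while preserving the advertised $\exp(-c\,\ell/\log^2\ell)$ tail. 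Neither obstruction is serious, but both require some care to state cleanly.
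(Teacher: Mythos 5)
Your overall strategy — reduce to a Duhamel bound on $\int_0^t\|\vW^{(\sA\sB\sC)}(s)-\widetilde{\vW}(s)\|\,\rd s$, split the integrand into a bridging term controlled by the annulus bound and a conjugation-difference controlled by a sub-exponential Lieb--Robinson bound — is exactly the paper's approach (the paper simply differentiates the four-fold product $\CT[e^{-i\vW^{(\sA\sB)}t}]\CT[e^{i\vW^{(\sB)}t}]\CT[e^{-i\vW^{(\sB\sC)}t}]\CT[e^{i\vW^{(\sA\sB\sC)}t}]$ directly rather than naming an effective generator, but after stripping unitaries the two reductions are the same integral). However, your formula for $\widetilde{\vW}$ contains an error that breaks the crucial cancellation, and the resulting extra term is not small.

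Concretely, you wrote
\begin{equation*}
\widetilde{\vW}=\vW^{(\sA\sB)}-\vV_1\vW^{(\sB)}\vV_1^{-1}+\vV_1\vV_2\vW^{(\sB\sC)}\vV_2^{-1}\vV_1^{-1},
\end{equation*}
which is the generator of $\vV_1\vV_2\vV_3$ under the convention $\vV_2'=-\ri\,\vW^{(\sB)}\vV_2$ (a forward time-ordered exponential with generator $-\vW^{(\sB)}$). But the paper defines $\CT[e^{-i\vW^{(\sB)}t}]$ to be the \emph{inverse} $\big(\CT[e^{i\vW^{(\sB)}t}]\big)^{-1}$, so that $\vV_2'=-\ri\,\vV_2\vW^{(\sB)}$; for a time-dependent generator these two objects differ. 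Under the paper's convention the effective generator is instead
\begin{equation*}
\widetilde{\vW}=\vW^{(\sA\sB)}+\vV_1\vV_2\big(\vW^{(\sB\sC)}-\vW^{(\sB)}\big)\vV_2^{-1}\vV_1^{-1},
\end{equation*}
which is what your subsequent argument actually requires: then $\vW^{(\sA\sB\sC)}-\widetilde{\vW}=\vR+\big[(\vW^{(\sB\sC)}-\vW^{(\sB)})-\vV_1\vV_2(\vW^{(\sB\sC)}-\vW^{(\sB)})\vV_2^{-1}\vV_1^{-1}\big]$, a bridging term plus a \emph{single} conjugation-difference of an operator anchored near $\sC$ by a unitary whose effective generator is anchored near $\sA$, and the Lieb--Robinson bound applies cleanly. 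With your formula, by contrast, one finds
\begin{equation*}
\vW^{(\sA\sB\sC)}-\widetilde{\vW}=\vR+\Big[(\vW^{(\sB\sC)}-\vW^{(\sB)})-\vV_1\vV_2(\vW^{(\sB\sC)}-\vW^{(\sB)})\vV_2^{-1}\vV_1^{-1}\Big]-\vV_1\big(\vV_2\vW^{(\sB)}\vV_2^{-1}-\vW^{(\sB)}\big)\vV_1^{-1},
\end{equation*}
and the last term is genuinely extensive: for a time-dependent $\vW^{(\sB)}(s)$, $\vV_2(s)\vW^{(\sB)}(s)\vV_2(s)^{-1}\neq\vW^{(\sB)}(s)$, and the difference grows like $s\cdot\lVert[\vW^{(\sB)}(0),\partial_s\vW^{(\sB)}]\rVert\sim s\,|\sB|$ with no spatial localization near $\sC$. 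This contradicts the claim that ``$\widetilde{\vW}$ differs from $\vW^{(\sA\sB)}+(\vW^{(\sB\sC)}-\vW^{(\sB)})$ only through conjugations \dots whose net effect on pieces anchored near $\sC$ is small.'' The asymmetry — $\vW^{(\sB)}$ conjugated by $\vV_1$ alone, $\vW^{(\sB\sC)}$ by $\vV_1\vV_2$ — is precisely what prevents the two conjugation effects from recombining into the localized difference $\vW^{(\sB\sC)}-\vW^{(\sB)}$. Fixing the sign convention fixes the proof; the remaining steps (annulus bound on $\vR$, sub-exponential Lieb--Robinson bound losing a $1/\log^2$ in the exponent, summing over anchor sites and integrating in $s$) are the right ingredients and match the paper's.
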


\begin{proof}
Following the original proof, we take the derivative w.r.t. $t$ and carefully group canceling terms:
\begin{align}
     &\bigg\|\frac{\rd }{\rd t}  \CT [e^{-i\vW^{(\sA\sB)}t}] \CT[e^{i\vW^{(\sB)}t}] \CT[e^{-i\vW^{(\sB\sC)}t} ]\CT [e^{i\vW^{(\sA\sB\sC)} t}]\bigg\| \\ =&\bigg\| \CT [e^{i\vW^{(\sB)}t} ]\CT [e^{-i\vW^{(\sB\sC)}t}] \big(\vW^{(\sA\sB\sC)} - \vW^{(\sB\sC)}\big)
    -\big(\vW^{(\sA\sB)}-\vW^{(\sB)}\big) \CT [e^{i\vW^{(\sB)}t}] \CT [e^{-i\vW^{(\sB\sC)}t}] \bigg\|\\
    =&\bigg\| \CT[e^{-i\vW^{(\sB\sC)}t}]\vW^{(\sA:\sB\sC)}\CT[e^{i\vW^{(\sB\sC)}t}]- \CT [e^{-i\vW^{(\sB)}t}]\vW^{(\sA:\sB)}\CT [e^{i\vW^{(\sB)}t}] \bigg\|,\label{eq:time-dep-duchamel}
\end{align}
where we denote as $\vW^{(\sX:\sY)}$ the interaction terms contained in $\sX\cup\sY$ and supported on both subsets $\sX, \sY$. 

To proceed, we first leverage the annulus decomposition for the adiabatic operator as given in \autoref{lem:adiabatic_LB}. Suppose we partition $\sB = \sB_\sA\cup\sB_\sC$ into the sites closest to $\sA$, and closest to $\sC$. Let $\ell = \mathsf{d}(\sA, \sC)$. Then, 
\begin{align}
     \|\vW^{(\sA:\sB\sC)}-\vW^{(\sA:\sB_\sA)}\| &\leq \sum_{a\in \sA\cup \sB\cup \sC} \sum_{r\in [n] }\|\vW^{a, r}-\vW^{a, r-1}\|\cdot \mathbb{I}[\mathsf{d}(a, \sA)\leq r, \mathsf{d}(a, \sB_\sC\sC)\leq r] \\
     &\leq\sum_{a\in \sA\cup \sB\cup \sC} \sum_{r\geq \frac{\ell}{4} }\|\vW^{a, r}-\vW^{a, r-1}\|\\
     &\leq  c_1\cdot  |\sA\sB\sC| \cdot e^{-c_2\cdot \ell/\log^2\ell},
\end{align}
for an appropriate choice of constants $c_1, c_2$. In the second inequality, we use the fact that $r\geq \mathsf{d}(a, \sA), \mathsf{d}(a, \sB_\sC\sC)$ and $d(\sA, \sB_\sC\sC)\geq \frac{1}{2}\mathsf{d}(\sA, \sC)$ implies $r\geq \frac{1}{4}\mathsf{d}(\sA, \sC)$. In the last inequality, the sum over $r$ of a (sub)-exponentially decaying function is convergent. We remark that an analogous bound holds for $\vW^{(\sA:\sB)}$.  

Next, we leverage the Lieb-Robinson bound for the time-dependent Hamiltonian $\vW^{(\sB\sC)}$, as a consequence of the sub-exponential decay of interactions \cite[Theorem 4.5]{Bachmann_2011}:
\begin{align}
    \bigg\| \CT[e^{-i\vW^{(\sB\sC)}t}]\vW^{(\sA:\sB_\sA)}\CT[e^{i\vW^{(\sB\sC)}t}]- \CT [e^{-i\vW^{(\sB)}t}]\vW^{(\sA:\sB_\sA)}\CT [e^{i\vW^{(\sB)}t}] \bigg\| &\leq \|\vW^{(\sA:\sB_\sA)}\|\cdot |\sA\sB|\cdot d_1\cdot e^{d_2\cdot t -  \frac{d_3\ell}{\log^2 \ell}} \\&\leq |\sA\sB|^2\cdot d'_1\cdot e^{d_2\cdot t -  \frac{d_3\ell}{\log^2 \ell}} ,
\end{align}
for an appropriate choice of constants. Finally, we integrate the expression for the time-derivative present in \eqref{eq:time-dep-duchamel}, and adequate constants to give the desired bound. 
\end{proof}

We are now finally in a position to prove the main theorem of this section. 

\begin{proof}

    [of \autoref{thm:adiabatic-algorithm}]
    We follow the argument in \cite[Proof of Theorem 1]{haah2020quantum}, which recursively partitions the time-evolution on the $d$ dimensional lattice using \autoref{lem:gluing-time-dep}. Under a sub-exponential Lieb-Robinson bound, the resulting simulation reduces to a depth $c_1\cdot d$ circuit, where each gate is comprised of the (time-dependent) Hamiltonian simulation of patches of $\leq \ell^d$ qubits, with $\ell = c_2\cdot \log \frac{L}{\epsilon}\cdot \mathsf{polyloglog}(\frac{L}{\epsilon})$ for appropriate constants $c_1, c_2$.
For each patch, we apply the high-precision time-dependent Hamiltonian simulation~\cite{chen2023efficient} in circuit depth $\mathsf{poly}(\ell^d, \log\frac{n}{\epsilon}) = \mathsf{polylog} \frac{n}{\epsilon}$, based on block encodings of the parent Hamiltonian and a black-box Hamiltonian simulation~\cite{low2018hamiltonian}, for the suitable subset choice of jumps and truncated Hamiltonian patches.
\end{proof}

\subsection{Quasi-Locality of the Parent Hamiltonian (Proof of \autoref{lem:lindblad-decaying-interactions})}
\label{sec:quasi-locality-pH}

We dedicate this section to the proof of \autoref{lem:lindblad-decaying-interactions}, on the annulus decomposition for the parent Hamiltonian. We recollect that we fix the Gaussian weight \eqref{eq:Metropolis} and the collection of single-site Pauli operators $\CS^1_{[n]}$ as jump operators. The explicit form of the parent Hamiltonian can be written as:
\begin{align}
    \vcH_\beta = \sum_{a \in \CS^1_{[n]}}\vec{\CS}_a + \frac{1}{2}(\vN_a\otimes \vI + \vI \otimes \vN^*_a),
\end{align}
where the only dependence on $\beta$ is isolated in the rescaled Heisenberg dynamics:
\begin{align}
    \vec{\CS}_a &=  \iint_{-\infty}^{\infty} h_1(t_1)h_2(t_2) \cdot  \vA^a\L(\beta(t_2-t_1)\R)\otimes\vA^{a}\L(-\beta(t_1+t_2)\R)^T \rd t_1\rd t_2, \\
    \vN_a &:=\iint_{-\infty}^{\infty}n_1(t_1)n_2(t_2)\cdot \vA^{a\dagger}(\beta (t_2-t_1))\vA^a(-\beta (t_2+t_1))\rd t_1 \rd t_2,
\end{align}
and finally the choice of weight functions $n_1,n_2, h_1,h_2$ are all decaying (at least) exponentially:
\begin{align}
h_1(t)&= \frac{2}{\pi} e^{ -2t^2}, \quad h_2(t)= e^{-1/4}\cdot  e^{-4t^2},\\ 
     n_1(t) &:=\frac{2\sqrt{\pi}}{3}\cdot  \L(\!\frac{1}{\cosh\L(2\pi t\R)\!}*_t  e^{-2 t^2}\!\R), \quad
 n_2(t) := \frac{3}{\pi\sqrt{\pi}}\exp\L(-4 t^2-2\ri t \R),
\end{align}
normalized such that $\norm{n_1}_1,\norm{n_2}_1,\norm{h_1}_1,\norm{h_2}_1 \le 1.$ 

We divide the proof of \autoref{lem:lindblad-decaying-interactions} into two claims, which establish the annulus decomposition for $\vcH_\beta$ and for its derivative w.r.t. $\beta$. For completeness, we begin with the annulus decomposition for $\vcH_\beta$, although it is all but present in \cite[Appendix B.2]{rouze2024efficient}. Henceforth, we restrict our attention to a fixed single-site jump operator, $\vA^a\equiv \vA$. Following \autoref{section:1D-locality} in the 1D setting, we denote as  
\begin{equation}
    \vA^\ell(t):= e^{i\vH_\ell t}\vA e^{-i\vH_\ell t}, \quad \vH_\ell = \sum_{\sX\subseteq \CB_a(\ell)} \vH_\sX,
\end{equation}
the real time evolution of $\vA$ under the truncation of the Hamiltonian to the ball $\CB_a(\ell)$ of radius $\ell$ around $a$. Then,
\begin{lem}
    In the context of \autoref{lem:lindblad-decaying-interactions}, the parent Hamiltonian $\vcH^a_\beta$ associated to the single-site jump operator $\vA$ admits the annulus decomposition:
    \begin{equation}
        \vcH^a_\beta = \sum_{ r\in [n]}\vcH^{a, r}_\beta-\vcH^{a, r-1}_\beta, \quad \|\vcH^{a, r}_\beta-\vcH^{a, r-1}_\beta\|\leq c_1\cdot e^{-c_2\cdot r},
    \end{equation}
    where $\vcH^{a, r}_\beta$ has support on a ball of radius $r$ around $a$,
    and $c_1, c_2$ are appropriate constants dependent only on $d, q, \beta$.
\end{lem}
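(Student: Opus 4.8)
The plan is to build the annulus decomposition of $\vcH^a_\beta$ directly out of the one for real-time evolution, $\vA^{a}(u) = \sum_\ell(\vA^{a,\ell}(u)-\vA^{a,\ell-1}(u))$, by truncating the Hamiltonian generating each time-evolved operator inside the integral formulas for $\vec{\CS}_a$ and $\vN_a$. Concretely, for each radius $r$ I would define $\vcH^{a,r}_\beta$ to be the same double integrals that define $\vec{\CS}_a$ and $\tfrac12(\vN_a\otimes\vI+\vI\otimes\vN^*_a)$, but with every time-evolved factor $\vA^a(\beta\,\cdot)$ (and its transpose on the second copy) replaced by $\vA^{a,r}(\beta\,\cdot):=e^{\ri\vH_r\beta\,\cdot}\vA^a e^{-\ri\vH_r\beta\,\cdot}$, where $\vH_r$ is $\vH$ restricted to the ball of radius $r$ around $a$. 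Since conjugation by a unitary supported in that ball preserves support there, $\vcH^{a,r}_\beta$ is supported on the ball of radius $r$ around $a$, and $\vcH^a_\beta=\lim_{r\to\infty}\vcH^{a,r}_\beta$ because the integrands converge pointwise in operator norm and are uniformly dominated by the integrable scalar $|h_1 h_2|+|n_1 n_2|$; hence $\vcH^a_\beta=\sum_{r\geq 0}(\vcH^{a,r}_\beta-\vcH^{a,r-1}_\beta)$ with $\vcH^{a,-1}_\beta:=0$.

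Next, I would expand the difference $\vcH^{a,r}_\beta-\vcH^{a,r-1}_\beta$ by a Leibniz (telescoping) step in each of the two time-evolved factors appearing in $\vec{\CS}_a$ and in $\vN_a$. Every resulting term is a double integral of a product in which at least one factor is a ``shell contribution'' $\vA^{a,r}(\beta s)-\vA^{a,r-1}(\beta s)$ — with $s$ a fixed $\BR$-linear combination of $t_1,t_2$, so $|s|\le |t_1|+|t_2|$ — and the remaining factor has operator norm $\le 1$. This gives
\[
\big\|\vcH^{a,r}_\beta-\vcH^{a,r-1}_\beta\big\|\ \le\ C_0 \iint \big(|h_1(t_1)h_2(t_2)|+|n_1(t_1)n_2(t_2)|\big)\cdot\big\|\vA^{a,r}(\beta s)-\vA^{a,r-1}(\beta s)\big\|\,\rd t_1\rd t_2 ,
\]
with $C_0$ a universal factor coming from the telescoping. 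The Lieb-Robinson bound for the $d$-dimensional Hamiltonian with exponentially decaying interactions (\autoref{defn:exp-decay}; see \cite{nachtergaele2010LRB,nachtergaele2018lieb}), together with the a priori bound $\|\vA^{a,r}(\beta s)-\vA^{a,r-1}(\beta s)\|\le 2$, yields a shell estimate of ``light-cone'' form $\|\vA^{a,r}(\beta s)-\vA^{a,r-1}(\beta s)\|\le \min\big(2,\ a_1 e^{a_2\beta|s|-a_3 r}\big)$ with $a_i=a_i(q,d)$.

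For the $\vec{\CS}_a$ part (and the $n_2$ factor), the Gaussian weights $h_1,h_2$ already absorb the $e^{a_2\beta|s|}$ growth, since $\iint|h_1(t_1)h_2(t_2)|e^{a_2\beta(|t_1|+|t_2|)}\rd t_1\rd t_2<\infty$, so that contribution is directly $\le c(\beta,q,d)\,e^{-a_3 r}$. The $\vN_a$ part needs one extra step because $n_1(t)$ decays only exponentially (not Gaussianly) in $|t|$: I would split each $t_i$-integral at $|t_i|=T$, bounding the region $|t_1|,|t_2|\le T$ by $a_1 e^{2a_2\beta T-a_3 r}\,\|n_1\|_1\|n_2\|_1$ and its complement by $2\big(\int_{|t|>T}|n_1|+\int_{|t|>T}|n_2|\big)\lesssim e^{-c'T}$ (exponential tails of the weights), and then choosing $T=c''\,r$ with $c''$ small enough (depending on $\beta,q,d$) that both pieces are $\le c(\beta,q,d)\,e^{-c_2 r}$. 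Collecting the two regimes gives $\|\vcH^{a,r}_\beta-\vcH^{a,r-1}_\beta\|\le c_1 e^{-c_2 r}$ with $c_1,c_2$ depending only on $\beta,q,d$ (through the Lieb-Robinson constants), as claimed.

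The one point that genuinely needs care — and the reason the constants are allowed to be $\beta$-dependent — is this last balancing: one must take the Lieb-Robinson input in the light-cone form above, with controllable exponential growth rate in $|s|$, so that it can be traded against the exponential decay of the weight $n_1$; a crude ``truncate to radius proportional to $\beta|t|$'' that ignores the tails of $n_1$ would only succeed for small $\beta$. Everything else is routine: the support claim is immediate from unitary conjugation, and the $\beta$-derivative claim (the second half of \autoref{lem:lindblad-decaying-interactions}, not treated here) would follow by differentiating the same integrals, the new subtlety there being that the ``shell commutator'' $[\vH_r-\vH_{r-1},\vA^{a,r}(\beta s)]$ vanishes up to exponentially small error once $|s|$ is small enough for $\vA^{a}(\beta s)$ to be localized strictly inside the ball, which is again handled by the same $T\sim r$ split.
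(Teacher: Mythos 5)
Your argument is correct and takes essentially the same route as the paper: telescope $\vcS_a$ and $\vN_a$ over truncation radii of the Hamiltonian generating the Heisenberg dynamics, reduce by a Leibniz expansion to shell differences $\vA^{a,r}(\beta s)-\vA^{a,r-1}(\beta s)$, control those via the Lieb-Robinson bound, and trade the resulting light-cone growth $e^{c\beta|s|}$ against the decay of the weight functions by introducing a cutoff time $T\sim r$ (with $\beta$-dependent proportionality). You are somewhat more explicit than the paper in distinguishing the Gaussian weights $h_1,h_2,n_2$, which absorb the light-cone exponential directly, from the merely exponentially decaying $n_1$, for which the $T$-split is genuinely needed; the paper performs the $T$-split uniformly on the $\vcS_a$ term and simply asserts the $\vN_a$ case is analogous.
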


\begin{proof}
    We present the proof for the $\vcS_a$ term, and the $\vN_a$ component is analogous. We begin by telescoping $\vcS_a = \sum_\ell \vS_a^{\ell}$ with 
    \begin{align}
    \vS^{\ell}_a=\iint_{-\infty}^{\infty} h_1(t_1)h_2(t_2) \cdot  \bigg(&\vA^{\ell}\L(\beta(t_2-t_1)\R)\otimes\vA^{\ell}(-\beta(t_1+t_2))^T\\&- \vA^{\ell-1}\L(\beta(t_2-t_1)\R)\otimes\vA^{\ell-1}\L(-\beta(t_1+t_2)\R)^T\bigg) \rd t_1\rd t_2.
\end{align}
We proceed via the triangle inequality applied to the tensor product, and by introducing a cutoff time $T$:
\begin{align}
     \norm{\vS_a^{\ell}}\leq \iint_{-\infty}^{\infty} h_1(t_1)h_2(t_2)& \bigg(\lnorm{\vA^{\ell}\L(\beta(t_1+t_2)\R)-\vA^{\ell-1}\L(\beta(t_2+t_1)\R)}\\&+\lnorm{\vA^{\ell}\L(\beta(t_2-t_1)\R)-\vA^{\ell-1}\L(\beta(t_2-t_1)\R)}\bigg)\rd t_1\rd t_2.\\
    &\leq c_1 \iint_{-\infty}^{\infty} h_1(t_1)h_2(t_2)\cdot \min\bigg(2, \exp\big(c_2\beta(|t_1|+|t_2|) - c_3\cdot \ell\big)\bigg)\rd t_1\rd t_2. \\
    &\leq c'_1\bigg(e^{-c_4\cdot T} + e^{c_5\beta\cdot T - c_3\cdot \ell}\bigg),
\end{align}
where in the second line, we applied a Lieb-Robinson bound for the time-independent time-evolution of the jump operator $\vA$, under the evolution of the original Hamiltonian $\vH$ with exponentially decaying interactions; in the third, we evaluate the integral over the regime $|t_1|+|t_2|\leq T$ and $\geq T$ separately. In the above, the constants depend only on the lattice dimension $d$ and the local dimension $2^q$. Finally, an appropriate choice of $T = c_3\cdot \ell/(c_4+c_5\beta)$ then gives the desired bound. 
\end{proof}

We are now in a position to compute the annulus decomposition for the derivative of the parent Hamiltonian.

\begin{lem}
    In the context of \autoref{lem:lindblad-decaying-interactions}, the derivatives of the parent Hamiltonian $\vcH^a_\beta$ associated to the jump operator $\vA$ admits the annulus decomposition:
    \begin{equation}
        \frac{\rd }{\rd\beta}\vcH^a_\beta = \sum_{ r\in [n]}\frac{\rd }{\rd\beta}\bigg(\vcH^{a, r}_\beta-\vcH^{a, r-1}_\beta\bigg), \quad \text{where}\quad\bigg\|\frac{\rd }{\rd\beta}\bigg(\vcH^{a, r}_\beta-\vcH^{a, r-1}_\beta\bigg)\bigg\|\leq c_1\cdot e^{-c_2\cdot r},
    \end{equation}
    where $\frac{\rd}{\rd\beta}\vcH^{a, r}_\beta$ has support on a ball of radius $r$ around $a$, and $c_1, c_2$ are appropriate constants dependent only on $d, q, \beta$.
\end{lem}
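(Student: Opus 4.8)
\textbf{Proof plan for the annulus decomposition of $\frac{\rd}{\rd\beta}\vcH^a_\beta$.}

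The plan is to differentiate the explicit integral formulas for $\vec{\CS}_a$ and $\vN_a$ under the integral sign and then repeat, essentially verbatim, the telescoping argument from the preceding lemma, with the extra $\beta$-derivative producing only a polynomially-growing prefactor that is harmless against the (sub-)exponential Lieb--Robinson decay. Concretely, I would first note that the $\beta$-dependence of $\vcH^a_\beta$ sits entirely inside the rescaled real-time Heisenberg evolutions $\vA(\beta(t_2-t_1))$ and $\vA(-\beta(t_1+t_2))$ (and their conjugates/transposes), while the weight functions $h_1,h_2,n_1,n_2$ are $\beta$-independent. Differentiating, e.g., $\vA(\beta s) = e^{\ri\vH\beta s}\vA e^{-\ri\vH\beta s}$ in $\beta$ gives $\ri s\,[\vH, \vA(\beta s)]$, and for the truncated version $\vA^\ell(\beta s)$ one gets $\ri s\,[\vH_\ell, \vA^\ell(\beta s)]$. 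Hence $\frac{\rd}{\rd\beta}\vec{\CS}_a$ and $\frac{\rd}{\rd\beta}\vN_a$ are again double integrals of the same shape, now with an extra factor linear in $t_1,t_2$ and with $\vA^\ell(\beta s)$ replaced by $[\vH_\ell,\vA^\ell(\beta s)]$ (localized on the ball $\CB_a(\ell)$ of radius $\ell$, since both $\vH_\ell$ and $\vA^\ell$ are).

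Next I would set $\frac{\rd}{\rd\beta}\vcH^a_\beta = \sum_\ell \delta^\ell$, where $\delta^\ell := \frac{\rd}{\rd\beta}(\vcH^{a,\ell}_\beta - \vcH^{a,\ell-1}_\beta)$ is manifestly supported on $\CB_a(\ell)$. Applying the triangle inequality across the tensor product and across the difference of the two truncations, and then using a cutoff time $T$ exactly as in the previous proof, the key estimate needed is a Lieb--Robinson-type bound for the difference of \emph{commutators} of truncated time evolutions, $\|[\vH_\ell,\vA^\ell(\beta s)] - [\vH_{\ell-1},\vA^{\ell-1}(\beta s)]\|$. This I would obtain by writing the difference as $[\vH_\ell - \vH_{\ell-1},\, \vA^\ell(\beta s)] + [\vH_{\ell-1},\, \vA^\ell(\beta s) - \vA^{\ell-1}(\beta s)]$; the first piece is bounded by $2\|\vH_\ell - \vH_{\ell-1}\|\cdot\|\vA^\ell(\beta s)\| \le c_1 e^{-c_2\ell}$ directly from the exponential decay of interactions in \autoref{defn:exp-decay}, while the second piece is controlled by the ordinary Lieb--Robinson bound for $\|\vA^\ell(\beta s) - \vA^{\ell-1}(\beta s)\|$ (boosted by the constant $\|\vH_{\ell-1}\|$, which in $d$ dimensions is at most $\mathrm{poly}(\ell)$, again absorbable). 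Both pieces therefore contribute $\min\!\big(c',\, \exp(c''\beta|s| - c'''\ell)\big)$ up to polynomial factors in $\ell$ and $|s|$. Integrating against $h_1(t_1)h_2(t_2)$ (resp. $n_1 n_2$) with the extra $|t_1|,|t_2|$ factor, splitting the region $|t_1|+|t_2|\le T$ versus $\ge T$, and optimizing $T = c\cdot\ell/(c_4 + c_5\beta)$ yields $\|\delta^\ell\|\le c_1 e^{-c_2\ell}$ with constants depending only on $d, q, \beta$, which is the claim.

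\textbf{The main obstacle.} The genuinely delicate point is not the $\beta$-derivative itself but making sure the $\mathrm{poly}(\ell)$ prefactors coming from $\|\vH_\ell\| = \Theta(\ell^d)$ and the extra $t$-weights do not accumulate in a way that spoils a clean exponential bound; however, since any polynomial is dominated by a slightly smaller exponential rate, one simply reserves a fraction of the decay exponent to absorb them, at the cost of worse constants. A secondary bookkeeping subtlety is that for the $\vN_a$ term the weight $n_1$ is itself a convolution ($\frac{1}{\cosh(2\pi t)} *_t e^{-2t^2}$) rather than a bare Gaussian, so the moment bound $\int |t|^k n_1(t)\,\rd t$ used after inserting the $\ri s$ factor needs the decay of $n_1$ — but $n_1$ still decays exponentially, so $\int |t| n_1(t)\,\rd t < \infty$ and the argument goes through unchanged. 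No new ideas beyond those already used for $\vcH_\beta$ are required; this lemma is a routine corollary of the previous one.
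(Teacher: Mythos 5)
Your decomposition $[\vH_\ell,\vA^\ell(\beta s)] - [\vH_{\ell-1},\vA^{\ell-1}(\beta s)] = [\vH_\ell - \vH_{\ell-1},\vA^\ell(\beta s)] + [\vH_{\ell-1},\vA^\ell(\beta s) - \vA^{\ell-1}(\beta s)]$ is natural, but your bound on the first piece is wrong: $\|\vH_\ell - \vH_{\ell-1}\|$ is \emph{not} exponentially small in $\ell$. Exponential decay of interactions (\autoref{defn:exp-decay}) controls the norm of each term $\vH^r_b - \vH^{r-1}_b$ as a function of its own range $r$, not of its distance from the fixed site $a$. The shell $\vH_\ell - \vH_{\ell-1}$ consists of terms whose support touches $\CB_a(\ell)\setminus\CB_a(\ell-1)$, and these include $\Theta(\ell^{d-1})$ short-range terms of $\Theta(1)$ norm sitting near the boundary; hence $\|\vH_\ell - \vH_{\ell-1}\| = \Theta(\ell^{d-1})$. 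The commutator $[\vH_\ell - \vH_{\ell-1},\vA^\ell(\beta s)]$ is indeed controllable, but only because $\vA^\ell(\beta s)$ is (up to Lieb--Robinson tails) supported near $a$ for $\beta|s|\ll\ell$ while $\vH_\ell - \vH_{\ell-1}$ lives at the boundary, so the two approximately commute; a pure norm bound does not capture this, and your proposal never invokes the spatial argument.

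The paper avoids this by first using the identity $[\vH_\ell, \vA^\ell(\beta t)] = [\vH_\ell, \vA]^\ell(\beta t)$, which commutes the generator past the unitary evolution and reduces the problem to propagating the \emph{fixed} operator $[\vH_\ell, \vA]$. That operator is then localized to $[\vH_{\ell/2}, \vA]$; the error $[\vH_\ell - \vH_{\ell/2}, \vA]$ only picks up annulus terms in the shell $[\ell/2,\ell]$ whose support reaches all the way back to $a$, and these necessarily have range $>\ell/4$, so they \emph{are} exponentially small by \autoref{defn:exp-decay} (up to a $\mathrm{poly}(\ell)$ counting factor, which, as you note elsewhere, is absorbable into a slightly worse exponent). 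The Lieb--Robinson bound is then applied to the time evolution of the localized commutator $[\vH_{\ell/2}, \vA]$, whose support has radius $\ell/2$, leaving a gap of width $\sim\ell/2$ to the truncation boundary over which the bound yields decay. Your argument could likely be salvaged by replacing the incorrect norm estimate with an explicit Lieb--Robinson estimate for $[\vH_\ell - \vH_{\ell-1},\vA^\ell(\beta s)]$, but as written this step does not hold.
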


\begin{proof}
    Following the proof of the annulus decomposition for $\vcH^a_\beta$, let us analyze the derivative of the $\ell$th component of $\vcS_a$:
    \begin{align}
        \frac{\rd}{\rd \beta} \vcS_a^\ell = \iint_{-\infty}^{\infty} &h_1(t_1)h_2(t_2) \cdot  \bigg((t_2-t_1)[\vH^\ell, \vA^{\ell}\L(\beta(t_2-t_1)\R)]\otimes\vA^{\ell}(-\beta(t_1+t_2))^T\\
        &-(t_1+t_2)\vA^{\ell}\L(\beta(t_2-t_1)\R)\otimes[\vH^\ell, \vA^{\ell}(-\beta(t_1+t_2))^T]
        - \text{[The same, for $\ell-1$]} \bigg)\rd t_1\rd t_2.\label{eq:derivative-annulus-exp}
    \end{align}
    To proceed, we observe that the Lieb-Robinson bound similarly applies to the commutators, up to a factor of $\ell^d$ which can be absorbed into the exponential decay in $\ell$:
    \begin{align}
        &\bigg\|[\vH^\ell, \vA^{\ell}(\beta t)] - [\vH^{\ell-1}, \vA^{\ell-1}(\beta t)]\bigg\| = \bigg\|[\vH^{\ell}, \vA]^{\ell}(\beta t)-[\vH^{\ell-1}, \vA]^{\ell-1}(\beta t)\bigg\| \label{eq:commutator-lb-deriv} \\
        \leq& \bigg\|[\vH^{\ell}, \vA]-[\vH^{\ell/2}, \vA]\bigg\|+\bigg\|[\vH^{\ell/2}, \vA]-[\vH^{\ell-1}, \vA]\bigg\|+\bigg\|[\vH^{\ell/2}, \vA]^{\ell}(\beta t)-[\vH^{\ell/2}, \vA]^{\ell-1}(\beta t)\bigg\| \\
        \leq &\quad c_1'\cdot \ell^d \cdot e^{-c_2'\ell} + \|[\vH^{\ell/2}, \vA]\|\cdot \ell^d\cdot  e^{c_4'\cdot t-c_5'\cdot \ell} \leq c_1\cdot e^{c_2\cdot t-c_3\cdot \ell}.
    \end{align}
    where we first observe that the Hamiltonian terms centered away from $a$  contribute negligibly to the commutator, due to its exponentially decaying interactions, and the terms $\leq \ell/2$ close to $a$ are controlled via the Lieb-Robinson bound for $\vH$. $c_1, c_2, c_3$ are all constants that depend only on $d, q$. We remark the commutator in \eqref{eq:commutator-lb-deriv} also admits the trivial lower bound $\leq c_4$, as a function of $d, q$, again due to the exponential decay of interactions and the fact that $\vA$ is single-site.
 
 Applied to each component of \eqref{eq:derivative-annulus-exp} then gives
    \begin{align}
        \bigg\|\frac{\rd}{\rd \beta} \vcS_a^\ell\bigg\| &\leq c_1' \iint_{-\infty}^{\infty} h_1(t_1)h_2(t_2) \cdot \big(|t_1|+|t_2|\big)\cdot \min\bigg(c_4, \exp\big(c_2(|t_1|+|t_2|)-c_3\ell\big)\bigg)\rd t_1\rd t_2 \\
        \\ &\leq \quad g_1\cdot e^{-g_4\cdot T} + g_2\cdot \exp\big(g_3\cdot T-g_5\ell\big).
    \end{align}
    under a similar case division over $|t_1|+|t_2|\geq T$ and $\leq T$ as similarly done in the proof of the annulus decomposition. To conclude, we make a suitable choice of $T=\alpha\cdot \ell$. 
\end{proof}

The two lemmas above establish the two conditions in \autoref{lem:lindblad-decaying-interactions}.

\subsection{Quasi-Locality of the Adiabatic Evolution (Proof of \autoref{lem:adiabatic_LB})}
\label{sec:quasi-locality-adiabatic-evol}

We dedicate this section to the proof of \autoref{lem:adiabatic_LB}. Ultimately, we follow closely the discussion in \cite[Lemma 4.7]{Bachmann_2011}, which derives the annulus decomposition, and \cite[Theorem 4.5]{Bachmann_2011}, which derives the Lieb-Robinson bound; however, our choice of annulus decomposition is slightly different. This is since it is not immediately clear that the choice made in \cite[Equation 4.30]{Bachmann_2011} is explicitly computable, which we will later require to perform the Hamiltonian simulation.

\begin{proof}

    [of \autoref{lem:adiabatic_LB}]
    Fix a single-site jump operator $a\in \CS^1_{[n]}$; in a slight abuse of notation, we use $a$ to refer to the site/location $\in [n]$ as well. Let $\CB_a(r)\subseteq [L]^d$ denote the ball of radius $r$ around $a$. Recall the annulus decomposition to the parent Hamiltonian as guaranteed by \autoref{lem:lindblad-decaying-interactions}; we define the truncation to $\CB_a(r)$ to contain all the terms in the decomposition contained entirely within $\CB_a(r)$:
    \begin{equation}
        \vcH^{\CB_a(r)}_\beta = \sum_{i\in [n]} \sum_{\ell:\CB_i(\ell) \subseteq \CB_a(r)} \vcH^{i, \ell}_\beta-\vcH^{i, \ell-1}_\beta.
    \end{equation}
    We are now in a position to define the annulus decomposition to $\vW$. For $a\in\CS^1_{[n]}$, $r\geq 1$, we define 
    \begin{align}
\vW^{a, r}:=         \int_{-\infty}^{\infty}\ w(t) \int_0^t \bigg( &\e^{\ri u \vcH_{s\beta}^{\CB_a(2r)}}\bigg(\frac{\rd}{\rd s}\vcH_{s\beta}^{a, \CB_a(r)}\bigg)\e^{-\ri u \vcH_{s\beta}^{\CB_a(2r)}} \\&- \e^{\ri u \vcH_{s\beta}^{\CB_a(2r-2)}}\bigg(\frac{\rd}{\rd s}\vcH_{s\beta}^{a, \CB_a(r-1)}\bigg)\e^{-\ri u \vcH_{s\beta}^{\CB_a(2r-2)}}\bigg) \rd u \rd t.\label{eq:war-full}
    \end{align}
    In the above, we are truncating the derivative terms to radius $r$, and the time-evolution to radius $2r$. Later, this ``gap'' will allow us to leverage a Lieb-Robinson bound. To bound the norm of the above, we follow the approach \cite[Lemma 4.7]{Bachmann_2011}. Let $T>0$ denote a threshold time-scale. \\
    
   \noindent \textbf{In the regime of large} $t$, we can use an a priori sub-exponential bound on the (decaying) absolute value of $w(t)$ from \cite[Lemma 2.3]{Bachmann_2011}, and the fact that the derivatives of the parent Hamiltonian have norm bounded by a constant:
    \begin{align}
         \int_{|t|\geq T}\rd t |w(t)| \int_0^t \rd u\cdot \big\|\text{Integrand of }\eqref{eq:war-full}\big\| &\leq \max_r\bigg\|\frac{\rd}{\rd s}\vcH_{s\beta}^{a, \CB_a(r)}\bigg\|\cdot \bigg|  \int_{|t|\geq T}\rd t \cdot w(t) \cdot t\bigg| \\&\leq c_1\cdot \bigg|  \int_{|t|\geq T}\rd t \cdot w(t) \cdot t\bigg| \tag{\autoref{lem:lindblad-decaying-interactions}} \\
        &\leq c_2 \cdot \bigg| \int_{|t|\geq T}\rd t \cdot  t^2 \exp\bigg( - \frac{c_3t} {\log^2 c_4t}\bigg)\bigg| \\&\leq c_5 \exp\bigg( -\frac{c_6\cdot T} {\log^2 c_3T}\bigg),
    \end{align}
    under an appropriate set of constants $c_1\cdots c_7$.\\

    \noindent \textbf{In the regime of small} $t$, we can apply the annulus decomposition of the parent Hamiltonian \autoref{lem:lindblad-decaying-interactions}, and subsequently the associated exponential Lieb-Robinson bound for the time-evolution of $\vcH$ as guaranteed by e.g. \cite{nachtergaele2010LRB}, given that the support of $\frac{\rd}{\rd s}\vcH_{s\beta}^{a, \CB_a(r)}$ is at distance $\geq r$ from the boundary of $\CB_a(2r)$.
    \begin{align}
         &\int_{|t|\leq T}\rd t |w(t)| \int_0^t \rd u\cdot \big\|\text{Integrand of }\eqref{eq:war-full}\big\| \leq T^2\cdot \bigg\|\frac{\rd}{\rd s}\vcH_{s\beta}^{a, \CB_a(r)} - \frac{\rd}{\rd s}\vcH_{s\beta}^{a, \CB_a(r-1)}\bigg\| + \\
         & \int_{|t|\leq T}\rd t \int_0^t \rd u\bigg\|\e^{\ri u \vcH_{s\beta}^{\CB_a(2r)}}\bigg(\frac{\rd}{\rd s}\vcH_{s\beta}^{a, \CB_a(r)}\bigg)\e^{-\ri u \vcH_{s\beta}^{\CB_a(2r)}} - \e^{\ri u \vcH_{s\beta}^{\CB_a(2r-2)}}\bigg(\frac{\rd}{\rd s}\vcH_{s\beta}^{a, \CB_a(r)}\bigg)\e^{-\ri u \vcH_{s\beta}^{\CB_a(2r-2)}}\bigg\| \\
         \leq & \quad T^2 \cdot d_1\cdot e^{-d_2\cdot r} + d_3'\cdot r^d \exp\bigg( d_4'\cdot T - d_5'\cdot r\bigg) \leq d_3\cdot \exp\bigg( d_4\cdot T - d_5\cdot r\bigg) ,
    \end{align}

    under an appropriate choice of constants. \\

    \noindent \textbf{Putting together} with an appropriate choice of time-scale $T = \alpha\cdot r$ and appropriately adjusting constants gives the desired sub-exponential bound on the norms of terms in the annulus decomposition. 
\end{proof}

\end{document}